\newtheorem{proposition}{Proposition}
\newtheorem{theorem}{Theorem}
\newtheorem{corollary}{Corollary}
\theoremstyle{definition}
\title[Defocusing-defocusing coupled Hirota equations]{Inverse scattering transform for the defocusing-defocusing coupled Hirota equations with non-zero boundary conditions: double-pole solutions}
\author{Peng-Fei Han$^{1}$}
\author{Wen-Xiu Ma$^{1,2,3,4}$}
\author{Ru-Suo Ye$^{1}$}
\author{Yi Zhang$^{1,*}$}
\thanks{$*$ Corresponding author. E-mail address: hanpf1995@163.com(Peng-Fei Han), mawx@cas.usf.edu(Wen-Xiu Ma), rusuoye@163.com(Ru-Suo Ye), zy2836@163.com(Yi Zhang)}
\dedicatory{$^{1}$Department of Mathematics, Zhejiang Normal University, Jinhua 321004, People's Republic of China. \\
$^{2}$Department of Mathematics, King Abdulaziz University, Jeddah 21589, Saudi Arabia. \\
$^{3}$Department of Mathematics and Statistics, University of South Florida, Tampa, FL 33620-5700, USA. \\
$^{4}$Material Science Innovation and Modelling, Department of Mathematical Sciences, North-West University, Mafikeng Campus, Mmabatho 2735, South Africa.}
\keywords{Inverse scattering transform, Riemann-Hilbert problem, Coupled Hirota equations, Non-zero boundary conditions, Double-pole solutions}
\date{\today}
\begin{document}

\begin{abstract}
The inverse scattering transform for the defocusing-defocusing coupled Hirota equations with non-zero boundary conditions at infinity is thoroughly discussed. We delve into the analytical properties of the Jost eigenfunctions and scrutinize the characteristics of the scattering coefficients. To enhance our investigation of the fundamental eigenfunctions, we have derived additional auxiliary eigenfunctions with the help of the adjoint problem. Two symmetry conditions are studied to constrain the behavior of the eigenfunctions and scattering coefficients. Utilizing these symmetries, we precisely delineate the discrete spectrum and establish the associated symmetries of the scattering data. By framing the inverse problem within the context of the Riemann-Hilbert problem, we develop suitable jump conditions to express the eigenfunctions. Consequently, we deduce the pure soliton solutions from the defocusing-defocusing coupled Hirota equations, and the double-poles solutions are provided explicitly for the first time in this work.
\end{abstract}
\maketitle

\tableofcontents

\section{Introduction}
\label{s:intro}

In the vastness of nature, nonlinear wave phenomena play an important role. Its dynamic behavior is described by nonlinear wave equations, which are a class of evolutionary nonlinear partial differential equations~\cite{1}. These equations are the basis for understanding a large number of phenomena in various scientific fields. In particular, the nonlinear Schr\"odinger (NLS) equation which is expressed in scalar~\cite{2}, vector~\cite{3}, and matrix~\cite{4} forms, acting as a universal model for the evolution of weakly nonlinear dispersive wave trains. This makes it an indispensable tool in the study of deep-water waves and nonlinear optics. Furthermore, the NLS equations are instrumental in elucidating complex phenomena like modulational instability and the genesis of rogue waves.

In order to solve the mystery of these nonlinear systems, the inverse scattering transform (IST) is a powerful analytical technique. IST was first proposed by Gardner, Greene, Kruskal, and Miura in 1967~\cite{5}, aiming to provide an exact solution to the initial value problem of the Korteweg-de-Vries equation by using the Lax pairs~\cite{6}. This pioneering method was later extended to many integrable systems characterized by Lax pairs, providing a principled method for solving the initial value problems. When applicable, the IST serves as an efficacious instrument for dissecting the intricate behavior of solutions. Despite its utility, the formulation of the IST remains an open challenge in certain scenarios, indicating that there is still much to explore and understand in this intricate field of study.

AKNS hierarchy is an important research field which provides a powerful mathematical framework for understanding various nonlinear phenomena~\cite{C1}. In the case of local group reduction, the AKNS hierarchy can be simplified by applying two local group reductions to obtain a specific instance of the Sasa-Satsuma type matrix integrable hierarchies~\cite{C2}. In the case of nonlocal group reduction, a new form of nonlocal NLS equation can be explored~\cite{C3,C4}. In some nonlocal cases with one nonlocal group reduction, researchers have presented the most general Darboux transformations, which contain generalized Darboux transformations~\cite{C5}. From a mathematical perspective, the general coupled Hirota equations are a linear combination of a special case of a reduced AKNS hierarchy with one local group reduction~\cite{C6,C7}. The general coupled Hirota equations~\cite{A1,A2,A3,A4,A5,A6,A7,A8,A9,A10,A11,A12,A13} offer a comprehensive model for studying the wave propagation of two ultrashort optical fields in optical fibers, accounting for the intricate interplay of nonlinear and dispersive effects that govern the evolution of the pulses as they travel through the fiber.
\begin{subequations}\label{1.1}
\begin{align}
\mathrm{i}q_{1,t}+& q_{1,xx}+2(\sigma_{1}\left|q_{1}\right|^{2}+\sigma_{2}\left|q_{2}\right|^{2})q_{1}
+\mathrm{i}\sigma[q_{1,xxx}+(6\sigma_{1}\left|q_{1}\right|^{2}+3\sigma_{2}\left|q_{2}\right|^{2})q_{1,x} +3\sigma_{2}q_{1}q_{2}^{*}q_{2,x}]=0, \\
\mathrm{i}q_{2,t}+& q_{2,xx}+2(\sigma_{1}\left|q_{1}\right|^{2}+\sigma_{2}\left|q_{2}\right|^{2})q_{2}
+\mathrm{i}\sigma[q_{2,xxx}+(6\sigma_{2}\left|q_{2}\right|^{2}+3\sigma_{1}\left|q_{1}\right|^{2})q_{2,x}
+3\sigma_{1}q_{2}q_{1}^{*}q_{1,x}]=0,
\end{align}
\end{subequations}
where $q_{1}$ and $q_{2}$ are the two-component electric fields, the parameters $\sigma_{1}$, $\sigma_{2}$ and $\sigma$ are real constants~\cite{A1}. Given the positivity of $\sigma_{1}$ and $\sigma_{2}$, the equations~\eqref{1.1} are designated as the focusing-focusing coupled Hirota equations, which lead to the energy concentration of the wave during the interaction and the focusing effect. With the assumption that $\sigma_{1}$ and $\sigma_{2}$ are negative, the set of equations~\eqref{1.1} are referred to as the defocusing-defocusing coupled Hirota equations, which lead to the energy concentration of the wave during the interaction and the defocusing effect. Supposing $\sigma_{1}$ and $\sigma_{2}$ exhibit opposite signs, the equations \eqref{1.1} are categorized as the mixed coupled Hirota equations~\cite{A1}, which lead to the coexistence of focusing and defocusing of waves, resulting in more complex wave dynamics.

In 1992, Tasgal and Potasek~\cite{A2} employed the IST to derive soliton solutions for the coupled higher-order NLS equations within a specific parameter regime. This work underscores the integrability of the general coupled Hirota equations, characterized by the presence of the Lax pair, $N$-th order Darboux transformation and a variety of localized wave solutions, as further elaborated in~\cite{A1,A3}. It has been shown that the general coupled Hirota equations also admit the dark soliton solutions~\cite{A4}, the high-order rational rogue waves and multi-dark soliton structures~\cite{A5}, rogue wave solutions~\cite{A6,A7,A8}, semirational solutions~\cite{A9}, analytical solutions~\cite{A10}, dark-bright-rogue wave solutions~\cite{A11}, the interactions between breathers and rogue waves~\cite{A12}, the interactions between dark-bright solitons and rogue waves~\cite{A12}. Utilizing nonlinear steepest descent techniques, the leading-order asymptotic expressions and consistent error bounds for solutions to the the coupled Hirota equations were meticulously examined, as detailed in~\cite{A13}.

Recently, the IST and Riemann-Hilbert (RH) method have been extensively applied to investigate soliton solutions for the Hirota equation, as evidenced by recent studies~\cite{7,8,9,10}. Furthermore, these methods have yet to be explored in the context of the more complex, general coupled Hirota equations with the non-zero boundary conditions (NZBC) at infinity. In light of this gap in the literature, the present paper aims to delve into the application of the IST to the defocusing-defocusing coupled Hirota equations, offering a novel perspective on this under-explored area. When $\sigma_{1}=\sigma_{2}=-1$, Eqs. \eqref{1.1} can be convert the defocusing-defocusing coupled Hirota equations
\begin{equation}\label{1.2}
\begin{split}
\mathrm{i}\widetilde{\mathbf{q}}_{t}+\widetilde{\mathbf{q}}_{xx}-2\Vert \widetilde{\mathbf{q}} \Vert^{2}\widetilde{\mathbf{q}}+\mathrm{i}\sigma[\widetilde{\mathbf{q}}_{xxx}-3\Vert \widetilde{\mathbf{q}} \Vert^{2}\widetilde{\mathbf{q}}_{x}
-3(\widetilde{\mathbf{q}}^{\dagger}\widetilde{\mathbf{q}}_{x})\widetilde{\mathbf{q}}]=\mathbf{0},
\end{split}
\end{equation}
where $\widetilde{\mathbf{q}}(x,t)=(\widetilde{q}_{1}(x,t),
\widetilde{q}_{2}(x,t))^{T}$. Employing the variable transformation $\widetilde{\mathbf{q}}(x,t)=\mathbf{q}(x,t)\mathrm{e}^{-2\mathrm{i}q_{0}^{2}t}$, we can derive the defocusing-defocusing coupled Hirota equations with the NZBC, namely
\begin{equation}\label{1.3}
\begin{split}
\mathrm{i}\mathbf{q}_{t}+\mathbf{q}_{xx}-2(\Vert \mathbf{q} \Vert^{2}-q_{0}^{2})\mathbf{q}+\mathrm{i}\sigma[\mathbf{q}_{xxx}-3\Vert \mathbf{q} \Vert^{2}\mathbf{q}_{x}
-3(\mathbf{q}^{\dagger}\mathbf{q}_{x})\mathbf{q}]=\mathbf{0},
\end{split}
\end{equation}
and the corresponding NZBC at infinity are
\begin{equation}\label{1.4}
\begin{split}
\lim_{x\rightarrow\pm\infty}{\mathbf{q}}(x,t)=\mathbf{q}_{\pm}=\mathbf{q}_{0}
\mathbf{e}^{\mathrm{i}\delta_{\pm}},
\end{split}
\end{equation}
where $\mathbf{q}=\mathbf{q}(x,t)=(q_{1}(x,t),q_{2}(x,t))^{T}$ and $\mathbf{q}_{0}$ are two-component vectors, $q_{0}=\Vert \mathbf{q}_{0} \Vert$, with $\delta_{\pm}$ are real numbers. For the defocusing-defocusing coupled Hirota equations with the NZBC $\mathbf{q}_{+}$ and $\mathbf{q}_{-}$ at infinity, the scenarios of parallel and non-parallel orientations have not yet been explored. This paper initially focuses on the case where $\mathbf{q}_{+}$ is parallel to $\mathbf{q}_{-}$, with the intention to address the non-parallel case of $\mathbf{q}_{+}$ and $\mathbf{q}_{-}$ in future research endeavors.

Compared with the defocusing Hirota equation~\cite{7}, the defocusing-defocusing coupled Hirota equations~\cite{A1} are associated with a $3\times3$ matrix Lax pair, which make the study of spectral analysis very difficult. In the case of the NZBC, the study of $3\times3$ matrix Lax pair usually has the problem that the analytical Jost eigenfunction is not analytical, which will make it more difficult to construct the IST. Specifically, the IST has demonstrated its unique value in the study of nonlinear wave equations with specific boundary conditions~\cite{11,12}. For example, when studying the focusing~\cite{13} and defocusing~\cite{B1} Manakov systems with the NZBC at infinity, the application of IST enables researchers to analyze the dynamic characteristics of dark-dark and dark-bright solitons in detail. In addition, the IST has also proved its effectiveness in solving the coupled Gerdjikov-Ivanov equation~\cite{14} with the NZBC, which not only helps to reveal the existence of the dark-dark solitons, bright-bright and breather-breather, but also provides insight into understanding their interaction. Based on these studies~\cite{11,12,13,B1,14}, it is planned to further explore the application of IST in solving nonlinear equations with more complex boundary conditions~\cite{B2}. In this paper, we will use the IST method to study the defocusing-defocusing coupled Hirota equations with the parallel NZBC at infinity in order to more fully understand and predict their analytical and asymptotic properties.

The structure of the remaining sections of this paper is outlined as follows. Section~\ref{s:Direct scattering problem} delves into the intricacies of the direct scattering problem. Initially, we delineate the Jost functions associated with the Lax pair, ensuring they adhere to the stipulated boundary conditions. Subsequently, we scrutinize the analytical characteristics of the modified eigenfunctions, leveraging the foundational definitions of the Jost functions. Furthermore, we rigorously establish the analytical properties of the corresponding coefficients of the scattering matrix, grounded in the precise formulation of the scattering matrix itself. Ultimately, we address the adjoint problem by delineating the auxiliary eigenfunctions, which in turn facilitates the derivation of symmetries for the Jost eigenfunctions, scattering coefficients and auxiliary eigenfunctions. In Section~\ref{s:Discrete spectrum and asymptotic behavior}, we delve into the characterization of the discrete spectrum. Additionally, we systematically analyze the asymptotic behavior of the modified Jost eigenfunctions and the scattering matrix elements for $z\rightarrow\infty$ and $z\rightarrow 0$. In Section \ref{s:Inverse problem}, based on the RH problem to formulate the inverse problem, we construct appropriate jump conditions to express the eigenfunctions. By using the meromorphic matrices, the corresponding residue conditions and norming constants are obtained. We construct the formal solutions of the RH problem and reconstruction formula with the help of the Plemelj formula. The pure soliton solutions are derived within the framework of reflectionless potentials and complete with a comprehensive proof. Subsequently, the discussion delves into the categorization of solitons possessing discrete eigenvalues, both within and beyond the specified circumference. In Section~\ref{s:Double-pole solutions}, we derive the solutions associated with the double zeros of the analytic scattering coefficients, and explicitly present the solutions for double poles. The results are summarized in Section~\ref{s:Conclusion}.

\section{Direct scattering problem}
\label{s:Direct scattering problem}

Generally speaking, the IST of integrable nonlinear equations needs to be studied through the formula of their Lax pairs. Our calculations are based on the following $3\times3$ Lax pair, which corresponds to the defocusing-defocusing coupled Hirota equations~\eqref{1.3}
\begin{equation}\label{2.1}
\begin{split}
\psi_{x}=\mathbf{X}\psi, \quad \psi_{t}=\mathbf{T}\psi,
\end{split}
\end{equation}
where $\psi=\psi(x,t)$, the matrices $\mathbf{X}$ and $\mathbf{T}$ are written as
\begin{equation}\label{2.2}
\begin{split}
\mathbf{X}=\mathbf{X}(k;x,t)&=\mathrm{i}k\mathbf{J}+\mathrm{i}\mathbf{Q}, \\
\mathbf{T}=\mathbf{T}(k;x,t)&=4\mathrm{i}\sigma k^{3}\mathbf{J}-\mathrm{i}q_{0}^{2}\mathbf{J}
+k^{2}(4\mathrm{i}\sigma\mathbf{Q}+2\mathrm{i}\mathbf{J})
+k(2\mathrm{i}\mathbf{Q}-2\sigma\mathbf{Q}_{x}\mathbf{J}-2\mathrm{i}\sigma\mathbf{J}\mathbf{Q}^{2})
-2\mathrm{i}\sigma\mathbf{Q}^{3} \\
&-\mathrm{i}\mathbf{J}\mathbf{Q}^{2}-\mathrm{i}\sigma\mathbf{Q}_{xx}
-\mathbf{Q}_{x}\mathbf{J}+\sigma[\mathbf{Q},\mathbf{Q}_{x}],
\end{split}
\end{equation}
while the definitions of $\mathbf{J}$ and $\mathbf{Q}=\mathbf{Q}(x,t)$ are as follows
\begin{equation}\label{2.3}
\begin{split}
\mathbf{J}=\begin{pmatrix}
    1 & \mathbf{0_{1\times2}}  \\
    \mathbf{0_{2\times1}} & -\mathbf{I_{2\times2}}
  \end{pmatrix},  \quad
\mathbf{Q}=\begin{pmatrix}
    0 & -\mathbf{q}^{\dagger}   \\
    \mathbf{q} & \mathbf{0_{2\times2}}
  \end{pmatrix}, \quad \mathbf{J}\mathbf{Q}=-\mathbf{Q}\mathbf{J},
\end{split}
\end{equation}
where $k$ is the spectral parameter and $\dagger$ denotes conjugate transpose. The compatibility condition $\psi_{xt}=\psi_{tx}$ is ascertained through the zero-curvature equation $\mathbf{X}_{t}-\mathbf{T}_{x}+[\mathbf{X},\mathbf{T}]=\mathbf{0}$.

\subsection{Jost solutions and scattering matrix}

Taking into account the Jost eigenfunctions as $x\to\pm\infty$, the spatial and temporal evolution of the solutions for the asymptotic Lax pair can be described as follows:
\begin{equation}\label{2.4}
\begin{split}
\psi_{x}=\mathbf{X}_{\pm}\psi, \quad \psi_{t}=\mathbf{T}_{\pm}\psi,
\end{split}
\end{equation}
where
\begin{subequations}\label{2.5}
\begin{align}
\lim_{x\rightarrow\pm\infty}{\mathbf{X}}=\mathbf{X}_{\pm}&=\mathrm{i}k\mathbf{J}+\mathrm{i}\mathbf{Q}_{\pm}, \\
\lim_{x\rightarrow\pm\infty}{\mathbf{T}}=\mathbf{T}_{\pm}&=4\mathrm{i}\sigma k^{3}\mathbf{J}-\mathrm{i}q_{0}^{2}\mathbf{J}
+k^{2}(4\mathrm{i}\sigma\mathbf{Q}_{\pm}+2\mathrm{i}\mathbf{J})
+k(2\mathrm{i}\mathbf{Q}_{\pm}-2\mathrm{i}\sigma\mathbf{J}\mathbf{Q}_{\pm}^{2})
-2\mathrm{i}\sigma\mathbf{Q}_{\pm}^{3}-\mathrm{i}\mathbf{J}\mathbf{Q}_{\pm}^{2}.
\end{align}
\end{subequations}
By the definition of $\mathbf{X}_{\pm}$ and $\mathbf{T}_{\pm}$ in~\eqref{2.5}, the eigenvalues of the corresponding matrix can be derived as follows
\begin{equation}\label{2.6}
\begin{split}
\mathbf{X}_{\pm,1}=-\mathrm{i}k, \quad
\mathbf{X}_{\pm,2,3}=\pm\mathrm{i}\lambda, \quad
\mathbf{T}_{\pm,1}=-\mathrm{i}(\lambda^{2}+k^{2}+4\sigma k^{3}), \quad
\mathbf{T}_{\pm,2,3}=\pm 2\mathrm{i}\lambda[k+\sigma(3k^{2}-\lambda^{2})],
\end{split}
\end{equation}
where
\begin{equation}\label{2.7}
\begin{split}
\lambda(k)=\sqrt{k^{2}-q_{0}^{2}}.
\end{split}
\end{equation}
Biondini~\cite{B1} et al. introduced the two-sheeted Riemann surface defined by~\eqref{2.7}, and $\lambda(k)$ is a single-valued function of $k$ that satisfies $\lambda(\pm q_{0})=0$. Then, the branch points are $k=\pm \mathrm{i}q_{0}$. We define the uniformization variable
\begin{equation}\label{2.8}
\begin{split}
z=k+\lambda,
\end{split}
\end{equation}
whose corresponding inverse map is given by
\begin{equation}\label{2.9}
\begin{split}
k(z)=\frac{1}{2}(z+\frac{q_{0}^{2}}{z}), \quad \lambda(z)=\frac{1}{2}(z-\frac{q_{0}^{2}}{z}).
\end{split}
\end{equation}
The relevant theories and property descriptions of the two-sheeted Riemann surface can be referred to in~\cite{B1}. Therefore, it can be defined that
\begin{equation}\label{2.10}
\everymath{\displaystyle}
\begin{split}
\mathbb{D}^{+}&=\left\{z \in \mathbb{C}: \mathfrak{Im} z>0\right\}, \quad
\mathbb{D}^{-}=\left\{z \in \mathbb{C}: \mathfrak{Im} z<0\right\}.
\end{split}
\end{equation}

The analytical regions of the eigenfunctions are determined by the sign of $\mathfrak{Im}\lambda$. Therefore, all $k$ dependencies will be rewritten as dependencies on $z$. The continuous spectrum of $k$ is given by $k\in \mathbb{R} \backslash (-q_{0}, q_{0})$. In the complex $z$-plane, the corresponding set is the whole real axis. Let us define a two-component vector $\mathbf{v}=(v_{1},v_{2})^{T}$ and its corresponding orthogonal vector as $\mathbf{v}^{\perp}=(v_{2},-v_{1})^{\dagger}$. Consider the eigenvector matrix of the asymptotic Lax pair~\eqref{2.4} as follows
\begin{equation}\label{2.11}
\everymath{\displaystyle}
\begin{split}
\mathbf{Y}_{\pm}(z)=\begin{pmatrix}
    \mathrm{i} & 0 &  \frac{q_{0}}{z}  \\
    \frac{\mathrm{i}\mathbf{q}_{\pm}}{z} &
    \frac{\mathbf{q}_{\pm}^{\bot}}{q_{0}} &
     \frac{\mathbf{q}_{\pm}}{q_{0}}
  \end{pmatrix}, \quad  \det\mathbf{Y}_{\pm}(z)=\mathrm{i}\rho(z), \quad
  \rho(z)=1-\frac{q_{0}^{2}}{z^{2}},
\end{split}
\end{equation}
where
\begin{equation}\label{2.12}
\everymath{\displaystyle}
\begin{split}
 \quad
\mathbf{Y}_{\pm}^{-1}(z)=\frac{1}{\mathrm{i}\rho(z)}\begin{pmatrix}
    1 & -\frac{\mathbf{q}_{\pm}^{\dagger}}{z}  \\
    0 & \frac{\mathrm{i}\rho(z)}{q_{0}}(\mathbf{q}_{\pm}^{\bot})^{\dagger}  \\
    -\frac{\mathrm{i}q_{0}}{z} &
    \frac{\mathrm{i}\mathbf{q}_{\pm}^{\dagger}}{q_{0}}  \\
  \end{pmatrix}, \quad  \det\mathbf{Y}_{\pm}^{-1}(z)=\frac{1}{\mathrm{i}\rho(z)},
\end{split}
\end{equation}
so that
\begin{equation}\label{2.13}
\begin{split}
\mathbf{X}_{\pm}\mathbf{Y}_{\pm}=\mathrm{i}\mathbf{Y}_{\pm}\mathbf{\Lambda}_{1}, \quad
\mathbf{T}_{\pm}\mathbf{Y}_{\pm}=\mathrm{i}\mathbf{Y}_{\pm}\mathbf{\Lambda}_{2},
\end{split}
\end{equation}
where $\mathbf{Y}_{\pm}=\mathbf{Y}_{\pm}(z)$, the definitions of $\mathbf{\Lambda}_{1}=\mathbf{\Lambda}_{1}(z)$ and $\mathbf{\Lambda}_{2}=\mathbf{\Lambda}_{2}(z)$ are as follows
\begin{subequations}\label{2.14}
\begin{align}
\mathbf{\Lambda}_{1}(z)&=\operatorname{diag}\left( \lambda,-k,-\lambda \right), \\
\mathbf{\Lambda}_{2}(z)&=\operatorname{diag}\left( 2\lambda[k+\sigma(3k^{2}-\lambda^{2})],
-(\lambda^{2}+k^{2}+4\sigma k^{3}), -2\lambda[k+\sigma(3k^{2}-\lambda^{2})] \right).
\end{align}
\end{subequations}
Since the NBCS are constant, the relationship $[\mathbf{X}_{\pm},\mathbf{T}_{\pm}]=\mathbf{0}$ can be calculated using expression~\eqref{2.5}. Therefore, $\mathbf{X}_{\pm}$ and $\mathbf{T}_{\pm}$ have a common eigenvector. Then, the Jost solutions $\psi(z;x,t)$ of the Lax pair~\eqref{2.4} on $z\in\mathbb{R}$ satisfying the boundary conditions
\begin{equation}\label{2.15}
\begin{split}
\psi_{\pm}(z;x,t)=\mathbf{Y}_{\pm}(z) \mathbf{e}^{\mathrm{i}\mathbf{\Delta}(z;x,t)}+o(1), \quad x\rightarrow\pm\infty,
\end{split}
\end{equation}
where $\mathbf{\Delta}(z;x,t)=\operatorname{diag} \left( \delta_{1},\delta_{2},-\delta_{1} \right)$ and
\begin{equation}\label{2.16}
\begin{split}
\delta_{1}=\delta_{1}(z;x,t)=\lambda x+2\lambda[k+\sigma(3k^{2}-\lambda^{2})]t, \quad
\delta_{2}=\delta_{2}(z;x,t)=-kx-(\lambda^{2}+k^{2}+4\sigma k^{3})t.
\end{split}
\end{equation}

Consistently, we define the modified eigenfunctions
\begin{equation}\label{2.17}
\begin{split}
\nu_{\pm}(z;x,t)=\psi_{\pm}(z;x,t) \mathbf{e}^{-\mathrm{i}\mathbf{\Delta}(z;x,t)},
\end{split}
\end{equation}
so that $\lim_{x\rightarrow\pm\infty}{\nu_{\pm}(z;x,t)}=\mathbf{Y}_{\pm}(z)$. We perform factor decomposition on the asymptotic behavior of the potential and rewrite the Lax pair~\eqref{2.4} as
\begin{equation}\label{2.18}
\begin{split}
(\psi_{\pm})_{x}&=\mathbf{X}_{\pm}\psi_{\pm}+(\mathbf{X}-\mathbf{X}_{\pm})\psi_{\pm}, \quad
(\psi_{\pm})_{t}=\mathbf{T}_{\pm}\psi_{\pm}+(\mathbf{T}-\mathbf{T}_{\pm})\psi_{\pm},
\end{split}
\end{equation}
where $\psi_{\pm}=\psi_{\pm}(z;x,t)$ and the following systems can be exported through the Lax pair~\eqref{2.18}
\begin{subequations}\label{2.19}
\begin{align}
(\mathbf{Y}_{\pm}^{-1}\nu_{\pm})_{x}&=
\mathrm{i}\mathbf{\Lambda}_{1}\mathbf{Y}_{\pm}^{-1}\nu_{\pm}
-\mathrm{i}\mathbf{Y}_{\pm}^{-1}\nu_{\pm}\mathbf{\Lambda}_{1}
+\mathbf{Y}_{\pm}^{-1}(\mathbf{X}-\mathbf{X}_{\pm})\nu_{\pm} ,  \\
(\mathbf{Y}_{\pm}^{-1}\nu_{\pm})_{t}&=
\mathrm{i}\mathbf{\Lambda}_{2}\mathbf{Y}_{\pm}^{-1}\nu_{\pm}-
\mathrm{i}\mathbf{Y}_{\pm}^{-1}\nu_{\pm}\mathbf{\Lambda}_{2}
+\mathbf{Y}_{\pm}^{-1}(\mathbf{T}-\mathbf{T}_{\pm})\nu_{\pm},
\end{align}
\end{subequations}
where $\mathbf{Y}_{\pm}^{-1}=\mathbf{Y}_{\pm}^{-1}(z)$ and $\nu_{\pm}=\nu_{\pm}(z;x,t)$. Subsequently, the systems~\eqref{2.19} can be expressed in complete differential form.
\begin{equation}\label{2.20}
\begin{split}
\mathrm{d}[\mathbf{e}^{-\mathrm{i}\mathbf{\Delta}}\mathbf{Y}_{\pm}^{-1}\nu_{\pm}
\mathbf{e}^{\mathrm{i}\mathbf{\Delta}}]&=\mathbf{e}^{-\mathrm{i}\mathbf{\Delta}}
[\mathbf{Y}_{\pm}^{-1}(\mathbf{X}-\mathbf{X}_{\pm})\nu_{\pm}\,\mathrm{d}x
+\mathbf{Y}_{\pm}^{-1}(\mathbf{T}-\mathbf{T}_{\pm})\nu_{\pm}\,\mathrm{d}t]
\mathbf{e}^{\mathrm{i}\mathbf{\Delta}},
\end{split}
\end{equation}
where $\mathbf{\Delta}=\mathbf{\Delta}(z;x,t)$. The choice of integration path is independent of $t$, it has been confirmed that the spectral problem concerning $\nu_{\pm}(z;x,t)$ is equivalent to the Volterra integral equations.
\begin{equation}\label{2.21}
\begin{split}
\nu_{\pm}(z;x,t)&=\mathbf{Y}_{\pm}+\int_{\pm\infty}^{x}\mathbf{Y}_{\pm}
\mathbf{e}^{\mathrm{i}(x-y)\mathbf{\Lambda}_{1}}
[\mathbf{Y}_{\pm}^{-1}(\mathbf{X}(z;y,t)-\mathbf{X}_{\pm}(z;y,t))\nu_{\pm}(z;y,t)]
\mathbf{e}^{-\mathrm{i}(x-y)\mathbf{\Lambda}_{1}}\,\mathrm{d}y.
\end{split}
\end{equation}
The proofs of Theorem~\ref{thm:1} and Theorem~\ref{thm:2} are based on the proofs of related Theorems given in~\cite{B1}.

\begin{theorem}\label{thm:1}
Suppose that $\mathbf{q}(\cdot, t)-\mathbf{q}_{-} \in L^{1}(-\infty, \sigma_{3})$ $(\mathbf{q}(\cdot, t)-\mathbf{q}_{+} \in L^{1}(\sigma_{3}, \infty))$ for every fixed $\sigma_{3}\in\mathbb{R}$, the ensuing columns of $\nu_{-}(z;x,t)$ $(\nu_{+}(z;x,t))$ fulfill the requisite properties
\begin{equation}\label{2.22}
\begin{split}
\nu_{-,1}(z;x,t) \;\; \text{and} \;\; \nu_{+,3}(z;x,t) \; : \; z\in \mathbb{D}^{-},  \qquad
\nu_{-,3}(z;x,t) \;\; \text{and} \;\; \nu_{+,1}(z;x,t) \; : \; z\in \mathbb{D}^{+}.
\end{split}
\end{equation}
\end{theorem}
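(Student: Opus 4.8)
The plan is to treat the Volterra equation \eqref{2.21} column by column and to realize each analytic column as a Neumann series. Writing $\mathbf{G}_{\pm}(z;y,t)=\mathbf{Y}_{\pm}^{-1}(\mathbf{X}-\mathbf{X}_{\pm})\nu_{\pm}$, the conjugation $\mathbf{e}^{\mathrm{i}(x-y)\mathbf{\Lambda}_{1}}\,\mathbf{G}_{\pm}\,\mathbf{e}^{-\mathrm{i}(x-y)\mathbf{\Lambda}_{1}}$ has $(\ell,j)$ entry equal to $\mathbf{e}^{\mathrm{i}(x-y)(\Lambda_{1,\ell}-\Lambda_{1,j})}[\mathbf{G}_{\pm}]_{\ell j}$, with $\Lambda_{1,1}=\lambda$, $\Lambda_{1,2}=-k$, $\Lambda_{1,3}=-\lambda$ from \eqref{2.14}. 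Since the prefactor $\mathbf{Y}_{\pm}(z)$ is $y$-independent and analytic, the growth of the $j$-th column of the integrand is governed entirely by the scalars $\mathbf{e}^{\mathrm{i}(x-y)(\Lambda_{1,\ell}-\Lambda_{1,j})}$, $\ell=1,2,3$. The whole problem thus reduces to showing that, for the claimed pairings of column and half-plane, every such exponential stays bounded along the integration ray: for $\nu_{-}$ one integrates over $y\in(-\infty,x)$, so $x-y>0$, while for $\nu_{+}$ one has $y\in(x,+\infty)$, so $x-y<0$.

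The decisive step is the sign analysis of the exponents. From the uniformization \eqref{2.9} a short computation gives $\mathfrak{Im}\,\lambda(z)=\tfrac12\,(1+q_{0}^{2}/|z|^{2})\,\mathfrak{Im}\,z$, so that $\mathfrak{Im}\,\lambda$ carries the same sign as $\mathfrak{Im}\,z$ on all of $\mathbb{C}\setminus\{0\}$. Using $z=k+\lambda$ from \eqref{2.8}, the fundamental differences are $\Lambda_{1,1}-\Lambda_{1,2}=\lambda+k=z$, $\Lambda_{1,1}-\Lambda_{1,3}=2\lambda$, and $\Lambda_{1,3}-\Lambda_{1,2}=k-\lambda=q_{0}^{2}/z$; the first two have imaginary part of the same sign as $\mathfrak{Im}\,z$, while $\mathfrak{Im}(q_{0}^{2}/z)$ has the opposite sign. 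For column $j=1$ of $\nu_{-}$ the exponents that occur are $-z$ and $-2\lambda$, both bounded on $x-y>0$ precisely when $\mathfrak{Im}\,z\le0$, which places $\nu_{-,1}$ in $\mathbb{D}^{-}$; for column $j=3$ the exponents are $2\lambda$ and $-q_{0}^{2}/z$, both bounded when $\mathfrak{Im}\,z\ge0$, which places $\nu_{-,3}$ in $\mathbb{D}^{+}$. Running the identical bookkeeping with $x-y<0$ reverses each inequality and yields $\nu_{+,1}$ analytic in $\mathbb{D}^{+}$ and $\nu_{+,3}$ analytic in $\mathbb{D}^{-}$, exactly as asserted in \eqref{2.22}. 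It is worth noting that the middle column meets contradictory demands, since the exponents $z$ and $q_{0}^{2}/z$ carry opposite imaginary-part signs, so it is analytic in neither half-plane; this is precisely the column later recovered through the adjoint problem.

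With boundedness established, analyticity and convergence are routine. Each Neumann iterate is a finite $y$-integral of the analytic, uniformly bounded kernel against the previous iterate, hence is analytic in the open half-plane and continuous up to $z\in\mathbb{R}$; the hypothesis $\mathbf{q}(\cdot,t)-\mathbf{q}_{\pm}\in L^{1}$ then bounds the $n$-th iterate by $C\,M(x,t)^{n}/n!$, so the series converges absolutely and uniformly on compact subsets of the relevant domain. Uniform convergence of analytic functions, via Morera's theorem, transfers analyticity to the limiting column. I expect the only genuine obstacle to be the sign bookkeeping of the exponential factors, for that is where the branch structure of $\lambda$ entering through \eqref{2.8}--\eqref{2.9} is indispensable and where the admissible columns are separated from the inadmissible middle one; the accompanying estimates follow the template of the corresponding result in~\cite{B1}, as the text preceding the theorem already signals.
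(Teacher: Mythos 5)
Your proposal is correct and takes essentially the same approach as the paper, which gives no independent argument but defers to the corresponding theorem of~\cite{B1}, whose proof is exactly this Volterra/Neumann-series scheme. Your exponent bookkeeping (column $1$: $-z$, $-2\lambda$; column $3$: $2\lambda$, $-q_{0}^{2}/z$; column $2$ obstructed by the opposite signs of $\mathfrak{Im}\,z$ and $\mathfrak{Im}(q_{0}^{2}/z)$), together with the identity $\mathfrak{Im}\,\lambda(z)=\tfrac12\bigl(1+q_{0}^{2}/|z|^{2}\bigr)\mathfrak{Im}\,z$, reproduces precisely the half-plane assignments of \eqref{2.22}, and the factorial bound plus Morera argument is the standard closing step of that reference.
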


Assuming $\psi(z;x,t)$ is a solution of the Lax pair~\eqref{2.1}, it can be obtained that
\begin{subequations}\label{2.23}
\begin{align}
\partial_{x}[\det \psi_{\pm}]&=\operatorname{tr} \mathbf{X} \, \det\psi_{\pm}
=-\mathrm{i}k \, \det\psi_{\pm},  \\
\partial_{t}[\det\psi_{\pm}]&=\operatorname{tr} \mathbf{T} \, \det\psi_{\pm}
=-\mathrm{i}(\lambda^{2}+k^{2}+4\sigma k^{3}) \, \det\psi_{\pm}.
\end{align}
\end{subequations}
From Abel's theorem, it can be inferred that $\partial x[\det\nu_{\pm}]=0$ and
$\partial t[\det\nu_{\pm}]=0$. Then~\eqref{2.15} implies
\begin{equation}\label{2.24}
\begin{split}
\det\psi_{\pm}(z;x,t)=\mathrm{i}\rho(z)\mathrm{e}^{\mathrm{i}\delta_{2}(z;x,t)}, \quad (x,t)\in\mathbb{R}^{2}, \quad z\in \mathbb{R}\backslash\{\pm q_{0}\}.
\end{split}
\end{equation}

The scattering matrix $\mathbf{S}(z)$ and $\mathbf{H}(z)$ are characterized through the following definition
\begin{equation}\label{2.25}
\begin{split}
\psi_{+}(z;x,t)=\psi_{-}(z;x,t)\mathbf{H}(z), \quad z\in \mathbb{R} \backslash \{\pm q_{0}\},
\end{split}
\end{equation}
where $\mathbf{H}(z)=(h_{ij}(z))$. According to the definition, $\mathbf{H}(z)$ is independent of $x$ and $t$ variables. Using~\eqref{2.24} and~\eqref{2.25} can indicate that
\begin{equation}\label{2.26}
\begin{split}
\det\mathbf{H}(z)=1, \quad z\in \mathbb{R} \backslash \{\pm q_{0}\}.
\end{split}
\end{equation}
Similarly, define $\mathbf{S}(z)=\mathbf{H}^{-1}(z)=(s_{ij}(z))$.

\begin{theorem}\label{thm:2}
According to the same assumption in Theorem~\ref{thm:1}, the scattering coefficients have the following properties
\begin{equation}\label{2.27}
\begin{split}
s_{33}(z) \;\; \text{and} \;\; h_{11}(z) \; : \; z\in \mathbb{D}^{+},  \qquad
s_{11}(z) \;\; \text{and} \;\; h_{33}(z) \; : \; z\in \mathbb{D}^{-}.
\end{split}
\end{equation}
\end{theorem}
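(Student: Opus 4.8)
The plan is to reduce the analyticity of each of the four scattering coefficients to the analyticity of a single \emph{good} Jost column, by exploiting the $x$-independence of the scattering matrices together with the explicit limits $\nu_\pm\to\mathbf{Y}_\pm$. First I would note that, by \eqref{2.17}, each column of $\psi_\pm$ differs from the corresponding column of $\nu_\pm$ only by the diagonal factor $\mathbf{e}^{\mathrm{i}\mathbf{\Delta}}$, which is analytic for $z\in\mathbb{C}\setminus\{0\}$; hence the columns of $\psi_\pm$ inherit verbatim the analyticity regions of Theorem~\ref{thm:1}, so that $\psi_{+,1},\psi_{-,3}$ are analytic in $\mathbb{D}^{+}$ and $\psi_{-,1},\psi_{+,3}$ in $\mathbb{D}^{-}$.

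Next, from \eqref{2.25} we have $\mathbf{H}=\psi_-^{-1}\psi_+$ and $\mathbf{S}=\psi_+^{-1}\psi_-$. Writing $\psi_\pm=\nu_\pm\mathbf{e}^{\mathrm{i}\mathbf{\Delta}}$ and cancelling the diagonal exponentials entrywise (the factors on the two sides of each diagonal entry coincide) yields the exact, $x$-independent identities, valid for $z\in\mathbb{R}$,
\[
h_{11}=(\nu_-^{-1})_{1,\cdot}\,\nu_{+,1},\quad
h_{33}=(\nu_-^{-1})_{3,\cdot}\,\nu_{+,3},\quad
s_{11}=(\nu_+^{-1})_{1,\cdot}\,\nu_{-,1},\quad
s_{33}=(\nu_+^{-1})_{3,\cdot}\,\nu_{-,3},
\]
where $(\cdot)_{j,\cdot}$ denotes the $j$th row. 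The naive Cramer/Wronskian form of these entries, e.g. $h_{11}=\det[\psi_{+,1},\psi_{-,2},\psi_{-,3}]/\det\psi_-$ with $\det\psi_-=\mathrm{i}\rho\,\mathbf{e}^{\mathrm{i}\delta_2}$ from \eqref{2.24}, unavoidably contains the middle column $\nu_{\pm,2}$, and indeed no $3\times3$ Wronskian of distinct columns all analytic in one half-plane exists; this is the crux, since $\nu_{\pm,2}$ has no analytic continuation off $\mathbb{R}$. To circumvent it, I would continue each identity into the half-plane where its retained column is analytic by pushing $x$ to the appropriate infinity: in $h_{11},h_{33}$ let $x\to-\infty$, where $\nu_-\to\mathbf{Y}_-$ forces $(\nu_-^{-1})_{j,\cdot}\to(\mathbf{Y}_-^{-1})_{j,\cdot}$, a row which by \eqref{2.12} is rational (hence analytic on $\mathbb{C}\setminus\{0\}$); in $s_{11},s_{33}$ let $x\to+\infty$ and use $\nu_+\to\mathbf{Y}_+$. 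This gives, for instance, $h_{11}(z)=(\mathbf{Y}_-^{-1}(z))_{1,\cdot}\,g(z)$ with $g(z)=\lim_{x\to-\infty}\nu_{+,1}(z;x,t)$, an expression free of the middle column and built only from the $\mathbb{D}^{+}$-analytic column $\nu_{+,1}$; the other three coefficients are handled identically.

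The main obstacle is thus to prove that this limit $g(z)$ exists and is analytic in the target half-plane (and similarly for the remaining three), for it is precisely this limit that furnishes the analytic continuation. I would establish it from the Volterra equation \eqref{2.21}: in its first column the integrand carries the mode factors $\mathbf{e}^{\mathrm{i}(x-y)(\Lambda_{1,ll}-\lambda)}$, and since $\Lambda_{1,22}-\lambda=-z$ and $\Lambda_{1,33}-\lambda=-2\lambda$ both have strictly positive imaginary part on $\mathbb{D}^{+}$, each off-diagonal mode satisfies $|\mathbf{e}^{\mathrm{i}(y-x)(\cdot)}|\le1$ for $y>x$ and tends to $0$ as $x\to-\infty$, while the diagonal mode is constant and integrable. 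Dominated convergence then delivers existence of the $x\to-\infty$ limit, and, because the $L^{1}$ hypothesis of Theorem~\ref{thm:1} yields a tail bound uniform on compact subsets of $\mathbb{D}^{+}$, the convergence is locally uniform, so the limit is analytic by the Weierstrass/Morera theorem; the same uniform control gives continuous extension to $z\in\mathbb{R}$.

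Finally, multiplying by the rational rows $(\mathbf{Y}_\mp^{-1})_{j,\cdot}$ preserves analyticity, which establishes the four claimed regions: $s_{33},h_{11}$ analytic in $\mathbb{D}^{+}$ and $s_{11},h_{33}$ analytic in $\mathbb{D}^{-}$. I expect the delicate point to be exactly the sign bookkeeping of $\mathfrak{Im}(\Lambda_{1,ll}-\Lambda_{1,jj})$ for every mode, ensuring that each exponential in the tail integral genuinely \emph{decays} (not merely stays bounded) as $x$ is sent to the relevant infinity, so that the continuation is attained rather than only formally written down.
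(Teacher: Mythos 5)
Your algebraic reduction is correct and the region bookkeeping is right: $h_{11}=(\nu_-^{-1})_{1,\cdot}\,\nu_{+,1}$, etc., holds exactly on $\mathbb{R}$ because the diagonal exponentials cancel, and realizing each analytic coefficient as an $x\to\mp\infty$ limit involving only one analytic Jost column is a viable strategy, genuinely different from the paper's. However, the step that is supposed to produce the analytic continuation --- existence and local uniformity of $g(z)=\lim_{x\to-\infty}\nu_{+,1}(z;x,t)$ by dominated convergence applied to the Volterra equation \eqref{2.21} --- fails as written. Your dominating function is a constant times $\|\mathbf{X}(\cdot)-\mathbf{X}_+\|=\|\mathbf{q}(\cdot)-\mathbf{q}_+\|$, and the $L^1$ hypothesis of Theorem~\ref{thm:1} makes this integrable only on $(\sigma_3,\infty)$: as $y\to-\infty$ the potential tends to $\mathbf{q}_-\neq\mathbf{q}_+$ (the phases $\delta_\pm$ differ), so $\|\mathbf{q}(y)-\mathbf{q}_+\|\to\|\mathbf{q}_--\mathbf{q}_+\|\neq 0$ and $\int_x^{\sigma_3}\|\mathbf{q}-\mathbf{q}_+\|\,\mathrm{d}y$ grows like $|x|$ as $x\to-\infty$. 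There is no integrable dominant on the expanding interval, and the exponential decay you invoke is absent precisely in the first (diagonal) mode, whose factor is identically $1$. So the delicate point is not the sign bookkeeping of the exponentials (which you have essentially right, modulo a sign slip in labelling $\Lambda_{1,ll}-\lambda$ versus $\lambda-\Lambda_{1,ll}$), but the fact that \eqref{2.21} is anchored at the wrong infinity for this limit. The same defect occurs symmetrically for $s_{11},s_{33}$ with $x\to+\infty$.

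The missing idea that repairs your route is to re-center the comparison at the end toward which $x$ is sent: for $x\le x_0$ set $w=\mathbf{Y}_-^{-1}\nu_{+,1}$, which satisfies $w_x=\mathrm{i}(\mathbf{\Lambda}_1-\lambda)w+\mathbf{G}w$ with $\mathbf{G}=\mathrm{i}\mathbf{Y}_-^{-1}(\mathbf{Q}-\mathbf{Q}_-)\mathbf{Y}_-\in L^1(-\infty,x_0)$. The propagator entries $1,\ \mathrm{e}^{-\mathrm{i}z(x-y)},\ \mathrm{e}^{-2\mathrm{i}\lambda(x-y)}$ have modulus $\le 1$ for $x\le y$ and $z\in\overline{\mathbb{D}^+}$, so Gronwall gives $\sup_{x\le x_0}\|w\|<\infty$, and then $w_1(x)=w_1(x_0)+\int_{x_0}^{x}(\mathbf{G}w)_1\,\mathrm{d}y$ converges absolutely and locally uniformly as $x\to-\infty$; this Levinson-type asymptotic-integration argument, not dominated convergence on \eqref{2.21}, furnishes the analytic limit. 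It also fixes your boundary matching, which is otherwise wrong: for real $z$ the limit of $\nu_{+,1}$ itself does not exist (the non-principal modes carry $\mathrm{e}^{-\mathrm{i}zx}$, $\mathrm{e}^{-2\mathrm{i}\lambda x}$, which only oscillate), whereas $w_1$ does converge on $\mathbb{R}$ and, since $\nu_-^{-1}\to\mathbf{Y}_-^{-1}$ with $\nu_{+,1}$ bounded, its limit equals $h_{11}$. By contrast, the paper's intended proof (following \cite{B1}) needs no limiting process at all: each analytic coefficient is a Wronskian, e.g. $h_{11}\det\psi_-=\det(\psi_{+,1},\psi_{-,2},\psi_{-,3})$, and the cross-product identity \eqref{2.40} replaces $\psi_{-,2}\times\psi_{-,3}$ by the adjoint eigenfunction $\widetilde{\psi}_{-,1}$, giving $h_{11}=\psi_{+,1}^{T}\mathbf{J}\widetilde{\psi}_{-,1}/\rho$, a pairing of two functions both analytic in $\mathbb{D}^{+}$ by Theorem~\ref{thm:1} and \eqref{2.36}; that is exactly how the non-analytic middle columns are evaded.
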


\subsection{Adjoint problem}

Since the $\nu_{\pm,2}(z;x,t)$ is not analytical, then solving the inverse problem requires handling this non analytical term. In order to set up the scattering problem, it is essential to possess a fully analytic function. We now turn our attention to the so-called ``adjoint" Lax pair, which is a key component in this context.
\begin{equation}\label{2.28}
\begin{split}
\widetilde{\psi}_{x}&=\widetilde{\mathbf{X}}\widetilde{\psi}, \quad
\widetilde{\psi}_{t}=\widetilde{\mathbf{T}}\widetilde{\psi},
\end{split}
\end{equation}
where
\begin{equation}\label{2.29}
\begin{split}
\widetilde{\mathbf{X}}=\widetilde{\mathbf{X}}(z;x,t)&=-\mathrm{i}k\mathbf{J}-\mathrm{i}\mathbf{Q}^{*}, \\
\widetilde{\mathbf{T}}=\widetilde{\mathbf{T}}(z;x,t)&=-4\mathrm{i}\sigma k^{3}\mathbf{J}+\mathrm{i}q_{0}^{2}\mathbf{J}
-k^{2}(4\mathrm{i}\sigma\mathbf{Q}^{*}+2\mathrm{i}\mathbf{J})
+k(2\mathrm{i}\sigma\mathbf{J}(\mathbf{Q}^{*})^{2}-2\mathrm{i}\mathbf{Q}^{*}
-2\sigma\mathbf{Q}_{x}^{*}\mathbf{J}) \\
&+2\mathrm{i}\sigma(\mathbf{Q}^{*})^{3}+\mathrm{i}\mathbf{J}(\mathbf{Q}^{*})^{2}
+\mathrm{i}\sigma\mathbf{Q}_{xx}^{*}-\mathbf{Q}_{x}^{*}\mathbf{J}
+\sigma[\mathbf{Q}^{*},\mathbf{Q}_{x}^{*}],
\end{split}
\end{equation}
where $\widetilde{\psi}=\widetilde{\psi}(z,x,t)$, with $\widetilde{\mathbf{X}}=\mathbf{X}^{*}$ and $\widetilde{\mathbf{T}}=\mathbf{T}^{*}$ for all $z\in \mathbb{R}$. The following proposition can be directly proven through properties $\mathbf{J}\mathbf{Q}=-\mathbf{Q}\mathbf{J}$, $\mathbf{J}\mathbf{Q}^{*}=-\mathbf{Q}^{*}\mathbf{J}$, $\mathbf{J}\mathbf{Q}^{*}=\mathbf{Q}^{T}\mathbf{J}$ and the identity in~\cite{B1}.

\begin{proposition}\label{pro:1}
If $\widetilde{\mathbf{v}}_{2}(z;x,t)$ and $\widetilde{\mathbf{v}}_{3}(z;x,t)$ are two arbitrary solutions of the ``adjoint" Lax pair~\eqref{2.28}, while ``$\times$" denotes the usual cross product, then
\begin{equation}\label{2.30}
\begin{split}
\mathbf{v}_{1}(z;x,t)=\mathrm{e}^{\mathrm{i}\delta_{2}(z;x,t)}\mathbf{J}
[\widetilde{\mathbf{v}}_{2}(z;x,t) \times \widetilde{\mathbf{v}}_{3}(z;x,t)],
\end{split}
\end{equation}
is a solution of the Lax pair~\eqref{2.1}.
\end{proposition}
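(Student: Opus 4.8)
The plan is to check directly that the vector $\mathbf{v}_{1}$ in~\eqref{2.30} solves both halves of the Lax pair~\eqref{2.1}, i.e. $\mathbf{v}_{1,x}=\mathbf{X}\mathbf{v}_{1}$ and $\mathbf{v}_{1,t}=\mathbf{T}\mathbf{v}_{1}$. The only nonelementary ingredient is the cross-product identity borrowed from~\cite{B1}: for every $3\times3$ matrix $\mathbf{A}$ and vectors $\mathbf{u},\mathbf{w}\in\mathbb{C}^{3}$,
\[
(\mathbf{A}\mathbf{u})\times\mathbf{w}+\mathbf{u}\times(\mathbf{A}\mathbf{w})=\left[(\operatorname{tr}\mathbf{A})\mathbf{I}-\mathbf{A}^{T}\right](\mathbf{u}\times\mathbf{w}).
\]
Setting $\mathbf{w}:=\widetilde{\mathbf{v}}_{2}\times\widetilde{\mathbf{v}}_{3}$ and differentiating with the adjoint equations~\eqref{2.28}, this identity gives $\mathbf{w}_{x}=\mathbf{B}\mathbf{w}$ and $\mathbf{w}_{t}=\mathbf{C}\mathbf{w}$ with $\mathbf{B}=(\operatorname{tr}\widetilde{\mathbf{X}})\mathbf{I}-\widetilde{\mathbf{X}}^{T}$ and $\mathbf{C}=(\operatorname{tr}\widetilde{\mathbf{T}})\mathbf{I}-\widetilde{\mathbf{T}}^{T}$, since both $\widetilde{\mathbf{v}}_{2}$ and $\widetilde{\mathbf{v}}_{3}$ obey the same adjoint matrices.

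For the $x$-equation, I would differentiate $\mathbf{v}_{1}=\mathrm{e}^{\mathrm{i}\delta_{2}}\mathbf{J}\mathbf{w}$ to obtain $\mathbf{v}_{1,x}=\mathrm{e}^{\mathrm{i}\delta_{2}}(\mathrm{i}\delta_{2,x}\mathbf{J}+\mathbf{J}\mathbf{B})\mathbf{w}$, so that $\mathbf{v}_{1,x}=\mathbf{X}\mathbf{v}_{1}$ is equivalent to $\mathrm{i}\delta_{2,x}\mathbf{J}+\mathbf{J}\mathbf{B}=\mathbf{X}\mathbf{J}$. Right-multiplying by $\mathbf{J}$ and using $\mathbf{J}^{2}=\mathbf{I}$, this becomes
\[
(\operatorname{tr}\widetilde{\mathbf{X}}+\mathrm{i}\delta_{2,x})\mathbf{I}=\mathbf{X}+\mathbf{J}\widetilde{\mathbf{X}}^{T}\mathbf{J}.
\]
Now I evaluate the two sides. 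From~\eqref{2.29} one has $\operatorname{tr}\widetilde{\mathbf{X}}=\mathrm{i}k$, while $\delta_{2,x}=-k$ by~\eqref{2.16}, so the left-hand side vanishes. For the right-hand side I use the stated symmetries: $\mathbf{J}\mathbf{Q}^{*}=\mathbf{Q}^{T}\mathbf{J}$ gives $\mathbf{Q}^{T}=\mathbf{J}\mathbf{Q}^{*}\mathbf{J}$, whose transpose combined with $\mathbf{J}\mathbf{Q}=-\mathbf{Q}\mathbf{J}$ yields $(\mathbf{Q}^{*})^{T}=\mathbf{J}\mathbf{Q}\mathbf{J}=-\mathbf{Q}$; hence $\widetilde{\mathbf{X}}^{T}=-\mathrm{i}k\mathbf{J}+\mathrm{i}\mathbf{Q}$ and $\mathbf{J}\widetilde{\mathbf{X}}^{T}\mathbf{J}=-\mathrm{i}k\mathbf{J}-\mathrm{i}\mathbf{Q}=-\mathbf{X}$, so the right-hand side vanishes too. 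This settles $\mathbf{v}_{1,x}=\mathbf{X}\mathbf{v}_{1}$.

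The $t$-equation is structurally identical: it reduces to
\[
(\operatorname{tr}\widetilde{\mathbf{T}}+\mathrm{i}\delta_{2,t})\mathbf{I}=\mathbf{T}+\mathbf{J}\widetilde{\mathbf{T}}^{T}\mathbf{J}.
\]
A quick trace computation (each of $\mathbf{Q}$, $\mathbf{Q}^{3}$, $\mathbf{Q}_{x}\mathbf{J}$, $\mathbf{J}\mathbf{Q}^{2}$, $\mathbf{Q}_{xx}$ and $[\mathbf{Q},\mathbf{Q}_{x}]$ being traceless) gives $\operatorname{tr}\widetilde{\mathbf{T}}=\mathrm{i}(4\sigma k^{3}+2k^{2}-q_{0}^{2})$, which cancels $\mathrm{i}\delta_{2,t}=-\mathrm{i}(\lambda^{2}+k^{2}+4\sigma k^{3})$ after substituting $\lambda^{2}=k^{2}-q_{0}^{2}$. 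It then remains to prove the matrix identity $\mathbf{J}\widetilde{\mathbf{T}}^{T}\mathbf{J}=-\mathbf{T}$, i.e. the defocusing symmetry $\mathbf{T}^{\dagger}=-\mathbf{J}\mathbf{T}\mathbf{J}$ on the real axis. This is the main obstacle: not because any individual manipulation is delicate, but because $\mathbf{T}$ in~\eqref{2.2} is a cubic in $k$ with roughly ten monomials in $\mathbf{J}$, $\mathbf{Q}$, $\mathbf{Q}_{x}$, $\mathbf{Q}_{xx}$ and $[\mathbf{Q},\mathbf{Q}_{x}]$, and each must be shown to be $\mathbf{J}$-conjugation antisymmetric. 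The verification is mechanical once the elementary consequences $(\mathbf{Q}^{*})^{T}=-\mathbf{Q}$, $\mathbf{J}\mathbf{Q}\mathbf{J}=-\mathbf{Q}$, $\mathbf{J}\mathbf{Q}^{2}\mathbf{J}=\mathbf{Q}^{2}$, $\mathbf{J}\mathbf{Q}^{3}\mathbf{J}=-\mathbf{Q}^{3}$ and $\mathbf{J}[\mathbf{Q},\mathbf{Q}_{x}]\mathbf{J}=[\mathbf{Q},\mathbf{Q}_{x}]$ are recorded, all of which merely propagate the single relation $\mathbf{J}\mathbf{Q}=-\mathbf{Q}\mathbf{J}$ through the respective powers. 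With both algebraic identities established, $\mathbf{v}_{1}$ solves~\eqref{2.1}, which proves the proposition.
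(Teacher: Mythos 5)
Your proof is correct and follows exactly the route the paper indicates: the paper's entire argument for Proposition~\ref{pro:1} is the remark that it ``can be directly proven through'' the properties $\mathbf{J}\mathbf{Q}=-\mathbf{Q}\mathbf{J}$, $\mathbf{J}\mathbf{Q}^{*}=\mathbf{Q}^{T}\mathbf{J}$ and the cross-product identity of~\cite{B1}, which are precisely the ingredients you invoke. Your write-up simply carries out the verification the paper leaves implicit --- the trace cancellations $\operatorname{tr}\widetilde{\mathbf{X}}+\mathrm{i}\delta_{2,x}=0$ and $\operatorname{tr}\widetilde{\mathbf{T}}+\mathrm{i}\delta_{2,t}=0$ (the latter using $\lambda^{2}=k^{2}-q_{0}^{2}$), together with the conjugation identities $\mathbf{J}\widetilde{\mathbf{X}}^{T}\mathbf{J}=-\mathbf{X}$ and $\mathbf{J}\widetilde{\mathbf{T}}^{T}\mathbf{J}=-\mathbf{T}$ --- and all of these steps check out.
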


As $x\to\pm\infty$, the behavior of the solutions derived from the ``adjoint" Lax pair~\eqref{2.28} will approach an asymptotic state in terms of both spatial and temporal.
\begin{equation}\label{2.31}
\begin{split}
\widetilde{\psi}_{x}=\widetilde{\mathbf{X}}_{\pm}\widetilde{\psi}, \quad \widetilde{\psi}_{t}=\widetilde{\mathbf{T}}_{\pm}\widetilde{\psi},
\end{split}
\end{equation}
where
\begin{subequations}\label{2.32}
\begin{align}
\lim_{x\rightarrow\pm\infty}{\widetilde{\mathbf{X}}}&=\widetilde{\mathbf{X}}_{\pm}
=-\mathrm{i}k\mathbf{J}-\mathrm{i}\mathbf{Q}_{\pm}^{*}, \\
\lim_{x\rightarrow\pm\infty}{\widetilde{\mathbf{T}}}&=\widetilde{\mathbf{T}}_{\pm}
=-4\mathrm{i}\sigma k^{3}\mathbf{J}+\mathrm{i}q_{0}^{2}\mathbf{J}
-k^{2}(4\mathrm{i}\sigma\mathbf{Q}_{\pm}^{*}+2\mathrm{i}\mathbf{J})
+k(2\mathrm{i}\sigma\mathbf{J}(\mathbf{Q}_{\pm}^{*})^{2}-2\mathrm{i}\mathbf{Q}_{\pm}^{*})
+2\mathrm{i}\sigma(\mathbf{Q}_{\pm}^{*})^{3}+\mathrm{i}\mathbf{J}(\mathbf{Q}_{\pm}^{*})^{2}.
\end{align}
\end{subequations}
The eigenvalues of $\widetilde{\mathbf{X}}_{\pm}$ are $\mathrm{i}k$ and $\pm\mathrm{i}\lambda$,
the eigenvalues of $\widetilde{\mathbf{T}}_{\pm}$ are $\mathrm{i}(\lambda^{2}+k^{2}+4\sigma k^{3})$ and $\pm 2\mathrm{i}\lambda[k+\sigma(3k^{2}-\lambda^{2})]$. Additionally, properties $\widetilde{\mathbf{Y}}_{\pm}(z)=\mathbf{Y}_{\pm}^{*}(z)$ and $\det \widetilde{\mathbf{Y}}_{\pm}(z)=-\mathrm{i}\rho(z)$ are present. It can be straightforwardly determined that $\widetilde{\mathbf{X}}_{\pm}$ and $\widetilde{\mathbf{T}}_{\pm}$ fulfill the subsequent conditions.
\begin{equation}\label{2.33}
\begin{split}
\widetilde{\mathbf{X}}_{\pm}\widetilde{\mathbf{Y}}_{\pm}(z)&=
-\mathrm{i}\widetilde{\mathbf{Y}}_{\pm}(z)\mathbf{\Lambda}_{1}(z), \quad
\widetilde{\mathbf{T}}_{\pm}\widetilde{\mathbf{Y}}_{\pm}(z)=
-\mathrm{i}\widetilde{\mathbf{Y}}_{\pm}(z)\mathbf{\Lambda}_{2}(z).
\end{split}
\end{equation}

Similarly, the Jost solutions of the ``adjoint" Lax pair~\eqref{2.28}
\begin{equation}\label{2.34}
\begin{split}
\widetilde{\psi}_{\pm}(z;x,t)=\widetilde{\mathbf{Y}}_{\pm}(z) \mathbf{e}^{-\mathrm{i}\mathbf{\Delta}(z;x,t)}+o(1), \quad x\rightarrow\pm\infty,
\quad z\in \mathbb{R}.
\end{split}
\end{equation}
Introducing the modified Jost solutions
\begin{equation}\label{2.35}
\begin{split}
\widetilde{\nu}_{\pm}(z;x,t)=\widetilde{\psi}_{\pm}(z;x,t) \mathbf{e}^{\mathrm{i}\mathbf{\Delta}(z;x,t)},
\end{split}
\end{equation}
the subsequent columns of the function $\widetilde{\nu}_{\pm}(z;x,t)$ adhere to the ensuing properties.
\begin{equation}\label{2.36}
\begin{split}
\widetilde{\nu}_{-,3}(z;x,t) \;\; \text{and} \;\; \widetilde{\nu}_{+,1}(z;x,t) \; : \; z\in \mathbb{D}^{-},  \qquad
\widetilde{\nu}_{-,1}(z;x,t) \;\; \text{and} \;\; \widetilde{\nu}_{+,3}(z;x,t) \; : \; z\in \mathbb{D}^{+}.
\end{split}
\end{equation}

The modified Jost solutions imply that the columns of $\widetilde{\psi}_{\pm}(z;x,t)$ exhibit analogous properties of analyticity and boundedness. In a similar fashion, the ``adjoint" scattering matrix can likewise be defined as follows:
\begin{equation}\label{2.37}
\begin{split}
\widetilde{\psi}_{+}(z;x,t)=\widetilde{\psi}_{-}(z;x,t)\widetilde{\mathbf{H}}(z),
\end{split}
\end{equation}
where $\widetilde{\mathbf{H}}(z)=(\widetilde{h}_{ij}(z))$. Similarly, define $\widetilde{\mathbf{S}}(z)=\widetilde{\mathbf{H}}^{-1}(z)=(\widetilde{s}_{ij}(z))$. The following scattering coefficients satisfy the following properties:
\begin{equation}\label{2.38}
\begin{split}
\widetilde{s}_{33}(z) \;\; \text{and} \;\; \widetilde{h}_{11}(z) \; : \; z\in \mathbb{D}^{-}, \qquad
\widetilde{s}_{11}(z) \;\; \text{and} \;\; \widetilde{h}_{33}(z) \; : \; z\in \mathbb{D}^{+}.
\end{split}
\end{equation}

In order to fully construct the analytical eigenfunctions, two new solutions for the original Lax pair~\eqref{2.1} are defined
\begin{equation}\label{2.39}
\begin{split}
\gamma=-\frac{\mathrm{i}\mathrm{e}^{\mathrm{i}\delta_{2}}\mathbf{J}
[\widetilde{\psi}_{-,3} \times \widetilde{\psi}_{+,1}]}{\rho(z)}, \quad
z\in \mathbb{D}^{-},  \quad
\widetilde{\gamma}=-\frac{\mathrm{i}\mathrm{e}^{\mathrm{i}\delta_{2}}
\mathbf{J}[\widetilde{\psi}_{-,1} \times \widetilde{\psi}_{+,3}]}{\rho(z)}, \quad
z\in \mathbb{D}^{+},
\end{split}
\end{equation}
where $\gamma=\gamma(z;x,t)$, $\widetilde{\gamma}=\widetilde{\gamma}(z;x,t)$ and $\widetilde{\psi}_{\pm,j}=\widetilde{\psi}_{\pm,j}(z;x,t)$ for $j=1,3$. Then we can directly derive the following three conclusions:

\begin{corollary}\label{cor:1}
For all cyclic indices $j$, $l$ and $m$ with $z\in \mathbb{R}$,
\begin{equation}\label{2.40}
\begin{split}
\psi_{\pm,j}=-\frac{\mathrm{i}\mathrm{e}^{\mathrm{i}\delta_{2}}\mathbf{J}
[\widetilde{\psi}_{\pm,l} \times \widetilde{\psi}_{\pm,m}]}{\rho_{j}(z)}, \quad
\widetilde{\psi}_{\pm,j}=\frac{\mathrm{i}\mathrm{e}^{-\mathrm{i}\delta_{2}}\mathbf{J}
[\psi_{\pm,l} \times \psi_{\pm,m}]}{\rho_{j}(z)},
\end{split}
\end{equation}
where
\begin{equation}\label{2.41}
\begin{split}
\rho_{1}(z)=-1, \quad \rho_{2}(z)=\rho(z), \quad \rho_{3}(z)=1.
\end{split}
\end{equation}
\end{corollary}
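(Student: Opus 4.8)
The plan is to observe that both sides of each identity in~\eqref{2.40} solve the same linear system, and then to pin them down by matching their behaviour as $x\to\pm\infty$. For the first family, Proposition~\ref{pro:1} guarantees that for any two adjoint columns the vector $\mathrm{e}^{\mathrm{i}\delta_2}\mathbf{J}[\widetilde{\psi}_{\pm,l}\times\widetilde{\psi}_{\pm,m}]$ solves the original Lax pair~\eqref{2.1}, so the right-hand side is a Lax-pair solution for every cyclic triple $(j;l,m)$. For the second family I would invoke the dual statement, obtained by complex-conjugating Proposition~\ref{pro:1}: since $\widetilde{\mathbf{X}}=\mathbf{X}^*$, $\widetilde{\mathbf{T}}=\mathbf{T}^*$, the matrix $\mathbf{J}$ is real, and $\delta_2$ is real for $z\in\mathbb{R}$, the vector $\mathrm{e}^{-\mathrm{i}\delta_2}\mathbf{J}[\psi_{\pm,l}\times\psi_{\pm,m}]$ solves the adjoint Lax pair~\eqref{2.28} whenever $\psi_{\pm,l},\psi_{\pm,m}$ solve~\eqref{2.1}. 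Because the solution of the (original or adjoint) Lax pair with a prescribed limit as $x\to\pm\infty$ is unique --- this being the content of the Volterra equation~\eqref{2.21} defining the Jost solutions --- it then suffices to verify~\eqref{2.40} in that limit.

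Next I would carry out the asymptotic reduction. By~\eqref{2.15} and~\eqref{2.34} with $\mathbf{\Delta}=\operatorname{diag}(\delta_1,\delta_2,-\delta_1)$, the columns tend to the columns of $\mathbf{Y}_\pm$, respectively $\widetilde{\mathbf{Y}}_\pm=\mathbf{Y}_\pm^*$, dressed by explicit exponentials, namely $\psi_{\pm,1}\sim\mathbf{Y}_{\pm,1}\mathrm{e}^{\mathrm{i}\delta_1}$, $\psi_{\pm,2}\sim\mathbf{Y}_{\pm,2}\mathrm{e}^{\mathrm{i}\delta_2}$, $\psi_{\pm,3}\sim\mathbf{Y}_{\pm,3}\mathrm{e}^{-\mathrm{i}\delta_1}$, and the conjugate exponents for $\widetilde{\psi}_{\pm,j}$. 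Inserting these into the cross products and using bilinearity, the exponential factors combine to reproduce exactly the surviving exponential of $\psi_{\pm,j}$ (respectively $\widetilde{\psi}_{\pm,j}$) together with the prefactor $\mathrm{e}^{\pm\mathrm{i}\delta_2}$; for example, at $j=2$ the factors $\mathrm{e}^{\pm\mathrm{i}\delta_1}$ from columns $3$ and $1$ cancel and only $\mathrm{e}^{\mathrm{i}\delta_2}$ remains. This removes all $(x,t)$-dependence and isolates the algebraic core of the claim.

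What is left is the constant-vector identity $\mathbf{Y}_{\pm,j}=-\tfrac{\mathrm{i}}{\rho_j}\mathbf{J}\,(\mathbf{Y}_{\pm,l}^*\times\mathbf{Y}_{\pm,m}^*)$ together with its adjoint analogue. I would deduce it from the adjugate formula: writing $\mathbf{Y}_\pm$ by its columns, the cyclic cross product $\mathbf{Y}_{\pm,l}\times\mathbf{Y}_{\pm,m}$ equals $\det\mathbf{Y}_\pm$ times the transpose of the $j$-th row of $\mathbf{Y}_\pm^{-1}$. Combining this with $\det\mathbf{Y}_\pm=\mathrm{i}\rho$ from~\eqref{2.11}, the explicit inverse~\eqref{2.12}, the conjugation rule $(\mathbf{u}\times\mathbf{v})^*=\mathbf{u}^*\times\mathbf{v}^*$, and the reality of $\rho$ and $z$ on $\mathbb{R}$, each of the three cases collapses to a one-line verification, in which $\mathbf{J}=\operatorname{diag}(1,-1,-1)$ flips the signs of the lower two entries precisely as needed to recover $\mathbf{Y}_{\pm,j}$. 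The second family is handled identically with the adjugate of $\widetilde{\mathbf{Y}}_\pm=\mathbf{Y}_\pm^*$.

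The main difficulty is not conceptual but lies in the bookkeeping of the factors of $\mathrm{i}$ and the normalizations $\rho_1=-1$, $\rho_2=\rho$, $\rho_3=1$. The weight $\rho_2=\rho$ is singled out because the middle row of $\mathbf{Y}_\pm^{-1}$ in~\eqref{2.12} carries the numerator factor $\mathrm{i}\rho/q_0$, which cancels the global prefactor $1/(\mathrm{i}\rho)$ of the inverse; consequently the determinantal factor $\mathrm{i}\rho$ delivered by the adjugate formula survives in the case $j=2$ and must be absorbed into $\rho_2=\rho$, whereas for $j=1,3$ the $1/(\mathrm{i}\rho)$ appearing in the relevant rows cancels it outright, leaving the bare constants $\mp1$. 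Tracking each factor of $\mathrm{i}$ through the conjugations, and keeping the convention $\mathbf{v}^\perp=(v_2,-v_1)^\dagger$ consistent, is where the real care is required.
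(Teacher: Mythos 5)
Your proof is correct and follows essentially the same route the paper (implicitly, via Proposition~\ref{pro:1} and the construction borrowed from~\cite{B1}) intends: the cross product of two adjoint solutions solves the original Lax pair, conjugation handles the dual family, uniqueness of solutions with prescribed oscillatory asymptotics on $z\in\mathbb{R}\setminus\{\pm q_{0}\}$ reduces everything to the limit $x\to\pm\infty$, and the adjugate identity $\mathbf{Y}_{\pm,l}\times\mathbf{Y}_{\pm,m}=\det\mathbf{Y}_{\pm}\,[(\mathbf{Y}_{\pm}^{-1})^{T}]_{j}$ with $\det\mathbf{Y}_{\pm}=\mathrm{i}\rho$ yields the constants $\rho_{1}=-1$, $\rho_{2}=\rho$, $\rho_{3}=1$ exactly as you describe. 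No gaps.
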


\begin{corollary}\label{cor:2}
The scattering matrix $\mathbf{S}(z)$ and $\widetilde{\mathbf{S}}(z)$ are related as follows
\begin{equation}\label{2.42}
\begin{split}
\widetilde{\mathbf{S}}^{-1}(z)=\mathbf{J}_{1}(z)\mathbf{S}^{T}(z)\mathbf{J}_{1}^{-1}(z), \quad  \mathbf{J}_{1}(z)=\operatorname{diag}\,(-1,\rho(z),1).
\end{split}
\end{equation}
\end{corollary}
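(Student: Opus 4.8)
The plan is to turn the column-wise cross-product identities of Corollary~\ref{cor:1} into a single matrix identity relating the adjoint Jost matrix $\widetilde{\psi}_{\pm}$ to the original Jost matrix $\psi_{\pm}$, and then to feed that identity into the two scattering relations~\eqref{2.25} and~\eqref{2.37}. Throughout I would work on the continuous spectrum $z\in\mathbb{R}\setminus\{\pm q_{0}\}$, where $\rho(z)\neq 0$ and both $\psi_{\pm}$ and $\widetilde{\psi}_{\pm}$ are invertible, so that every cancellation below is legitimate.

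First I would assemble the three identities $\widetilde{\psi}_{\pm,j}=\mathrm{i}\,\mathrm{e}^{-\mathrm{i}\delta_{2}}\mathbf{J}[\psi_{\pm,l}\times\psi_{\pm,m}]/\rho_{j}(z)$, for cyclic $(j,l,m)$, into one matrix equation. Writing $\mathbf{B}_{\pm}$ for the matrix whose columns are $\psi_{\pm,2}\times\psi_{\pm,3}$, $\psi_{\pm,3}\times\psi_{\pm,1}$, $\psi_{\pm,1}\times\psi_{\pm,2}$, the standard adjugate / scalar-triple-product identity (the same identity from~\cite{B1} underlying Proposition~\ref{pro:1}) gives $\psi_{\pm}^{T}\mathbf{B}_{\pm}=\det(\psi_{\pm})\mathbf{I}$, that is $\mathbf{B}_{\pm}=\det(\psi_{\pm})(\psi_{\pm}^{-1})^{T}$. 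Since by~\eqref{2.41} one has $\operatorname{diag}(1/\rho_{1},1/\rho_{2},1/\rho_{3})=\operatorname{diag}(-1,1/\rho(z),1)=\mathbf{J}_{1}^{-1}$, this yields
\[
\widetilde{\psi}_{\pm}=\mathrm{i}\,\mathrm{e}^{-\mathrm{i}\delta_{2}}\,\det(\psi_{\pm})\,\mathbf{J}\,(\psi_{\pm}^{-1})^{T}\mathbf{J}_{1}^{-1}.
\]
Substituting the determinant evaluation $\det\psi_{\pm}=\mathrm{i}\rho(z)\mathrm{e}^{\mathrm{i}\delta_{2}}$ from~\eqref{2.24} collapses the exponential and scalar prefactors to $-\rho(z)$, so that
\[
\widetilde{\psi}_{\pm}=-\rho(z)\,\mathbf{J}\,(\psi_{\pm}^{-1})^{T}\mathbf{J}_{1}^{-1},
\]
valid for both choices of sign.

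Finally I would insert the scattering relation $\psi_{+}=\psi_{-}\mathbf{H}$ of~\eqref{2.25}. Taking inverse and transpose gives $(\psi_{+}^{-1})^{T}=(\psi_{-}^{-1})^{T}(\mathbf{H}^{-1})^{T}$, so the displayed identity for $\widetilde{\psi}_{+}$ becomes $\widetilde{\psi}_{+}=-\rho(z)\mathbf{J}(\psi_{-}^{-1})^{T}(\mathbf{H}^{-1})^{T}\mathbf{J}_{1}^{-1}$. Comparing this with $\widetilde{\psi}_{+}=\widetilde{\psi}_{-}\widetilde{\mathbf{H}}=-\rho(z)\mathbf{J}(\psi_{-}^{-1})^{T}\mathbf{J}_{1}^{-1}\widetilde{\mathbf{H}}$, obtained from~\eqref{2.37} together with the displayed identity for $\widetilde{\psi}_{-}$, and cancelling the common invertible left factor $-\rho(z)\mathbf{J}(\psi_{-}^{-1})^{T}$, leaves $(\mathbf{H}^{-1})^{T}\mathbf{J}_{1}^{-1}=\mathbf{J}_{1}^{-1}\widetilde{\mathbf{H}}$. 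Hence $\widetilde{\mathbf{H}}=\mathbf{J}_{1}(\mathbf{H}^{-1})^{T}\mathbf{J}_{1}^{-1}$; recalling $\mathbf{S}=\mathbf{H}^{-1}$ and $\widetilde{\mathbf{S}}=\widetilde{\mathbf{H}}^{-1}$, this is exactly $\widetilde{\mathbf{S}}^{-1}(z)=\mathbf{J}_{1}(z)\mathbf{S}^{T}(z)\mathbf{J}_{1}^{-1}(z)$, as claimed.

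I expect the main obstacle to be bookkeeping rather than anything conceptual. One must pin down the precise cyclic correspondence $(j,l,m)$ so that the column-cross-product matrix $\mathbf{B}_{\pm}$ is \emph{exactly} $\det(\psi_{\pm})(\psi_{\pm}^{-1})^{T}$, since a sign or ordering slip there propagates into a wrong conjugation; and one must track the two transpositions carefully so that $\mathbf{H}$ enters as $(\mathbf{H}^{-1})^{T}=\mathbf{S}^{T}$ and not as $\mathbf{H}^{T}$. Verifying invertibility of all factors on $z\in\mathbb{R}\setminus\{\pm q_{0}\}$ is what justifies the cancellation and closes the argument.
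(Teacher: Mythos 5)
Your proof is correct: the assembly of the three cross-product identities of Corollary~\ref{cor:1} into the matrix relation $\widetilde{\psi}_{\pm}=-\rho(z)\,\mathbf{J}(\psi_{\pm}^{-1})^{T}\mathbf{J}_{1}^{-1}$ via the adjugate identity and $\det\psi_{\pm}=\mathrm{i}\rho(z)\mathrm{e}^{\mathrm{i}\delta_{2}}$, followed by substitution into the two scattering relations~\eqref{2.25} and~\eqref{2.37}, is exactly the derivation the paper intends (it states the corollary as a direct consequence of Corollary~\ref{cor:1} without spelling out the algebra), and all signs, orderings, and invertibility claims check out.
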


\begin{corollary}\label{cor:3}
The Jost eigenfunctions exhibit the following decompositions for $z\in \mathbb{R}$
\begin{equation}\label{2.43}
\begin{split}
\psi_{-,2}=\frac{s_{32}(z)\psi_{-,3}-\widetilde{\gamma}}{s_{33}(z)}
=\frac{s_{12}(z)\psi_{-,1}+\gamma}{s_{11}(z)}, \quad
\psi_{+,2}=\frac{h_{12}(z)\psi_{+,1}-\widetilde{\gamma}}{h_{11}(z)}
=\frac{h_{32}(z)\psi_{+,3}+\gamma}{h_{33}(z)}.
\end{split}
\end{equation}
\end{corollary}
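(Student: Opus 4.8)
The plan is to establish the two representations of each non-analytic column $\psi_{\pm,2}$ by a purely algebraic manipulation: I feed the definitions~\eqref{2.39} of the auxiliary functions $\gamma$ and $\widetilde{\gamma}$ through the cross-product identities of Corollary~\ref{cor:1}, the scattering relations, and the symmetry of Corollary~\ref{cor:2}. I will carry out $\psi_{-,2}$ in full; the formulas for $\psi_{+,2}$ follow by the same scheme with the roles of $\mathbf{H}$, $\mathbf{S}$ and of the $\pm$ labels interchanged.

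First, to obtain the decomposition of $\psi_{-,2}$, I would express the ``$+$''-indexed adjoint columns appearing in~\eqref{2.39} in terms of the ``$-$'' adjoint eigenfunctions. Writing the adjoint scattering relation~\eqref{2.37} column-wise, $\widetilde{\psi}_{+,1}=\sum_{k}\widetilde{h}_{k1}\widetilde{\psi}_{-,k}$ and $\widetilde{\psi}_{+,3}=\sum_{k}\widetilde{h}_{k3}\widetilde{\psi}_{-,k}$, and substituting into the expressions for $\gamma$ and $\widetilde{\gamma}$, the diagonal cross products $\widetilde{\psi}_{-,3}\times\widetilde{\psi}_{-,3}$ and $\widetilde{\psi}_{-,1}\times\widetilde{\psi}_{-,1}$ vanish. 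Each surviving product $\widetilde{\psi}_{-,l}\times\widetilde{\psi}_{-,m}$ is then rewritten, via the first identity in~\eqref{2.40} with the appropriate cyclic index and weight $\rho_{j}(z)$ from~\eqref{2.41}, as a single Jost column $\psi_{-,j}$; here the factor $\mathrm{e}^{\mathrm{i}\delta_{2}}$ cancels against the prefactor in~\eqref{2.39}, while the antisymmetry of the cross product contributes signs that must be tracked carefully. After this substitution, $\gamma$ and $\widetilde{\gamma}$ become linear combinations of $\psi_{-,1},\psi_{-,2},\psi_{-,3}$ whose coefficients involve $\widetilde{h}_{11},\widetilde{h}_{21}$ and $\widetilde{h}_{23},\widetilde{h}_{33}$ respectively, together with explicit powers of $\rho(z)$.

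The next step is to convert these adjoint scattering coefficients into the original ones. From Corollary~\ref{cor:2}, $\widetilde{\mathbf{H}}=\widetilde{\mathbf{S}}^{-1}=\mathbf{J}_{1}\mathbf{S}^{T}\mathbf{J}_{1}^{-1}$, which with $\mathbf{J}_{1}=\operatorname{diag}(-1,\rho,1)$ gives entrywise $\widetilde{h}_{ij}=(\mathbf{J}_{1})_{ii}(\mathbf{J}_{1}^{-1})_{jj}\,s_{ji}$; in particular $\widetilde{h}_{11}=s_{11}$, $\widetilde{h}_{21}=-\rho s_{12}$, $\widetilde{h}_{23}=\rho s_{32}$, $\widetilde{h}_{33}=s_{33}$. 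Inserting these, I expect the $\rho$-factors to cancel exactly, leaving $\gamma=s_{11}\psi_{-,2}-s_{12}\psi_{-,1}$ and $\widetilde{\gamma}=s_{32}\psi_{-,3}-s_{33}\psi_{-,2}$; solving each for $\psi_{-,2}$ reproduces~\eqref{2.43}. For $\psi_{+,2}$ I would instead expand the ``$-$''-indexed adjoint columns $\widetilde{\psi}_{-,3},\widetilde{\psi}_{-,1}$ through $\widetilde{\psi}_{-}=\widetilde{\psi}_{+}\widetilde{\mathbf{S}}$, apply the same identities of~\eqref{2.40} to the ``$+$'' functions, and translate the coefficients via $\widetilde{\mathbf{S}}=\mathbf{J}_{1}\mathbf{H}^{T}\mathbf{J}_{1}^{-1}$, i.e.\ $\widetilde{s}_{ij}=(\mathbf{J}_{1})_{ii}(\mathbf{J}_{1}^{-1})_{jj}\,h_{ji}$, obtaining $\gamma=h_{33}\psi_{+,2}-h_{32}\psi_{+,3}$ and $\widetilde{\gamma}=h_{12}\psi_{+,1}-h_{11}\psi_{+,2}$.

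The main obstacle is bookkeeping rather than conceptual: one must keep the signs from the antisymmetry of the cross product, the index-dependent weights $\rho_{j}\in\{-1,\rho,1\}$ from Corollary~\ref{cor:1}, and the diagonal factors from $\mathbf{J}_{1},\mathbf{J}_{1}^{-1}$ in Corollary~\ref{cor:2} mutually consistent, so that the powers of $\rho(z)$ and the factors of $\mathrm{i}$ and $\mathrm{e}^{\mathrm{i}\delta_{2}}$ all cancel and leave the clean coefficients $s_{ij},h_{ij}$ displayed in~\eqref{2.43}. Since all four relevant columns are simultaneously defined on $z\in\mathbb{R}$, I would carry out and verify these cancellations there, which is exactly the range on which~\eqref{2.43} is asserted.
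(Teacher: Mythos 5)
Your proposal is correct and follows essentially the same route the paper intends: Corollary~\ref{cor:3} is presented there as a direct consequence of the definitions~\eqref{2.39}, the cross-product identities of Corollary~\ref{cor:1}, and the symmetry of Corollary~\ref{cor:2}, which is exactly the substitution-and-cancellation argument you carry out (expand the adjoint columns via the adjoint scattering relation, kill the repeated cross products, convert back with the weights $\rho_{j}$, and translate $\widetilde{h}_{ij},\widetilde{s}_{ij}$ into $s_{ji},h_{ji}$). Your intermediate identities $\gamma=s_{11}\psi_{-,2}-s_{12}\psi_{-,1}$, $\widetilde{\gamma}=s_{32}\psi_{-,3}-s_{33}\psi_{-,2}$, $\gamma=h_{33}\psi_{+,2}-h_{32}\psi_{+,3}$ and $\widetilde{\gamma}=h_{12}\psi_{+,1}-h_{11}\psi_{+,2}$ all check out, including the signs and the cancellation of the $\rho(z)$, $\mathrm{i}$, and $\mathrm{e}^{\mathrm{i}\delta_{2}}$ factors.
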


Furthermore, the modified auxiliary eigenfunctions are delineated as follows:
\begin{equation}\label{2.44}
\begin{split}
d(z;x,t)=\gamma\mathrm{e}^{-\mathrm{i}\delta_{2}}, \quad z\in \mathbb{D}^{-}, \qquad
\widetilde{d}(z;x,t)=\widetilde{\gamma}\mathrm{e}^{-\mathrm{i}\delta_{2}}, \quad z\in \mathbb{D}^{+}.
\end{split}
\end{equation}

\subsection{Symmetries}

Compared with the equation with the initial value condition of ZBCs, when dealing with the equation with NZBCs, the corresponding symmetry becomes complicated due to the existence of Riemann surface. Hence, it is imperative to recognize that the symmetry inherent in the potential within the Lax pair engenders the corresponding symmetry in the scattering data. To ensure symmetry, it is essential to contemplate the subsequent involutions: $z\mapsto z^{*}$ and $z\mapsto -q_{0}^{2}/z$.

\subsubsection{First symmetry}

\begin{proposition}\label{pro:2}
If $\psi(z;x,t)$ is a non-singular solution of the Lax pair~\eqref{2.1}, so is
\begin{equation}\label{2.45}
\begin{split}
\mathbf{v}_{4}(z;x,t)=\mathbf{J}[\psi^{\dagger}(z^{*};x,t)]^{-1}.
\end{split}
\end{equation}
\end{proposition}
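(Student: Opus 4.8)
The plan is to verify directly that the stated $\mathbf{v}_{4}(z;x,t)=\mathbf{J}[\psi^{\dagger}(z^{*};x,t)]^{-1}$ solves both halves of the Lax pair~\eqref{2.1}, namely $\mathbf{v}_{4,x}=\mathbf{X}\mathbf{v}_{4}$ and $\mathbf{v}_{4,t}=\mathbf{T}\mathbf{v}_{4}$. The crux is to reduce the entire claim to two purely algebraic symmetry identities for the Lax matrices. Writing $k=k(z)$ and noting that~\eqref{2.9} together with $q_{0}\in\mathbb{R}$ yields $k(z^{*})=k(z)^{*}$ and $\lambda(z^{*})=\lambda(z)^{*}$, I would first establish
\begin{equation}
\mathbf{X}(k)=-\mathbf{J}\,\mathbf{X}(k^{*})^{\dagger}\mathbf{J}^{-1},\qquad \mathbf{T}(k)=-\mathbf{J}\,\mathbf{T}(k^{*})^{\dagger}\mathbf{J}^{-1},
\end{equation}
where $\mathbf{X}(k^{*})$ and $\mathbf{T}(k^{*})$ denote the matrices~\eqref{2.2} evaluated at the conjugated spectral parameter. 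I will use throughout that $\mathbf{J}$ is a Hermitian involution, so $\mathbf{J}^{\dagger}=\mathbf{J}^{-1}=\mathbf{J}$.

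Granting these identities, the remainder is a standard conjugate-transpose-and-invert computation. Since $\psi$ solves~\eqref{2.1}, the function $\psi(z^{*};x,t)$ satisfies $\partial_{x}\psi(z^{*})=\mathbf{X}(k^{*})\psi(z^{*})$ and $\partial_{t}\psi(z^{*})=\mathbf{T}(k^{*})\psi(z^{*})$. Taking conjugate transposes turns these into right-multiplication equations for $\psi^{\dagger}(z^{*})$, and differentiating $[\psi^{\dagger}(z^{*})]^{-1}\psi^{\dagger}(z^{*})=\mathbf{I}$ (the inverse exists because $\psi$ is non-singular) then gives
\begin{equation}
\partial_{x}[\psi^{\dagger}(z^{*})]^{-1}=-\mathbf{X}(k^{*})^{\dagger}[\psi^{\dagger}(z^{*})]^{-1},\qquad \partial_{t}[\psi^{\dagger}(z^{*})]^{-1}=-\mathbf{T}(k^{*})^{\dagger}[\psi^{\dagger}(z^{*})]^{-1}.
\end{equation}
Left-multiplying by $\mathbf{J}$ and inserting $\mathbf{J}^{-1}\mathbf{J}=\mathbf{I}$ yields $\partial_{x}\mathbf{v}_{4}=[-\mathbf{J}\mathbf{X}(k^{*})^{\dagger}\mathbf{J}^{-1}]\mathbf{v}_{4}$, which by the first identity is exactly $\mathbf{X}\mathbf{v}_{4}$; the $t$-equation follows verbatim from the second identity.

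It therefore remains to verify the two matrix identities, and this is where the real work lies. The structural inputs are $\mathbf{Q}^{\dagger}=-\mathbf{Q}$ (so $(\mathbf{Q}^{m})^{\dagger}=(-1)^{m}\mathbf{Q}^{m}$), the anticommutation $\mathbf{J}\mathbf{Q}=-\mathbf{Q}\mathbf{J}$ recorded in~\eqref{2.3}, and the reality of $\sigma$ and $q_{0}$. For $\mathbf{X}=\mathrm{i}k\mathbf{J}+\mathrm{i}\mathbf{Q}$ the identity is immediate from these facts. The identity for $\mathbf{T}$ is the main obstacle, since~\eqref{2.2} contains many terms; the efficient route is to treat a generic summand $c\,k^{m}\mathbf{M}$ of $\mathbf{T}(k)$, observe that $-\mathbf{J}(\cdot)^{\dagger}\mathbf{J}^{-1}$ sends its conjugated counterpart $c\,(k^{*})^{m}\mathbf{M}$ to $-\bar{c}\,k^{m}\,\mathbf{J}\mathbf{M}^{\dagger}\mathbf{J}$, and thereby reduce the whole identity to the collection of scalar-matrix relations $-\mathbf{J}\mathbf{M}^{\dagger}\mathbf{J}=(c/\bar{c})\,\mathbf{M}$, one for each $\mathbf{M}\in\{\mathbf{J},\mathbf{Q},\mathbf{Q}^{2},\mathbf{Q}^{3},\mathbf{J}\mathbf{Q}^{2},\mathbf{Q}_{x}\mathbf{J},\mathbf{Q}_{xx},[\mathbf{Q},\mathbf{Q}_{x}]\}$. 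Each such relation follows mechanically from the three structural facts, with $c/\bar{c}=-1$ when $c$ is imaginary and $c/\bar{c}=+1$ when $c$ is real, and I expect the only place demanding care to be the commutator term $\sigma[\mathbf{Q},\mathbf{Q}_{x}]$, where one must combine $\mathbf{Q}^{\dagger}=-\mathbf{Q}$ with $\mathbf{J}\mathbf{Q}=-\mathbf{Q}\mathbf{J}$ to see that the two sign flips cancel and the term is reproduced exactly.
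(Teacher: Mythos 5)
Your proposal is correct: the conjugation identities $\mathbf{X}(k)=-\mathbf{J}\mathbf{X}(k^{*})^{\dagger}\mathbf{J}^{-1}$ and $\mathbf{T}(k)=-\mathbf{J}\mathbf{T}(k^{*})^{\dagger}\mathbf{J}^{-1}$ do hold term by term (each summand of $\mathbf{T}$ in~\eqref{2.2} satisfies $-\mathbf{J}\mathbf{M}^{\dagger}\mathbf{J}=(c/c^{*})\mathbf{M}$ via $\mathbf{Q}^{\dagger}=-\mathbf{Q}$, $\mathbf{J}\mathbf{Q}=-\mathbf{Q}\mathbf{J}$, $\mathbf{J}^{2}=\mathbf{I}$, and the reality of $\sigma$ and $q_{0}$), and the differentiation of $[\psi^{\dagger}(z^{*})]^{-1}$ then transfers the Lax pair to $\mathbf{v}_{4}$ exactly as you describe. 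The paper states Proposition~\ref{pro:2} without writing out a proof, implicitly relying on the direct-verification argument of the reference~\cite{B1}, and your argument is precisely that standard verification, so there is nothing to reconcile.
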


By using property $[\mathbf{e}^{\mathrm{i}\mathbf{\Delta}(z^{*};x,t)}]^{\dagger}
=\mathbf{e}^{-\mathrm{i}\mathbf{\Delta}(z;x,t)}$, the Jost eigenfunctions exhibit the specific symmetries
\begin{equation}\label{2.46}
\begin{split}
\psi_{\pm}(z;x,t)=\mathbf{J}[\psi_{\pm}^{\dagger}(z^{*};x,t)]^{-1}\mathbf{J}_{2}(z), \quad
\mathbf{J}_{2}(z)=\operatorname{diag}\,(\rho(z),-1,-\rho(z)), \quad z\in \mathbb{R}.
\end{split}
\end{equation}
Subsequently, employing the Schwarz reflection principle, we obtain the subsequent results:
\begin{subequations}\label{2.47}
\begin{align}
\psi_{-,1}^{*}(z^{*})&=\frac{\mathrm{i}\mathbf{J}\mathrm{e}^{-\mathrm{i}\delta_{2}(z)}
[\widetilde{\gamma}(z) \times \psi_{-,3}(z)]}{s_{33}(z)}, \quad \mathfrak{Im}z\geq0,  \\
\psi_{+,1}^{*}(z^{*})&=\frac{\mathrm{i}\mathbf{J}\mathrm{e}^{-\mathrm{i}\delta_{2}(z)}
[\gamma(z) \times \psi_{+,3}(z)]}{-h_{33}(z)}, \quad \mathfrak{Im}z\leq0,  \\
\psi_{-,3}^{*}(z^{*})&=\frac{\mathrm{i}\mathbf{J}\mathrm{e}^{-\mathrm{i}\delta_{2}(z)}
[\gamma(z) \times \psi_{-,1}(z)]}{-s_{11}(z)}, \quad \mathfrak{Im}z\leq0,  \\
\psi_{+,3}^{*}(z^{*})&=\frac{\mathrm{i}\mathbf{J}\mathrm{e}^{-\mathrm{i}\delta_{2}(z)}
[\widetilde{\gamma}(z) \times \psi_{+,1}(z)]}{h_{11}(z)}, \quad \mathfrak{Im}z\geq0.
\end{align}
\end{subequations}

In addition, using the scattering relationship~\eqref{2.25}, ~\eqref{2.46} and the property $\mathbf{J}_{2}(z)=-\rho(z)\mathbf{J}_{1}^{-1}(z)$, the scattering matrices are interrelated as follows:
\begin{equation}\label{2.48}
\begin{split}
\mathbf{S}^{\dagger}(z)=\mathbf{J}_{1}^{-1}(z)\mathbf{H}(z)\mathbf{J}_{1}(z), \quad
z\in \mathbb{R}.
\end{split}
\end{equation}
Accordingly, it can be deduced that for $z\in \mathbb{R}$
\begin{subequations}\label{2.49}
\begin{align}
h_{11}(z)&=s_{11}^{*}(z), \quad \quad \quad \,\,\,\, h_{12}(z)=-\frac{s_{21}^{*}(z)}{\rho(z)}, \quad  \quad  h_{13}(z)=-s_{31}^{*}(z),  \\
h_{21}(z)&=-\rho(z)s_{12}^{*}(z), \quad \, h_{22}(z)=s_{22}^{*}(z), \quad  \quad \,\,\,\,\,
h_{23}(z)=\rho(z)s_{32}^{*}(z),  \\
h_{31}(z)&=-s_{13}^{*}(z), \quad \quad \quad  h_{32}(z)=\frac{s_{23}^{*}(z)}{\rho(z)}, \quad  \quad \,\,\,\, h_{33}(z)=s_{33}^{*}(z).
\end{align}
\end{subequations}
Then, we can draw this conclusion
\begin{equation}\label{2.50}
\begin{split}
h_{11}(z)=s_{11}^{*}(z^{*}), \quad  \mathfrak{Im}z\geq0,  \qquad
h_{33}(z)=s_{33}^{*}(z^{*}), \quad  \mathfrak{Im}z\leq0.
\end{split}
\end{equation}

The property $\psi_{\pm}^{*}(z^{*};x,t)=\widetilde{\psi}_{\pm}(z;x,t)$ is obtained, so the following conditions are established
\begin{equation}\label{2.51}
\begin{split}
\psi_{\pm,1}^{*}(z^{*})=\widetilde{\psi}_{\pm,1}(z), \quad  \mathfrak{Im}z\lessgtr0, \qquad
\psi_{\pm,3}^{*}(z^{*})=\widetilde{\psi}_{\pm,3}(z), \quad  \mathfrak{Im}z\gtrless0.
\end{split}
\end{equation}
Then, through the property~\eqref{2.51} and new auxiliary eigenfunctions~\eqref{2.39}, we derive the following conclusion.

\begin{corollary}\label{cor:4}
The new auxiliary eigenfunctions~\eqref{2.39} adhere to the symmetry relations:
\begin{subequations}\label{2.52}
\begin{align}
\gamma(z)&=-\frac{\mathrm{i}\mathrm{e}^{\mathrm{i}\delta_{2}(z)}\mathbf{J}
[\psi_{-,3}^{*}(z^{*}) \times \psi_{+,1}^{*}(z^{*})]}{\rho(z)}, \quad
z\in \mathbb{D}^{-},   \\
\widetilde{\gamma}(z)&=-\frac{\mathrm{i}\mathrm{e}^{\mathrm{i}\delta_{2}(z)}\mathbf{J}
[\psi_{-,1}^{*}(z^{*}) \times \psi_{+,3}^{*}(z^{*})]}{\rho(z)}, \quad
z\in \mathbb{D}^{+}.
\end{align}
\end{subequations}
\end{corollary}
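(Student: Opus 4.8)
The plan is to establish Corollary~\ref{cor:4} by direct substitution, feeding the column-wise symmetry relations~\eqref{2.51} into the definitions~\eqref{2.39} of the auxiliary eigenfunctions $\gamma$ and $\widetilde{\gamma}$. Because both~\eqref{2.39} and~\eqref{2.51} are already in hand, the argument is bookkeeping in nature; the genuine content lies in verifying that the half-plane attached to each invocation of~\eqref{2.51} is compatible with the domain on which the auxiliary function is defined, so that no analytic continuation beyond the stated regions is required.

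First I would address $\gamma(z)$, defined for $z\in\mathbb{D}^{-}$ by
\begin{equation*}
\gamma(z)=-\frac{\mathrm{i}\mathrm{e}^{\mathrm{i}\delta_{2}(z)}\mathbf{J}[\widetilde{\psi}_{-,3}(z) \times \widetilde{\psi}_{+,1}(z)]}{\rho(z)}.
\end{equation*}
On $\mathbb{D}^{-}$, that is $\mathfrak{Im}\,z<0$, the appropriate branches of~\eqref{2.51} read $\widetilde{\psi}_{-,3}(z)=\psi_{-,3}^{*}(z^{*})$ and $\widetilde{\psi}_{+,1}(z)=\psi_{+,1}^{*}(z^{*})$. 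Substituting these two identities into the cross product, while leaving the prefactors $\mathrm{e}^{\mathrm{i}\delta_{2}(z)}$, $\mathbf{J}$ and $\rho(z)$ untouched, reproduces the first line of~\eqref{2.52} verbatim. The analogous step for $\widetilde{\gamma}(z)$, defined on $z\in\mathbb{D}^{+}$ by the companion formula in~\eqref{2.39}, uses instead the upper-half-plane branches $\widetilde{\psi}_{-,1}(z)=\psi_{-,1}^{*}(z^{*})$ and $\widetilde{\psi}_{+,3}(z)=\psi_{+,3}^{*}(z^{*})$, and yields the second line of~\eqref{2.52}.

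The step demanding the most care, indeed the only one that is not purely mechanical, is the half-plane matching. For $\gamma$ one needs $\widetilde{\psi}_{-,3},\widetilde{\psi}_{+,1}$, which are exactly the columns analytic in $\mathbb{D}^{-}$ according to~\eqref{2.36}; correspondingly $\psi_{-,3}^{*}(z^{*})$ and $\psi_{+,1}^{*}(z^{*})$ are analytic in $\mathbb{D}^{-}$ because $\psi_{-,3},\psi_{+,1}$ are analytic in $\mathbb{D}^{+}$ by Theorem~\ref{thm:1}, so both sides of each substitution live on the common region $\mathbb{D}^{-}$. The mirror-image check, exchanging the roles of $\mathbb{D}^{+}$ and $\mathbb{D}^{-}$ and of the indices $1$ and $3$, confirms the $\mathbb{D}^{+}$ case for $\widetilde{\gamma}$. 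Once this compatibility is verified the corollary follows at once, with no residue, boundary, or continuation subtleties to resolve.
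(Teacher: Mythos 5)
Your proposal is correct and matches the paper's own (very brief) argument: the paper likewise obtains Corollary~\ref{cor:4} by substituting the Schwarz-reflection identities~\eqref{2.51} directly into the definitions~\eqref{2.39}, with the half-plane assignments exactly as you state. Your extra verification that the domains of validity of~\eqref{2.51} (via~\eqref{2.36} and Theorem~\ref{thm:1}) coincide with the domains of definition of $\gamma$ and $\widetilde{\gamma}$ is the implicit content of the paper's one-line derivation, spelled out correctly.
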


Furthermore, the symmetrical properties are also present.
\begin{equation}\label{2.53}
\begin{split}
\psi_{\pm,j}^{*}(z)=\frac{\mathrm{i}\mathrm{e}^{-\mathrm{i}\delta_{2}(z)}\mathbf{J}
[\psi_{\pm,l}(z) \times \psi_{\pm,m}(z)]}{\rho_{j}(z)}, \quad
z\in \mathbb{R},
\end{split}
\end{equation}
where $j$, $l$ and $m$ are cyclic indices.

\subsubsection{Second symmetry}

\begin{proposition}\label{pro:3}
If $\psi(z;x,t)$ is a non-singular solution of the Lax pair~\eqref{2.1}, so is
\begin{equation}\label{2.54}
\begin{split}
\mathbf{v}_{5}(z;x,t)=\psi(\frac{q_{0}^{2}}{z};x,t).
\end{split}
\end{equation}
\end{proposition}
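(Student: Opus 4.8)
The plan is to exploit the fact that the coefficient matrices $\mathbf{X}$ and $\mathbf{T}$ in the Lax pair~\eqref{2.1} depend on the spectral variable $z$ \emph{only} through the quantity $k=k(z)$ defined in~\eqref{2.9}. Inspecting~\eqref{2.2}, one sees that $\mathbf{X}=\mathrm{i}k\mathbf{J}+\mathrm{i}\mathbf{Q}$ and that $\mathbf{T}$ is a polynomial of degree three in $k$ whose coefficients are built solely from $\mathbf{J}$, $\mathbf{Q}$, $\mathbf{Q}_{x}$, $\mathbf{Q}_{xx}$, the constant $q_{0}^{2}$, and $\sigma$; crucially, neither $\lambda$ nor $z$ appears explicitly. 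This structural observation is the linchpin of the whole argument.

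The first step is the elementary computation of how $k$ transforms under the involution $z\mapsto q_{0}^{2}/z$. From the inverse map~\eqref{2.9}, substituting $q_{0}^{2}/z$ for $z$ gives $k(q_{0}^{2}/z)=\frac{1}{2}(q_{0}^{2}/z+z)=k(z)$ (and, incidentally, $\lambda(q_{0}^{2}/z)=-\lambda(z)$, though we will not need this). Thus $k$ is fixed by the involution. Combined with the structural observation above, this immediately yields $\mathbf{X}(k(q_{0}^{2}/z);x,t)=\mathbf{X}(k(z);x,t)$ and $\mathbf{T}(k(q_{0}^{2}/z);x,t)=\mathbf{T}(k(z);x,t)$: the coefficient matrices are invariant under $z\mapsto q_{0}^{2}/z$.

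The final step is a direct verification. Write $w=q_{0}^{2}/z$, which is independent of $x$ and $t$, so that $\mathbf{v}_{5}(z;x,t)=\psi(w;x,t)$. Differentiating in $x$ and using that $\psi$ solves~\eqref{2.1} at spectral value $w$, we obtain $\partial_{x}\mathbf{v}_{5}=\mathbf{X}(k(w);x,t)\psi(w;x,t)$; then the invariance $k(w)=k(z)$ turns this into $\partial_{x}\mathbf{v}_{5}=\mathbf{X}(k(z);x,t)\mathbf{v}_{5}$. The same reasoning applied to the $t$-derivative gives $\partial_{t}\mathbf{v}_{5}=\mathbf{T}(k(z);x,t)\mathbf{v}_{5}$. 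Hence $\mathbf{v}_{5}(z;x,t)$ satisfies the Lax pair~\eqref{2.1} at spectral value $z$, as claimed.

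There is no genuine analytical obstacle in this proposition; the argument is purely algebraic. The only point that warrants care—and indeed the entire content of the proof—is verifying that $\mathbf{X}$ and $\mathbf{T}$ carry no explicit $\lambda$-dependence. Since $\lambda$ changes sign under the involution while $k$ is preserved, any stray occurrence of $\lambda$ in~\eqref{2.2} would destroy the invariance and hence the symmetry. Confirming term by term that~\eqref{2.2} is expressed entirely in powers of $k$ is therefore the essential (if routine) check underlying the result.
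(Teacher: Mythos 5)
Your proof is correct and follows the same route the paper (and its reference [B1]) relies on: the coefficient matrices $\mathbf{X}$ and $\mathbf{T}$ in~\eqref{2.2} depend on $z$ only through $k$, and $k(q_{0}^{2}/z)=k(z)$ while $\lambda$ merely changes sign, so evaluating $\psi$ at $q_{0}^{2}/z$ yields another solution of~\eqref{2.1}; non-singularity is inherited trivially since $\det\mathbf{v}_{5}(z)=\det\psi(q_{0}^{2}/z)\neq 0$. Your emphasis on checking that no explicit $\lambda$-dependence appears in~\eqref{2.2} is exactly the essential point.
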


The subsequent properties can be derived from the principle of progressiveness.
\begin{equation}\label{2.55}
\everymath{\displaystyle}
\begin{split}
\psi_{\pm}(z;x,t)=\psi_{\pm}(\frac{q_{0}^{2}}{z};x,t)\mathbf{J}_{3}(z), \quad \mathbf{J}_{3}(z)=\begin{pmatrix}
    0 & 0 &  -\frac{\mathrm{i}q_{0}}{z}  \\
    0 & 1 &  0  \\
    \frac{\mathrm{i}q_{0}}{z} & 0 & 0
  \end{pmatrix}, \quad z\in \mathbb{R}.
\end{split}
\end{equation}
Consistent with the previous discussion, the eigenfunctions exhibit the following analytical characteristics:
\begin{subequations}\label{2.56}
\begin{align}
\psi_{\pm,1}(z)&=\frac{\mathrm{i}q_{0}}{z}\psi_{\pm,3}(\frac{q_{0}^{2}}{z}), \quad
\mathfrak{Im}z\gtrless0, \quad \psi_{\pm,3}(z)=-\frac{\mathrm{i}q_{0}}{z}\psi_{\pm,1}(\frac{q_{0}^{2}}{z}), \quad
\mathfrak{Im}z\lessgtr0, \\
\psi_{\pm,2}(z)&=\psi_{\pm,2}(\frac{q_{0}^{2}}{z}), \quad z\in \mathbb{R},
\end{align}
\end{subequations}
using the scattering relationship~\eqref{2.25} and~\eqref{2.55}, the scattering matrices have the following relation
\begin{equation}\label{2.57}
\begin{split}
\mathbf{S}(\frac{q_{0}^{2}}{z})=\mathbf{J}_{3}(z)\mathbf{S}(z)\mathbf{J}_{3}^{-1}(z), \quad
z\in \mathbb{R}.
\end{split}
\end{equation}
Accordingly, it can be deduced that
\begin{subequations}\label{2.58}
\begin{align}
s_{11}(z)&=s_{33}(\frac{q_{0}^{2}}{z}), \quad \quad  \,\,\,\, s_{12}(z)=-\frac{\mathrm{i}z}{q_{0}}s_{32}(\frac{q_{0}^{2}}{z}), \quad \,\,\,
s_{13}(z)=-s_{31}(\frac{q_{0}^{2}}{z}),  \\
s_{21}(z)&=\frac{\mathrm{i}q_{0}}{z}s_{23}(\frac{q_{0}^{2}}{z}), \quad  s_{22}(z)=s_{22}(\frac{q_{0}^{2}}{z}), \qquad \quad \,\,
s_{23}(z)=-\frac{\mathrm{i}q_{0}}{z}s_{21}(\frac{q_{0}^{2}}{z}),  \\
s_{31}(z)&=-s_{13}(\frac{q_{0}^{2}}{z}), \quad \quad
s_{32}(z)=\frac{\mathrm{i}z}{q_{0}}s_{12}(\frac{q_{0}^{2}}{z}), \qquad
s_{33}(z)=s_{11}(\frac{q_{0}^{2}}{z}),
\end{align}
\end{subequations}
the analytical domain of the scattering coefficients
\begin{equation}\label{2.59}
\begin{split}
s_{11}(z)&=s_{33}(\frac{q_{0}^{2}}{z}), \quad
h_{33}(z)=h_{11}(\frac{q_{0}^{2}}{z}),  \quad \mathfrak{Im}z\leq0.
\end{split}
\end{equation}

The auxiliary eigenfunctions exhibit the following characteristics:
\begin{equation}\label{2.60}
\begin{split}
\widetilde{\gamma}(z)&=-\gamma(\frac{q_{0}^{2}}{z}),  \quad  \mathfrak{Im}z\geq0.
\end{split}
\end{equation}
We introduce the new reflections as follows:
\begin{subequations}\label{2.61}
\begin{align}
\beta_{1}(z)&=\frac{s_{13}(z)}{s_{11}(z)}=-\frac{h_{31}^{*}(z)}{h_{11}^{*}(z)}, \qquad \,\,\,\,\,\,
\beta_{1}(\frac{q_{0}^{2}}{z})=-\frac{s_{31}(z)}{s_{33}(z)}=\frac{h_{13}^{*}(z)}{h_{33}^{*}(z)}, \\
\beta_{2}(z)&=\frac{h_{21}(z)}{h_{11}(z)}=-\rho(z)\frac{s_{12}^{*}(z)}{s_{11}^{*}(z)}, \quad \beta_{2}(\frac{q_{0}^{2}}{z})=\frac{\mathrm{i}z}{q_{0}}\frac{h_{23}(z)}{h_{33}(z)}
=\frac{\mathrm{i}z\rho(z)}{q_{0}}\frac{s_{32}^{*}(z)}{s_{33}^{*}(z)}.
\end{align}
\end{subequations}

\section{Discrete spectrum and asymptotic behavior}
\label{s:Discrete spectrum and asymptotic behavior}

A direct link is established between the zeros of the scattering coefficients and the discrete eigenvalues, each signifying the presence of a bound state within the system~\cite{B1}. $C_{o}$ is a circle with a radius of $q_{0}$ centered on the origin of the complex $z$-plane. It has been established that discrete eigenvalues are excluded from the continuous spectrum, hence they are confined to the domain within the circle $C_{o}$. Furthermore, the self-adjoint property of the scattering problem ensures that the discrete eigenvalues $k$ must be real numbers, and there are no discrete eigenvalues in the continuous spectrum. Therefore, these discrete eigenvalues only appear in the circle $C_{o}$ on the $z$-plane.

\begin{proposition}\label{pro:4}
Let $\mathbf{v}(z;x,t)$ denote a nontrivial solution to the scattering problem in~\eqref{2.1}. If $\mathbf{v}(z;x,t)\in L^{2}(\mathbb{R})$, then $z\in C_{o}$.
\end{proposition}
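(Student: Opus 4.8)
The plan is to exploit the self-adjoint structure of the spatial scattering operator $\mathbf{X}=\mathrm{i}k\mathbf{J}+\mathrm{i}\mathbf{Q}$ relative to the indefinite metric $\mathbf{J}$. The three algebraic facts I would record first are that $\mathbf{J}$ is real and diagonal with $\mathbf{J}^{\dagger}=\mathbf{J}$ and $\mathbf{J}^{2}=\mathbf{I}$; that $\mathbf{Q}$ is anti-Hermitian, $\mathbf{Q}^{\dagger}=-\mathbf{Q}$, which is immediate from the block form in~\eqref{2.3}; and the anticommutation $\mathbf{J}\mathbf{Q}=-\mathbf{Q}\mathbf{J}$ already recorded in~\eqref{2.3}. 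Together these give $(\mathrm{i}\mathbf{Q})^{\dagger}=\mathrm{i}\mathbf{Q}$, so that the only non-Hermitian piece of $\mathbf{X}$ is the spectral term $\mathrm{i}k\mathbf{J}$. Throughout I work with the spatial part $\mathbf{v}_{x}=\mathbf{X}\mathbf{v}$ of the Lax pair, since the $L^{2}$ hypothesis is in the variable $x$.

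Next I would differentiate the $\mathbf{J}$-quadratic form $f(x):=\mathbf{v}^{\dagger}\mathbf{J}\mathbf{v}$ along the flow. Using $\mathbf{v}_{x}=\mathbf{X}\mathbf{v}$ and $\mathbf{v}_{x}^{\dagger}=\mathbf{v}^{\dagger}\mathbf{X}^{\dagger}$ one obtains
\begin{equation}
f'(x)=\mathbf{v}^{\dagger}(\mathbf{X}^{\dagger}\mathbf{J}+\mathbf{J}\mathbf{X})\mathbf{v}.
\end{equation}
Substituting $\mathbf{X}=\mathrm{i}k\mathbf{J}+\mathrm{i}\mathbf{Q}$ and $\mathbf{X}^{\dagger}=-\mathrm{i}k^{*}\mathbf{J}+\mathrm{i}\mathbf{Q}$, then invoking $\mathbf{J}^{2}=\mathbf{I}$ together with the cancellation $\mathbf{Q}\mathbf{J}+\mathbf{J}\mathbf{Q}=\mathbf{0}$, the $\mathbf{Q}$-contributions drop out and the diagonal terms collapse to
\begin{equation}
f'(x)=\mathrm{i}(k-k^{*})\,\mathbf{v}^{\dagger}\mathbf{v}=-2\,\mathfrak{Im}(k)\,\Vert\mathbf{v}(x)\Vert^{2}.
\end{equation}
This identity is the crux of the argument: it converts the analytic question of where a bound state can live into pointwise sign information carried by $\mathfrak{Im}(k)$.

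The remaining step is to integrate over $x\in\mathbb{R}$ and kill the boundary contributions. Here the hypothesis $\mathbf{v}\in L^{2}(\mathbb{R})$ enters: since $|f(x)|\le\Vert\mathbf{v}(x)\Vert^{2}$, the function $f$ lies in $L^{1}(\mathbb{R})$, while the identity above shows $f'\in L^{1}(\mathbb{R})$ as well. An absolutely continuous function with both $f$ and $f'$ integrable on $\mathbb{R}$ necessarily tends to $0$ as $x\to\pm\infty$, so $\int_{\mathbb{R}}f'\,\mathrm{d}x=f(+\infty)-f(-\infty)=0$, which yields $\mathfrak{Im}(k)\int_{\mathbb{R}}\Vert\mathbf{v}\Vert^{2}\,\mathrm{d}x=0$. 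As $\mathbf{v}$ is nontrivial the integral is strictly positive, forcing $\mathfrak{Im}(k)=0$, i.e. $k\in\mathbb{R}$. Finally, an $L^{2}$ solution cannot sit on the continuous spectrum $\mathbb{R}\setminus(-q_{0},q_{0})$, where the solution is oscillatory, so $|k|<q_{0}$; then $\lambda(k)=\sqrt{k^{2}-q_{0}^{2}}=\mathrm{i}\sqrt{q_{0}^{2}-k^{2}}$ is purely imaginary and $z=k+\lambda$ satisfies $|z|^{2}=k^{2}+(q_{0}^{2}-k^{2})=q_{0}^{2}$, which is precisely $z\in C_{o}$.

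I expect the main obstacle to be the rigorous justification of the vanishing boundary terms rather than the algebra: membership in $L^{2}$ does not by itself give pointwise decay, so one must either run the $f,f'\in L^{1}$ argument above or, more quantitatively, appeal to the asymptotically constant-coefficient structure of the Lax operator to show that an $L^{2}$ solution in fact decays exponentially, which also pins down $|k|<q_{0}$. It is worth emphasizing that the sign-indefinite nature of $\mathbf{J}$, characteristic of the defocusing reduction, is exactly what makes $\mathbf{J}$ the correct weight: with a definite metric the cross term $\mathbf{Q}\mathbf{J}+\mathbf{J}\mathbf{Q}$ would not cancel and the key identity would fail.
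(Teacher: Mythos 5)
Your proof is correct and takes essentially the same approach as the paper, which disposes of this proposition by invoking the self-adjointness of the scattering problem (following~\cite{B1}) to conclude that $k$ must be real and cannot lie in the continuous spectrum, whence $z\in C_{o}$. Your identity $f'(x)=-2\,\mathfrak{Im}(k)\,\Vert\mathbf{v}(x)\Vert^{2}$ for the $\mathbf{J}$-weighted form, combined with the $f,f'\in L^{1}(\mathbb{R})$ boundary argument, is exactly the concrete realization of that self-adjointness claim, so there is nothing to fault beyond the (shared) informality in ruling out $L^{2}$ solutions on the oscillatory continuous spectrum.
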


In order to fully represent the characteristics of the inverse problem, it is necessary to consider the zeros of the analytical scattering coefficient outside the circle $C_{o}$. This view does not conflict with Proposition~\ref{pro:4}, the zero point of the analytical scattering coefficient outside the circle $C_{o}$ does not lead to the generation of bound states. The discrete spectrum is the set of all $z\in\mathbb{C}\backslash \mathbb{R}$ such that $h_{11}(z)=0$ or $s_{11}(z^{*})=0$, values for which the Jost eigenfunctions are in $L^{2}(\mathbb{R})$. Consequently, the existence of zeros for $h_{11}(z)$ within $C_{o}$ is permissible, and such zeros lead to eigenfunctions that do not exhibit decay towards both spatial infinities.

\subsection{Discrete spectrum}

To delve into the discrete spectrum, we define two $3\times3$ matrices
\begin{subequations}\label{3.1}
\begin{align}
\mathbf{\Psi}^{+}(z)&=(\psi_{+,1}(z),-\widetilde{\gamma}(z),\psi_{-,3}(z)), \quad z\in \mathbb{D}^{+}, \\
\mathbf{\Psi}^{-}(z)&=(\psi_{-,1}(z),\gamma(z),\psi_{+,3}(z)), \quad \,\,\,\, z\in \mathbb{D}^{-}.
\end{align}
\end{subequations}
By the decompositions~\eqref{2.43} and taking the determinant, we can have
\begin{subequations}\label{3.2}
\begin{align}
\det \mathbf{\Psi}^{+}(z)&
=\mathrm{i}\mathrm{e}^{\mathrm{i}\delta_{2}(z)}h_{11}(z)s_{33}(z)\rho(z), \quad \mathfrak{Im}z\geq0, \\
\det \mathbf{\Psi}^{-}(z)&
=\mathrm{i}\mathrm{e}^{\mathrm{i}\delta_{2}(z)}s_{11}(z)h_{33}(z)\rho(z), \quad \mathfrak{Im}z\leq0.
\end{align}
\end{subequations}
Nevertheless, the symmetries inherent in the scattering coefficients imply that these zeros are interrelated and not mutually exclusive.

\begin{proposition}[Off the circle $C_{o}$]\label{pro:5}
Suppose that $h_{11}(z)$ possesses a zero $\theta_{g}$ within the upper half plane of $z$, then
\begin{equation}\label{3.3}
\begin{split}
h_{11}(\theta_{g})=0 \Longleftrightarrow s_{11}(\theta_{g}^{*})=0 \Longleftrightarrow s_{33}(\frac{q_{0}^{2}}{\theta_{g}^{*}})=0 \Longleftrightarrow h_{33}(\frac{q_{0}^{2}}{\theta_{g}})=0.
\end{split}
\end{equation}
\end{proposition}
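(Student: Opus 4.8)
The plan is to read \eqref{3.3} as a chain of three biconditionals and to establish each link by invoking one of the two symmetry relations already derived, together with the elementary facts that complex conjugation and the involution $z\mapsto q_{0}^{2}/z$ each carry the zero set of an analytic function bijectively onto the zero set of its image. No new analysis is required: the whole content is to evaluate the known relations \eqref{2.50} and \eqref{2.58}--\eqref{2.59} at the four points $\theta_{g}$, $\theta_{g}^{*}$, $q_{0}^{2}/\theta_{g}^{*}$ and $q_{0}^{2}/\theta_{g}$, and to observe that neither conjugation nor the Möbius map can create or destroy a zero.

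For the first link I would apply the first symmetry \eqref{2.50}, $h_{11}(z)=s_{11}^{*}(z^{*})$, which is valid for $\mathfrak{Im}\,z\geq0$ and hence at $z=\theta_{g}$; since conjugation does not affect vanishing, $h_{11}(\theta_{g})=0$ iff $s_{11}(\theta_{g}^{*})=0$. For the second link I would use the second symmetry \eqref{2.58}--\eqref{2.59} in the form $s_{11}(z)=s_{33}(q_{0}^{2}/z)$, evaluated at $z=\theta_{g}^{*}$; this is an exact identity between the two values, so $s_{11}(\theta_{g}^{*})=0$ iff $s_{33}(q_{0}^{2}/\theta_{g}^{*})=0$. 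For the third link I would return to the first symmetry, now in the form $h_{33}(z)=s_{33}^{*}(z^{*})$ (the second relation in \eqref{2.50}, valid for $\mathfrak{Im}\,z\leq0$), evaluated at $z=q_{0}^{2}/\theta_{g}$; using $(q_{0}^{2}/\theta_{g})^{*}=q_{0}^{2}/\theta_{g}^{*}$ this reads $h_{33}(q_{0}^{2}/\theta_{g})=s_{33}^{*}(q_{0}^{2}/\theta_{g}^{*})$, whence $s_{33}(q_{0}^{2}/\theta_{g}^{*})=0$ iff $h_{33}(q_{0}^{2}/\theta_{g})=0$. Concatenating the three biconditionals reproduces \eqref{3.3} exactly.

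The only place the argument could misfire is in the domain bookkeeping: at each step the function must be evaluated inside the half plane where Theorem~\ref{thm:2} guarantees its analyticity and where the relevant symmetry has actually been established. Here everything closes. With $\theta_{g}\in\mathbb{D}^{+}$ one has $\theta_{g}^{*}\in\mathbb{D}^{-}$ (the analyticity domain of $s_{11}$), while the involution $z\mapsto q_{0}^{2}/z$ interchanges $\mathbb{D}^{+}$ and $\mathbb{D}^{-}$, so $q_{0}^{2}/\theta_{g}^{*}\in\mathbb{D}^{+}$ (domain of $s_{33}$) and $q_{0}^{2}/\theta_{g}\in\mathbb{D}^{-}$ (domain of $h_{33}$); thus \eqref{2.50} is applied once in each half plane and \eqref{2.59} in $\mathbb{D}^{-}$, precisely as required. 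I would close by recording two remarks relevant to what follows: first, because conjugation and $z\mapsto q_{0}^{2}/z$ are biholomorphic, each equivalence preserves the order of the zero, so $\theta_{g},\theta_{g}^{*},q_{0}^{2}/\theta_{g}^{*},q_{0}^{2}/\theta_{g}$ are zeros of $h_{11},s_{11},s_{33},h_{33}$ of the same multiplicity (this is what licenses the later double-pole treatment); and second, the hypothesis that $\theta_{g}$ lies \emph{off} $C_{o}$, i.e. $|\theta_{g}|\neq q_{0}$, guarantees $q_{0}^{2}/\theta_{g}^{*}\neq\theta_{g}$, so the four points form a genuine quartet rather than collapsing to the pair $\{\theta_{g},\theta_{g}^{*}\}$ that would occur on the circle.
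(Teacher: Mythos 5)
Your proof is correct and follows essentially the same route as the paper, which offers no explicit argument beyond noting that the symmetries interrelate the zeros: you chain the Schwarz-reflection symmetry \eqref{2.50} with the involution symmetry \eqref{2.58}--\eqref{2.59}, with the domain bookkeeping ($\theta_{g}\in\mathbb{D}^{+}$, $\theta_{g}^{*},q_{0}^{2}/\theta_{g}\in\mathbb{D}^{-}$, $q_{0}^{2}/\theta_{g}^{*}\in\mathbb{D}^{+}$) handled exactly as the paper's conventions require. Your closing remarks on multiplicity preservation and the non-collapse of the quartet off $C_{o}$ are accurate and consistent with how the paper later uses this proposition.
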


Therefore, it can be considered that the discrete eigenvalues $z_{g}$ on the circle $C_{o}$ is $\{ z_{g},z_{g}^{*} \}$ and the discrete eigenvalues $\theta_{g}$ off circle $C_{o}$ is $\{ \theta_{g},\theta_{g}^{*},q_{0}^{2}/\theta_{g},q_{0}^{2}/\theta_{g}^{*} \}$.

\begin{proposition}\label{pro:6}
If $\mathfrak{Im}\theta_{g}>0$ and $\theta_{g}\notin C_{o}$, then $\widetilde{\gamma}(\theta_{g};x,t)\neq\mathbf{0}$.
\end{proposition}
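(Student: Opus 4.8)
The plan is to argue by contradiction: assume $\widetilde{\gamma}(\theta_{g};x,t)=\mathbf{0}$ and show that this manufactures a genuine $L^{2}$ bound state located off the circle $C_{o}$, in violation of Proposition~\ref{pro:4}. First I would invoke the symmetry representation of the auxiliary eigenfunction recorded in Corollary~\ref{cor:4}, namely~\eqref{2.52}. Since $\theta_{g}\notin C_{o}$ and $\mathfrak{Im}\theta_{g}>0$, the scalar prefactor $-\mathrm{i}\mathrm{e}^{\mathrm{i}\delta_{2}(\theta_{g})}/\rho(\theta_{g})$ is finite and nonzero (recall $\rho$ vanishes only at $z=\pm q_{0}\in C_{o}$) and $\mathbf{J}$ is invertible. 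Hence $\widetilde{\gamma}(\theta_{g})=\mathbf{0}$ forces $\psi_{-,1}^{*}(\theta_{g}^{*})\times\psi_{+,3}^{*}(\theta_{g}^{*})=\mathbf{0}$ for all $(x,t)$, and after complex conjugation $\psi_{-,1}(\theta_{g}^{*};x,t)$ and $\psi_{+,3}(\theta_{g}^{*};x,t)$ are parallel at every point.

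Next I would upgrade this pointwise parallelism to a constant proportionality. Both $\psi_{-,1}(\theta_{g}^{*};\cdot)$ and $\psi_{+,3}(\theta_{g}^{*};\cdot)$ are nontrivial solutions of the same Lax pair~\eqref{2.1} at the fixed spectral value $\theta_{g}^{*}$; writing $\psi_{-,1}=f(x,t)\,\psi_{+,3}$ and substituting into $\psi_{x}=\mathbf{X}\psi$ together with the $t$-equation immediately yields $f_{x}=f_{t}=0$, so $\psi_{-,1}(\theta_{g}^{*})=c\,\psi_{+,3}(\theta_{g}^{*})$ with a nonzero constant $c$. The crux is then the asymptotic analysis. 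Using the sign relation $\operatorname{sgn}(\mathfrak{Im}\lambda)=\operatorname{sgn}(\mathfrak{Im}z)$, so that $\mathfrak{Im}\lambda(\theta_{g}^{*})<0$, the boundary behaviour~\eqref{2.15} gives $\psi_{+,3}(\theta_{g}^{*};x,t)\sim\mathbf{Y}_{+,3}(\theta_{g}^{*})\mathrm{e}^{-\mathrm{i}\delta_{1}}\to\mathbf{0}$ exponentially as $x\to+\infty$, and $\psi_{-,1}(\theta_{g}^{*};x,t)\sim\mathbf{Y}_{-,1}(\theta_{g}^{*})\mathrm{e}^{\mathrm{i}\delta_{1}}\to\mathbf{0}$ exponentially as $x\to-\infty$. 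Through the identity $\psi_{-,1}(\theta_{g}^{*})=c\,\psi_{+,3}(\theta_{g}^{*})$, the decay of $\psi_{+,3}$ at $+\infty$ transfers to $\psi_{-,1}$, so the single solution $\psi_{-,1}(\theta_{g}^{*};\cdot,t)$ decays exponentially at both spatial infinities.

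It then follows that $\psi_{-,1}(\theta_{g}^{*};\cdot,t)$ is a nontrivial element of $L^{2}(\mathbb{R})$, whence Proposition~\ref{pro:4} forces $\theta_{g}^{*}\in C_{o}$; since $C_{o}$ is invariant under conjugation this gives $\theta_{g}\in C_{o}$, contradicting the hypothesis $\theta_{g}\notin C_{o}$, and therefore $\widetilde{\gamma}(\theta_{g};x,t)\neq\mathbf{0}$. I expect the main obstacle to be the careful bookkeeping in the asymptotic step: one must confirm that the proportionality factor is genuinely $(x,t)$-independent and pin down the signs of $\mathfrak{Im}\lambda(\theta_{g}^{*})$ in both exponentials so that $\psi_{-,1}(\theta_{g}^{*})$ decays, rather than grows, at each end. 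The nontriviality of $\psi_{-,1}(\theta_{g}^{*})$ also deserves a remark, since its leading asymptotic amplitude $\mathbf{Y}_{-,1}(\theta_{g}^{*})\neq\mathbf{0}$ is multiplied by a vanishing exponential; this is settled by observing that the modified eigenfunction $\nu_{-,1}$ tends to the nonzero vector $\mathbf{Y}_{-,1}(\theta_{g}^{*})$, so $\psi_{-,1}$ is not the zero solution.
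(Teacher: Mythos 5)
Your proof is correct and follows essentially the same route the paper intends: Proposition~\ref{pro:6} is stated there without an explicit proof, deferring to the defocusing Manakov analysis of~\cite{B1}, whose argument is exactly yours --- assume $\widetilde{\gamma}(\theta_{g})=\mathbf{0}$, use the symmetry~\eqref{2.52} to force linear dependence of $\psi_{-,1}(\theta_{g}^{*})$ and $\psi_{+,3}(\theta_{g}^{*})$, and conclude that this single solution decays exponentially at both spatial infinities, producing an $L^{2}$ bound state off $C_{o}$ in contradiction with Proposition~\ref{pro:4}. Your sign check $\operatorname{sgn}(\mathfrak{Im}\lambda)=\operatorname{sgn}(\mathfrak{Im}z)$ and the constancy of the proportionality factor via the Lax pair are exactly the details needed to make that argument rigorous.
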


\begin{proposition}\label{pro:7}
Suppose $\mathfrak{Im}\theta_{g}>0$, the following conclusions are equivalent
\begin{itemize} \itemsep0.75pt
\everymath{\displaystyle}
  \item[(1)] $\widetilde{\gamma}(\theta_{g})=\mathbf{0} \Longleftrightarrow \gamma(\theta_{g}^{*})=\mathbf{0} \Longleftrightarrow \gamma(\frac{q_{0}^{2}}{\theta_{g}})=\mathbf{0} \Longleftrightarrow
      \widetilde{\gamma}(\frac{q_{0}^{2}}{\theta_{g}^{*}})=\mathbf{0}$.
  \item[(2)] $\psi_{-,3}(\theta_{g})$ and $\psi_{+,1}(\theta_{g})$ are linearly correlated. $\psi_{-,1}(\theta_{g}^{*})$ and $\psi_{+,3}(\theta_{g}^{*})$ are linearly correlated.
  \item[(3)] $\psi_{-,3}(\frac{q_{0}^{2}}{\theta_{g}^{*}})$ and $\psi_{+,1}(\frac{q_{0}^{2}}{\theta_{g}^{*}})$ are linearly correlated. $\psi_{-,1}(\frac{q_{0}^{2}}{\theta_{g}})$ and $\psi_{+,3}(\frac{q_{0}^{2}}{\theta_{g}})$ are linearly correlated.
\end{itemize}
\end{proposition}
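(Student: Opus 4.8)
The plan is to convert every vanishing in (1) into the linear dependence of a pair of analytic Jost columns, using the elementary fact that for $\mathbf{a},\mathbf{b}\in\mathbb{C}^{3}$ the cross product $\mathbf{a}\times\mathbf{b}$ vanishes precisely when $\mathbf{a}$ and $\mathbf{b}$ are linearly dependent. Because $\mathbf{J}$ is invertible, $\mathrm{e}^{\mathrm{i}\delta_{2}}$ is nowhere zero, and $\rho(\theta_{g})\neq0$ away from the branch points, the conjugated representation \eqref{2.52} shows at once that $\widetilde{\gamma}(\theta_{g})=\mathbf{0}$ holds iff $\psi_{-,1}^{*}(\theta_{g}^{*})\times\psi_{+,3}^{*}(\theta_{g}^{*})=\mathbf{0}$, that is iff $\psi_{-,1}(\theta_{g}^{*})$ and $\psi_{+,3}(\theta_{g}^{*})$ are linearly dependent (complex conjugation preserving linear dependence), which is the second assertion of (2); and that $\gamma(\theta_{g}^{*})=\mathbf{0}$ holds iff $\psi_{-,3}(\theta_{g})$ and $\psi_{+,1}(\theta_{g})$ are linearly dependent, the first assertion of (2). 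Evaluating \eqref{2.52} verbatim at the reflected points $q_{0}^{2}/\theta_{g}$ and $q_{0}^{2}/\theta_{g}^{*}$ yields, in the same way, that $\gamma(q_{0}^{2}/\theta_{g})=\mathbf{0}$ and $\widetilde{\gamma}(q_{0}^{2}/\theta_{g}^{*})=\mathbf{0}$ are equivalent to the two linear dependences in (3).

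Next I would tie these statements together inside each symmetry orbit. The inversion symmetry \eqref{2.60}, read as $\widetilde{\gamma}(z)=-\gamma(q_{0}^{2}/z)$ together with its reflection $\gamma(z)=-\widetilde{\gamma}(q_{0}^{2}/z)$, gives $\widetilde{\gamma}(\theta_{g})=\mathbf{0}\Leftrightarrow\gamma(q_{0}^{2}/\theta_{g})=\mathbf{0}$ and $\gamma(\theta_{g}^{*})=\mathbf{0}\Leftrightarrow\widetilde{\gamma}(q_{0}^{2}/\theta_{g}^{*})=\mathbf{0}$; at the eigenfunction level the column symmetry \eqref{2.56} carries $\psi_{-,3}(\theta_{g})\parallel\psi_{+,1}(\theta_{g})$ into $\psi_{-,1}(q_{0}^{2}/\theta_{g})\parallel\psi_{+,3}(q_{0}^{2}/\theta_{g})$, the nonzero prefactors $\pm\mathrm{i}q_{0}/\theta_{g}$ dropping out, which reconciles (2) with (3). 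At this point the assertions fall into two equivalence classes: one built around $\widetilde{\gamma}(\theta_{g})=\mathbf{0}$ and $\gamma(q_{0}^{2}/\theta_{g})=\mathbf{0}$, the other around $\gamma(\theta_{g}^{*})=\mathbf{0}$ and $\widetilde{\gamma}(q_{0}^{2}/\theta_{g}^{*})=\mathbf{0}$.

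The hard part will be merging these two classes, i.e. proving $\widetilde{\gamma}(\theta_{g})=\mathbf{0}\Leftrightarrow\gamma(\theta_{g}^{*})=\mathbf{0}$, since none of the involutions used above links a point of the conjugation orbit to a point of the inversion orbit. The decisive remark is that, by Proposition~\ref{pro:6}, $\widetilde{\gamma}$ cannot vanish at an upper half-plane point lying off $C_{o}$, so the degeneracy is pinned to the circle. When $\theta_{g}\in C_{o}$ one has $q_{0}^{2}/\theta_{g}=\theta_{g}^{*}$, and then \eqref{2.60} reads $\gamma(\theta_{g}^{*})=\gamma(q_{0}^{2}/\theta_{g})=-\widetilde{\gamma}(\theta_{g})$, so the two classes coincide outright. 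When $\theta_{g}\notin C_{o}$ both conditions are simply false: Proposition~\ref{pro:6} at $\theta_{g}$ gives $\widetilde{\gamma}(\theta_{g})\neq\mathbf{0}$, while the same proposition at the off-circle upper half-plane point $q_{0}^{2}/\theta_{g}^{*}$ gives $\widetilde{\gamma}(q_{0}^{2}/\theta_{g}^{*})\neq\mathbf{0}$, whence $\gamma(\theta_{g}^{*})\neq\mathbf{0}$ by \eqref{2.60}; the equivalence then holds vacuously. In either case $\widetilde{\gamma}(\theta_{g})=\mathbf{0}\Leftrightarrow\gamma(\theta_{g}^{*})=\mathbf{0}$, which closes the chain (1) and, together with the linear-dependence equivalences, delivers $(1)\Leftrightarrow(2)\Leftrightarrow(3)$. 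I expect recognizing that the collapse $q_{0}^{2}/\theta_{g}=\theta_{g}^{*}$ on $C_{o}$ is precisely the bridge between the conjugation and inversion symmetries to be the one genuinely non-mechanical step.
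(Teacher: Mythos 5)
Your proof is correct: the paper states Proposition~\ref{pro:7} without an explicit proof (deferring to the framework of~\cite{B1}), and your argument assembles exactly the ingredients the paper provides for it --- the conjugation representation~\eqref{2.52} to turn each vanishing in (1) into a linear dependence in (2) or (3), the involutions~\eqref{2.56} and~\eqref{2.60} to link statements within each symmetry orbit, and Proposition~\ref{pro:6} together with the on-circle identity $q_{0}^{2}/\theta_{g}=\theta_{g}^{*}$ to bridge the conjugation and inversion orbits. This matches the intended Biondini--Kraus style argument, including the correct observation that for $\theta_{g}\notin C_{o}$ all statements are simultaneously false (so the equivalences hold vacuously there), while on $C_{o}$ the collapse $q_{0}^{2}/\theta_{g}=\theta_{g}^{*}$ makes them genuinely equivalent.
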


Given the assumption that the discrete eigenvalues are simple and non-repeating, then we can derive the following two Theorems.
\begin{theorem}\label{thm:3}
Let $z_{g}$ be a zero of $h_{11}(z)$ in the upper half plane with $|z_{g}|=q_{0}$, then $\widetilde{\gamma}(z_{g})=\gamma(z_{g}^{*})=\mathbf{0}$, there exist constants $c_{g}$, and $\bar{c}_{g}$ such that
\begin{equation}\label{3.4}
\begin{split}
\psi_{+,1}(z_{g})&=c_{g}\psi_{-,3}(z_{g}), \quad \psi_{+,3}(z_{g}^{*})=\bar{c}_{g}\psi_{-,1}(z_{g}^{*}).
\end{split}
\end{equation}
\end{theorem}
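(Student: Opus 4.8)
The plan is to prove this theorem by combining the determinant formula~\eqref{3.2} for $\det\mathbf{\Psi}^{+}(z)$ with Propositions~\ref{pro:5} and~\ref{pro:7}, exploiting the fact that $z_g$ lies \emph{on} the circle $C_o$. First I would observe that the hypothesis $|z_g|=q_0$ together with $h_{11}(z_g)=0$ places this zero on the circle, so by Proposition~\ref{pro:5} we have the chain $h_{11}(z_g)=0\Leftrightarrow s_{11}(z_g^{*})=0\Leftrightarrow s_{33}(q_0^2/z_g^{*})=0\Leftrightarrow h_{33}(q_0^2/z_g)=0$. Because $z_g$ is on $C_o$, the involution $z\mapsto q_0^2/z$ acts so that $q_0^2/z_g = z_g^{*}$ and $q_0^2/z_g^{*}=z_g$; thus the four points collapse to the single pair $\{z_g,z_g^{*}\}$, which is consistent with the counting remark after Proposition~\ref{pro:5} that eigenvalues on $C_o$ come in pairs rather than quadruplets.

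Next I would show that the vanishing of $h_{11}(z_g)$ forces $\widetilde\gamma(z_g)=\mathbf{0}$. Evaluating~\eqref{3.2} at $z=z_g$ gives $\det\mathbf{\Psi}^{+}(z_g)=\mathrm{i}\,\mathrm{e}^{\mathrm{i}\delta_2(z_g)}h_{11}(z_g)s_{33}(z_g)\rho(z_g)=0$, so the three columns $\psi_{+,1}(z_g)$, $-\widetilde\gamma(z_g)$, $\psi_{-,3}(z_g)$ are linearly dependent. The content of Proposition~\ref{pro:7} is precisely that the linear dependence of $\psi_{-,3}(\theta_g)$ and $\psi_{+,1}(\theta_g)$ is equivalent to $\widetilde\gamma(\theta_g)=\mathbf{0}$; applying this at $\theta_g=z_g$ (with $\mathfrak{Im}\,z_g>0$), I would conclude $\widetilde\gamma(z_g)=\mathbf{0}$, and then via the symmetry chain in part~(1) of Proposition~\ref{pro:7} that $\gamma(z_g^{*})=\mathbf{0}$ as well. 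Here the circle condition is essential: Proposition~\ref{pro:6} asserts that $\widetilde\gamma(\theta_g)\neq\mathbf{0}$ for zeros \emph{off} $C_o$, so the on-circle case is genuinely different and the vanishing of $\widetilde\gamma$ is the hallmark of eigenvalues on $C_o$.

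Finally, with $\widetilde\gamma(z_g)=\mathbf{0}$ established, part~(2) of Proposition~\ref{pro:7} gives that $\psi_{-,3}(z_g)$ and $\psi_{+,1}(z_g)$ are linearly dependent, and likewise $\psi_{-,1}(z_g^{*})$ and $\psi_{+,3}(z_g^{*})$ are linearly dependent. Since these are nontrivial eigenfunctions (nonzero solutions of the Lax pair), linear dependence means each pair is a scalar multiple of the other, which produces the proportionality constants: there exist $c_g$ and $\bar c_g$ with $\psi_{+,1}(z_g)=c_g\psi_{-,3}(z_g)$ and $\psi_{+,3}(z_g^{*})=\bar c_g\psi_{-,1}(z_g^{*})$, exactly as claimed in~\eqref{3.4}. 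I expect the main obstacle to be the bookkeeping of which involution sends $z_g$ to which partner point and verifying that on $C_o$ the generic quadruplet degenerates to a pair, together with carefully invoking the correct branch of Proposition~\ref{pro:7} so that the implication runs from $\det\mathbf{\Psi}^{+}(z_g)=0$ to column dependence to $\widetilde\gamma(z_g)=\mathbf{0}$ rather than in the reverse direction.
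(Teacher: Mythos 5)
Your argument breaks down at its central step: from $\det\mathbf{\Psi}^{+}(z_{g})=0$ you conclude that the two columns $\psi_{+,1}(z_{g})$ and $\psi_{-,3}(z_{g})$ are linearly dependent, and then invoke Proposition~\ref{pro:7} to get $\widetilde{\gamma}(z_{g})=\mathbf{0}$. Vanishing of a $3\times 3$ determinant only gives joint linear dependence of the three columns; it does not make any particular pair of them proportional, and Proposition~\ref{pro:7} is an equivalence between statements neither of which you have established, so it cannot bridge this. The off-circle case shows your inference pattern proves too much: for a zero $\theta_{g}$ with $|\theta_{g}|\neq q_{0}$ the same formula \eqref{3.2} gives $\det\mathbf{\Psi}^{+}(\theta_{g})=0$, yet $\widetilde{\gamma}(\theta_{g})\neq\mathbf{0}$ by Proposition~\ref{pro:6}, and by Theorem~\ref{thm:4} the column $\psi_{+,1}(\theta_{g})$ is proportional to $\widetilde{\gamma}(\theta_{g})$, \emph{not} to $\psi_{-,3}(\theta_{g})$; the three columns are dependent without either pairwise relation you need. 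You correctly remark that the circle condition is essential, but your deduction from the vanishing determinant never actually uses $|z_{g}|=q_{0}$, which is exactly why it cannot be sound.

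What the circle actually buys is a second vanishing scattering coefficient at the same point: since $q_{0}^{2}/z_{g}^{*}=z_{g}$, Proposition~\ref{pro:5} (equivalently the symmetry \eqref{2.59}) gives $s_{33}(z_{g})=0$ in addition to $h_{11}(z_{g})=0$. Now use the first symmetry \eqref{2.47}: clearing denominators in the fourth relation and using $h_{11}(z_{g})=0$ forces $\widetilde{\gamma}(z_{g})\times\psi_{+,1}(z_{g})=\mathbf{0}$, while the first relation with $s_{33}(z_{g})=0$ forces $\widetilde{\gamma}(z_{g})\times\psi_{-,3}(z_{g})=\mathbf{0}$. If $\widetilde{\gamma}(z_{g})$ were nonzero, then $\psi_{+,1}(z_{g})$ and $\psi_{-,3}(z_{g})$ --- which are nowhere vanishing, being nontrivial solutions of the linear problem \eqref{2.1} --- would both be nonzero multiples of $\widetilde{\gamma}(z_{g})$, hence proportional to each other; then \eqref{2.52} evaluated at $z_{g}^{*}$ forces $\gamma(z_{g}^{*})=\mathbf{0}$, while \eqref{2.60} with $q_{0}^{2}/z_{g}=z_{g}^{*}$ gives $\widetilde{\gamma}(z_{g})=-\gamma(z_{g}^{*})=\mathbf{0}$, a contradiction. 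Hence $\widetilde{\gamma}(z_{g})=\gamma(z_{g}^{*})=\mathbf{0}$, and only now does Corollary~\ref{cor:4} (or Proposition~\ref{pro:7}) run in the direction you wanted: $\gamma(z_{g}^{*})=\mathbf{0}$ gives $\psi_{-,3}(z_{g})\parallel\psi_{+,1}(z_{g})$ and $\widetilde{\gamma}(z_{g})=\mathbf{0}$ gives $\psi_{-,1}(z_{g}^{*})\parallel\psi_{+,3}(z_{g}^{*})$, and the nowhere-vanishing of the Jost eigenfunctions turns these into the proportionality constants $c_{g}$, $\bar{c}_{g}$ of \eqref{3.4}. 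The rest of your write-up (the collapse of the quadruplet to the pair $\{z_{g},z_{g}^{*}\}$, and the passage from vanishing auxiliary functions to \eqref{3.4}) is correct as it stands.
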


\begin{theorem}\label{thm:4}
Let $\theta_{g}$ be a zero of $h_{11}(z)$ in the upper half plane with $|\theta_{g}|\neq q_{0}$,
then $|\theta_{g}|<q_{0}$ and $s_{33}(\theta_{g})\neq 0$, there exist constants $f_{g}$, $\hat{f}_{g}$, $\check{f}_{g}$ and $\bar{f}_{g}$ such that
\begin{subequations}\label{3.5}
\begin{align}
\psi_{+,1}(\theta_{g})&=\frac{f_{g}}{s_{33}(\theta_{g})}\widetilde{\gamma}(\theta_{g}), \quad
\gamma(\theta_{g}^{*})=\bar{f}_{g}\psi_{-,1}(\theta_{g}^{*}), \\
\psi_{+,3}(\frac{q_{0}^{2}}{\theta_{g}})&=\check{f}_{g}\gamma(\frac{q_{0}^{2}}{\theta_{g}}), \qquad \,\,\,\,\,  \widetilde{\gamma}(\frac{q_{0}^{2}}{\theta_{g}^{*}})
=\hat{f}_{n}\psi_{-,3}(\frac{q_{0}^{2}}{\theta_{g}^{*}}).
\end{align}
\end{subequations}
\end{theorem}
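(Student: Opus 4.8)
The plan is to exploit the block eigenfunction matrices $\mathbf{\Psi}^{\pm}(z)$ from~\eqref{3.1} together with the scattering relation~\eqref{2.25} and the decompositions in Corollary~\ref{cor:3}. Since $\theta_g$ is a zero of $h_{11}(z)$ with $\mathfrak{Im}\,\theta_g>0$, the determinant formula~\eqref{3.2} forces $\det\mathbf{\Psi}^{+}(\theta_g)=0$, so the three analytic columns $\psi_{+,1}(\theta_g)$, $\widetilde{\gamma}(\theta_g)$, $\psi_{-,3}(\theta_g)$ are linearly dependent; Proposition~\ref{pro:6} guarantees $\widetilde{\gamma}(\theta_g)\neq\mathbf{0}$ and Proposition~\ref{pro:7} that $\psi_{-,3}(\theta_g)$ and $\psi_{+,1}(\theta_g)$ are \emph{not} proportional (off $C_{o}$). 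I would proceed in three stages: first fix the location of $\theta_g$ and the nonvanishing of $s_{33}(\theta_g)$; then convert the linear dependence into the explicit proportionality between $\psi_{+,1}(\theta_g)$ and $\widetilde{\gamma}(\theta_g)$, and likewise between $\gamma(\theta_g^{*})$ and $\psi_{-,1}(\theta_g^{*})$; finally propagate both relations to the reciprocal points $q_0^{2}/\theta_g$ and $q_0^{2}/\theta_g^{*}$ via the second symmetry.

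For the nonvanishing of $s_{33}(\theta_g)$ I would use Proposition~\ref{pro:5}: the zero of $h_{11}$ at $\theta_g$ is tied to a zero of $s_{33}$ at $q_0^{2}/\theta_g^{*}$, which is a \emph{different} upper–half–plane point precisely because $q_0^{2}/\theta_g^{*}=\theta_g$ would force $|\theta_g|=q_0$, excluded by hypothesis. Under the standing assumption that the discrete eigenvalues are simple and distinct, $s_{33}$ admits no other zero in $\mathbb{D}^{+}$, whence $s_{33}(\theta_g)\neq0$. The localization $|\theta_g|<q_0$ is then to be read off from the quartet $\{\theta_g,\theta_g^{*},q_0^{2}/\theta_g,q_0^{2}/\theta_g^{*}\}$ of~\eqref{3.3}, the $h_{11}$–representative of each off-circle quartet lying inside $C_{o}$ and its reciprocal partner $q_0^{2}/\theta_g$ (a zero of $h_{33}$) lying outside. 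I expect this localization to be the main obstacle: the symmetry relations are invariant under $\theta_g\mapsto q_0^{2}/\theta_g$, so pinning the modulus strictly below $q_0$ genuinely requires the asymptotic behavior of the scattering coefficients rather than symmetry bookkeeping alone.

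To produce the explicit constant, I would expand the first column of~\eqref{2.25}, $\psi_{+,1}=h_{11}\psi_{-,1}+h_{21}\psi_{-,2}+h_{31}\psi_{-,3}$, and eliminate the non-analytic $\psi_{-,2}$ using the first identity of Corollary~\ref{cor:3}, $\psi_{-,2}=(s_{32}\psi_{-,3}-\widetilde{\gamma})/s_{33}$. Evaluating at $z=\theta_g$ with $h_{11}(\theta_g)=0$ leaves
\[
\psi_{+,1}(\theta_g)=\Big(\tfrac{h_{21}(\theta_g)s_{32}(\theta_g)}{s_{33}(\theta_g)}+h_{31}(\theta_g)\Big)\psi_{-,3}(\theta_g)-\tfrac{h_{21}(\theta_g)}{s_{33}(\theta_g)}\widetilde{\gamma}(\theta_g).
\]
The cofactor identities from $\mathbf{S}=\mathbf{H}^{-1}$ and $\det\mathbf{H}=1$ give, at a zero of $h_{11}$, the relations $s_{33}=-h_{12}h_{21}$ and $s_{32}=h_{12}h_{31}$, so the bracket multiplying $\psi_{-,3}(\theta_g)$ cancels identically. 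Hence $\psi_{+,1}(\theta_g)=\tfrac{f_g}{s_{33}(\theta_g)}\widetilde{\gamma}(\theta_g)$ with $f_g=-h_{21}(\theta_g)$, the first relation in~\eqref{3.5}. The second relation follows symmetrically: expand the first column of $\psi_{-}=\psi_{+}\mathbf{S}$, remove $\psi_{+,2}$ with the second identity of Corollary~\ref{cor:3}, evaluate at $\theta_g^{*}$ where $s_{11}(\theta_g^{*})=0$, and use the companion identities $h_{33}=-s_{12}s_{21}$ and $h_{32}=s_{12}s_{31}$ to annihilate the $\psi_{+,3}(\theta_g^{*})$ term, yielding $\gamma(\theta_g^{*})=\bar{f}_g\psi_{-,1}(\theta_g^{*})$ with $\bar{f}_g=h_{33}(\theta_g^{*})/s_{21}(\theta_g^{*})$.

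The remaining two relations follow without new computation. Inserting $\psi_{\pm,1}(z)=\tfrac{\mathrm{i}q_0}{z}\psi_{\pm,3}(q_0^{2}/z)$ from~\eqref{2.56} and $\widetilde{\gamma}(z)=-\gamma(q_0^{2}/z)$ from~\eqref{2.60} into the first relation at $z=\theta_g$ turns it into $\psi_{+,3}(q_0^{2}/\theta_g)=\check{f}_g\gamma(q_0^{2}/\theta_g)$ with $\check{f}_g=-\theta_g f_g/(\mathrm{i}q_0 s_{33}(\theta_g))$; the same substitution applied to the second relation at $z=\theta_g^{*}$ gives $\widetilde{\gamma}(q_0^{2}/\theta_g^{*})=\hat{f}_g\psi_{-,3}(q_0^{2}/\theta_g^{*})$. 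The nonvanishing of $\widetilde{\gamma}(\theta_g)$ (Proposition~\ref{pro:6}) and $s_{33}(\theta_g)$ ensures that each proportionality is genuine and that $f_g,\bar{f}_g,\check{f}_g,\hat{f}_g$ are finite and nonzero, completing the argument.
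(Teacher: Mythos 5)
Your proposal reaches the right conclusions, and your stage three (transferring the two proportionalities to the reciprocal points via~\eqref{2.56} and~\eqref{2.60}) matches what the paper intends; but the two steps where the real work happens both have genuine gaps. Note first that the paper itself writes no proof of Theorem~\ref{thm:4}: it defers to the corresponding arguments in~\cite{B1}, so that is the benchmark. The central flaw in your argument is that it evaluates, at the complex point $\theta_g$, objects that exist only on the real axis. Your formula $f_g=-h_{21}(\theta_g)$ and the cofactor cancellation that removes the $\psi_{-,3}$ term require the scattering relation~\eqref{2.25}, the entries $h_{21}$, $h_{31}$, $s_{32}$ (and $s_{21}$ in the mirror argument), and the columns $\psi_{-,1}$, $\psi_{-,2}$ to make sense at $z=\theta_g\in\mathbb{D}^{+}$. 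But~\eqref{2.25} holds only for $z\in\mathbb{R}$, and by Theorem~\ref{thm:2} the only entries with analytic continuations are $h_{11},s_{33}$ (into $\mathbb{D}^{+}$) and $s_{11},h_{33}$ (into $\mathbb{D}^{-}$); the off-diagonal entries and the opposite-half-plane Jost columns have none, so the computation cannot be performed at $\theta_g$. Your fallback does not close the gap either: $\det\mathbf{\Psi}^{+}(\theta_g)=0$ only yields a dependence $a\psi_{+,1}(\theta_g)+b\widetilde{\gamma}(\theta_g)+c\psi_{-,3}(\theta_g)=\mathbf{0}$, and Propositions~\ref{pro:6} and~\ref{pro:7} are consistent with all of $a,b,c$ being nonzero, so they cannot force $c=0$. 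The tool the paper builds for exactly this step is the symmetry~\eqref{2.47}: its fourth relation can be rewritten as $h_{11}(z)\,\psi_{+,3}^{*}(z^{*})=\mathrm{i}\mathbf{J}\mathrm{e}^{-\mathrm{i}\delta_{2}(z)}\,[\widetilde{\gamma}(z)\times\psi_{+,1}(z)]$, in which every factor is analytic for $\mathfrak{Im}z\geq0$; at $z=\theta_g$ the left side vanishes, so the cross product vanishes, and Proposition~\ref{pro:6} upgrades this to $\psi_{+,1}(\theta_g)\propto\widetilde{\gamma}(\theta_g)$. The third relation of~\eqref{2.47} at $z=\theta_g^{*}$, where $s_{11}(\theta_g^{*})=0$ by Proposition~\ref{pro:5}, gives $\gamma(\theta_g^{*})\propto\psi_{-,1}(\theta_g^{*})$ in the same way. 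This machinery also proves $s_{33}(\theta_g)\neq0$ cleanly, with no genericity input: if $s_{33}(\theta_g)=0$, the first relation of~\eqref{2.47} forces $\widetilde{\gamma}(\theta_g)\parallel\psi_{-,3}(\theta_g)$, hence $\psi_{+,1}(\theta_g)\parallel\psi_{-,3}(\theta_g)$, which by Proposition~\ref{pro:7} is equivalent to $\widetilde{\gamma}(\theta_g)=\mathbf{0}$, contradicting Proposition~\ref{pro:6}. Your alternative appeal to ``simple and distinct eigenvalues'' presupposes exactly the disjointness of the zero sets of $h_{11}$ and $s_{33}$ that has to be proved.

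The second gap you flag yourself and then leave open: nothing in your proposal establishes $|\theta_g|<q_0$, and as you correctly observe it cannot be read off Proposition~\ref{pro:5}, which is symmetric under $\theta_g\mapsto q_0^{2}/\theta_g^{*}$ and ties together zeros of four different functions. The missing ingredient is Proposition~\ref{pro:4} (self-adjointness: nontrivial $L^{2}(\mathbb{R})$ eigenfunctions exist only for $z\in C_{o}$, i.e.\ only for real $k$) combined with the $x\to-\infty$ behavior of the auxiliary eigenfunction for complex $z$, established in~\cite{B1}. For $z\in\mathbb{D}^{+}$ one always has $\mathfrak{Im}\lambda>0$, while $\mathfrak{Im}k>0$ exactly when $|z|>q_0$; consequently $\psi_{+,1}(z)$ always decays as $x\to+\infty$, whereas the leading behavior $\widetilde{\gamma}(z)\sim-s_{33}(z)\mathbf{Y}_{-,2}(z)\mathrm{e}^{\mathrm{i}\delta_{2}}$ as $x\to-\infty$ decays precisely when $|z|>q_0$ and grows when $|z|<q_0$. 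Hence a zero of $h_{11}$ with $|\theta_g|>q_0$ would, through the proportionality $\psi_{+,1}(\theta_g)=\frac{f_g}{s_{33}(\theta_g)}\widetilde{\gamma}(\theta_g)$, produce a nontrivial $L^{2}(\mathbb{R})$ eigenfunction at the non-real value $k(\theta_g)$, contradicting Proposition~\ref{pro:4}; inside the circle the eigenfunction grows at $-\infty$ and no bound state arises, which is exactly the paper's remark following Proposition~\ref{pro:4}. Without this asymptotic argument, one of the three conclusions of the theorem is simply absent from your proof.
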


\begin{corollary}\label{cor:5}
Suppose that $h_{11}(z)$ has simple zeros $\{z_{g}\}_{g=1}^{G_{1}}$ on $C_{o}$, it can be inferred that the norming constants adhere to the symmetry relationship:
\begin{equation}\label{3.6}
\begin{split}
\bar{c}_{g}=-c_{g}, \quad c_{g}^{*}=\frac{s_{11}^{\prime}(z_{g}^{*})}{h_{33}^{\prime}(z_{g}^{*})}\bar{c}_{g},
\quad g=1,2, \ldots, G_{1}.
\end{split}
\end{equation}
\end{corollary}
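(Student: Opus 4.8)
The plan is to prove the two identities separately, exploiting the two involutions already established: the inversion $z\mapsto q_0^2/z$ yields $\bar c_g=-c_g$, while the conjugation $z\mapsto z^*$ combined with an L'Hopital-type limit yields $c_g^*=(s_{11}'/h_{33}')\bar c_g$. The geometric fact I would record first is that for $z_g\in C_o$ one has $z_g z_g^*=|z_g|^2=q_0^2$, hence $q_0^2/z_g=z_g^*$ and $q_0^2/z_g^*=z_g$. Combined with Proposition~\ref{pro:5} this means that at $z_g$ both $h_{11}$ and $s_{33}$ have (simple) zeros, at $z_g^*$ both $s_{11}$ and $h_{33}$ vanish, and by Theorem~\ref{thm:3} the auxiliary functions satisfy $\widetilde\gamma(z_g)=\gamma(z_g^*)=\mathbf{0}$. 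These vanishing patterns are the engine of the argument.

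For the first identity I would start from $\psi_{+,1}(z_g)=c_g\psi_{-,3}(z_g)$ in Theorem~\ref{thm:3} and rewrite each side through the second symmetry~\eqref{2.56}. The $\mathbf{+}$ branch of the first relation gives $\psi_{+,1}(z_g)=(\mathrm{i}q_0/z_g)\psi_{+,3}(q_0^2/z_g)=(\mathrm{i}q_0/z_g)\psi_{+,3}(z_g^*)$, and the $\mathbf{-}$ branch of the second relation gives $\psi_{-,3}(z_g)=-(\mathrm{i}q_0/z_g)\psi_{-,1}(z_g^*)$. Substituting these and cancelling the common factor $\mathrm{i}q_0/z_g$ leaves $\psi_{+,3}(z_g^*)=-c_g\psi_{-,1}(z_g^*)$. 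Comparing with the second relation of Theorem~\ref{thm:3}, namely $\psi_{+,3}(z_g^*)=\bar c_g\psi_{-,1}(z_g^*)$, forces $\bar c_g=-c_g$.

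For the second identity I would conjugate the first relation of Theorem~\ref{thm:3} to obtain $\overline{\psi_{+,1}(z_g)}=c_g^*\,\overline{\psi_{-,3}(z_g)}$, and then re-express both conjugated columns through the first-symmetry relations~\eqref{2.47}. Evaluating the relation for $\psi_{+,1}^*(z^*)$ and the relation for $\psi_{-,3}^*(z^*)$ at $z=z_g^*$ expresses $\overline{\psi_{+,1}(z_g)}$ and $\overline{\psi_{-,3}(z_g)}$ as the quotients $\mathrm{i}\mathbf{J}\mathrm{e}^{-\mathrm{i}\delta_2}[\gamma\times\psi_{+,3}]/(-h_{33})$ and $\mathrm{i}\mathbf{J}\mathrm{e}^{-\mathrm{i}\delta_2}[\gamma\times\psi_{-,1}]/(-s_{11})$. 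Since $\gamma(z_g^*)$, $h_{33}(z_g^*)$ and $s_{11}(z_g^*)$ all vanish simply, these are $0/0$ limits; applying L'Hopital and using $\gamma(z_g^*)=\mathbf{0}$ to kill the spurious derivative terms replaces $\gamma$ by $\gamma'(z_g^*)$ in the numerators and $h_{33},s_{11}$ by $h_{33}'(z_g^*),s_{11}'(z_g^*)$ in the denominators. Inserting $\psi_{+,3}(z_g^*)=\bar c_g\psi_{-,1}(z_g^*)$ and using linearity of the cross product makes the vector $\mathrm{i}\mathbf{J}\mathrm{e}^{-\mathrm{i}\delta_2(z_g^*)}[\gamma'(z_g^*)\times\psi_{-,1}(z_g^*)]$ common to both quotients; eliminating it gives $\overline{\psi_{+,1}(z_g)}=\bar c_g\,(s_{11}'(z_g^*)/h_{33}'(z_g^*))\,\overline{\psi_{-,3}(z_g)}$, and matching with the conjugated Theorem~\ref{thm:3} relation (using $\psi_{-,3}(z_g)\neq\mathbf{0}$) yields $c_g^*=(s_{11}'(z_g^*)/h_{33}'(z_g^*))\,\bar c_g$.

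The main obstacle is the simultaneous $0/0$ limit in~\eqref{2.47}: I must confirm that at $z_g^*$ the zeros of $\gamma$, $h_{33}$ and $s_{11}$ are all simple, so that a single differentiation is legitimate and produces exactly the ratio $s_{11}'/h_{33}'$, and that $\gamma'(z_g^*)\times\psi_{-,1}(z_g^*)$ is a genuinely nonzero common vector permitting the cancellation. Simplicity of the discrete eigenvalues is the standing hypothesis, but the nondegeneracy of this cross product must be checked. A reassuring consistency is that the prefactors $\mathrm{e}^{-\mathrm{i}\delta_2(z_g^*)}$ are identical in both quotients and cancel, so no residual $(x,t)$ dependence survives and $c_g,\bar c_g$ are genuine constants, as required.
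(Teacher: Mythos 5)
Your proof is correct and follows essentially the same route the paper intends (it states the corollary without proof, relying on the symmetries of Section 2.3 in the style of the cited Biondini--Kraus reference): the second symmetry~\eqref{2.56}, with $q_0^2/z_g=z_g^*$ on $C_o$, turns the first relation of Theorem~\ref{thm:3} into the second and yields $\bar c_g=-c_g$, while conjugating Theorem~\ref{thm:3} and evaluating the $0/0$ quotients in~\eqref{2.47} by L'H\^{o}pital at the simple zeros produces the ratio $s_{11}^{\prime}(z_g^*)/h_{33}^{\prime}(z_g^*)$. The nondegeneracy you flagged is automatic rather than an extra hypothesis: since $[\psi_{-,3}(z_g)]^*\neq\mathbf{0}$ and $s_{11}^{\prime}(z_g^*)\neq 0$ (simplicity), the limit relation $[\psi_{-,3}(z_g)]^*=\mathrm{i}\mathbf{J}\mathrm{e}^{-\mathrm{i}\delta_2(z_g^*)}\left[\gamma^{\prime}(z_g^*)\times\psi_{-,1}(z_g^*)\right]/(-s_{11}^{\prime}(z_g^*))$ itself forces the common cross-product vector to be nonzero, so the cancellation is legitimate.
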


\begin{corollary}\label{cor:6}
Suppose that $h_{11}(z)$ has zeros $\{\theta_{g}\}_{g=1}^{N_{2}}$ off $C_{o}$, it is known that the norming constants adhere to the symmetry relationship:
\begin{equation}\label{3.7}
\begin{split}
\bar{f}_{g}=-\frac{f_{g}^{*}}{\rho(\theta_{g}^{*})}, \quad \hat{f}_{g}=\frac{\mathrm{i}q_{0}}{\theta_{g}^{*}\rho(\theta_{g}^{*})}f_{g}^{*}, \quad
\check{f}_{g}=\frac{\mathrm{i}\theta_{g}}{q_{0}s_{33}(\theta_{g})}f_{g},
\quad g=1,2, \ldots, G_{2}.
\end{split}
\end{equation}
\end{corollary}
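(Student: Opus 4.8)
The plan is to extract each of the three identities by playing the two involutions $z\mapsto z^{*}$ and $z\mapsto q_{0}^{2}/z$ against the four proportionality relations of Theorem~\ref{thm:4}, using the symmetry relations already established together with the non-vanishing $\widetilde{\gamma}(\theta_{g})\neq\mathbf{0}$ of Proposition~\ref{pro:6} (and $\psi_{-,1}(\theta_{g}^{*})\neq\mathbf{0}$) to cancel common vector factors. Since every relation in \eqref{3.5} has the form (eigenfunction) $=$ (scalar)$\times$(eigenfunction), each symmetry identity collapses to a scalar equation once both sides are transported to a common point and projected onto a common nonzero vector.

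The constant $\check{f}_{g}$ is easiest and uses only the second symmetry. Setting $z=\theta_{g}$ in the first relation of \eqref{2.56} gives $\psi_{+,3}(q_{0}^{2}/\theta_{g})=-\tfrac{\mathrm{i}\theta_{g}}{q_{0}}\psi_{+,1}(\theta_{g})$, while \eqref{2.60} gives $\gamma(q_{0}^{2}/\theta_{g})=-\widetilde{\gamma}(\theta_{g})$. Substituting both into $\psi_{+,3}(q_{0}^{2}/\theta_{g})=\check{f}_{g}\gamma(q_{0}^{2}/\theta_{g})$ and then inserting $\psi_{+,1}(\theta_{g})=\tfrac{f_{g}}{s_{33}(\theta_{g})}\widetilde{\gamma}(\theta_{g})$ yields, after cancelling $\widetilde{\gamma}(\theta_{g})$, exactly $\check{f}_{g}=\tfrac{\mathrm{i}\theta_{g}}{q_{0}s_{33}(\theta_{g})}f_{g}$.

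The constant $\bar{f}_{g}$ is the main obstacle, since it requires the first (conjugation) symmetry and hence the cross-product identities rather than a simple rescaling. First I would conjugate $\psi_{+,1}(\theta_{g})=\tfrac{f_{g}}{s_{33}(\theta_{g})}\widetilde{\gamma}(\theta_{g})$ to obtain $[\psi_{+,1}(\theta_{g})]^{*}=\tfrac{f_{g}^{*}}{s_{33}^{*}(\theta_{g})}[\widetilde{\gamma}(\theta_{g})]^{*}$, and independently evaluate the $\psi_{+,1}^{*}$ relation of \eqref{2.47} at $z=\theta_{g}^{*}$ after substituting $\gamma(\theta_{g}^{*})=\bar{f}_{g}\psi_{-,1}(\theta_{g}^{*})$; this rewrites $[\psi_{+,1}(\theta_{g})]^{*}$ as a multiple of the cross product $\mathbf{J}[\psi_{-,1}(\theta_{g}^{*})\times\psi_{+,3}(\theta_{g}^{*})]$. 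The crux is to recognize that this same cross product equals $[\widetilde{\gamma}(\theta_{g})]^{*}$ up to the factor $\rho(\theta_{g}^{*})$ and an exponential, which follows by conjugating the definition \eqref{2.39} of $\widetilde{\gamma}$ and using \eqref{2.51} (equivalently, reading off \eqref{2.52}). Equating the two expressions for $[\psi_{+,1}(\theta_{g})]^{*}$ and cancelling $[\widetilde{\gamma}(\theta_{g})]^{*}\neq\mathbf{0}$ produces $\bar{f}_{g}=-\tfrac{f_{g}^{*}h_{33}(\theta_{g}^{*})}{s_{33}^{*}(\theta_{g})\rho(\theta_{g}^{*})}$, and the final reduction to $\bar{f}_{g}=-f_{g}^{*}/\rho(\theta_{g}^{*})$ uses $h_{33}(\theta_{g}^{*})=s_{33}^{*}(\theta_{g})$ from \eqref{2.50}. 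Careful bookkeeping of the factors $\mathrm{i}$, $\rho$, and $\mathrm{e}^{\pm\mathrm{i}\delta_{2}}$ is where the computation is most error-prone.

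Finally, $\hat{f}_{g}$ follows by combining both symmetries and reusing $\bar{f}_{g}$. Evaluating \eqref{2.60} at $z=q_{0}^{2}/\theta_{g}^{*}\in\mathbb{D}^{+}$ gives $\widetilde{\gamma}(q_{0}^{2}/\theta_{g}^{*})=-\gamma(\theta_{g}^{*})=-\bar{f}_{g}\psi_{-,1}(\theta_{g}^{*})$, while \eqref{2.56} at $z=\theta_{g}^{*}$ gives $\psi_{-,3}(q_{0}^{2}/\theta_{g}^{*})=-\tfrac{\mathrm{i}\theta_{g}^{*}}{q_{0}}\psi_{-,1}(\theta_{g}^{*})$. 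Substituting both into $\widetilde{\gamma}(q_{0}^{2}/\theta_{g}^{*})=\hat{f}_{g}\psi_{-,3}(q_{0}^{2}/\theta_{g}^{*})$ and cancelling $\psi_{-,1}(\theta_{g}^{*})$ gives $\hat{f}_{g}=-\tfrac{\mathrm{i}q_{0}}{\theta_{g}^{*}}\bar{f}_{g}$, and inserting the already-derived value of $\bar{f}_{g}$ yields $\hat{f}_{g}=\tfrac{\mathrm{i}q_{0}}{\theta_{g}^{*}\rho(\theta_{g}^{*})}f_{g}^{*}$, completing the proof.
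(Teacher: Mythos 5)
Your proof is correct and follows essentially the route the paper intends (the paper states Corollary~\ref{cor:6} without printing a proof, deferring to the methodology of~\cite{B1}): you combine the proportionality relations of Theorem~\ref{thm:4} with the conjugation symmetry (\eqref{2.47}, \eqref{2.50}, \eqref{2.52}) and the inversion symmetry (\eqref{2.56}, \eqref{2.60}), cancelling the nonzero vector factors guaranteed by Proposition~\ref{pro:6} and the non-vanishing of the Jost columns. All three scalar identities, including the signs and the final reduction via $h_{33}(\theta_{g}^{*})=s_{33}^{*}(\theta_{g})$, check out against \eqref{3.7}.
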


\subsection{Asymptotic behavior}

The asymptotic behavior of the modified Jost eigenfunctions for $z\rightarrow\infty$ and $z\rightarrow 0$ can be scrutinized utilizing the Wentzel-Kramers-Brillouin approximation technique. Specifically, when integrated with the differential equations~\eqref{2.19}, it exhibits asymptotic characteristics:
\begin{corollary}\label{cor:7}
The asymptotic expansion as $z\rightarrow\infty$ is delineated as follows:
\begin{equation}\label{3.8}
\everymath{\displaystyle}
\begin{split}
\nu_{\pm,1}(z)=\begin{pmatrix}
    \mathrm{i}   \\
    \mathrm{i}\mathbf{q}/z
  \end{pmatrix}+O(\frac{1}{z^{2}}), \quad
\nu_{\pm,3}(z)=\begin{pmatrix}
    \mathbf{q}^{\dagger}\mathbf{q}_{\pm}/q_{0}z  \\
    \mathbf{q}_{\pm}/q_{0}
  \end{pmatrix}+O(\frac{1}{z^{2}}).
\end{split}
\end{equation}
Similarly, the asymptotic expansion as $z\rightarrow 0$ is
\begin{equation}\label{3.9}
\everymath{\displaystyle}
\begin{split}
\nu_{\pm,1}(z)=\begin{pmatrix}
    \mathrm{i}\mathbf{q}^{\dagger}\mathbf{q}_{\pm}/q_{0}^{2} \\
    \mathrm{i}\mathbf{q}_{\pm}/z
  \end{pmatrix}+O(z), \quad
\nu_{\pm,3}(z)=\begin{pmatrix}
    q_{0}/z  \\
    \mathbf{q}/q_{0}
  \end{pmatrix}+O(z).
\end{split}
\end{equation}
\end{corollary}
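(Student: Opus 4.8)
The plan is to obtain both expansions by inserting a formal Laurent-series ansatz for the analytic columns $\nu_{\pm,1}$ and $\nu_{\pm,3}$ into the first-order system governing $\nu_{\pm}$ and matching powers of $z$. Writing a generic analytic column as $\nu_{\pm,1}=(f_{0},\mathbf{f}^{T})^{T}$ with $f_{0}$ scalar and $\mathbf{f}$ a two-component vector, the $x$-part of the Lax pair~\eqref{2.1} written for $\nu_{\pm}=\psi_{\pm}\mathbf{e}^{-\mathrm{i}\mathbf{\Delta}}$ gives, column by column, $(\nu_{\pm,j})_{x}=\mathrm{i}k\mathbf{J}\nu_{\pm,j}+\mathrm{i}\mathbf{Q}\nu_{\pm,j}-\mathrm{i}(\mathbf{\Lambda}_{1})_{jj}\nu_{\pm,j}$, which is the column form of~\eqref{2.19}. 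Substituting $k+\lambda=z$ and $k-\lambda=q_{0}^{2}/z$ from~\eqref{2.9}, the system decouples into a scalar equation for $f_{0}$ and a vector equation for $\mathbf{f}$ with explicit rational coefficients: for $j=1$ (where $(\mathbf{\Lambda}_{1})_{11}=\lambda$) one finds $(f_{0})_{x}=\mathrm{i}(q_{0}^{2}/z)f_{0}-\mathrm{i}\mathbf{q}^{\dagger}\mathbf{f}$ and $\mathbf{f}_{x}=-\mathrm{i}z\mathbf{f}+\mathrm{i}\mathbf{q}f_{0}$, while for $j=3$ (where $(\mathbf{\Lambda}_{1})_{33}=-\lambda$) the analogous pair holds with the roles of $z$ and $q_{0}^{2}/z$ interchanged.

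I would then posit $\nu_{\pm,1}=\sum_{n\geq0}z^{-n}\nu_{\pm,1}^{(n)}$ (and likewise for $\nu_{\pm,3}$) for the expansion about $z=\infty$, and a Laurent series beginning at order $z^{-1}$ for the expansion about $z=0$, inserting the ansatz and collecting equal powers of $z$. The resulting hierarchy is triangular: at each order the term carrying the dominant coefficient ($z$ when $z\to\infty$, $q_{0}^{2}/z$ when $z\to0$) \emph{algebraically} fixes one block of the column from lower-order data, while the complementary block obeys a first-order linear ODE in $x$ whose integration constant is pinned down by the boundary condition $\nu_{\pm,j}\to\mathbf{Y}_{\pm,j}$ as $x\to\pm\infty$ read off from~\eqref{2.11}. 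Carrying this out reproduces the four asserted vectors; the only coefficients needing care are those in which the \emph{local} potential $\mathbf{q}(x,t)$ (rather than its boundary value $\mathbf{q}_{\pm}$) enters, namely the lower block $\mathbf{q}/q_{0}$ of $\nu_{\pm,3}=(g_{0},\mathbf{g}^{T})^{T}$ as $z\to0$, obtained by balancing the singular term $-\mathrm{i}(q_{0}^{2}/z)\mathbf{g}$ against $\mathrm{i}\mathbf{q}g_{0}$ with $g_{0}\sim q_{0}/z$, and symmetrically the lower block $\mathrm{i}\mathbf{q}/z$ of $\nu_{\pm,1}$ as $z\to\infty$. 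The finite entries such as $\mathrm{i}\mathbf{q}^{\dagger}\mathbf{q}_{\pm}/q_{0}^{2}$ arise from demanding that the spurious $z^{-1}$ contribution to a derivative cancel.

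A useful shortcut I would exploit is the second symmetry: relation~\eqref{2.56} ties $\psi_{\pm,1}(z)$ to $\psi_{\pm,3}(q_{0}^{2}/z)$, so once the behavior of the column-$1$ functions near one of $z=\infty,0$ is established, the behavior of the column-$3$ functions near the reciprocal point follows by substituting $z\mapsto q_{0}^{2}/z$ and tracking the accompanying factor $\mathrm{i}q_{0}/z$ and the exponential in~\eqref{2.17}. This both halves the direct computation and serves as an internal consistency check on the expansions at $0$ and $\infty$.

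The principal obstacle is \emph{rigor rather than algebra}: promoting the formal series to a genuine asymptotic expansion with the claimed uniform remainders, and correctly handling the singular $z^{-1}$ leading behavior at $z=0$. For this I would control the $x$-integrations appearing in the recursion by means of the $L^{1}$ hypothesis of Theorem~\ref{thm:1} and the Volterra representation~\eqref{2.21}, bootstrapping the bound on $\nu_{\pm}$ order by order; one must also verify that the coefficients extracted from the $x$-equation are consistent with the $t$-part of~\eqref{2.19}, which is guaranteed by the compatibility $[\mathbf{X}_{\pm},\mathbf{T}_{\pm}]=\mathbf{0}$ but should be checked at the leading orders. Additional care is needed because the expansions hold only in the appropriate half-planes $\mathbb{D}^{\pm}$ dictated by Theorem~\ref{thm:1}, so the remainder estimates must respect those analyticity domains.
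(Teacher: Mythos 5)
Your proposal is correct and follows essentially the same route as the paper: the paper's own (one-line) justification of Corollary~\ref{cor:7} is precisely a WKB-type formal expansion substituted into the differential equations~\eqref{2.19}, with integration constants fixed by the boundary data $\mathbf{Y}_{\pm}$ from~\eqref{2.11}, which is what you carry out column by column (and your computed coefficients match~\eqref{3.8} and~\eqref{3.9}). Your extra material --- the second-symmetry shortcut via~\eqref{2.56} and the rigor discussion through the Volterra representation~\eqref{2.21} --- goes beyond what the paper records but does not change the method.
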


By combining the modified auxiliary eigenfunctions~\eqref{2.44} with the asymptotic properties~\eqref{3.8} and~\eqref{3.9}, the following results are obtained
\begin{equation}\label{3.10}
\everymath{\displaystyle}
\begin{split}
d(z)&=\begin{pmatrix}
    \mathbf{q}^{\dagger}\mathbf{q}_{-}^{\bot}/q_{0}z  \\
    \mathbf{q}_{-}^{\bot}/q_{0}
  \end{pmatrix}+O(\frac{1}{z^{2}}), \quad
\widetilde{d}(z)=\begin{pmatrix}
    -\mathbf{q}^{\dagger}\mathbf{q}_{+}^{\bot}/q_{0}z  \\
    -\mathbf{q}_{+}^{\bot}/q_{0}
  \end{pmatrix}+O(\frac{1}{z^{2}}), \quad z\rightarrow\infty,
\end{split}
\end{equation}
and
\begin{equation}\label{3.11}
\everymath{\displaystyle}
\begin{split}
d(z)&=\begin{pmatrix}
    0  \\
    \mathbf{q}_{+}^{\bot}/q_{0}
  \end{pmatrix}+O(z), \quad
\widetilde{d}(z)=\begin{pmatrix}
    0  \\
    -\mathbf{q}_{-}^{\bot}/q_{0}
  \end{pmatrix}+O(z), \quad z\rightarrow0.
\end{split}
\end{equation}

\begin{corollary}\label{cor:8}
The asymptotic behavior of scattering matrix entries as $z\rightarrow\infty$ is
\begin{subequations}\label{3.12}
\begin{align}
s_{11}(z)&=1+O(\frac{1}{z}),  \quad
s_{22}(z)=h_{33}(z)=\frac{\mathbf{q}_{-}^{\dagger}\mathbf{q}_{+}}{q_{0}^{2}}+O(\frac{1}{z}), \quad
s_{32}(z)=\frac{\mathbf{q}_{+}^{\dagger}\mathbf{q}_{-}^{\bot}}{q_{0}^{2}}+O(\frac{1}{z}), \\
h_{11}(z)&=1+O(\frac{1}{z}),  \quad
h_{22}(z)=s_{33}(z)=\frac{\mathbf{q}_{+}^{\dagger}\mathbf{q}_{-}}{q_{0}^{2}}+O(\frac{1}{z}), \quad
h_{32}(z)=\frac{\mathbf{q}_{-}^{\dagger}\mathbf{q}_{+}^{\bot}}{q_{0}^{2}}+O(\frac{1}{z}), \\
s_{23}(z)&=\frac{(\mathbf{q}_{+}^{\bot})^{\dagger}\mathbf{q}_{-}}{q_{0}^{2}}+O(\frac{1}{z}), \quad
h_{23}(z)=\frac{(\mathbf{q}_{-}^{\bot})^{\dagger}\mathbf{q}_{+}}{q_{0}^{2}}+O(\frac{1}{z}),
\end{align}
\end{subequations}
the asymptotic behavior of other entries as $z\rightarrow\infty$ in the scattering matrix is $O(1/z)$. Similarly, one can show that as $z\rightarrow 0$
\begin{subequations}\label{3.13}
\everymath{\displaystyle}
\begin{align}
s_{11}(z)&=h_{22}(z)=\frac{\mathbf{q}_{+}^{\dagger}\mathbf{q}_{-}}{q_{0}^{2}}+O(z), \quad
s_{21}(z)=\frac{\mathrm{i}(\mathbf{q}_{+}^{\bot})^{\dagger}\mathbf{q}_{-}}{q_{0}z}+O(1), \quad
s_{33}(z)=1+O(z),  \\
s_{22}(z)&=h_{11}(z)=\frac{\mathbf{q}_{-}^{\dagger}\mathbf{q}_{+}}{q_{0}^{2}}+O(z), \quad
h_{21}(z)=\frac{\mathrm{i}(\mathbf{q}_{-}^{\bot})^{\dagger}\mathbf{q}_{+}}{q_{0}z}+O(1), \quad
h_{33}(z)=1+O(z),
\end{align}
\end{subequations}
the asymptotic behavior of other entries as $z\rightarrow 0$ in the scattering matrix is $O(z)$.
\end{corollary}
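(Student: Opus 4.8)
The plan is to reduce the statement to the eigenfunction asymptotics already recorded in Corollary~\ref{cor:7}, fed into the Wronskian representations of the scattering coefficients. Each entry $h_{ij}(z)$ is the Wronskian obtained by replacing the $i$-th column of $\psi_{-}$ with $\psi_{+,j}$ and dividing by $\det\psi_{-}=\mathrm{i}\rho(z)\mathrm{e}^{\mathrm{i}\delta_{2}}$ from~\eqref{2.24}, and symmetrically for $s_{ij}$ with the roles of $\psi_{\pm}$ exchanged. After stripping the factor $\mathbf{e}^{\mathrm{i}\mathbf{\Delta}}$, this phase cancels cleanly precisely for the diagonal entries, so that $h_{11}=\det(\nu_{+,1},\nu_{-,2},\nu_{-,3})/(\mathrm{i}\rho)$, $h_{33}=\det(\nu_{-,1},\nu_{-,2},\nu_{+,3})/(\mathrm{i}\rho)$, and likewise for $h_{22},s_{11},s_{22},s_{33}$. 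First I would substitute the expansions~\eqref{3.8}--\eqref{3.9} into these determinants and evaluate the leading $2\times2$ minor, using the orthogonality identities $\det(\mathbf{q}_{\pm}^{\bot},\mathbf{q}_{\pm})=q_{0}^{2}$ and $\det(\mathbf{q}_{-}^{\bot},\mathbf{q}_{+})=\mathbf{q}_{-}^{\dagger}\mathbf{q}_{+}$ that follow immediately from $\mathbf{v}^{\bot}=(v_{2},-v_{1})^{\dagger}$.

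The first gap is that these determinants involve the non-analytic middle column $\nu_{\pm,2}$, whose asymptotics are not stated in Corollary~\ref{cor:7}. I would supply them by running the same Volterra/WKB recursion underlying~\eqref{2.19} and~\eqref{2.21} on the second column: although $\nu_{\pm,2}$ is analytic in no half-plane, it still admits a formal expansion in $1/z$ (resp. $z$), whose leading order is fixed by $\lim_{x\to\pm\infty}\nu_{\pm,2}=\mathbf{Y}_{\pm,2}=(0,\mathbf{q}_{\pm}^{\bot}/q_{0})^{T}$, giving $\nu_{\pm,2}(z)=(0,\mathbf{q}_{\pm}^{\bot}/q_{0})^{T}+O(1/z)$ as $z\to\infty$ and the analogous $z\to0$ form. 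The minors then collapse: the leading block of $\det(\nu_{+,1},\nu_{-,2},\nu_{-,3})$ is $\mathrm{i}\det(\mathbf{q}_{-}^{\bot},\mathbf{q}_{-})/q_{0}^{2}=\mathrm{i}$, so $h_{11}\to1$, while $\det(\nu_{-,1},\nu_{-,2},\nu_{+,3})$ gives $h_{33}\to\mathbf{q}_{-}^{\dagger}\mathbf{q}_{+}/q_{0}^{2}$, and similarly for the others. This also decouples the products $h_{11}s_{33}$ and $s_{11}h_{33}$ that the determinant identities~\eqref{3.2} alone deliver only jointly.

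For the off-diagonal index-$2$ coefficients $s_{32},s_{23},h_{32},h_{23}$ the single-column replacement no longer yields a phase-clean Wronskian, so I would avoid a direct computation and propagate from a minimal set. The cleanest route fixes one representative and uses the symmetries already proved: the first-symmetry relations~\eqref{2.49} give $h_{23}=\rho\,s_{32}^{*}$, $h_{32}=s_{23}^{*}/\rho$ and $h_{22}=s_{22}^{*}$, while the cofactor expansion of $\det\mathbf{H}=1$ from~\eqref{2.26} gives $s_{22}=h_{11}h_{33}-h_{13}h_{31}\to h_{33}$ and $h_{22}=s_{33}$ at leading order, since the genuinely off-diagonal entries are $O(1/z)$. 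The remaining representative, say $s_{32}$, I would extract from the auxiliary matrix $\mathbf{\Psi}^{\pm}$ of~\eqref{3.1}: its middle column is $-\widetilde{\gamma}$ (resp. $\gamma$), and the clean $z$-asymptotics~\eqref{3.10}--\eqref{3.11} of the modified functions $d,\widetilde{d}$ isolate exactly the non-oscillatory part needed to read off $s_{32}=\mathbf{q}_{+}^{\dagger}\mathbf{q}_{-}^{\bot}/q_{0}^{2}$.

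Finally, I would obtain the full $z\to0$ block~\eqref{3.13} from the $z\to\infty$ block~\eqref{3.12} via the second involution: the relation~\eqref{2.57}--\eqref{2.58}, together with its $\mathbf{H}$-analogue $\mathbf{H}(q_{0}^{2}/z)=\mathbf{J}_{3}(z)\mathbf{H}(z)\mathbf{J}_{3}^{-1}(z)$ coming from~\eqref{2.55} and~\eqref{2.59}, maps $z\to0$ to $q_{0}^{2}/z\to\infty$. Thus $s_{11}(z)=s_{33}(q_{0}^{2}/z)\to\mathbf{q}_{+}^{\dagger}\mathbf{q}_{-}/q_{0}^{2}$, $s_{33}(z)=s_{11}(q_{0}^{2}/z)\to1$, and the genuinely singular entries inherit their poles from finite limits, e.g. $s_{21}(z)=\tfrac{\mathrm{i}q_{0}}{z}s_{23}(q_{0}^{2}/z)=\tfrac{\mathrm{i}(\mathbf{q}_{+}^{\bot})^{\dagger}\mathbf{q}_{-}}{q_{0}z}+O(1)$, and likewise $h_{21}$ from $h_{23}$. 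I expect the main obstacle to be the first two points in combination: justifying the asymptotics of the non-analytic middle column $\nu_{\pm,2}$ directly from~\eqref{2.19} (rather than presupposing an analytic expansion), and the careful phase bookkeeping for the off-diagonal coefficients, where one must discard the fast-oscillating $\mathrm{e}^{-\mathrm{i}(\delta_{1}+\delta_{2})}$ contributions and keep only the non-oscillatory part carried by $d$ and $\widetilde{d}$.
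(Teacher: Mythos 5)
Your overall skeleton is the right one and is essentially what the paper (which states this corollary without proof, deferring to the WKB expansions behind Corollary~\ref{cor:7} and the Wronskian representations~\eqref{3.16}--\eqref{3.17}, as in~\cite{B1}) intends: express each entry as a Wronskian of modified eigenfunctions, supply the missing middle-column asymptotics $\nu_{\pm,2}=(0,\mathbf{q}_{\pm}^{\bot}/q_{0})^{T}+O(1/z)$, evaluate the leading minors with the identities $\det(\mathbf{q}_{\pm}^{\bot},\mathbf{q}_{\pm})=q_{0}^{2}$, $\det(\mathbf{q}_{-}^{\bot},\mathbf{q}_{+})=\mathbf{q}_{-}^{\dagger}\mathbf{q}_{+}$, and map $z\to 0$ to $q_{0}^{2}/z\to\infty$ by the second involution. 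The diagonal computations and the symmetry/cofactor bookkeeping all check out. The genuine gap is your treatment of the index-$2$ off-diagonal entries. From~\eqref{2.43}, $\widetilde{\gamma}=s_{32}\psi_{-,3}-s_{33}\psi_{-,2}$, hence $\widetilde{d}=s_{32}\,\nu_{-,3}\,\mathrm{e}^{-\mathrm{i}(\delta_{1}+\delta_{2})}-s_{33}\nu_{-,2}$: the term you propose to \emph{discard} as ``fast-oscillating'' is exactly the one carrying $s_{32}$, and dropping it leaves $\widetilde{d}=-s_{33}\nu_{-,2}$, i.e. $\mathbf{q}_{+}^{\bot}\parallel\mathbf{q}_{-}^{\bot}$, which is false unless $\mathbf{q}_{+}\parallel\mathbf{q}_{-}$. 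The fact you need is the opposite of what you assert: this phase is \emph{not} fast-oscillating as $z\to\infty$. Indeed $\lambda-k=-q_{0}^{2}/z$ and a short computation gives $\delta_{1}+\delta_{2}=(\lambda-k)x-(\lambda-k)^{2}\bigl[1+2\sigma(2k+\lambda)\bigr]t=O(1/z)$ for fixed $(x,t)$, so $\mathrm{e}^{\mp\mathrm{i}(\delta_{1}+\delta_{2})}=1+O(1/z)$. Keeping the term and pairing with $\mathbf{q}_{-}^{\dagger}$ (using $\mathbf{q}_{-}^{\dagger}\mathbf{q}_{-}^{\bot}=0$) then yields $s_{32}q_{0}^{2}=-\mathbf{q}_{-}^{\dagger}\mathbf{q}_{+}^{\bot}=\mathbf{q}_{+}^{\dagger}\mathbf{q}_{-}^{\bot}$, the claimed value; the genuinely fast phases at $z\to\infty$ are $\mathrm{e}^{\pm\mathrm{i}(\delta_{1}-\delta_{2})}$ and $\mathrm{e}^{\pm2\mathrm{i}\delta_{1}}$, and they only multiply determinants that are already $O(1/z)$ and are unimodular on $\mathbb{R}$.

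Two smaller under-determinations in your propagation scheme. First, the relations~\eqref{2.49} pair $\{s_{32},h_{23}\}$ and $\{s_{23},h_{32}\}$ in disjoint orbits, so one representative plus~\eqref{2.49} never reaches $s_{23}$ and $h_{32}$; you must also use the adjugate identity $s_{23}=-(h_{11}h_{23}-h_{13}h_{21})$ (the same cofactor tool you already invoke for $s_{22}$), or compute a second representative from $\gamma=h_{33}\psi_{+,2}-h_{32}\psi_{+,3}$. Second, pushing ``all other entries are $O(z)$'' at the origin through the involution requires more than the stated $O(1/w)$ at infinity: e.g. $s_{23}(z)=-\tfrac{\mathrm{i}q_{0}}{z}s_{21}(q_{0}^{2}/z)$ is only $O(1)$ unless you know $s_{21}(w)=O(1/w^{2})$; this sharper bound does hold, because $\nu_{+,1}-\nu_{-,1}=O(1/w^{2})$ by~\eqref{3.8} and $s_{21}$ is (up to a unimodular phase) a Wronskian containing both of these columns, but it must be said. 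Note, finally, that once $\delta_{1}+\delta_{2}\to 0$ is observed, the plain Wronskian computation handles all sixteen entries at $z\to\infty$ directly, so your detour through the auxiliary functions and symmetries, while workable after the fixes above, is not actually needed.
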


\subsection{Behavior at the branch points}

Next, we will analyze the characteristics of the Jost eigenfunctions and scattering matrix at the branching point $k=\pm q_{0}$. At the branching point, the  matrices $\mathbf{Y}_{\pm}(z)$ are degenerate. However, the term $\mathbf{Y}_{\pm}(z)
\mathbf{e}^{\mathrm{i}(x-y)\mathbf{\Lambda}_{1}(z)}\mathbf{Y}_{\pm}^{-1}(z)$ is finite at the branching point.
\begin{equation}\label{3.14}
\everymath{\displaystyle}
\begin{split}
\lim_{z\rightarrow\pm q_{0}}{\mathbf{Y}_{\pm}
\mathbf{e}^{\mathrm{i}(x-y)\mathbf{\Lambda}_{1}}\mathbf{Y}_{\pm}^{-1}}=\begin{pmatrix}
    1\pm \mathrm{i}q_{0}(x-y) & \mathrm{i}(y-x)\mathbf{q}_{\pm}^{\dagger}   \\
    \mathrm{i}(x-y)\mathbf{q}_{\pm} & \frac{1}{q_{0}^{2}}\left[(1\mp \mathrm{i}q_{0}(x-y))
    \mathbf{q}_{\pm}\mathbf{q}_{\pm}^{\dagger} +\mathrm{e}^{\mp \mathrm{i}q_{0}(x-y)}\mathbf{q}_{\pm}^{\perp}(\mathbf{q}_{\pm}^{\perp})^{\dagger}\right] \\
  \end{pmatrix}.
\end{split}
\end{equation}
For all $(x,t)\in\mathbb{R}^{2}$, it can be inferred from expression~\eqref{2.24} that $\det\psi_{\pm}(\pm q_{0};x,t)=0$. Then, the columns of $\psi_{\pm}(q_{0};x,t)$ and $\psi_{\pm}(-q_{0};x,t)$ are linearly dependent, and with the help of analytical properties~\eqref{2.56}, the following conditions are obtained
\begin{equation}\label{3.15}
\begin{split}
\psi_{\pm,1}(q_{0};x,t)=\mathrm{i}\psi_{\pm,3}(q_{0};x,t), \quad
\psi_{\pm,1}(-q_{0};x,t)=-\mathrm{i}\psi_{\pm,3}(-q_{0};x,t).
\end{split}
\end{equation}

We examine the characteristics of the scattering matrix $\mathbf{S}(z)$ in the vicinity of the branch points, which are expressible through Wronskian determinants. Consequently, we document the scattering coefficients in terms of these Wronskians.
\begin{equation}\label{3.16}
\begin{split}
s_{jl}(z)=\frac{z^{2}}{\mathrm{i}(z^{2}-q_{0}^{2})}W_{jl}(z)\mathrm{e}^{-\mathrm{i}\delta_{2}(z)}
=\frac{W_{jl}(z)}{\mathrm{i}\rho(z)}\mathrm{e}^{-\mathrm{i}\delta_{2}(z)},
\end{split}
\end{equation}
where
\begin{equation}\label{3.17}
\begin{split}
W_{jl}(z)=\det\,(\psi_{-,l}(z),\psi_{+,j+1}(z),\psi_{+,j+2}(z)),
\end{split}
\end{equation}
where $j+1$ and $j+2$ are calculated modulo 3. The scattering coefficients as $z\rightarrow\pm q_{0}$ are articulated as follows:
\begin{equation}\label{3.18}
\begin{split}
s_{ij}(z)&=\frac{s_{ij,\pm}}{z\mp q_{0}}+s_{ij,\pm}^{(o)}+O(z\mp q_{0}), \quad
z\in \mathbb{R} \backslash \{\pm q_{0}\},
\end{split}
\end{equation}
where
\begin{subequations}\label{3.19}
\begin{align}
s_{ij,\pm}&=\mp\frac{\mathrm{i}q_{0}}{2}W_{ij}(\pm q_{0};x,t)
\exp\left[ \mathrm{i}q_{0}^{2}t \pm \mathrm{i}q_{0}(x+4\sigma q_{0}^{2}t) \right], \\
s_{ij,\pm}^{(o)}&=\left[\mp \frac{\mathrm{i}q_{0}}{2}
\frac{\partial}{\partial z} \left. W_{ij}(z;x,t) \right|_{z=\pm q_{0}}-\mathrm{i}W_{ij}(\pm q_{0};x,t) \right]
\exp\left[ \mathrm{i}q_{0}^{2}t \pm \mathrm{i}q_{0}(x+4\sigma q_{0}^{2}t) \right].
\end{align}
\end{subequations}

Subsequently, the asymptotic series for $\mathbf{S}(z)$ in the vicinity of the branch point can be articulated in the following manner:
\begin{equation}\label{3.20}
\begin{split}
\mathbf{S}(z)=\frac{\mathbf{S}_{\pm}}{z\mp q_{0}}+\mathbf{S}_{\pm}^{(o)}+O(z\mp q_{0}), \quad
\mathbf{S}_{\pm}^{(o)}=\left(s_{ij,\pm}^{(o)}\right),
\end{split}
\end{equation}
where
\begin{equation}\label{3.21}
\begin{split}
\mathbf{S}_{\pm}=s_{11,\pm}\begin{pmatrix}
    1 & 0 & \mp \mathrm{i}  \\
    0 & 0 & 0  \\
    \mp \mathrm{i} & 0 & -1
  \end{pmatrix}+s_{12,\pm}\begin{pmatrix}
    0 & 1 & 0  \\
    0 & 0 & 0  \\
    0 & \mp \mathrm{i} & 0
  \end{pmatrix}.
\end{split}
\end{equation}
The asymptotic behavior of the reflection coefficient~\eqref{2.61} at the branching point can be directly obtained through~\eqref{3.21} and the symmetry~\eqref{2.49}
\begin{equation}\label{3.22}
\begin{split}
\lim_{z\rightarrow\pm q_{0}}\beta_{1}(z)=\mp \mathrm{i}, \quad
\lim_{z\rightarrow\pm q_{0}}\beta_{2}(z)=0.
\end{split}
\end{equation}

\section{Inverse problem}
\label{s:Inverse problem}

In general, the IST is represented by a suitable RH problem, and then its different properties can be studied. Therefore, the meromorphic eigenfunctions in the upper half $z$-plane are related to the meromorphic eigenfunctions in the lower half $z$-plane with the help of a suitable jump condition.

\subsection{Riemann-Hilbert problem}

To formulate the matrix RH problem, it is essential to establish suitable transition conditions that define the behavior of eigenfunctions, which are characterized by their meromorphic nature within the specified domain. Given that certain Jost eigenfunctions lack analytic properties, it is necessary to define new modified meromorphic functions in the corresponding regions.

\begin{proposition}\label{pro:8}
Define the piecewise meromorphic function $\mathbf{R}^{\pm}(z;x,t)=(\mathbf{r}_{1}^{\pm},\mathbf{r}_{2}^{\pm},\mathbf{r}_{3}^{\pm})$, where
\begin{subequations}\label{4.1}
\everymath{\displaystyle}
\begin{align}
\mathbf{R}^{+}(z;x,t)&=\mathbf{\Psi}^{+}\mathbf{e}^{-\mathrm{i}\mathbf{\Delta}}
\operatorname{diag} \left( \frac{1}{h_{11}(z)},\frac{1}{s_{33}(z)},1 \right)=
\left[ \frac{\nu_{+,1}}{h_{11}(z)}, -\frac{\widetilde{d}}{s_{33}(z)}, \nu_{-,3} \right], \quad z\in \mathbb{D}^{+}, \\
\mathbf{R}^{-}(z;x,t)&=\mathbf{\Psi}^{-}\mathbf{e}^{-\mathrm{i}\mathbf{\Delta}}
\operatorname{diag} \left( 1, \frac{1}{s_{11}(z)},\frac{1}{h_{33}(z)} \right)=
\left[ \nu_{-,1}, \frac{d}{s_{11}(z)}, \frac{\nu_{+,3}}{h_{33}(z)}, \right], \quad \, z\in \mathbb{D}^{-},
\end{align}
\end{subequations}
where $\mathbf{\Psi}^{\pm}=\mathbf{\Psi}^{\pm}(z;x,t)$, $d=d(z;x,t)$ and $\widetilde{d}=\widetilde{d}(z;x,t)$. The corresponding jump condition is
\begin{equation}\label{4.2}
\begin{split}
\mathbf{R}^{+}(z;x,t)&=\mathbf{R}^{-}(z;x,t)[\mathbf{I}-\mathbf{e}^{\mathrm{i}\mathbf{\Delta}(z;x,t)}
\mathbf{L}(z)\mathbf{e}^{-\mathrm{i}\mathbf{\Delta}(z;x,t)}], \quad z\in \mathbb{R},
\end{split}
\end{equation}
where
\begin{equation}\label{4.3}
\everymath{\displaystyle}
\begin{split}
\mathbf{L}(z)=\begin{pmatrix}
\left[\frac{\left|\beta_{2}\right|^{2}}{\rho}-\beta_{1}^{*}\widetilde{\beta}_{1}^{*}
    -\frac{\mathrm{i}q_{0}}{z\rho}\beta_{1}^{*}\beta_{2}^{*}\widetilde{\beta}_{2} \right] & \left[\frac{\beta_{2}^{*}}{\rho}+\frac{q_{0}^{2}}{z^{2}\rho^{2}}\beta_{2}^{*}
    \left|\widetilde{\beta}_{2}\right|^{2}-\frac{\mathrm{i}q_{0}}{z\rho}\widetilde{\beta}_{1}^{*} \widetilde{\beta}_{2}^{*} \right] & \left[\frac{\mathrm{i}q_{0}}{z\rho}\beta_{2}^{*}\widetilde{\beta}_{2}+\widetilde{\beta}_{1}^{*} \right] \\
    \frac{\mathrm{i}q_{0}}{z}\beta_{1}^{*}\widetilde{\beta}_{2}-\beta_{2} & -\frac{q_{0}^{2}}{z^{2}\rho}\left|\widetilde{\beta}_{2}\right|^{2} & -\frac{\mathrm{i}q_{0}}{z}\widetilde{\beta}_{2} \\
    \beta_{1}^{*} &
    \frac{\mathrm{i}q_{0}}{z\rho}\widetilde{\beta}_{2}^{*} &
    0
\end{pmatrix},
\end{split}
\end{equation}
where $\rho=\rho(z)$, where $\beta_{j}=\beta_{j}(z)$ and $\widetilde{\beta}_{j}=\beta_{j}(q_{0}^{2}/z)$, for $j=1,2$.
\end{proposition}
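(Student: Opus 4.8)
The plan is to reduce the proposed jump relation to a purely algebraic identity among scattering coefficients and then translate that identity into reflection coefficients. First I would verify that the two expressions for $\mathbf{R}^{\pm}$ in~\eqref{4.1} agree: substituting the definition of the modified eigenfunctions from~\eqref{2.17} together with $d=\gamma\mathrm{e}^{-\mathrm{i}\delta_{2}}$ and $\widetilde{d}=\widetilde{\gamma}\mathrm{e}^{-\mathrm{i}\delta_{2}}$ from~\eqref{2.44} into $\mathbf{\Psi}^{\pm}\mathbf{e}^{-\mathrm{i}\mathbf{\Delta}}$ reproduces the stated columns one by one. This shows $\mathbf{R}^{\pm}=\mathbf{\Phi}^{\pm}\mathbf{e}^{-\mathrm{i}\mathbf{\Delta}}$, where the ``undressed'' matrices are $\mathbf{\Phi}^{+}=(\psi_{+,1}/h_{11},-\widetilde{\gamma}/s_{33},\psi_{-,3})$ and $\mathbf{\Phi}^{-}=(\psi_{-,1},\gamma/s_{11},\psi_{+,3}/h_{33})$. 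Inserting this factorization into~\eqref{4.2} and cancelling the invertible factor $\mathbf{e}^{-\mathrm{i}\mathbf{\Delta}}$ on the right shows that the jump condition is equivalent to the dressing-free identity $\mathbf{\Phi}^{+}=\mathbf{\Phi}^{-}(\mathbf{I}-\mathbf{L})$ on $z\in\mathbb{R}$, so it suffices to establish this.

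Next I would express both $\mathbf{\Phi}^{+}$ and $\mathbf{\Phi}^{-}$ in the common fundamental basis $\psi_{-}=(\psi_{-,1},\psi_{-,2},\psi_{-,3})$, which is nonsingular on $z\in\mathbb{R}\setminus\{\pm q_{0}\}$ by~\eqref{2.24}. The scattering relation~\eqref{2.25} gives $\psi_{+,1}$ and $\psi_{+,3}$ as the first and third columns of $\psi_{-}\mathbf{H}$, while Corollary~\ref{cor:3} supplies $\gamma=s_{11}\psi_{-,2}-s_{12}\psi_{-,1}$ and $\widetilde{\gamma}=s_{32}\psi_{-,3}-s_{33}\psi_{-,2}$. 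Substituting these yields $\mathbf{\Phi}^{-}=\psi_{-}P$ with $P$ upper triangular (off-diagonal entries built from $s_{12}/s_{11}$, $h_{13}/h_{33}$, $h_{23}/h_{33}$) and $\mathbf{\Phi}^{+}=\psi_{-}Q$ with $Q$ lower triangular (off-diagonal entries built from $h_{21}/h_{11}$, $h_{31}/h_{11}$, $s_{32}/s_{33}$). Cancelling the invertible factor $\psi_{-}$ in $\psi_{-}Q=\psi_{-}P(\mathbf{I}-\mathbf{L})$ then produces the closed form $\mathbf{L}=\mathbf{I}-P^{-1}Q$.

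It remains to compute $P^{-1}Q$ and recognise the result as~\eqref{4.3}. Since $P=\mathbf{I}+N$ with $N$ strictly upper triangular (so $N^{3}=\mathbf{0}$), its inverse is simply $P^{-1}=\mathbf{I}-N+N^{2}$, and forming $P^{-1}Q$ is a short, mechanical multiplication. The decisive step is the dictionary converting every scattering-coefficient ratio into reflection coefficients: using the first-symmetry relations~\eqref{2.49}, the second-symmetry relations~\eqref{2.58} and~\eqref{2.60}, and the definitions~\eqref{2.61}, one finds on $z\in\mathbb{R}$ (where $\rho$, $z$, $q_{0}$ are real) that $h_{21}/h_{11}=\beta_{2}$, $h_{31}/h_{11}=-\beta_{1}^{*}$, $s_{12}/s_{11}=-\beta_{2}^{*}/\rho$, $s_{32}/s_{33}=\mathrm{i}q_{0}\widetilde{\beta}_{2}^{*}/(z\rho)$, $h_{13}/h_{33}=\widetilde{\beta}_{1}^{*}$ and $h_{23}/h_{33}=-\mathrm{i}q_{0}\widetilde{\beta}_{2}/z$. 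Feeding these into $\mathbf{I}-P^{-1}Q$ reproduces~\eqref{4.3} entry by entry. I expect the main obstacle to be conceptual rather than computational: although the middle column of $\mathbf{R}^{\pm}$ originates from the non-analytic $\psi_{\pm,2}$, it is precisely the auxiliary eigenfunctions $\gamma,\widetilde{\gamma}$ and the decompositions of Corollary~\ref{cor:3} that let $\psi_{\pm,2}$ be traded for analytic building blocks, so the real care lies in selecting the common basis and in tracking the conjugations and the involution $z\mapsto q_{0}^{2}/z$ correctly when passing to the reflection coefficients.
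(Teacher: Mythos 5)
Your proposal is correct, and it follows essentially the same route that the paper (implicitly, via the construction borrowed from Biondini--Kraus~\cite{B1}) relies on: express both meromorphic matrices in the single basis $\psi_{-}$ using the scattering relation~\eqref{2.25} and the decompositions of Corollary~\ref{cor:3}, cancel the invertible factors $\mathbf{e}^{-\mathrm{i}\mathbf{\Delta}}$ and $\psi_{-}$, and translate the resulting triangular-factor identity into reflection coefficients via~\eqref{2.49},~\eqref{2.58},~\eqref{2.60} and~\eqref{2.61}. I checked your dictionary ($h_{21}/h_{11}=\beta_{2}$, $h_{31}/h_{11}=-\beta_{1}^{*}$, $s_{12}/s_{11}=-\beta_{2}^{*}/\rho$, $s_{32}/s_{33}=\mathrm{i}q_{0}\widetilde{\beta}_{2}^{*}/(z\rho)$, $h_{13}/h_{33}=\widetilde{\beta}_{1}^{*}$, $h_{23}/h_{33}=-\mathrm{i}q_{0}\widetilde{\beta}_{2}/z$) against the paper's symmetries and carried out the multiplication $\mathbf{L}=\mathbf{I}-P^{-1}Q$ with $P^{-1}=\mathbf{I}-N+N^{2}$: every entry reproduces~\eqref{4.3}, so the argument is complete as written.
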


To guarantee a unique solution to the aforementioned RH problem, it is imperative to establish an appropriate normalization condition. By considering the asymptotic behavior of $z\rightarrow\infty$ and $z\rightarrow 0$
\begin{subequations}\label{4.4}
\begin{align}
\mathbf{R}^{\pm}(z;x,t)&=\mathbf{R}_{\infty}+O(\frac{1}{z}), \quad z\rightarrow\infty, \quad
z\in \mathbb{D}^{\pm}, \\
\mathbf{R}^{\pm}(z;x,t)&=\frac{1}{z}\mathbf{R}_{0}+O(1), \quad z\rightarrow 0, \quad \,\,
z\in \mathbb{D}^{\pm},
\end{align}
\end{subequations}
where
\begin{equation}\label{4.5}
\everymath{\displaystyle}
\begin{split}
\mathbf{R}_{\infty}+\frac{1}{z}\mathbf{R}_{0}=\mathbf{Y}_{-}(z), \quad
\mathbf{R}_{\infty}=\begin{pmatrix}
    \mathrm{i} & 0 & 0  \\
    \mathbf{0_{2\times1}} & \frac{1}{q_{0}}\mathbf{q}_{-}^{\perp} & \frac{1}{q_{0}}\mathbf{q}_{-}
  \end{pmatrix}, \quad
\mathbf{R}_{0}=\begin{pmatrix}
    0 & 0 & q_{0}  \\
    \mathrm{i}\mathbf{q}_{-} & \mathbf{0_{2\times1}} & \mathbf{0_{2\times1}}
  \end{pmatrix}.
\end{split}
\end{equation}
Due to the scattering matrix breaking the symmetry between $\nu_{+}$ and $\nu_{-}$, the asymptotic behavior of $z\rightarrow\infty$ and $z\rightarrow 0$ is obtained using the potential value at $x\rightarrow -\infty$ (rather than $x\rightarrow \infty$). In addition to the asymptotic behavior outlined in equation~\eqref{4.1}, to fully delineate the RH problem presented in equation~\eqref{4.2}, it is also necessary to specify the residue conditions.

By using the meromorphic matrices $\mathbf{R}^{\pm}(z;x,t)$ in Proposition~\ref{pro:8}, the corresponding residue conditions are
\begin{subequations}\label{4.6}
\begin{align}
\left[ \operatorname{Res}_{z=z_{g}}\mathbf{R}^{+}(z;x,t) \right]
&=C_{g}\left[ \mathbf{r}_{3}^{+}(z_{g}),\mathbf{0},\mathbf{0} \right], \quad
\left[ \operatorname{Res}_{z=z_{g}^{*}}\mathbf{R}^{-}(z;x,t) \right]
=\bar{C}_{g}\left[ \mathbf{0},\mathbf{0}, \mathbf{r}_{1}^{-}(z_{g}^{*}) \right], \\
\left[ \operatorname{Res}_{z=\theta_{g}}\mathbf{R}^{+}(z;x,t) \right]
&=-F_{g}\left[ \mathbf{r}_{2}^{+}(\theta_{g}),\mathbf{0},\mathbf{0} \right], \quad
\left[ \operatorname{Res}_{z=q_{0}^{2}/\theta_{g}} \mathbf{R}^{-}(z;x,t) \right]
=\check{F}_{g}\left[ \mathbf{0},\mathbf{0},\mathbf{r}_{2}^{-}(q_{0}^{2}/\theta_{g}) \right], \\
\left[ \operatorname{Res}_{z=\theta_{g}^{*}}\mathbf{R}^{-}(z;x,t) \right]
&=\bar{F}_{g}\left[ \mathbf{0}, \mathbf{r}_{1}^{-}(\theta_{g}^{*}), \mathbf{0} \right], \quad
\left[ \operatorname{Res}_{z=q_{0}^{2}/\theta_{g}^{*}} \mathbf{R}^{+}(z;x,t) \right]
=-\hat{F}_{g}\left[ \mathbf{0}, \mathbf{r}_{3}^{+}(q_{0}^{2}/\theta_{g}^{*}), \mathbf{0} \right],
\end{align}
\end{subequations}
where $\delta_{j}(z)=\delta_{j}(z;x,t)$ for $j=1,2$ and $\mathbf{r}_{j}^{\pm}(z)=\mathbf{r}_{j}^{\pm}(z;x,t)$ for $j=1,2,3$, with norming constants
\begin{subequations}\label{4.7}
\begin{align}
C_{g}&=C_{g}(x,t)=\frac{c_{g}}{h_{11}^{\prime}(z_{g})}\mathrm{e}^{-2\mathrm{i}\delta_{1}(z_{g})}, \quad   F_{g}=F_{g}(x,t)=\frac{f_{g}}{h_{11}^{\prime}(\theta_{g})}
\mathrm{e}^{\mathrm{i}[\delta_{2}(\theta_{g})-\delta_{1}(\theta_{g})]},   \\
\bar{C}_{g}&=\bar{C}_{g}(x,t)=\frac{\bar{c}_{g}}{h_{33}^{\prime}(z_{g}^{*})}
\mathrm{e}^{2\mathrm{i}\delta_{1}(z_{g}^{*})}, \quad
\bar{F}_{g}=\bar{F}_{g}(x,t)=\frac{\bar{f}_{g}}{s_{11}^{\prime}(\theta_{g}^{*})}
\mathrm{e}^{\mathrm{i}[\delta_{1}(\theta_{g}^{*})-\delta_{2}(\theta_{g}^{*})]},  \\
\check{F}_{g}&=\check{F}_{g}(x,t)=\frac{\check{f}_{g}s_{11}(q_{0}^{2}
/\theta_{g})}{h_{33}^{\prime}(q_{0}^{2}/\theta_{g})}
\mathrm{e}^{\mathrm{i}[\delta_{2}(\theta_{g})-\delta_{1}(\theta_{g})]},  \quad
\hat{F}_{g}=\hat{F}_{g}(x,t)=\frac{\hat{f}_{g}}{s_{33}^{\prime}(q_{0}^{2}/\theta_{g}^{*})}
\mathrm{e}^{\mathrm{i}[\delta_{1}(\theta_{g}^{*})-\delta_{2}(\theta_{g}^{*})]},
\end{align}
\end{subequations}
where $g=1,\ldots,G_{1}$ for equations involving $z_{g}$, and $g=1,\ldots,G_{2}$ for equations involving $\theta_{g}$. Through the Corollaries~\ref{cor:5} and~\ref{cor:6}, it can be seen that the norming constants satisfy the following symmetry relationship
\begin{subequations}\label{4.8}
\begin{align}
C_{g}^{*}(x,t)&=\bar{C}_{g}(x,t)=\mathrm{e}^{-2\mathrm{i}\arg(z_{g})}C_{g}(x,t),  \\
\check{F}_{g}(x,t)&=-\frac{\mathrm{i}q_{0}}{\theta_{g}}F_{g}(x,t), \quad
\bar{F}_{g}(x,t)=-\frac{F_{g}^{*}(x,t)}{\rho(\theta_{g}^{*})}, \quad
\hat{F}_{g}(x,t)=-\frac{\mathrm{i}q_{0}^{3}}{(\theta_{g}^{*})^{3}}
\frac{F_{g}^{*}(x,t)}{\rho(\theta_{g}^{*})}.
\end{align}
\end{subequations}

\subsection{Formal solutions of the Riemann-Hilbert problem and reconstruction formula}

Regularize the RH problem by subtracting the leading asymptotics and any pole contributions from the discrete spectrum, then the solutions of the RH problem can be obtained with the help of the Plemelj formula.

\begin{theorem}\label{thm:5}
Pure soliton solutions of the RH problem is given by the following formula
\begin{equation}\label{4.9}
\begin{split}
\mathbf{R}(z;x,t)&=\mathbf{Y}_{-}(z)-\frac{1}{2\pi\mathrm{i}}
\int_{\mathbb{R}}\frac{\mathbf{R}^{-}(\xi)\widetilde{\mathbf{L}}(\xi)}{\xi-z}\,\mathrm{d}\xi
+\sum_{i=1}^{G_{1}}\left[ \frac{\operatorname{Res}_{z=z_{i}}\mathbf{R}^{+}}{z-z_{i}} +\frac{\operatorname{Res}_{z=z_{i}^{*}}\mathbf{R}^{-}}{z-z_{i}^{*}} \right] \\
&+\sum_{j=1}^{G_{2}}\left[ \frac{\operatorname{Res}_{z=\theta_{j}}\mathbf{R}^{+}}{z-\theta_{j}} +\frac{\operatorname{Res}_{z=\theta_{j}^{*}}\mathbf{R}^{-}}{z-\theta_{j}^{*}} \right]
+\sum_{j=1}^{G_{2}}\left[ \frac{\operatorname{Res}_{z=q_{0}^{2}/\theta_{j}^{*}}\mathbf{R}^{+}}{z-(q_{0}^{2}/\theta_{j}^{*})} +\frac{\operatorname{Res}_{z=q_{0}^{2}/\theta_{j}}\mathbf{R}^{-}}{z-(q_{0}^{2}/\theta_{j})} \right],
\end{split}
\end{equation}
where $\widetilde{\mathbf{L}}(z)=\mathbf{e}^{\mathrm{i}\mathbf{\Delta}(z;x,t)}
\mathbf{L}(z)\mathbf{e}^{-\mathrm{i}\mathbf{\Delta}(z;x,t)}$, with $\mathbf{R}(z;x,t)=\mathbf{R}^{\pm}(z;x,t)=(\mathbf{r}_{1}^{\pm},\mathbf{r}_{2}^{\pm},
\mathbf{r}_{3}^{\pm})$ for $z\in \mathbb{D}^{\pm}$. Furthermore, the eigenfunctions are given by
\begin{equation}\label{4.10}
\begin{split}
\mathbf{r}_{1}^{-}(z)&=\begin{pmatrix}
     \mathrm{i}   \\
     \mathrm{i}\mathbf{q}_{-}/z
  \end{pmatrix}
-\frac{1}{2\pi\mathrm{i}}
\int_{\mathbb{R}}\frac{\left[\mathbf{R}^{-}(\xi)\widetilde{\mathbf{L}}(\xi) \right]_{1}}{\xi-z}\,\mathrm{d}\xi
+\sum_{i=1}^{G_{1}}\left[ \frac{C_{i}\mathbf{r}_{3}^{+}(z_{i})}{z-z_{i}} \right]
-\sum_{j=1}^{G_{2}}\left[ \frac{F_{j}\mathbf{r}_{2}^{+}(\theta_{j})}{z-\theta_{j}} \right],
\end{split}
\end{equation}
where $z=z_{g}^{*},\theta_{g}^{*}$.
\begin{equation}\label{4.11}
\begin{split}
\mathbf{r}_{3}^{+}(z)&=\begin{pmatrix}
     q_{0}/z   \\
     \mathbf{q}_{-}/q_{0}
  \end{pmatrix}
-\frac{1}{2\pi\mathrm{i}} \int_{\mathbb{R}}\frac{\left[
\mathbf{R}^{-}(\xi)\widetilde{\mathbf{L}}(\xi) \right]_{3}}{\xi-z}\,\mathrm{d}\xi
+\sum_{i=1}^{G_{1}}\left[ \frac{\bar{C}_{i}\mathbf{r}_{1}^{-}(z_{i}^{*})}{z-z_{i}^{*}} \right]
+\sum_{j=1}^{G_{2}}\left[
\frac{\check{F}_{j}\mathbf{r}_{2}^{-}(q_{0}^{2}/\theta_{j})}{z-(q_{0}^{2}/\theta_{j})} \right],
\end{split}
\end{equation}
where $z=z_{g},q_{0}^{2}/\theta_{g}^{*}$.
\begin{equation}\label{4.12}
\begin{split}
\mathbf{r}_{2}^{+}(\theta_{g})&=\begin{pmatrix}
     0   \\
     \mathbf{q}_{-}^{\bot}/q_{0}
  \end{pmatrix}
-\frac{1}{2\pi\mathrm{i}} \int_{\mathbb{R}}\frac{\left[
\mathbf{R}^{-}(\xi)\widetilde{\mathbf{L}}(\xi) \right]_{2}}{\xi-\theta_{g}}\,\mathrm{d}\xi
+\sum_{j=1}^{G_{2}}\left[ \frac{\bar{F}_{j}\mathbf{r}_{1}^{-}(\theta_{j}^{*})}{\theta_{g}
-\theta_{j}^{*}} \right] -\sum_{j=1}^{G_{2}}\left[ \frac{\hat{F}_{j}\mathbf{r}_{3}^{+}(q_{0}^{2}
/\theta_{j}^{*})}{\theta_{g}-(q_{0}^{2}/\theta_{j}^{*})} \right],
\end{split}
\end{equation}
\begin{equation}\label{4.13}
\begin{split}
\mathbf{r}_{2}^{-}(\frac{q_{0}^{2}}{\theta_{g}})&=\begin{pmatrix}
     0   \\
     \mathbf{q}_{-}^{\bot}/q_{0}
  \end{pmatrix}
-\frac{1}{2\pi\mathrm{i}} \int_{\mathbb{R}}\frac{\left[\mathbf{R}^{-}(\xi)
\widetilde{\mathbf{L}}(\xi) \right]_{2}}{\xi-(q_{0}^{2}/\theta_{g})}\,\mathrm{d}\xi
+\sum_{j=1}^{G_{2}}\left[ \frac{\bar{F}_{j}\mathbf{r}_{1}^{-}(\theta_{j}^{*})}{(q_{0}^{2}
/\theta_{g})-\theta_{j}^{*}} \right] -\sum_{j=1}^{G_{2}}\left[ \frac{\hat{F}_{j}\mathbf{r}_{3}^{+}(q_{0}^{2}/\theta_{j}^{*})}{(q_{0}^{2}/\theta_{g})
-(q_{0}^{2}/\theta_{j}^{*})} \right].
\end{split}
\end{equation}
\end{theorem}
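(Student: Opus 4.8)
The plan is to solve the matrix RH problem of Proposition~\ref{pro:8} by the standard regularization-and-projection scheme and then to read off the four column representations by inserting the residue conditions. First I would recast the multiplicative jump~\eqref{4.2} in additive form, $\mathbf{R}^{+}(z)-\mathbf{R}^{-}(z)=-\mathbf{R}^{-}(z)\widetilde{\mathbf{L}}(z)$ for $z\in\mathbb{R}$, so that the right-hand side isolates the controlled discontinuity across the contour. The normalization~\eqref{4.4}--\eqref{4.5} shows that $\mathbf{R}(z)-\mathbf{Y}_{-}(z)$ vanishes as $z\to\infty$ and is regular at $z=0$, so $\mathbf{Y}_{-}(z)$ accounts for the entire asymptotic content.

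Second, the discrete spectrum contributes simple poles in $\mathbb{D}^{+}$ at $\{z_{g},\,q_{0}^{2}/\theta_{g}^{*},\,\theta_{g}\}$ and in $\mathbb{D}^{-}$ at $\{z_{g}^{*},\,q_{0}^{2}/\theta_{g},\,\theta_{g}^{*}\}$, whose residues are supplied by~\eqref{4.6}. Subtracting these rational pole terms together with $\mathbf{Y}_{-}(z)$ yields a function that is analytic in each half-plane, decays at infinity, and carries the same additive jump~\eqref{4.2}, since the subtracted rational contributions are continuous across $\mathbb{R}$. I would then apply the Cauchy projectors $P^{\pm}$ to this regularized function: by the Plemelj/Sokhotski formula the unique decaying solution of the resulting jump is the Cauchy integral $-\frac{1}{2\pi\mathrm{i}}\int_{\mathbb{R}}\frac{\mathbf{R}^{-}(\xi)\widetilde{\mathbf{L}}(\xi)}{\xi-z}\,\mathrm{d}\xi$, and restoring $\mathbf{Y}_{-}(z)$ together with the pole sums reproduces~\eqref{4.9} verbatim, including the three grouped residue summations.

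The column formulas~\eqref{4.10}--\eqref{4.13} then follow by projecting~\eqref{4.9} onto individual columns. The leading vectors are precisely the corresponding columns of $\mathbf{Y}_{-}(z)$ from~\eqref{2.11}: columns one and three supply the $z$-dependent vectors $(\mathrm{i},\mathrm{i}\mathbf{q}_{-}/z)^{T}$ and $(q_{0}/z,\mathbf{q}_{-}/q_{0})^{T}$, while the second column $(0,\mathbf{q}_{-}^{\bot}/q_{0})^{T}$ is $z$-independent and hence valid when evaluated at any finite $\theta_{g}$ or $q_{0}^{2}/\theta_{g}$. The residue conditions~\eqref{4.6} with norming constants~\eqref{4.7} convert the pole sums into the stated combinations of $\mathbf{r}_{3}^{+}(z_{i})$, $\mathbf{r}_{2}^{+}(\theta_{j})$, $\mathbf{r}_{1}^{-}(z_{i}^{*})$, $\mathbf{r}_{2}^{-}(q_{0}^{2}/\theta_{j})$, and their counterparts, yielding~\eqref{4.10}--\eqref{4.13} after selecting the appropriate column index $[\,\cdot\,]_{j}$ in the integrand.

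The main obstacle will be bookkeeping rather than analysis: one must verify that each subtracted rational term is genuinely continuous across $\mathbb{R}$ so that the Plemelj representation is both valid and unique, and one must track the symmetry-broken normalization (the use of $\mathbf{Y}_{-}$, tied to $x\to-\infty$) consistently through the residue accounting, matching the six pole locations against the zeros of $h_{11}$, $s_{11}$, and $h_{33}$ identified in Proposition~\ref{pro:5}. Keeping the orbit structure $\{z_{g},z_{g}^{*}\}$ on $C_{o}$ and $\{\theta_{g},\theta_{g}^{*},q_{0}^{2}/\theta_{g},q_{0}^{2}/\theta_{g}^{*}\}$ off $C_{o}$ aligned with the correct columns in~\eqref{4.10}--\eqref{4.13} is where the genuine delicacy lies.
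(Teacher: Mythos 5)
Your proposal is correct and follows essentially the same route as the paper, which proves Theorem~\ref{thm:5} exactly by this regularization: subtracting $\mathbf{Y}_{-}(z)$ (removing the constant behavior at infinity and the $\mathbf{R}_{0}/z$ singularity at the origin, which is common to both $\mathbf{R}^{\pm}$) together with the simple-pole contributions from~\eqref{4.6}, applying the Plemelj formula to the additive jump $\mathbf{R}^{+}-\mathbf{R}^{-}=-\mathbf{R}^{-}\widetilde{\mathbf{L}}$, and then reading off the columns via the residue structure and norming constants~\eqref{4.7}. Your column-by-column bookkeeping — first column picking up only the $z_{i}$ and $\theta_{j}$ residues, third column only $z_{i}^{*}$ and $q_{0}^{2}/\theta_{j}$, second column only $\theta_{j}^{*}$ and $q_{0}^{2}/\theta_{j}^{*}$ — matches the paper's formulas~\eqref{4.10}--\eqref{4.13} exactly.
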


Through the examination of solutions to the regularized RH problem, one can derive certain conditions by juxtaposing the first column of the matrix $\mathbf{R}^{-}(z;x,t)$ with the asymptotic properties of the modified Jost eigenfunctions as depicted in equation~\eqref{3.8}.
\begin{equation}\label{4.14}
\begin{split}
q_{k}(x,t)&=-\mathrm{i}\lim_{z\rightarrow\infty}[z \nu_{-,(k+1)1}(z;x,t)], \quad k=1,2.
\end{split}
\end{equation}

\begin{theorem}[Reconstruction formula]\label{thm:6}
Pure soliton solutions $\mathbf{q}(x,t)$ of the defocusing-defocusing coupled Hirota equations with NZBC~\eqref{1.3} are reconstructed as
\begin{equation}\label{4.15}
\begin{split}
q_{k}(x,t)&=q_{k-}-\frac{1}{2\pi} \int_{\mathbb{R}}\left[
\mathbf{R}^{-}(\xi)\widetilde{\mathbf{L}}(\xi) \right]_{(k+1)1}\,\mathrm{d}\xi
-\sum_{i=1}^{N_{1}}\mathrm{i}C_{i}\mathbf{r}_{(k+1)3}^{+}(z_{i})
+\sum_{j=1}^{N_{2}}\mathrm{i}F_{j}\mathbf{r}_{(k+1)2}^{+}(\theta_{j}), \quad k=1,2.
\end{split}
\end{equation}
\end{theorem}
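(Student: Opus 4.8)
The plan is to recover the potential directly from the large-$z$ asymptotics of the reconstructed eigenfunction, using the reconstruction relation~\eqref{4.14} together with the explicit representation of $\mathbf{r}_{1}^{-}(z)$ from Theorem~\ref{thm:5}. First I would note that, by the definition of $\mathbf{R}^{-}$ in~\eqref{4.1}, its first column is exactly $\nu_{-,1}$, so that $\nu_{-,(k+1)1}(z;x,t)=[\mathbf{r}_{1}^{-}(z)]_{k+1}$ for $k=1,2$. Hence the quantity $-\mathrm{i}\lim_{z\to\infty}[z\,\nu_{-,(k+1)1}]$ appearing in~\eqref{4.14} is simply $-\mathrm{i}$ times the coefficient of $1/z$ in the $(k+1)$-th component of~\eqref{4.10}, and the whole proof reduces to extracting that coefficient.

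The next step is to read off the $1/z$ coefficient term by term in~\eqref{4.10}. The explicit leading vector $(\mathrm{i},\mathrm{i}\mathbf{q}_{-}/z)^{T}$ contributes $\mathrm{i}q_{k-}$ in the $(k+1)$-th slot. For the Cauchy integral I would expand $1/(\xi-z)=-1/z-\xi/z^{2}+O(1/z^{3})$, which shows that the $1/z$ coefficient of $-\tfrac{1}{2\pi\mathrm{i}}\int_{\mathbb{R}}[\mathbf{R}^{-}\widetilde{\mathbf{L}}]_{(k+1)1}/(\xi-z)\,\mathrm{d}\xi$ equals $\tfrac{1}{2\pi\mathrm{i}}\int_{\mathbb{R}}[\mathbf{R}^{-}\widetilde{\mathbf{L}}]_{(k+1)1}\,\mathrm{d}\xi$. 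Likewise, expanding $1/(z-z_{i})=1/z+z_{i}/z^{2}+\cdots$ and $1/(z-\theta_{j})=1/z+\cdots$ shows that each pole term $C_{i}\mathbf{r}_{(k+1)3}^{+}(z_{i})/(z-z_{i})$ and $-F_{j}\mathbf{r}_{(k+1)2}^{+}(\theta_{j})/(z-\theta_{j})$ simply contributes its numerator to the $1/z$ coefficient. Collecting these and multiplying by $-\mathrm{i}$ — using $-\mathrm{i}\cdot\mathrm{i}=1$ and $-\mathrm{i}/(2\pi\mathrm{i})=-1/(2\pi)$ — produces precisely~\eqref{4.15}, after identifying the summation ranges $N_{1}=G_{1}$ and $N_{2}=G_{2}$.

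The step I expect to require the most care is the justification that the $z\to\infty$ limit may be passed inside the boundary integral and that the expansion $1/(\xi-z)=-1/z+O(1/z^{2})$ may be integrated termwise; this rests on sufficient decay of the jump data $\widetilde{\mathbf{L}}(\xi)$ — equivalently, on decay of the reflection coefficients $\beta_{1},\beta_{2}$ in~\eqref{2.61} — together with the boundedness of $\mathbf{R}^{-}$ on $\mathbb{R}$, so that dominated convergence applies uniformly for $z$ bounded away from the real axis. A secondary point worth recording is the consistency of the two appearances of the potential: the leading term supplies $q_{k-}$ rather than $q_{k}(x,t)$, and it is exactly the integral and residue corrections that account for the difference, so that the reconstructed coefficient of $1/z$ reproduces the $\mathrm{i}\mathbf{q}(x,t)/z$ term predicted by~\eqref{3.8}. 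In the reflectionless (pure soliton) case one has $\widetilde{\mathbf{L}}\equiv\mathbf{0}$, the integral drops out, and~\eqref{4.15} collapses to a finite algebraic expression in the norming constants and discrete eigendata.
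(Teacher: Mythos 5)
Your proposal is correct and follows essentially the same route as the paper: the paper obtains the reconstruction by juxtaposing the first column of $\mathbf{R}^{-}$ (which is $\nu_{-,1}$) with the asymptotics~\eqref{3.8} to get~\eqref{4.14}, and then reads off the $1/z$ coefficient of the representation~\eqref{4.10}, exactly as you do, with all signs and constants ($-\mathrm{i}\cdot\mathrm{i}=1$, $-\mathrm{i}/(2\pi\mathrm{i})=-1/(2\pi)$) matching~\eqref{4.15}. Your identification $N_{1}=G_{1}$, $N_{2}=G_{2}$ correctly resolves what is merely a notational inconsistency in the theorem statement, and your remarks on dominated convergence for the Cauchy integral make explicit a step the paper leaves tacit.
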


\subsection{Trace formulae}

We need to reconstruct the analytical scattering coefficients $h_{11}(z)$ and $s_{11}(z)$ based on the scattering data~\cite{B1}. We can define
\begin{subequations}\label{4.16}
\begin{align}
\chi^{+}(z)&=h_{11}(z)\prod_{g=1}^{G_{1}}\frac{z-z_{g}^{*}}{z-z_{g}}
\prod_{g=1}^{G_{2}}\frac{z-\theta_{g}^{*}}{z-\theta_{g}}, \quad z\in \mathbb{D}^{+}, \\
\chi^{-}(z)&=s_{11}(z)\prod_{g=1}^{G_{1}}\frac{z-z_{g}}{z-z_{g}^{*}}
\prod_{g=1}^{G_{2}}\frac{z-\theta_{g}}{z-\theta_{g}^{*}}, \quad \, z\in \mathbb{D}^{-}.
\end{align}
\end{subequations}
Using the definition of the reflection coefficients~\eqref{2.61} and the corresponding calculation of the scattering coefficients, then we have
\begin{equation}\label{4.17}
\begin{split}
\ln \chi^{+}(z)+\ln \chi^{-}(z)=-\ln \left[1-\left| \beta_{1}(z) \right|^{2}
-\frac{\left| \beta_{2}(z) \right|^{2}}{\rho(z)} \right], \quad z\in \mathbb{R}.
\end{split}
\end{equation}

By combining~\eqref{4.16} with~\eqref{4.17} and using the Plemelj formula, it can be concluded that
\begin{subequations}\label{4.18}
\begin{align}
\chi^{+}(z)&=\exp\left[ -\frac{1}{2\pi\mathrm{i}} \int_{\mathbb{R}} \ln \left[1-\left| \beta_{1}(\xi) \right|^{2}-\frac{\left| \beta_{2}(\xi) \right|^{2}}{\rho(\xi)} \right] \,\frac{\mathrm{d}\xi}{\xi-z} \right],  \quad z\in \mathbb{D}^{+}, \\
\chi^{-}(z)&=\exp\left[ \frac{1}{2\pi\mathrm{i}} \int_{\mathbb{R}} \ln \left[1-\left| \beta_{1}(\xi) \right|^{2}-\frac{\left| \beta_{2}(\xi) \right|^{2}}{\rho(\xi)} \right] \,\frac{\mathrm{d}\xi}{\xi-z} \right],  \quad \,\,\,\, z\in \mathbb{D}^{-}.
\end{align}
\end{subequations}
By substituting expression~\eqref{4.18} into definition~\eqref{4.16}, the scattering coefficient display expression can be solved
\begin{subequations}\label{4.19}
\begin{align}
h_{11}(z)&=\prod_{g=1}^{G_{1}}\frac{z-z_{g}}{z-z_{g}^{*}}
\prod_{g=1}^{G_{2}}\frac{z-\theta_{g}}{z-\theta_{g}^{*}}
\exp\left[ -\frac{1}{2\pi\mathrm{i}} \int_{\mathbb{R}} \ln \left[1-\left| \beta_{1}(\xi) \right|^{2}-\frac{\left| \beta_{2}(\xi) \right|^{2}}{\rho(\xi)} \right] \,\frac{\mathrm{d}\xi}{\xi-z} \right],  \quad z\in \mathbb{D}^{+}, \\
s_{11}(z)&=\prod_{g=1}^{G_{1}}\frac{z-z_{g}^{*}}{z-z_{g}}
\prod_{g=1}^{G_{2}}\frac{z-\theta_{g}^{*}}{z-\theta_{g}}\exp\left[ \frac{1}{2\pi\mathrm{i}} \int_{\mathbb{R}} \ln \left[1-\left| \beta_{1}(\xi) \right|^{2}-\frac{\left| \beta_{2}(\xi) \right|^{2}}{\rho(\xi)} \right] \,\frac{\mathrm{d}\xi}{\xi-z} \right],  \quad \,\,\,\, z\in \mathbb{D}^{-}.
\end{align}
\end{subequations}

By comparing the behavior of trace formula~\eqref{4.19} as $z\rightarrow 0$ and the asymptotic behavior of $h_{11}(z)$ and $s_{11}(z)$ in~\eqref{3.13}, we can calculate the asymptotic phase difference $\Delta\delta=\delta_{+}-\delta_{-}$.
\begin{equation}\label{4.20}
\begin{split}
\Delta\delta=2\sum_{g=1}^{G_{1}}\arg(z_{g})+2\sum_{g=1}^{G_{2}}\arg(\theta_{g})+\frac{1}{2\pi} \int_{\mathbb{R}} \ln \left[1-\left| \beta_{1}(\xi) \right|^{2}-\frac{\left| \beta_{2}(\xi) \right|^{2}}{\rho(\xi)} \right] \,\frac{\mathrm{d}\xi}{\xi}.
\end{split}
\end{equation}

\subsection{Pure soliton solutions}

Since the discrete eigenvalues on $C_{o}$ satisfy the restrictions $\arg(C_{g})=\arg(z_{g})$ for $g=1,\ldots,G_{1}$, the functions $C_{g}(x,t)$ in Theorem~\ref{thm:5} can be parameterized
\begin{equation}\label{4.21}
\begin{split}
C_{g}(x,t)\mathrm{e}^{2\mathrm{i}\delta_{1}(z_{g})}
=2\left|\lambda(z_{g})\right| \mathrm{e}^{2\left|\lambda(z_{g})\right| \kappa_{g}+\mathrm{i}\chi_{g}}, \quad g=1,\ldots,G_{1},
\end{split}
\end{equation}
where $\kappa_{g}$ and $\chi_{g}$ are real parameters and $\chi_{g}=\arg (z_{g})+k \pi$ for $k=0,1$.

\begin{theorem}\label{thm:7}
In the reflectionless case, the pure soliton solutions~\eqref{4.15} of the defocusing-defocusing coupled Hirota equations with NZBC~\eqref{1.3} may be written
\begin{equation}\label{4.22}
\begin{split}
\mathbf{q}(x,t)=\frac{1}{\det\mathbf{K}} \begin{pmatrix}
\det \mathbf{K}_{1}^{\mathrm{aug}} \\
\det \mathbf{K}_{2}^{\mathrm{aug}}
\end{pmatrix},  \quad
\mathbf{K}_{n}^{\mathrm{aug}}=\left(\begin{array}{ll}
q_{n-} & \mathbf{E} \\
\mathbf{A}_{n} & \mathbf{K}
\end{array}\right), \quad n=1,2,
\end{split}
\end{equation}
the vectors $\mathbf{A}_{n}$, $\mathbf{E}$ and matrix $\mathbf{K}$ are
\begin{equation}\label{4.23}
\begin{split}
\mathbf{A}_{n}&=\left(A_{n1}, \ldots, A_{n(G_{1}+G_{2})}\right)^{T}, \quad
\mathbf{E}=\left(E_{1}, \ldots, E_{G_{1}+G_{2}}\right), \quad \mathbf{K}=\mathbf{I}+\mathbf{P},
\end{split}
\end{equation}
where
\begin{align}\label{4.24}
\begin{split}
E_{g}=\left\{\begin{array}{ll}
\mathrm{i}C_{g}(x,t),  &g=1, \ldots , G_{1}, \\
-\mathrm{i}F_{g-G_{1}}(x,t), & g=G_{1}+1, \ldots , G_{1}+G_{2},
\end{array}\right.
\end{split}
\end{align}
the entries of matrix $\mathbf{P}=(P_{jk}(x,t))$ are defined as
\begin{align}\label{4.25}
\everymath{\displaystyle}
\begin{array}{l}
P_{jk}(x,t)=\left\{\begin{array}{ll}
-\frac{\mathrm{i}z_{k}}{q_{0}}f_{k}^{(2)}(z_{j};x,t),  &j,k=1, \ldots, G_{1}, \\
-f_{k-G_{1}}^{(5)}(z_{j};x,t),  & j=1, \ldots, G_{1}, \quad  k=G_{1}+1, \ldots, G_{1}+G_{2}, \\
-\sum_{a=1}^{G_{2}}f_{aj}(x,t)f_{k}^{(1)}(\theta_{a}^{*};x,t), & j=G_{1}+1, \ldots, G_{1}+G_{2}, \quad k=1, \ldots, G_{1}, \\
\sum_{a=1}^{G_{2}}f_{aj}(x,t)f_{k-G_{1}}^{(3)}(\theta_{a}^{*};x,t), & j, k=G_{1}+1, \ldots, G_{1}+G_{2},
\end{array}\right.
\end{array}
\end{align}
where
\begin{equation}\label{4.26}
\begin{split}
f_{jk}(x,t)=f_{j}^{(4)}(\theta_{k-G_{1}};x,t)+\frac{\mathrm{i}\theta_{j}^{*}}{q_{0}} f_{j}^{(6)}(\theta_{k-G_{1}};x,t),
\end{split}
\end{equation}
where
\begin{equation}\label{4.27}
\everymath{\displaystyle}
\begin{split}
A_{ni^{\prime}}=\left\{\begin{array}{ll}
\frac{q_{n-}}{q_{0}},  & i^{\prime}=1, \ldots, G_{1}, \\
(-1)^{n+1} \frac{q_{\bar{n}-}^{*}}{q_{0}}+\sum_{j=1}^{G_{2}}\frac{\mathrm{i}q_{n-}}{\theta_{j}^{*}} f_{ji^{\prime}}, & i^{\prime}=G_{1}+1, \ldots, G_{1}+G_{2},
\end{array}\right.
\end{split}
\end{equation}
and $\bar{n}=n+(-1)^{n+1}$.
\end{theorem}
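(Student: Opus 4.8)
The plan is to specialize the general reconstruction of Theorem~\ref{thm:6} to a reflectionless potential, where $\beta_1(z)\equiv\beta_2(z)\equiv 0$, so that $\mathbf{L}(z)=\mathbf{0}$ by~\eqref{4.3} and every contour integral in~\eqref{4.9}--\eqref{4.15} drops out. What survives in the reconstruction formula~\eqref{4.15} is the finite sum over the discrete spectrum, built from the eigenfunction components $\mathbf{r}_{(k+1)3}^{+}(z_i)$ (on $C_o$) and $\mathbf{r}_{(k+1)2}^{+}(\theta_j)$ (off $C_o$). Thus the entire task reduces to determining these finitely many scalars, which I would extract from the integral-free versions of~\eqref{4.10}--\eqref{4.13}.

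First I would close the resulting linear system. Evaluating~\eqref{4.11} at $z=z_g$ and $z=q_0^2/\theta_g^*$, and~\eqref{4.12} at $\theta_g$, gives relations whose right-hand sides still involve $\mathbf{r}_1^-(z_i^*)$, $\mathbf{r}_2^-(q_0^2/\theta_j)$, $\mathbf{r}_1^-(\theta_j^*)$ and $\mathbf{r}_3^+(q_0^2/\theta_j^*)$. The idea is to eliminate all of these in favour of $\mathbf{r}_3^+(z_i)$ and $\mathbf{r}_2^+(\theta_j)$ by invoking the second-symmetry reduction~\eqref{2.56} together with the first-symmetry relations~\eqref{2.47}, \eqref{2.51}, which tie together the eigenfunctions on each orbit $\{z_g,z_g^*\}$ and $\{\theta_g,\theta_g^*,q_0^2/\theta_g,q_0^2/\theta_g^*\}$, and by using the norming-constant symmetries~\eqref{4.8} to rewrite $\bar C_g,\bar F_g,\check F_g,\hat F_g$ through $C_g,F_g$. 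Inserting the known inhomogeneous vectors $\begin{pmatrix}q_0/z \\ \mathbf{q}_-/q_0\end{pmatrix}$ and $\begin{pmatrix}0 \\ \mathbf{q}_-^{\perp}/q_0\end{pmatrix}$ from~\eqref{4.11}--\eqref{4.12} then collapses the self-consistency conditions, for each component index $n$, to a scalar system $\mathbf{K}\,\mathbf{X}=\mathbf{A}_n$ with the common coefficient matrix $\mathbf{K}=\mathbf{I}+\mathbf{P}$; here $\mathbf{X}$ stacks the $(n+1)$-th components of $\mathbf{r}_3^+(z_i)$ and $\mathbf{r}_2^+(\theta_j)$ over the $G_1+G_2$ eigenvalues, and the four-case structure of $\mathbf{P}$ in~\eqref{4.25} records which symmetry path connects block $j$ to block $k$.

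Then I would solve by Cramer's rule. Since~\eqref{4.15} is affine in the unknown, taking the schematic form $q_k = q_{k-} - \mathbf{E}\,\mathbf{K}^{-1}\mathbf{A}_n$ after grouping the residue prefactors into $\mathbf{E}$~\eqref{4.24} and the inhomogeneous data into $\mathbf{A}_n$~\eqref{4.27}, I would apply the bordered-determinant (Schur complement) identity $\det\!\bigl(\begin{smallmatrix} a & \mathbf{b} \\ \mathbf{c} & \mathbf{M}\end{smallmatrix}\bigr) = \det\mathbf{M}\,(a-\mathbf{b}\,\mathbf{M}^{-1}\mathbf{c})$ with $\mathbf{M}=\mathbf{K}$. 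This immediately yields $q_k=\det\mathbf{K}_k^{\mathrm{aug}}/\det\mathbf{K}$ with the augmented matrix of~\eqref{4.22}, completing the derivation; invertibility of $\mathbf{K}$ is guaranteed by the unique solvability of the reflectionless RH problem, and consistency with the normalization~\eqref{4.5} provides an independent check.

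The hard part will be the symmetry bookkeeping in the elimination step: correctly propagating the first- and second-symmetry relations around each four-point orbit, tracking the rational factors $\mathrm{i}q_0/z$, $\mathrm{i}z/q_0$ and $\rho(z)$ they introduce, and verifying that the accumulated prefactors assemble exactly into the auxiliary quantities $f_j^{(m)}$, $f_{jk}$ of~\eqref{4.25}--\eqref{4.27}. In particular, matching the mixed blocks of $\mathbf{P}$ (those indexed across the $G_1/G_2$ divide) and confirming the sign $(-1)^{n+1}$ together with the index shift $\bar n=n+(-1)^{n+1}$ in $\mathbf{A}_n$ requires care, since these encode the $\mathbf{q}_-\leftrightarrow\mathbf{q}_-^{\perp}$ swap induced by the cross-product structure of the auxiliary eigenfunctions in~\eqref{2.39} and the definition $\mathbf{v}^{\perp}=(v_2,-v_1)^{\dagger}$.
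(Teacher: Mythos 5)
Your proposal is correct and follows essentially the same route as the paper: in the reflectionless case the integrals in~\eqref{4.10}--\eqref{4.13} vanish, the formal solutions are evaluated at the discrete eigenvalues, the resulting relations are closed into a system $\mathbf{K}\mathbf{X}_{n}=\mathbf{A}_{n}$ in the unknowns $r_{(n+1)3}^{+}(z_{i})$ and $r_{(n+1)2}^{+}(\theta_{j})$ by means of the second-symmetry identities~\eqref{2.56} and~\eqref{2.60} (the paper's relations~\eqref{4.30}), and the answer is extracted by Cramer's rule and assembled into the bordered-determinant ratio~\eqref{4.22}. The only cosmetic difference is that you phrase the last step via the Schur-complement identity and propose rewriting $\bar{C}_{g},\bar{F}_{g},\check{F}_{g},\hat{F}_{g}$ through~\eqref{4.8}, whereas the paper applies Cramer's rule componentwise and keeps those constants inside the auxiliary functions $f_{j}^{(m)}$; both yield the same matrices.
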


\begin{proof}
For $i=1,\ldots,G_{1}$ and $j=1,\ldots,G_{2}$, define
\begin{subequations}\label{4.28}
\begin{align}
f_{i}^{(1)}(z;x,t)&=\frac{C_{i}(x,t)}{z-z_{i}}, \quad
f_{j}^{(3)}(z;x,t)=\frac{F_{j}(x,t)}{z-\theta_{j}}, \quad
f_{j}^{(5)}(z;x,t)=\frac{\check{F}_{j}(x,t)}{z-(q_{0}^{2}/\theta_{j})}, \\
f_{i}^{(2)}(z;x,t)&=\frac{\bar{C}_{i}(x,t)}{z-z_{i}^{*}}, \quad f_{j}^{(4)}(z;x,t)=\frac{\bar{F}_{j}(x,t)}{z-\theta_{j}^{*}}, \quad f_{j}^{(6)}(z;x,t)=\frac{\hat{F}_{j}(x,t)}{z-(q_{0}^{2}/\theta_{j}^{*})}.
\end{align}
\end{subequations}
Under the condition of the reflectionless, there are equations that hold true
\begin{subequations}\label{4.29}
\begin{align}
r_{21}^{-}(z_{i^{\prime}}^{*})&=\frac{\mathrm{i}q_{1-}}{z_{i^{\prime}}^{*}}
+\sum_{i=1}^{G_{1}}f_{i}^{(1)}(z_{i^{\prime}}^{*})r_{23}^{+}(z_{i})
-\sum_{j=1}^{G_{2}}f_{j}^{(3)}(z_{i^{\prime}}^{*})r_{22}^{+}(\theta_{j}), \\
r_{21}^{-}(\theta_{j^{\prime}}^{*})&=\frac{\mathrm{i}q_{1-}}{\theta_{j^{\prime}}^{*}}
+\sum_{i=1}^{G_{1}}f_{i}^{(1)}(\theta_{j^{\prime}}^{*})r_{23}^{+}(z_{i})
-\sum_{j=1}^{G_{2}}f_{j}^{(3)}(\theta_{j^{\prime}}^{*})r_{22}^{+}(\theta_{j}), \\
r_{23}^{+}(z_{i^{\prime}})&=\frac{q_{1-}}{q_{0}}
+\sum_{i=1}^{G_{1}}f_{i}^{(2)}(z_{i^{\prime}})r_{21}^{-}(z_{i}^{*})
+\sum_{j=1}^{G_{2}}f_{j}^{(5)}(z_{i^{\prime}})r_{22}^{-}(\frac{q_{0}^{2}}{\theta_{j}}), \\
r_{23}^{+}(\frac{q_{0}^{2}}{\theta_{j^{\prime}}^{*}})&=\frac{q_{1-}}{q_{0}}
+\sum_{i=1}^{G_{1}}f_{i}^{(2)}(\frac{q_{0}^{2}}{\theta_{j^{\prime}}^{*}})r_{21}^{-}(z_{i}^{*})
+\sum_{j=1}^{G_{2}}f_{j}^{(5)}(\frac{q_{0}^{2}}{\theta_{j^{\prime}}^{*}})
r_{22}^{-}(\frac{q_{0}^{2}}{\theta_{j}}), \\
r_{22}^{-}(\frac{q_{0}^{2}}{\theta_{j^{\prime}}})&=\frac{q_{2-}^{*}}{q_{0}}
+\sum_{j=1}^{G_{2}}f_{j}^{(4)}(\frac{q_{0}^{2}}{\theta_{j^{\prime}}})r_{21}^{-}(\theta_{j}^{*})
-\sum_{j=1}^{G_{2}}f_{j}^{(6)}(\frac{q_{0}^{2}}{\theta_{j^{\prime}}})
r_{23}^{+}(\frac{q_{0}^{2}}{\theta_{j}^{*}}),  \\
r_{22}^{+}(\theta_{j^{\prime}})&=\frac{q_{2-}^{*}}{q_{0}}
+\sum_{j=1}^{G_{2}}f_{j}^{(4)}(\theta_{j^{\prime}})r_{21}^{-}(\theta_{j}^{*})
-\sum_{j=1}^{G_{2}}f_{j}^{(6)}(\theta_{j^{\prime}})r_{23}^{+}(\frac{q_{0}^{2}}{\theta_{j}^{*}}),
\end{align}
\end{subequations}
where $i^{\prime}=1,\ldots,G_{1}$ and $j^{\prime}=1,\ldots,G_{2}$. With the help of the analytical properties~\eqref{2.56} and~\eqref{2.60}, the elements in piecewise meromorphic functions have the following properties:
\begin{equation}\label{4.30}
\begin{split}
r_{21}^{-}(z_{i}^{*})=\frac{\mathrm{i}z_{i}}{q_{0}}r_{23}^{+}(z_{i}), \quad
r_{22}^{-}(\frac{q_{0}^{2}}{\theta_{j}};x,t)=r_{22}^{+}(\theta_{j}), \quad
r_{23}^{+}(\frac{q_{0}^{2}}{\theta_{j}^{*}})
=\frac{\theta_{j}^{*}}{\mathrm{i}q_{0}}r_{21}^{-}(\theta_{j}^{*}).
\end{split}
\end{equation}
Substituting~\eqref{4.30} into equations~\eqref{4.29} yields
\begin{subequations}\label{4.31}
\begin{align}
r_{23}^{+}(z_{i^{\prime}})&=\frac{q_{1-}}{q_{0}}
+\sum_{i=1}^{G_{1}}\frac{\mathrm{i}z_{i}}{q_{0}} f_{i}^{(2)}(z_{i^{\prime}})r_{23}^{+}(z_{i})
+\sum_{j=1}^{G_{2}}f_{j}^{(5)}(z_{i^{\prime}})r_{22}^{+}(\theta_{j}), \\
r_{22}^{+}(\theta_{j^{\prime}})&=\frac{q_{2-}^{*}}{q_{0}}
+\sum_{j=1}^{G_{2}} \left[ f_{j}^{(4)}(\theta_{j^{\prime}})
+\frac{\mathrm{i}\theta_{j}^{*}}{q_{0}}f_{j}^{(6)}(\theta_{j^{\prime}}) \right] r_{21}^{-}(\theta_{j}^{*}),
\end{align}
\end{subequations}
and
\begin{equation}\label{4.32}
\begin{split}
r_{22}^{+}(\theta_{j^{\prime}})&=\sum_{j=1}^{G_{2}} \frac{\mathrm{i}q_{1-}}{\theta_{j}^{*}} \left[ f_{j}^{(4)}(\theta_{j^{\prime}})
+\frac{\mathrm{i}\theta_{j}^{*}}{q_{0}}f_{j}^{(6)}(\theta_{j^{\prime}}) \right]
+\sum_{j=1}^{G_{2}}\sum_{i=1}^{G_{1}} \left[ f_{j}^{(4)}(\theta_{j^{\prime}})
+\frac{\mathrm{i}\theta_{j}^{*}}{q_{0}}f_{j}^{(6)}(\theta_{j^{\prime}}) \right]
f_{i}^{(1)}(\theta_{j}^{*})r_{23}^{+}(z_{i}) \\
&+\frac{q_{2-}^{*}}{q_{0}}-\sum_{j=1}^{G_{2}}\sum_{j^{\prime\prime}=1}^{G_{2}} \left[ f_{j}^{(4)}(\theta_{j^{\prime}})
+\frac{\mathrm{i}\theta_{j}^{*}}{q_{0}}f_{j}^{(6)}(\theta_{j^{\prime}}) \right] f_{j^{\prime\prime}}^{(3)}(\theta_{j}^{*})r_{22}^{+}(\theta_{j^{\prime\prime}}).
\end{split}
\end{equation}
The equations pertaining to $r_{23}^{+}(z_{i}^{\prime})$ and $r_{22}^{+}(\theta_{j^{\prime}})$ constitute a self-contained set comprising $G_{1}+G_{2}$ equations, each with $G_{1}+G_{2}$ unknowns. In the same way, a closed system containing $r_{33}^{+}(z_{i}^{\prime})$ and $r_{32}^{+}(\theta_{j^{\prime}})$ can be found. These two systems can be written as $\mathbf{K}\mathbf{X}_{n}=\mathbf{A}_{n}$ for $n=1,2$, while $\mathbf{X}_{n}=\left(X_{n 1}, \ldots, X_{n\left(G_{1}+G_{2}\right)}\right)^{T}$ and
\begin{equation}\label{4.33}
\begin{split}
X_{n i^{\prime}}=\left\{\begin{array}{ll}
r_{(n+1)3}^{+}(z_{i^{\prime}};x,t), & i^{\prime}=1, \ldots, G_{1}, \\
r_{(n+1)2}^{+}(\theta_{i^{\prime}-G_{1}};x,t), & i^{\prime}=G_{1}+1, \ldots, G_{1}+G_{2}.
\end{array}\right.
\end{split}
\end{equation}
Using Cramer's rule, we have
\begin{equation}\label{4.34}
\begin{split}
X_{n i}=\frac{\det\mathbf{K}_{ni}^{\mathrm{aug}}}{\det\mathbf{K}}, \quad i=1, \ldots, G_{1}+G_{2}, \quad n=1,2,
\end{split}
\end{equation}
where $\mathbf{K}_{ni}^{\text{aug}}=\left(\mathbf{K}_{1}, \ldots, \mathbf{K}_{i-1}, \mathbf{A}_{n}, \mathbf{K}_{i+1}, \ldots, \mathbf{K}_{G_{1}+G_{2}}\right)$. By inserting the determinant representation of the solutions from equation~\eqref{4.34} into equation~\eqref{4.15}, one obtains~\eqref{4.22}.
\end{proof}

\subsection{Varieties of soliton solutions}

Here, the different possibilities of soliton solutions~\eqref{4.22} for the defocusing-defocusing coupled Hirota equations with NZBC~\eqref{1.3} are analyzed, and different schemes for soliton solutions~\eqref{4.22} are studied when there is only one or two discrete eigenvalues on or outside the circle $C_{o}$.

\subsubsection{Soliton solutions for the scenario where $G_{1}+G_{2}=1$}

Discuss the case where there is only one discrete eigenvalue on or outside the circle $C_{o}$, i.e. $G_{1}+G_{2}=1$. Firstly, we focus on a set of eigenvalues situated on the circle ($G_{1}=1$ and $G_{2}=0$) and express the discrete eigenvalues and normalization constants as
\begin{equation}\label{4.35}
\begin{split}
z_{1}=q_{0}\mathrm{e}^{\mathrm{i}\alpha_{1}}, \quad c_{1}=\mathrm{e}^{\kappa_{1}+\mathrm{i}[\alpha_{1}+(k-\frac{1}{2})\pi]}, \quad 0<\alpha_{1}<\pi, \quad k=0,1,
\end{split}
\end{equation}
from~\eqref{4.22} one obtains the one-soliton solutions of the defocusing-defocusing coupled Hirota equations with NZBC~\eqref{1.3}:
\begin{equation}\label{4.36}
\begin{split}
\mathbf{q}(x,t)=\mathrm{e}^{\mathrm{i}\alpha_{1}} \left[ \cos(\alpha_{1}) -\mathrm{i}\sin(\alpha_{1}) \left[ \tanh(Q_{1}) \right]^{(-1)^{k+1}} \right] \mathbf{q}_{-},
\end{split}
\end{equation}
where
\begin{equation}\label{4.37}
\begin{split}
Q_{1}=-q_{0}\sin(\alpha_{1})\left[ x+[2q_{0}\cos(\alpha_{1})+2q_{0}^{2}\sigma \cos(2\alpha_{1})+4q_{0}^{2}\sigma]t \right]-\frac{\kappa_{1}}{2}.
\end{split}
\end{equation}

In this case, two types of solitons are obtained. For $k=1$ and $\kappa_{1}\in \mathbb{R}$, one dark-dark soliton solution is given by (a1) and (a2) in Fig.~\ref{fig:1}. Moreover, setting $k=0$ and $\kappa_{1}\in \mathbb{C}$ generates one bright-bright soliton solution in (b1) and (b2) of Fig.~\ref{fig:1}.

\begin{figure}[htb]
\centering
\begin{tabular}{cccc}
\includegraphics[width=0.22\textwidth]{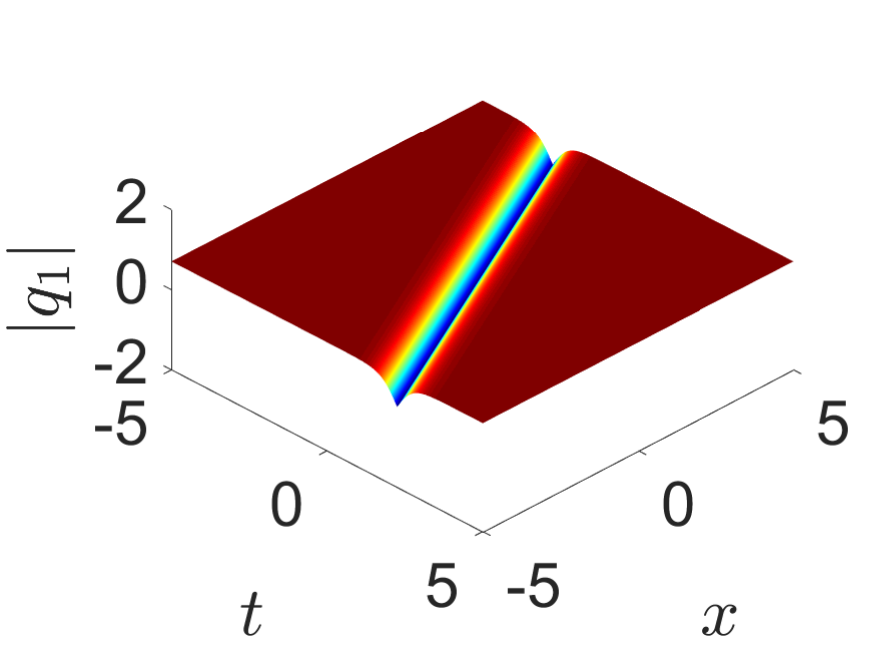} &
\includegraphics[width=0.22\textwidth]{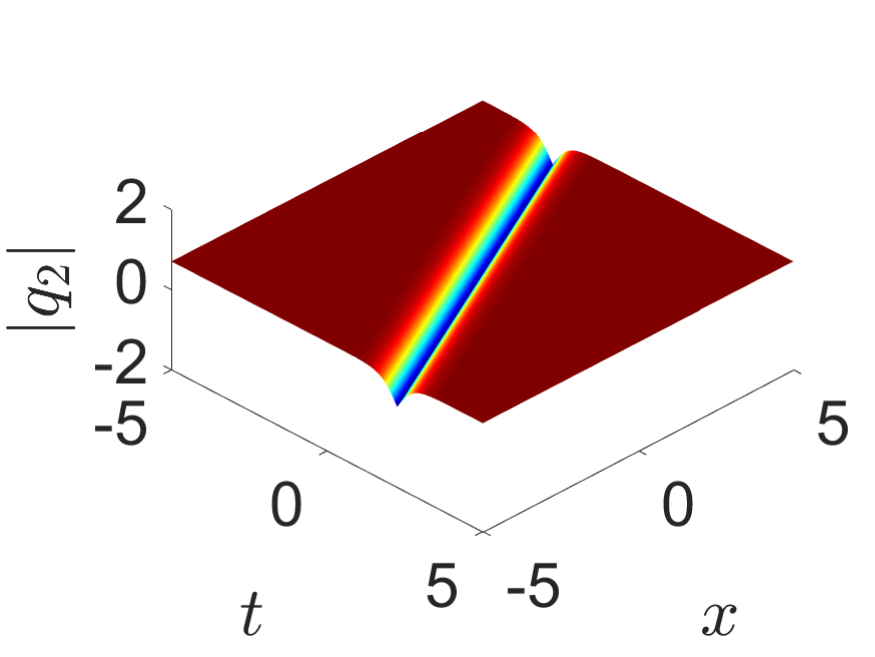} &
\includegraphics[width=0.22\textwidth]{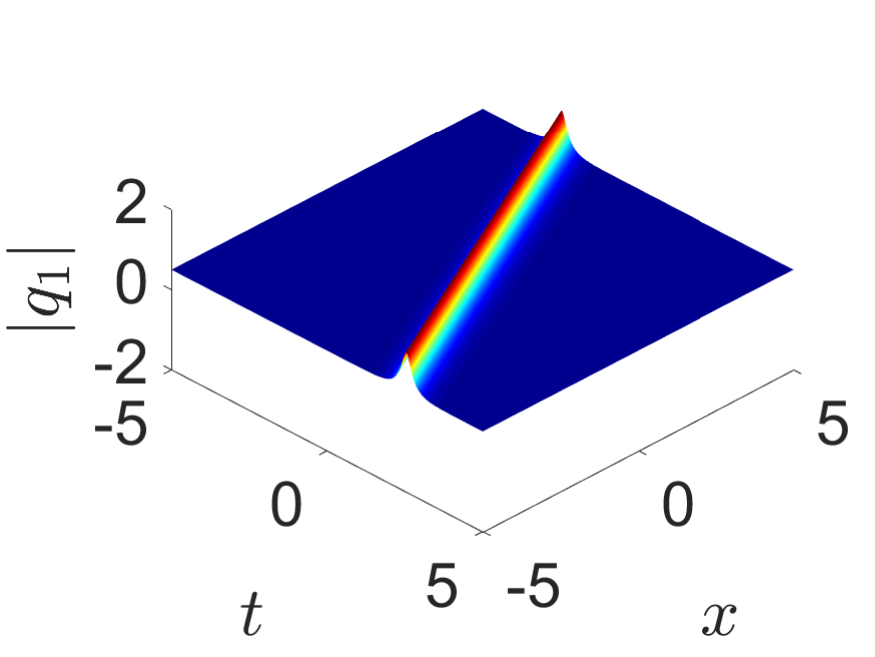} &
\includegraphics[width=0.22\textwidth]{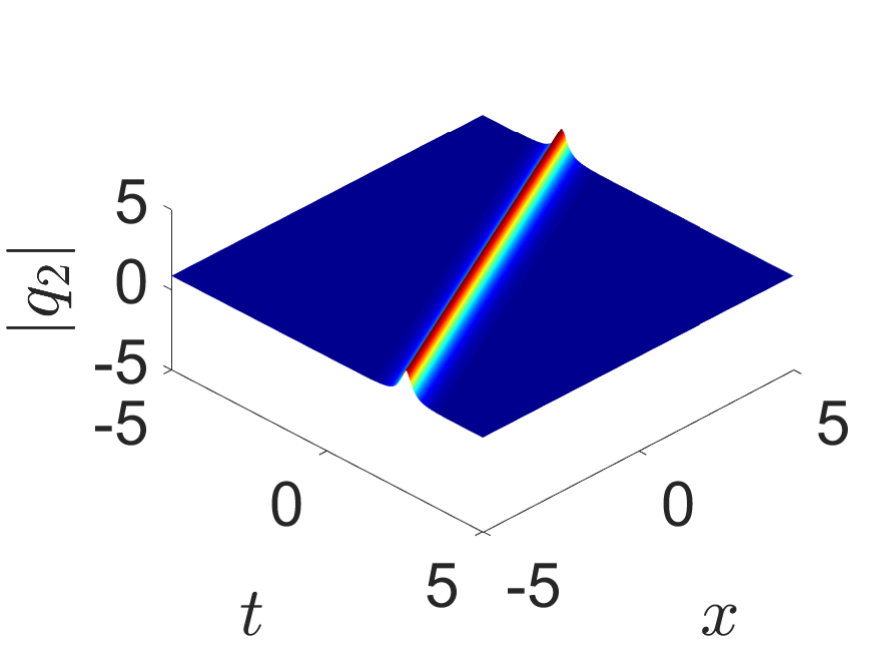} \\
		(a1) & (a2) & (b1) & (b2)
\end{tabular}
\caption{\small (a1) and (a2): One dark-dark soliton solution by taking $\mathbf{q}_{-}=(\frac{\sqrt{2}}{2},-\frac{\sqrt{2}}{2})^{T}$, $\sigma=\kappa_{1}=1$, $\alpha_{1}=\frac{1}{2}\pi$, $k=1$. (b1) and (b2): One bright-bright soliton solution by taking $\mathbf{q}_{-}=(\frac{1}{2},-\frac{\sqrt{3}}{2})^{T}$, $\sigma=1$, $\kappa_{1}=\exp(-\frac{2}{5}-\frac{9}{5}\mathrm{i})$, $\alpha_{1}=\frac{1}{2}\pi$, $k=0$.
}\label{fig:1}
\end{figure}

After that examining a couple of eigenvalues positioned away from the circle ($G_{1}=0$ and $G_{2}=1$) and setting another parameters as
\begin{equation}\label{4.38}
\begin{split}
\theta_{1}=K_{2}\mathrm{e}^{\mathrm{i}\alpha_{2}}, \quad f_{1}=\mathrm{e}^{\kappa_{2}+\mathrm{i}\chi_{2}}, \quad 0<K_{2}<q_{0}, \quad 0<\alpha_{2}<\pi, \quad \chi_{2}\in \mathbb{R},
\end{split}
\end{equation}
from~\eqref{4.22} one generates the corresponding one-soliton solutions
\begin{equation}\label{4.39}
\begin{split}
\mathbf{q}(x,t)=\frac{K_{2}^{2}(\mathrm{e}^{Q_{21}}-1)
+q_{0}^{2}}{K_{2}^{2}(\mathrm{e}^{Q_{22}}-1)+q_{0}^{2}} \mathbf{q}_{-}
+\frac{\mathrm{i}\mathrm{e}^{Q_{23}}K_{2}(K_{2}^{2}-q_{0}^{2})(1
-\mathrm{e}^{2\mathrm{i}\alpha_{2}})}{q_{0}^{3}+q_{0}K_{2}^{2}(\mathrm{e}^{Q_{22}}-1)} \mathbf{q}_{-}^{\perp},
\end{split}
\end{equation}
where
\begin{subequations}\label{4.40}
\begin{align}
Q_{21}&=2\kappa_{2}+2\mathrm{i}\alpha_{2}-\mathrm{i}K_{2}(x+3q_{0}^{2}\sigma t)(\mathrm{e}^{\mathrm{i}\alpha_{2}}-\mathrm{e}^{-\mathrm{i}\alpha_{2}})
-\mathrm{i}K_{2}^{2}t(\mathrm{e}^{2\mathrm{i}\alpha_{2}}-\mathrm{e}^{-2\mathrm{i}\alpha_{2}})
-\mathrm{i}K_{2}^{3}\sigma t(\mathrm{e}^{3\mathrm{i}\alpha_{2}}-\mathrm{e}^{-3\mathrm{i}\alpha_{2}}), \\
Q_{23}&=\kappa_{2}+\mathrm{i}(\chi_{2}-\alpha_{2})
-\mathrm{i}K_{2}\mathrm{e}^{\mathrm{i}\alpha_{2}}(x+3q_{0}^{2}\sigma t)
-\mathrm{i}K_{2}^{2}\mathrm{e}^{2\mathrm{i}\alpha_{2}}t
-\mathrm{i}K_{2}^{3}\sigma \mathrm{e}^{3\mathrm{i}\alpha_{2}}t, \\
Q_{22}&=2\kappa_{2}
-\mathrm{i}K_{2}(\mathrm{e}^{-\mathrm{i}\alpha_{2}}-\mathrm{e}^{-3\mathrm{i}\alpha_{2}})
[K_{2}t(\mathrm{e}^{\mathrm{i}\alpha_{2}}+\mathrm{e}^{3\mathrm{i}\alpha_{2}})
+K_{2}^{2}\sigma t(1+\mathrm{e}^{4\mathrm{i}\alpha_{2}})
+\mathrm{e}^{2\mathrm{i}\alpha_{2}}(x+(K_{2}^{2}+3q_{0}^{2})\sigma t)].
\end{align}
\end{subequations}

For $q_{1-}\times q_{2-}=0$ and $\kappa_{2}\in \mathbb{R}$, one dark-bright soliton solution is given by (a1) and (a2) in Fig.~\ref{fig:2}. Moreover, setting $q_{1-}\times q_{2-}=0$ and $\kappa_{2}\in \mathbb{C}$ generates one bright-bright soliton solution in (b1) and (b2) of Fig.~\ref{fig:2}. In addition, one breather-breather soliton solution is obtained by selecting parameters $q_{1-}\times q_{2-}\neq0$ in (c1) and (c2) of Fig.~\ref{fig:2}.

\begin{figure}[htb]
\centering
\begin{tabular}{cccc}
\includegraphics[width=0.22\textwidth]{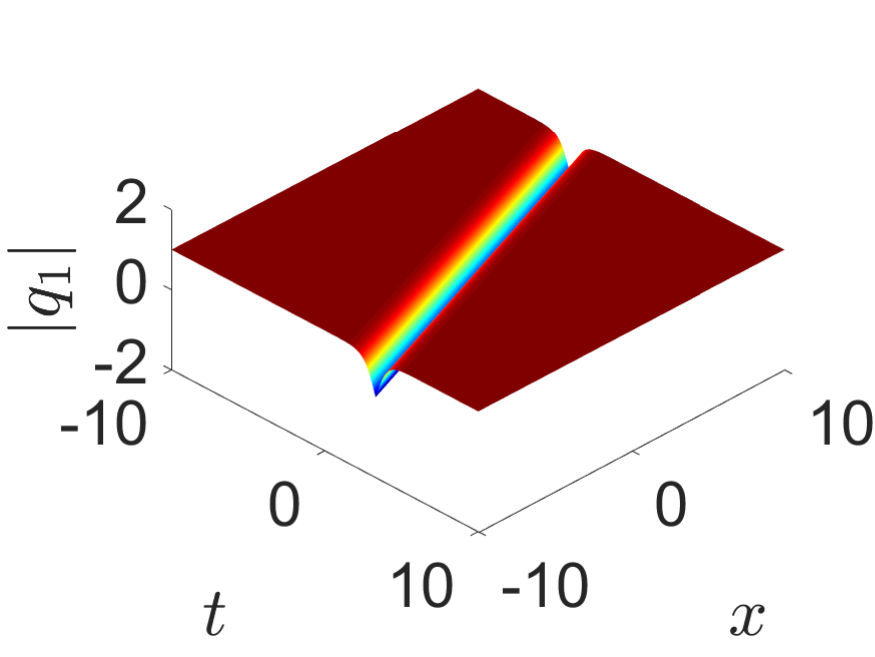} &
\includegraphics[width=0.22\textwidth]{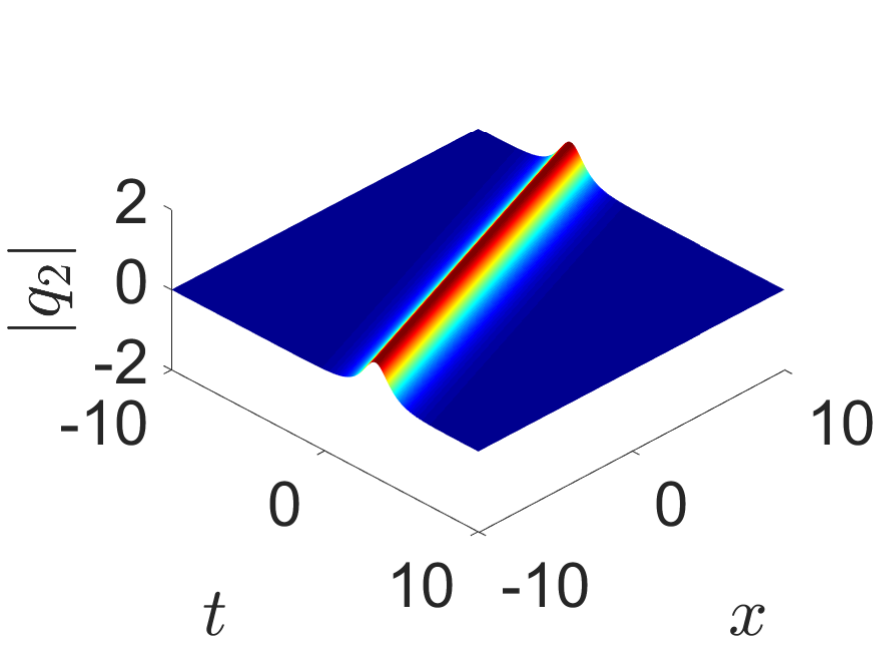} &
\includegraphics[width=0.22\textwidth]{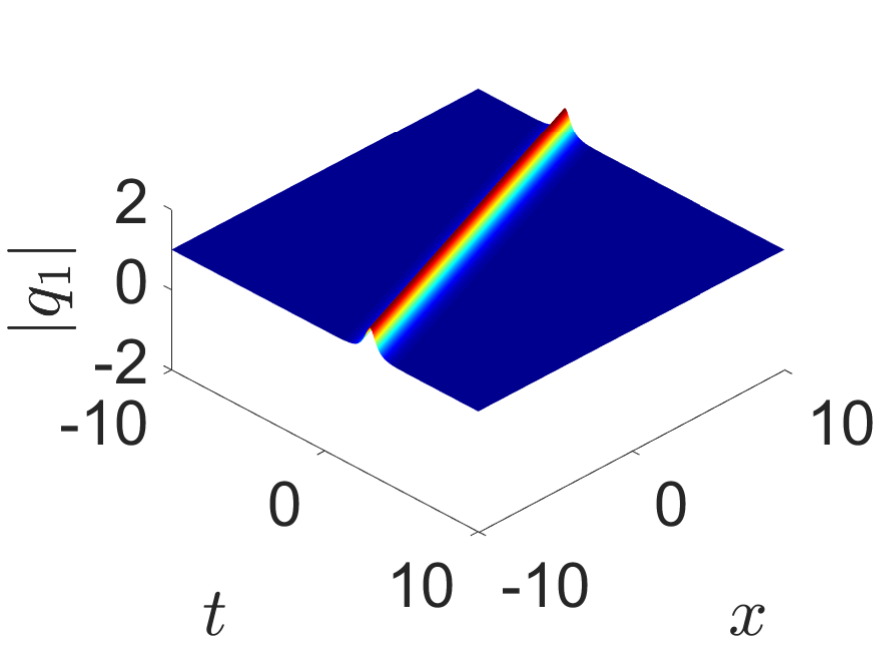} &
\includegraphics[width=0.22\textwidth]{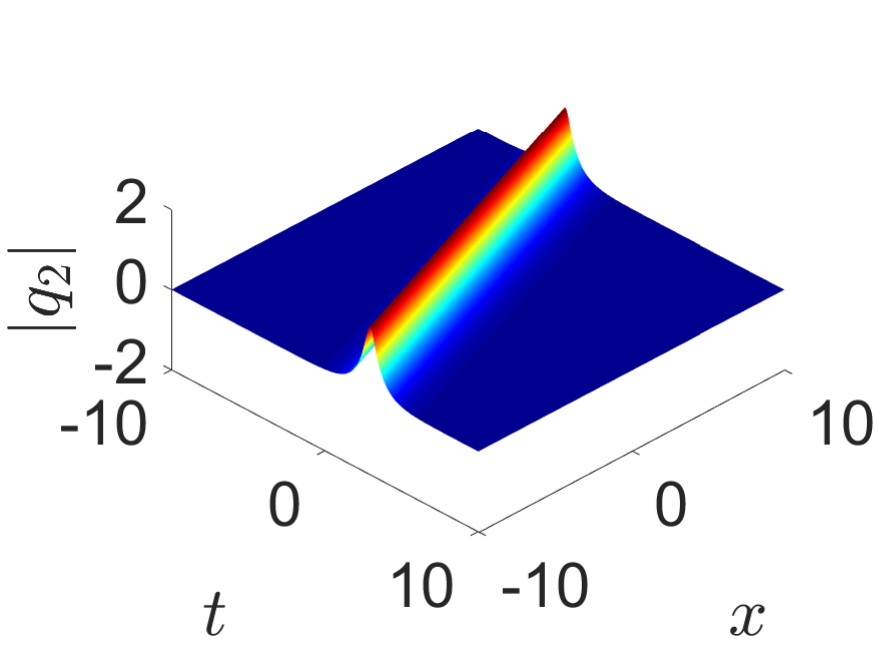} \\
		(a1) & (a2) & (b1) & (b2)
\end{tabular}
\begin{tabular}{cccc}
\includegraphics[width=0.22\textwidth]{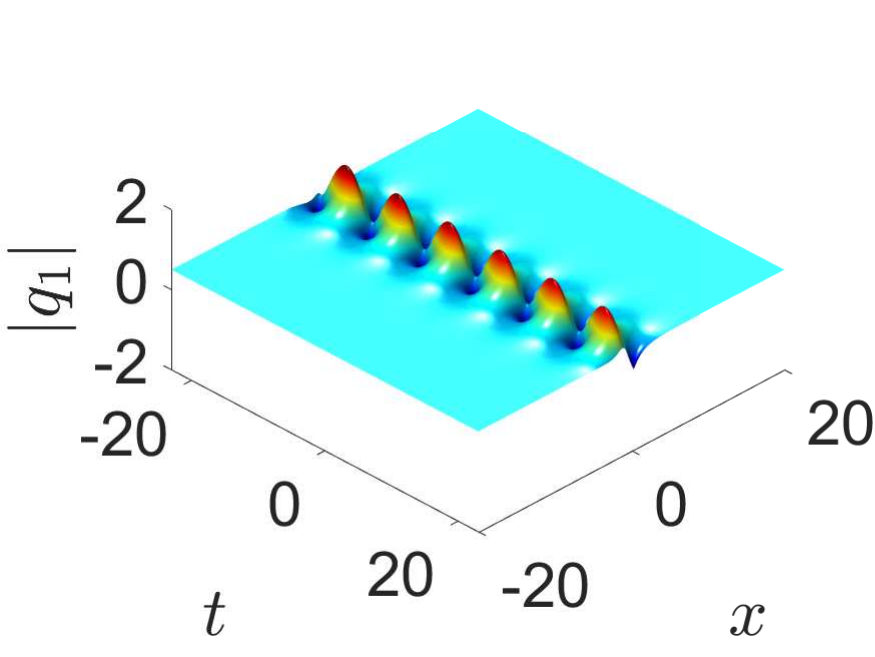} &
\includegraphics[width=0.22\textwidth]{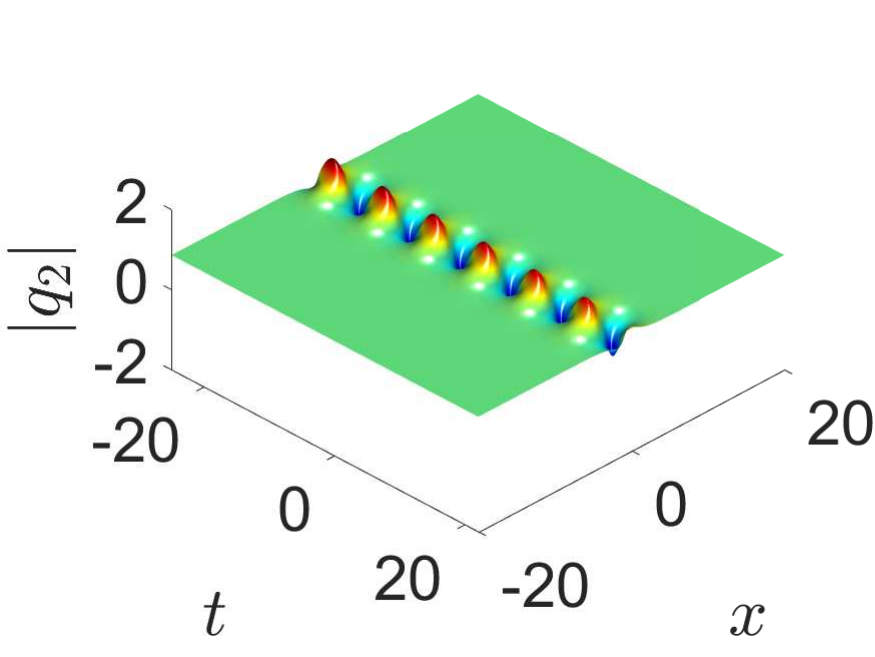} &
\includegraphics[width=0.22\textwidth]{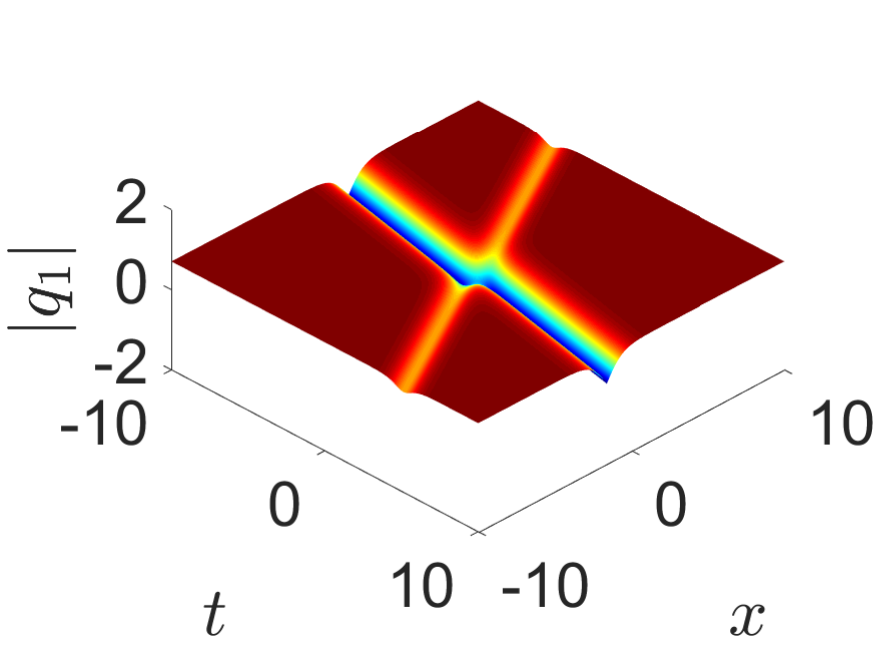} &
\includegraphics[width=0.22\textwidth]{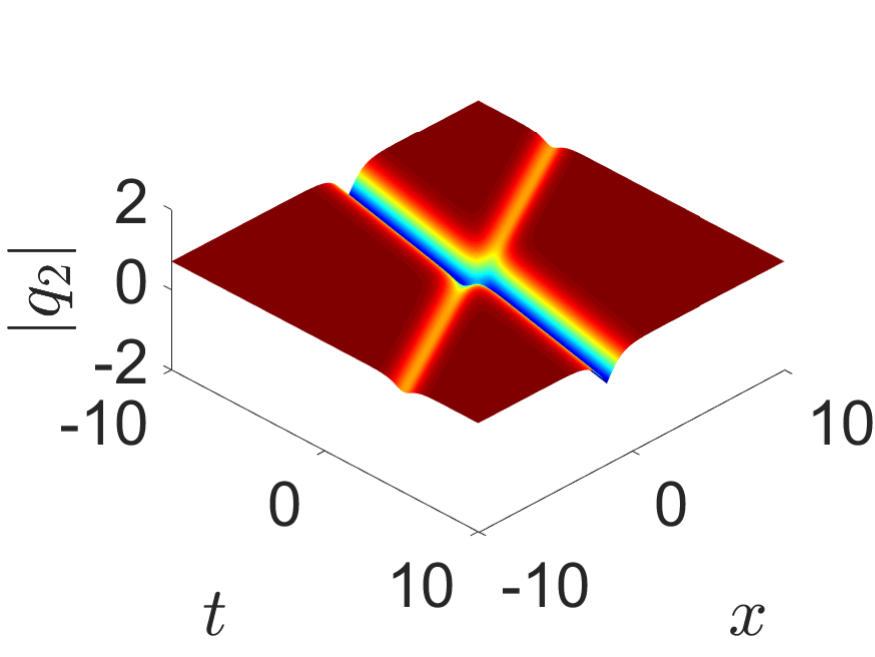} \\
		(c1) & (c2) & (d1) & (d2)
\end{tabular}
\caption{\small (a1) and (a2): One dark-bright soliton solution by taking $\mathbf{q}_{-}=(1,0)^{T}$, $\sigma=\kappa_{2}=1$, $\chi_{2}=-1$, $K_{2}=0.5$, $\alpha_{2}=\frac{1}{2}\pi$. (b1) and (b2): One bright-bright soliton solution by taking $\mathbf{q}_{-}=(1,0)^{T}$, $\sigma=\chi_{2}=1$, $K_{2}=0.5$, $\kappa_{2}=\exp(\frac{3}{5}-\frac{3}{5}\mathrm{i})$, $\alpha_{2}=\frac{1}{2}\pi$. (c1) and (c2): One breather-breather soliton solution by taking $\mathbf{q}_{-}=(\frac{1}{2}\mathrm{e}^{-\frac{1}{10}\mathrm{i}\pi},
-\frac{\sqrt{3}}{2}\mathrm{e}^{-\frac{1}{10}\mathrm{i}\pi})^{T}$, $\sigma=10^{-3}$, $\kappa_{2}=0.5$, $\chi_{2}=1$, $K_{2}=0.9$, $\alpha_{2}=\frac{1}{2}\pi$. (d1) and (d2): Two dark-dark solitons solution by taking $\mathbf{q}_{-}=(\frac{\sqrt{2}}{2},
-\frac{\sqrt{2}}{2})^{T}$, $\sigma=10^{-1}$, $\kappa_{3}=\kappa_{4}=1$, $\alpha_{3}=\frac{1}{2}\pi$, $\alpha_{4}=\frac{1}{4}\pi$, $k_{1}=k_{2}=1$.
}\label{fig:2}
\end{figure}

\subsubsection{Soliton solutions for the scenario where $G_{1}+G_{2}=2$}

Discuss the case where there is only one discrete eigenvalue on or outside the circle $C_{o}$, i.e. $G_{1}+G_{2}=2$. Firstly, taking into account a pair of eigenvalues located on the circumference ($G_{1}=2$ and $G_{2}=0$) and considering the discrete eigenvalues and normalization constants as
\begin{subequations}\label{4.41}
\begin{align}
z_{1}&=q_{0}\mathrm{e}^{\mathrm{i}\alpha_{3}}, \quad
c_{1}=\mathrm{e}^{\kappa_{3}+\mathrm{i}[\alpha_{3}+(k_{1}-\frac{1}{2})\pi]}, \quad
0<\alpha_{3}<\pi, \quad k_{1}=0,1, \\
z_{2}&=q_{0}\mathrm{e}^{\mathrm{i}\alpha_{4}}, \quad
c_{2}=\mathrm{e}^{\kappa_{4}+\mathrm{i}[\alpha_{4}+(k_{2}-\frac{1}{2})\pi]}, \quad
0<\alpha_{4}<\pi, \quad k_{2}=0,1.
\end{align}
\end{subequations}
Through the expressions~\eqref{4.41}, soliton solutions and the reflectionless potentials, it can be inferred that the different structures of the two-soliton solutions are obtained. For $k_{1}=k_{2}=1$ and $\kappa_{3},\kappa_{4}\in \mathbb{R}$, two dark-dark solitons solution is given by (d1) and (d2) in Fig.~\ref{fig:2}. Moreover, setting $k_{1}=k_{2}=0$ and $\kappa_{3},\kappa_{4}\in \mathbb{C}$ generates two bright-bright solitons solution in (a1) and (a2) of Fig.~\ref{fig:3}. In addition, one dark-dark and one bright-bright solitons solution is obtained by selecting parameters $k_{1}=1$, $k_{2}=0$, $\kappa_{3}\in \mathbb{R}$ and $\kappa_{4}\in \mathbb{C}$ in (b1) and (b2) of Fig.~\ref{fig:3}. Finally, setting $k_{1}=0$, $k_{2}=1$, $\kappa_{3}\in \mathbb{C}$ and $\kappa_{4}\in \mathbb{R}$ yields one bright-bright and one dark-dark solitons solution in (c1) and (c2) of Fig.~\ref{fig:3}.

\begin{figure}[htb]
\centering
\begin{tabular}{cccc}
\includegraphics[width=0.22\textwidth]{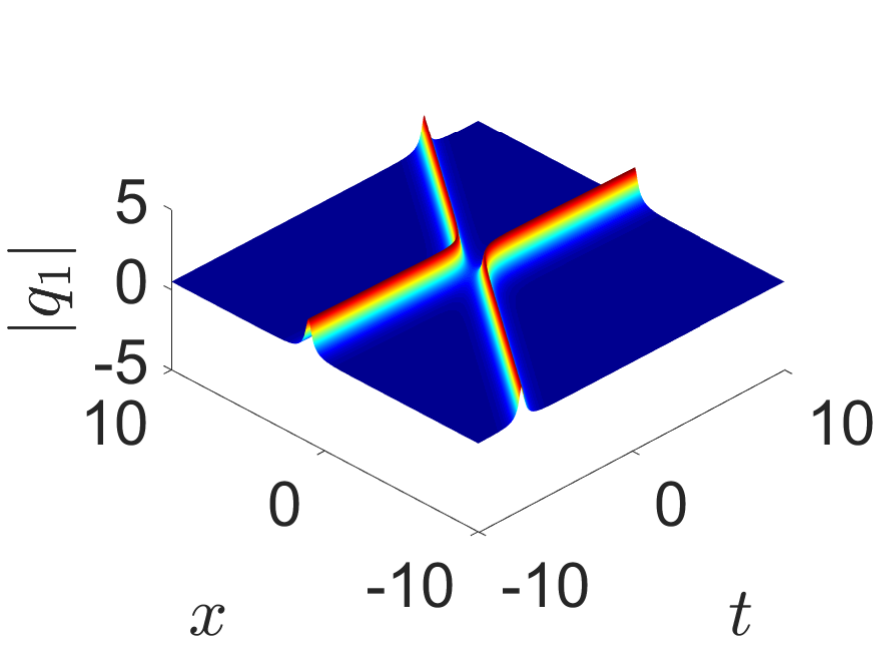} &
\includegraphics[width=0.22\textwidth]{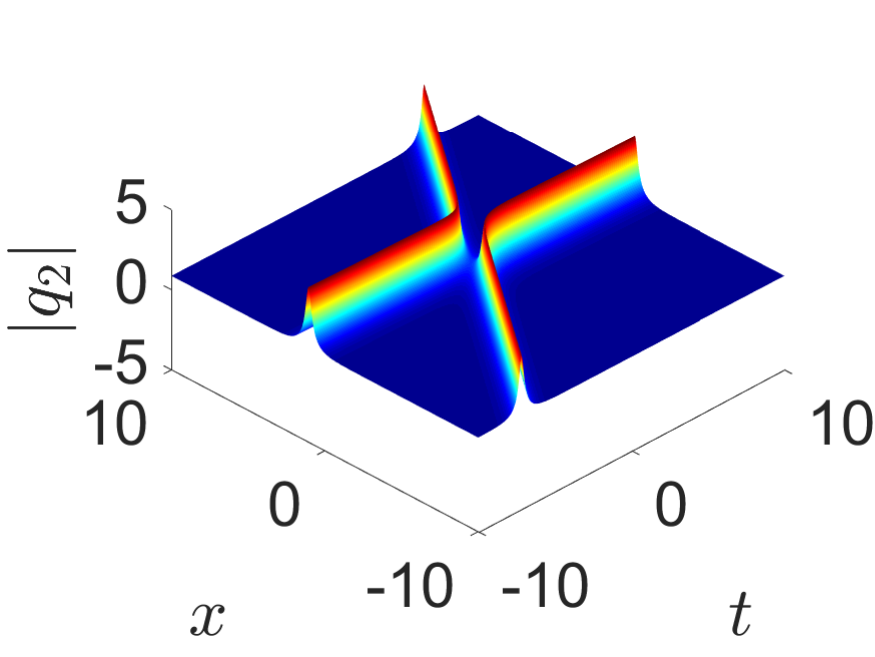} &
\includegraphics[width=0.22\textwidth]{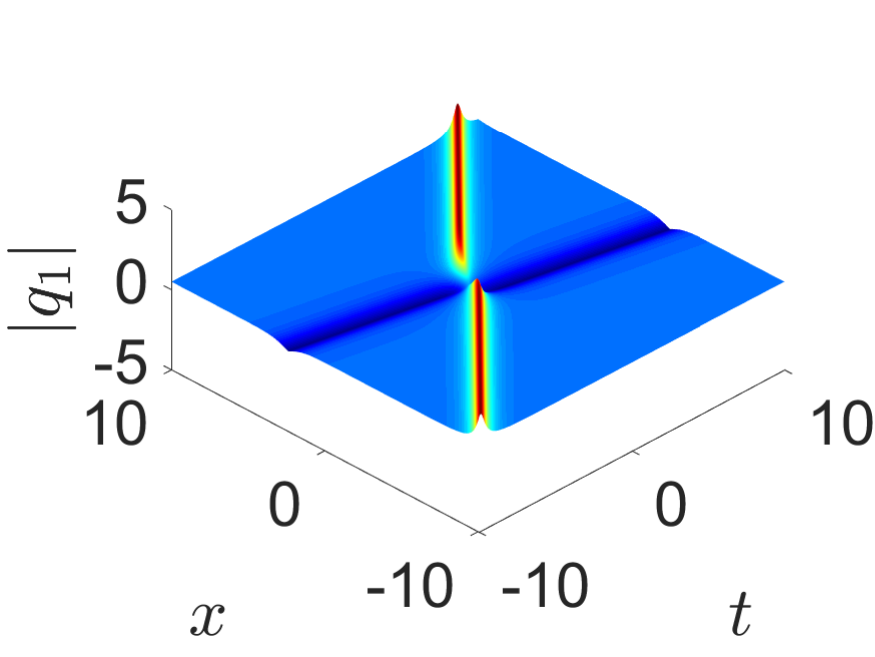} &
\includegraphics[width=0.22\textwidth]{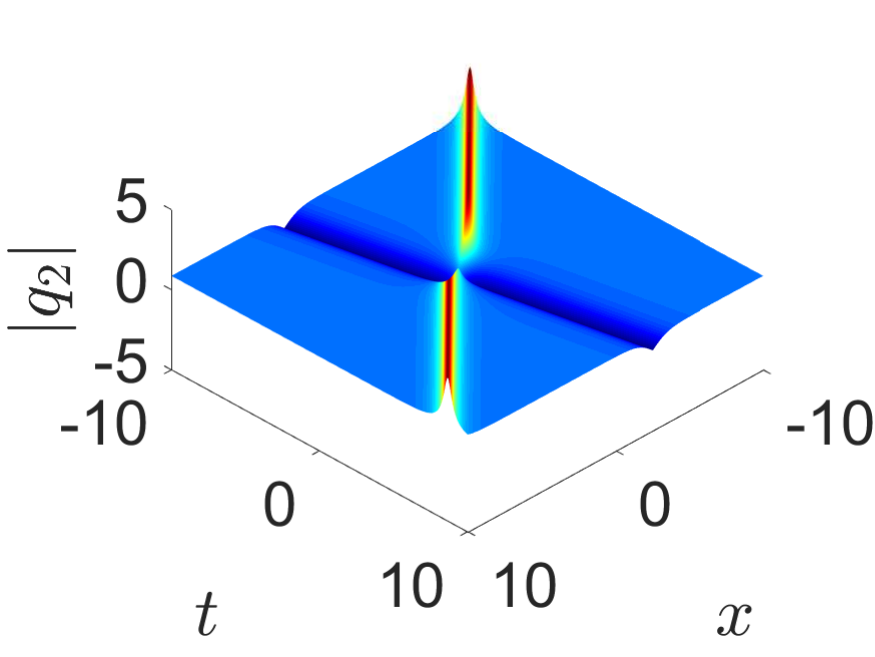} \\
		(a1) & (a2) & (b1) & (b2)
	\end{tabular}
\begin{tabular}{cccc}
\includegraphics[width=0.22\textwidth]{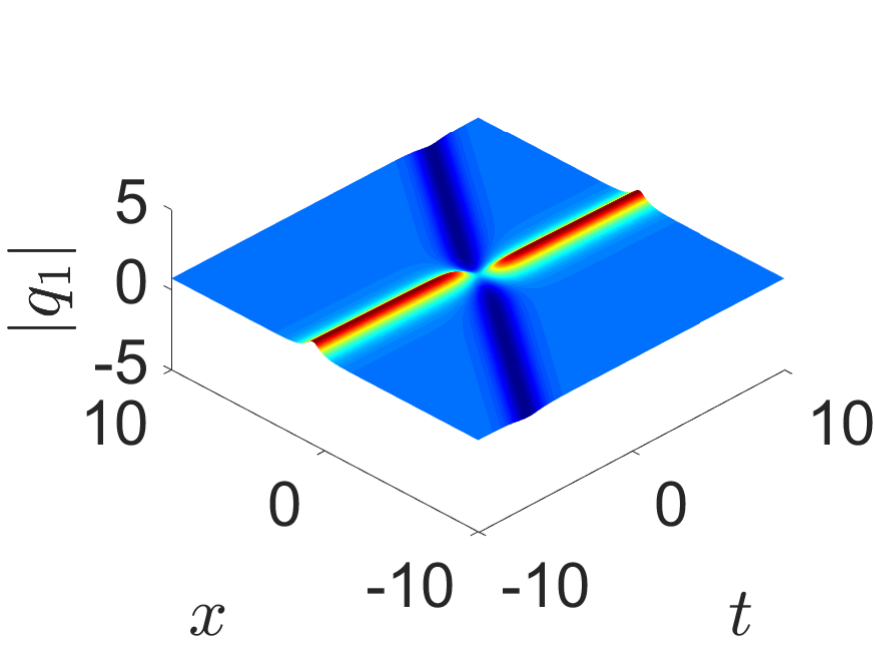} &
\includegraphics[width=0.22\textwidth]{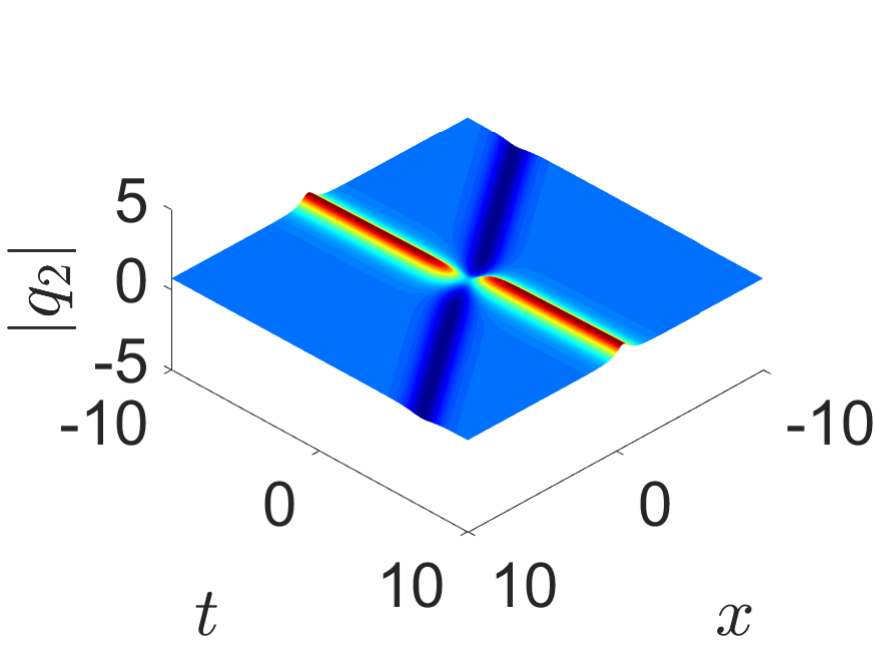} &
\includegraphics[width=0.22\textwidth]{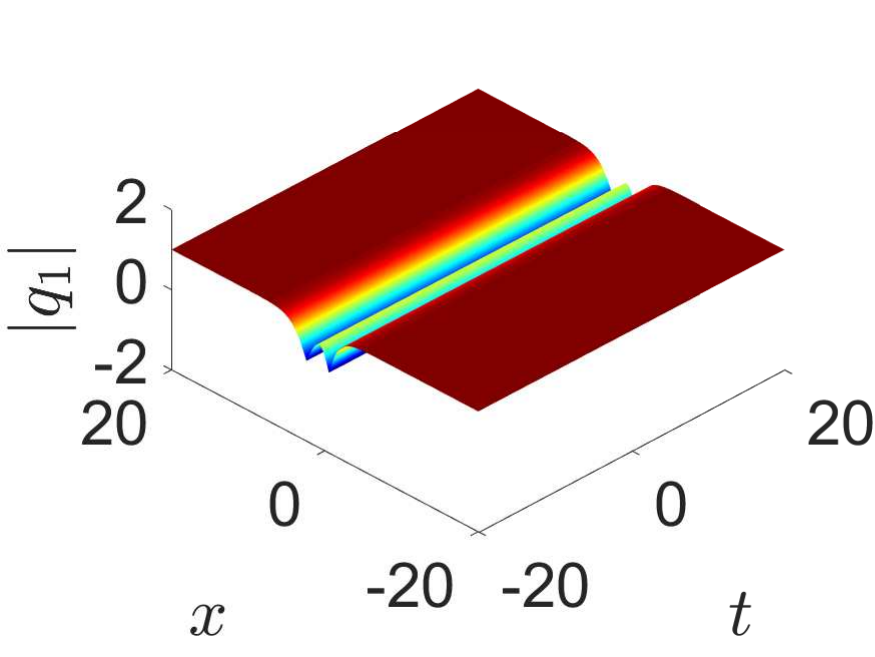} &
\includegraphics[width=0.22\textwidth]{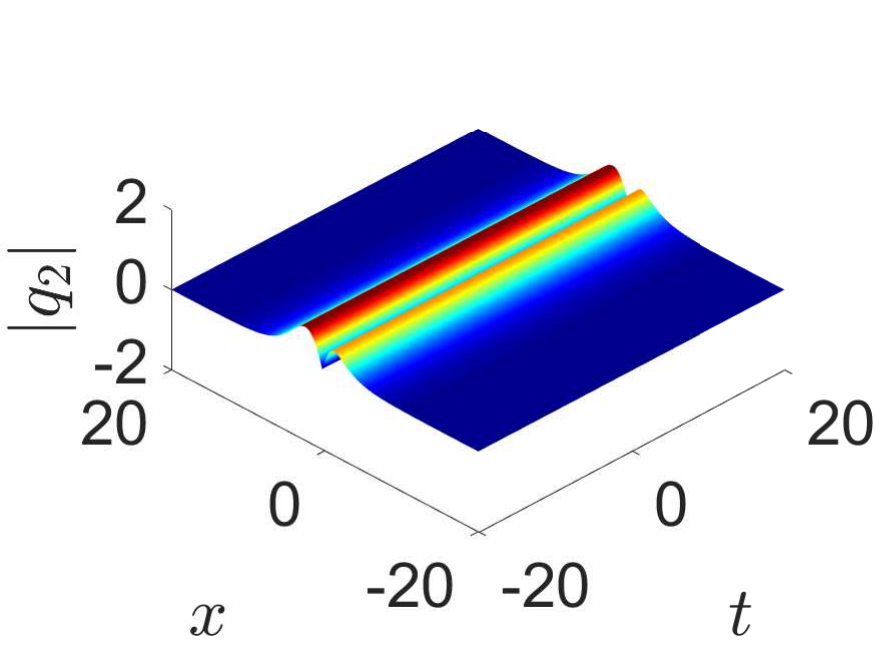} \\
		(c1) & (c2) & (d1) & (d2)
	\end{tabular}
\caption{\small (a1) and (a2): Two bright-bright solitons solution by taking $\mathbf{q}_{-}=(\frac{1}{2},-\frac{\sqrt{3}}{2})^{T}$, $\sigma=10^{-2}$, $\kappa_{3}=\mathrm{e}^{-1-\frac{3}{2}\mathrm{i}}$, $\kappa_{4}=\mathrm{e}^{-1-\mathrm{i}}$, $\alpha_{3}=\frac{1}{2}\pi$, $\alpha_{4}=\frac{3}{4}\pi$, $k_{1}=k_{2}=0$. (b1) and (b2): One dark-dark and one bright-bright solitons solution by taking $\mathbf{q}_{-}=(\frac{1}{2},-\frac{\sqrt{3}}{2})^{T}$, $\sigma=10^{-1}$, $\kappa_{3}=1$, $\kappa_{4}=\mathrm{e}^{-1-\frac{3}{2}\mathrm{i}}$, $\alpha_{3}=\frac{1}{2}\pi$, $\alpha_{4}=\frac{3}{4}\pi$, $k_{1}=1$, $k_{2}=0$. (c1) and (c2): One bright-bright and one dark-dark solitons solution by taking $\mathbf{q}_{-}=(\frac{\sqrt{2}}{2},-\frac{\sqrt{2}}{2})^{T}$, $\sigma=10^{-2}$, $\kappa_{3}=\mathrm{e}^{\frac{1}{10}+\mathrm{i}}$, $\kappa_{4}=1$, $\alpha_{3}=\frac{1}{2}\pi$, $\alpha_{4}=\frac{3}{4}\pi$, $k_{1}=0$, $k_{2}=1$. (d1) and (d2): Two parallel dark-bright solitons solution by taking $\mathbf{q}_{-}=(1,0)^{T}$, $\sigma=10^{-3}$, $K_{6}=\kappa_{5}=\kappa_{6}=\chi_{6}=0.5$, $\alpha_{5}=\alpha_{6}=\frac{1}{2}\pi$, $k=1$.
}\label{fig:3}
\end{figure}

One can discern a scenario in which one discrete eigenvalue lies on the circle while the other is situated off the circle: $G_{1}=G_{2}=1$. Subsequently, we encapsulate the discrete eigenvalues and norming constants within
\begin{subequations}\label{4.42}
\begin{align}
z_{1}&=q_{0}\mathrm{e}^{\mathrm{i}\alpha_{5}}, \quad
c_{1}=\mathrm{e}^{\kappa_{5}+\mathrm{i}[\alpha_{5}+(k-\frac{1}{2})\pi]}, \quad k=0,1,
\quad 0<\alpha_{5},\alpha_{6}<\pi,  \\
\theta_{1}&=K_{6}\mathrm{e}^{\mathrm{i}\alpha_{6}}, \quad f_{1}=\mathrm{e}^{\kappa_{6}+\mathrm{i}\chi_{6}}, \quad 0<K_{6}<q_{0}, \quad
\chi_{6}\in \mathbb{R}.
\end{align}
\end{subequations}
Based on the analysis of two types of one-soliton solutions, the combination of these two types of solitons has six results in the case of $G_{1}=G_{2}=1$. For $k=1$, $q_{1-}\times q_{2-}=0$ and $\kappa_{5},\kappa_{6}\in \mathbb{R}$, two parallel dark-bright solitons solution is given by (d1) and (d2) in Fig.~\ref{fig:3}. Moreover, setting $k=1$, $q_{1-}\times q_{2-}=0$, $\kappa_{5}\in \mathbb{R}$ and $\kappa_{6}\in \mathbb{C}$ generates one parallel dark-bright and one parallel bright-bright solitons solution in (a1) and (a2) of Fig.~\ref{fig:4}. In addition, two parallel bright-bright solitons solution is obtained by selecting parameters $k=0$, $q_{1-}\times q_{2-}=0$ and $\kappa_{5},\kappa_{6}\in \mathbb{C}$ in (b1) and (b2) of Fig.~\ref{fig:4}. Then, setting $k=0$, $q_{1-}\times q_{2-}=0$, $\kappa_{5}\in \mathbb{C}$ and $\kappa_{6}\in \mathbb{R}$ yields one parallel bright-bright and one parallel dark-bright solitons solution in (c1) and (c2) of Fig.~\ref{fig:4}. Moreover, setting $k=1$, $q_{1-}\times q_{2-}\neq0$ and $\kappa_{5}\in \mathbb{R}$ generates one dark-dark and one breather-breather solitons solution in (d1) and (d2) of Fig.~\ref{fig:4}. In addition, one bright-bright and one breather-breather solitons solution is obtained by selecting parameters $k=0$, $q_{1-}\times q_{2-}=0$ and $\kappa_{5}\in \mathbb{C}$ in (a1) and (a2) of Fig.~\ref{fig:5}.

In Fig.~\ref{fig:3}, (d1) and (d2) can be regarded as W-type soliton and M-type soliton solutions. In Fig.~\ref{fig:4}, (b1) and (b2) can be regarded as M-type soliton and M-type soliton solutions.

\begin{figure}[htb]
\centering
\begin{tabular}{cccc}
\includegraphics[width=0.22\textwidth]{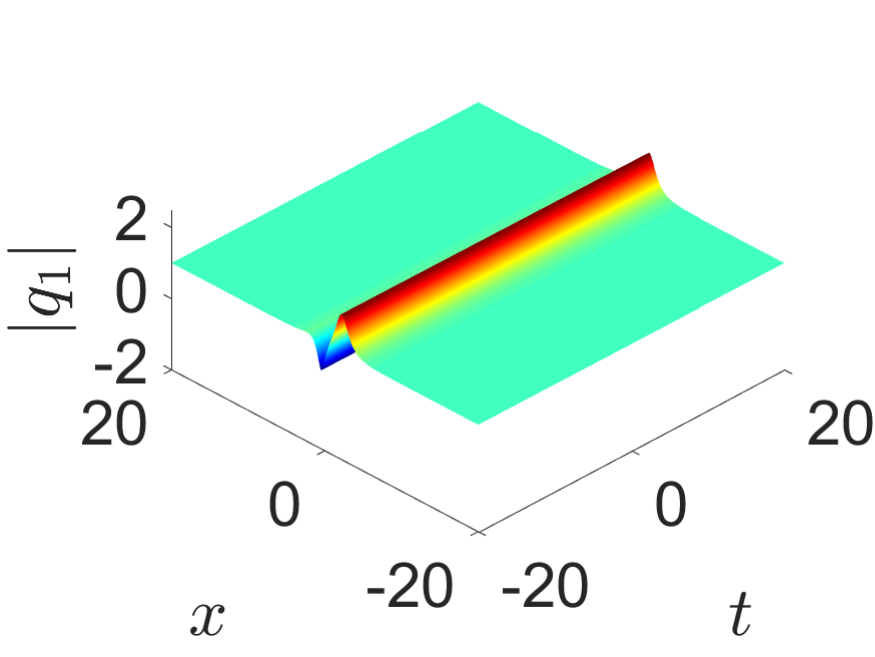} &
\includegraphics[width=0.22\textwidth]{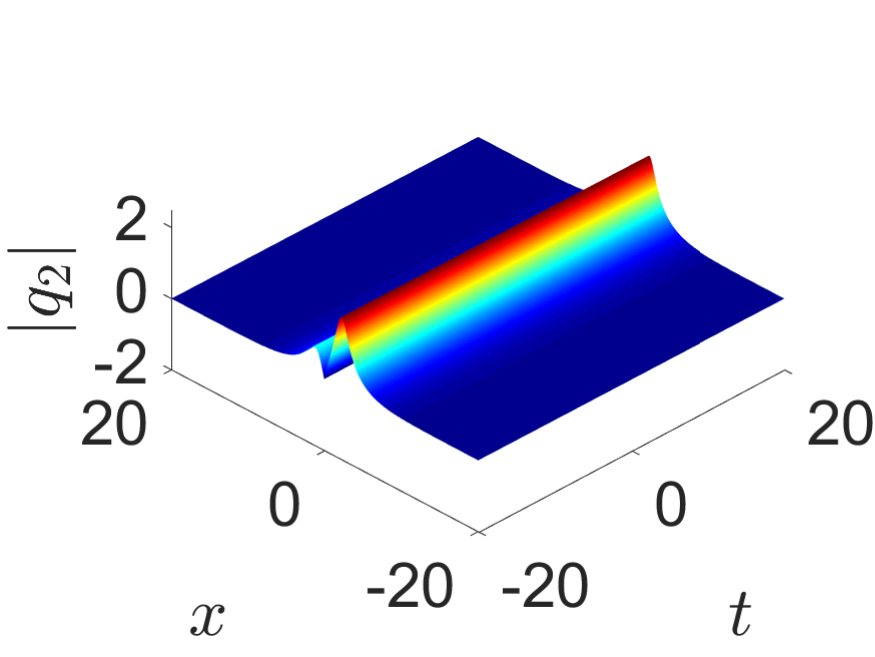} &
\includegraphics[width=0.22\textwidth]{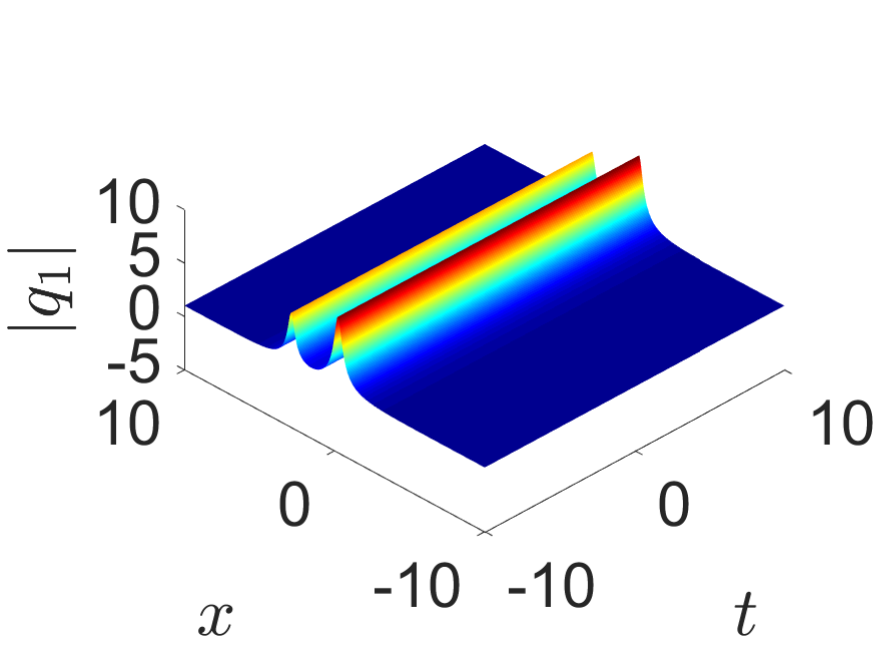} &
\includegraphics[width=0.22\textwidth]{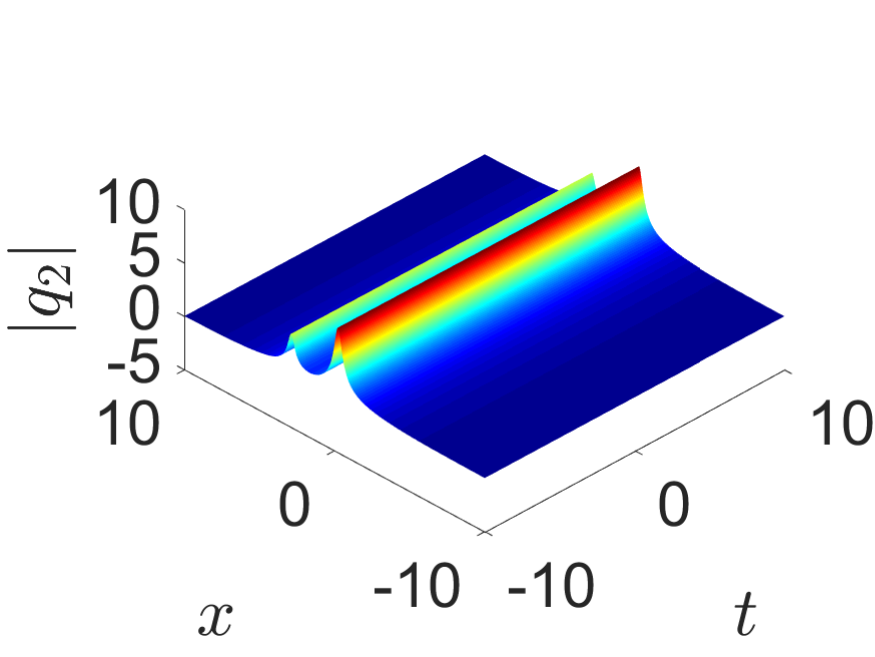} \\
		(a1) & (a2) & (b1) & (b2)
	\end{tabular}
\begin{tabular}{cccc}
\includegraphics[width=0.22\textwidth]{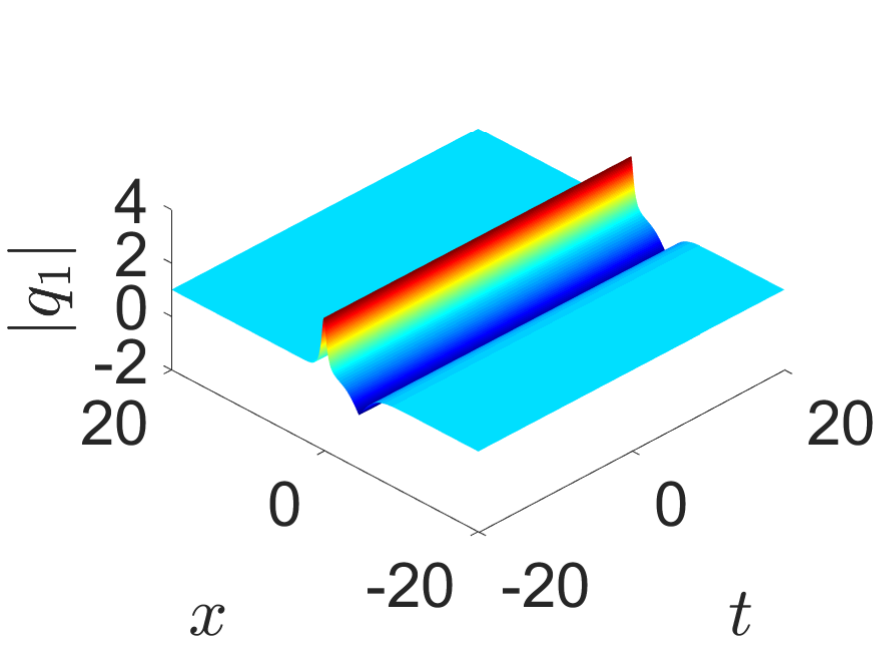} &
\includegraphics[width=0.22\textwidth]{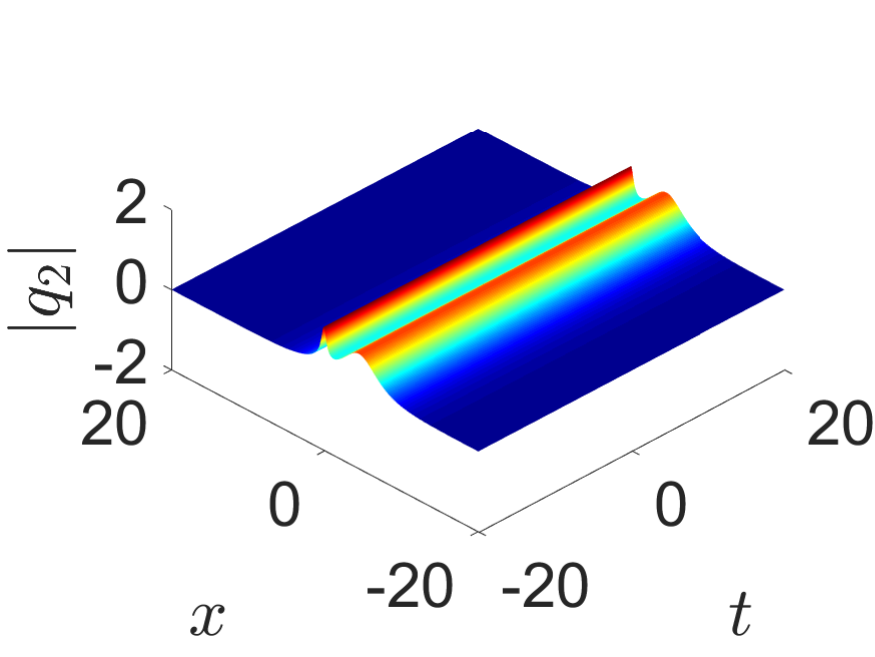} &
\includegraphics[width=0.22\textwidth]{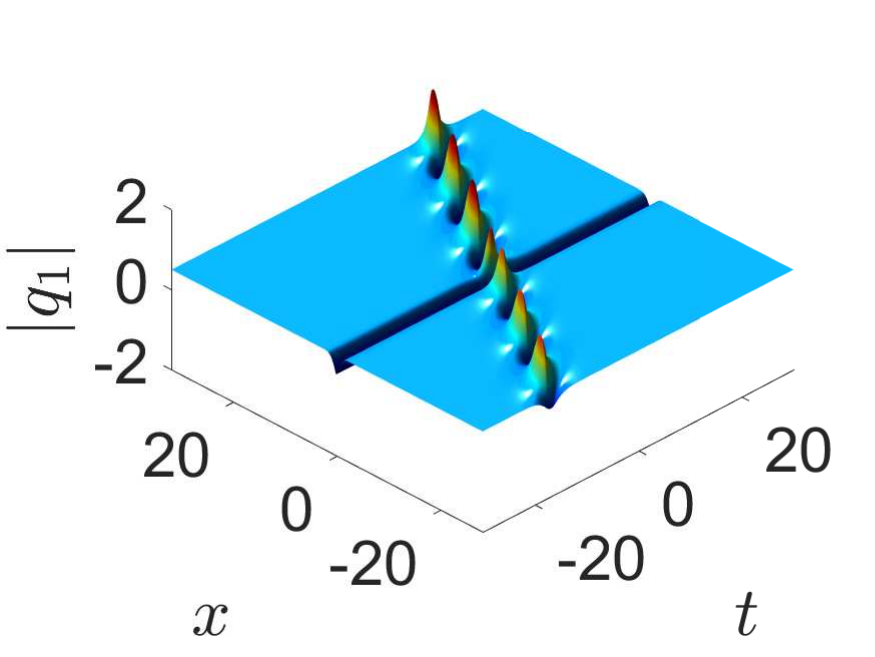} &
\includegraphics[width=0.22\textwidth]{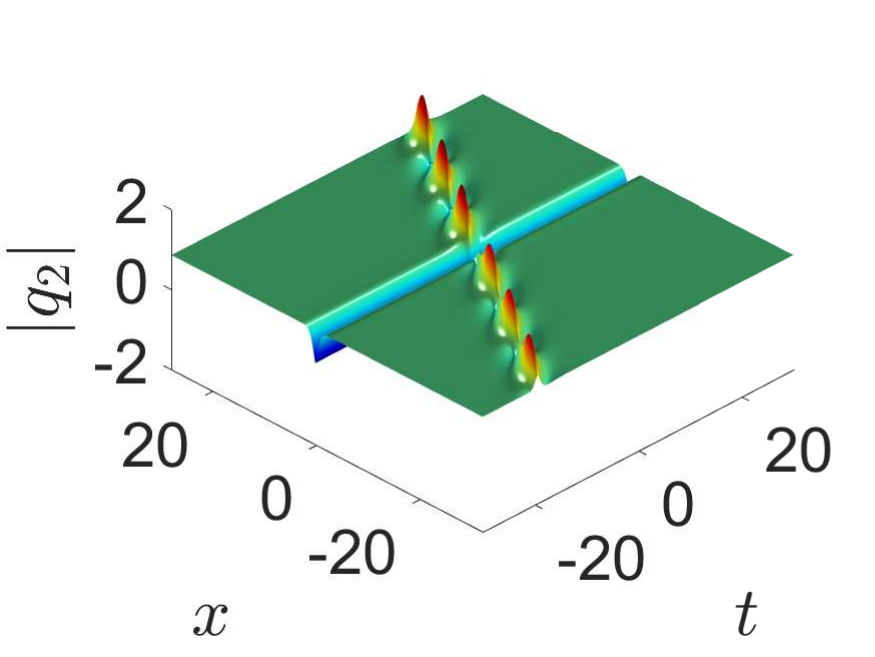} \\
		(c1) & (c2) & (d1) & (d2)
	\end{tabular}
\caption{\small (a1) and (a2): One parallel dark-bright and one parallel bright-bright solitons solution by taking $\mathbf{q}_{-}=(1,0)^{T}$, $\sigma=10^{-3}$, $K_{6}=0.5$, $\kappa_{5}=\chi_{6}=1$, $\kappa_{6}=\mathrm{e}^{\frac{7}{10}+\frac{3}{5}\mathrm{i}}$, $\alpha_{5}=\alpha_{6}=\frac{1}{2}\pi$, $k=1$. (b1) and (b2): Two parallel bright-bright solitons solution by taking $\mathbf{q}_{-}=(1,0)^{T}$, $\sigma=10^{-3}$, $K_{6}=0.5$, $\kappa_{5}=\mathrm{e}^{2-2\mathrm{i}}$, $\kappa_{6}=\mathrm{e}^{\frac{3}{5}-\frac{6}{5}\mathrm{i}}$, $\chi_{6}=1.5$, $\alpha_{5}=\alpha_{6}=\frac{1}{2}\pi$, $k=0$. (c1) and (c2): One parallel bright-bright and one parallel dark-bright solitons solution by taking $\mathbf{q}_{-}=(1,0)^{T}$, $\sigma=10^{-3}$, $K_{6}=0.5$, $\kappa_{5}=\mathrm{e}^{\frac{7}{10}+\frac{3}{10}\mathrm{i}}$, $\kappa_{6}=2.8$, $\chi_{6}=-0.2$, $\alpha_{5}=\alpha_{6}=\frac{1}{2}\pi$, $k=0$. (d1) and (d2): One dark-dark and one breather-breather solitons solution by taking $\mathbf{q}_{-}=(\frac{1}{2}\mathrm{e}^{-\frac{1}{10}\mathrm{i}\pi},
-\frac{\sqrt{3}}{2}\mathrm{e}^{-\frac{1}{10}\mathrm{i}\pi})^{T}$, $\sigma=10^{-3}$, $K_{6}=0.98$, $\kappa_{5}=\chi_{6}=1$, $\kappa_{6}=0.5$, $\alpha_{5}=\frac{1}{2}\pi$, $\alpha_{6}=\frac{4}{5}\pi$, $k=1$.
}\label{fig:4}
\end{figure}

Following that, we examine the case where both eigenvalues are located outside the circle ($G_{1}=0$ and $G_{2}=2$) and define additional parameters accordingly.
\begin{subequations}\label{4.43}
\begin{align}
\theta_{1}&=K_{7}\mathrm{e}^{\mathrm{i}\alpha_{7}}, \quad f_{1}=\mathrm{e}^{\kappa_{7}+\mathrm{i}\chi_{7}}, \quad 0<K_{7}<q_{0}, \quad
\chi_{7}\in \mathbb{R},  \\
\theta_{2}&=K_{8}\mathrm{e}^{\mathrm{i}\alpha_{8}}, \quad f_{2}=\mathrm{e}^{\kappa_{8}+\mathrm{i}\chi_{8}}, \quad 0<K_{8}<q_{0}, \quad
\chi_{8}\in \mathbb{R}.
\end{align}
\end{subequations}
Through the expressions~\eqref{4.43}, soliton solutions and the reflectionless potentials, it can be inferred that the different structures of the two-soliton solutions are obtained. For $q_{1-}\times q_{2-}=0$ and $\kappa_{7},\kappa_{8}\in \mathbb{R}$, two dark-bright solitons solution is given by (b1) and (b2) in Fig.~\ref{fig:5}. Moreover, setting $q_{1-}\times q_{2-}=0$ and $\kappa_{7},\kappa_{8}\in \mathbb{C}$ generates two bright-bright solitons solution in (c1) and (c2) of Fig.~\ref{fig:5}. In addition, two breather-breather solitons solution is obtained by selecting parameters  $q_{1-}\times q_{2-}\neq0$ in (d1) and (d2) of Fig.~\ref{fig:5}.

\begin{figure}[htb]
\centering
\begin{tabular}{cccc}
\includegraphics[width=0.22\textwidth]{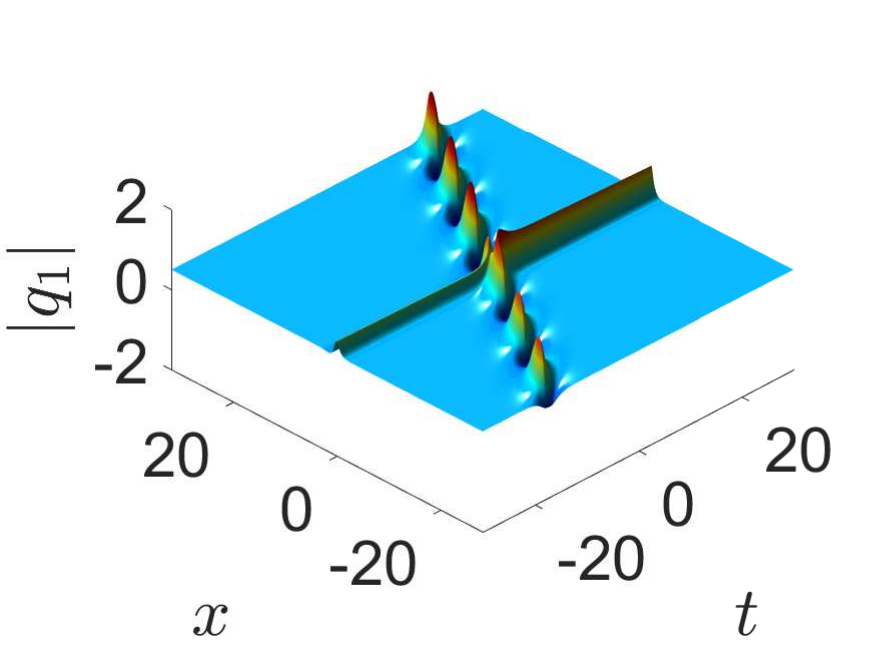} &
\includegraphics[width=0.22\textwidth]{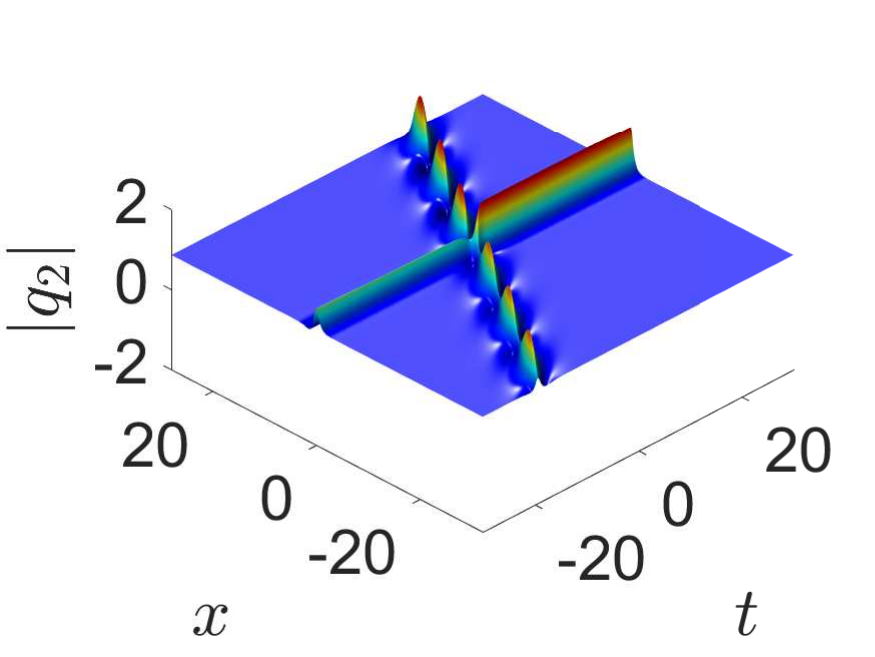} &
\includegraphics[width=0.22\textwidth]{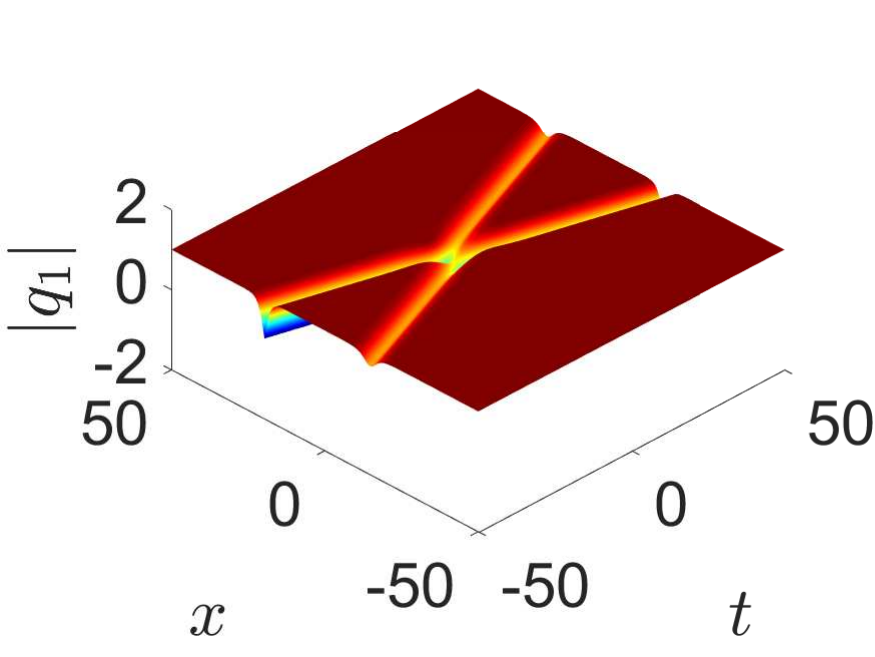} &
\includegraphics[width=0.22\textwidth]{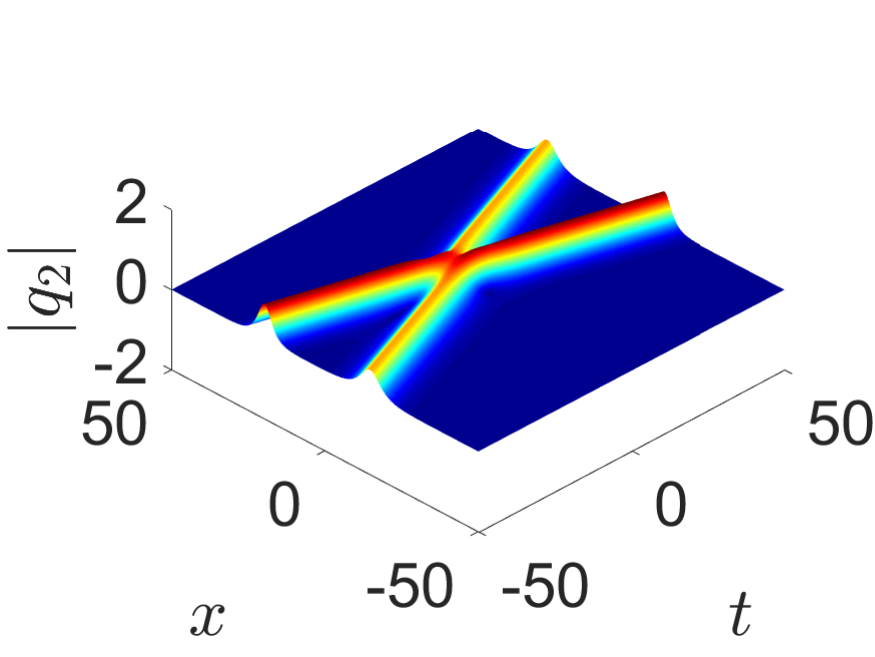} \\
		(a1) & (a2) & (b1) & (b2)
	\end{tabular}
\begin{tabular}{cccc}
\includegraphics[width=0.22\textwidth]{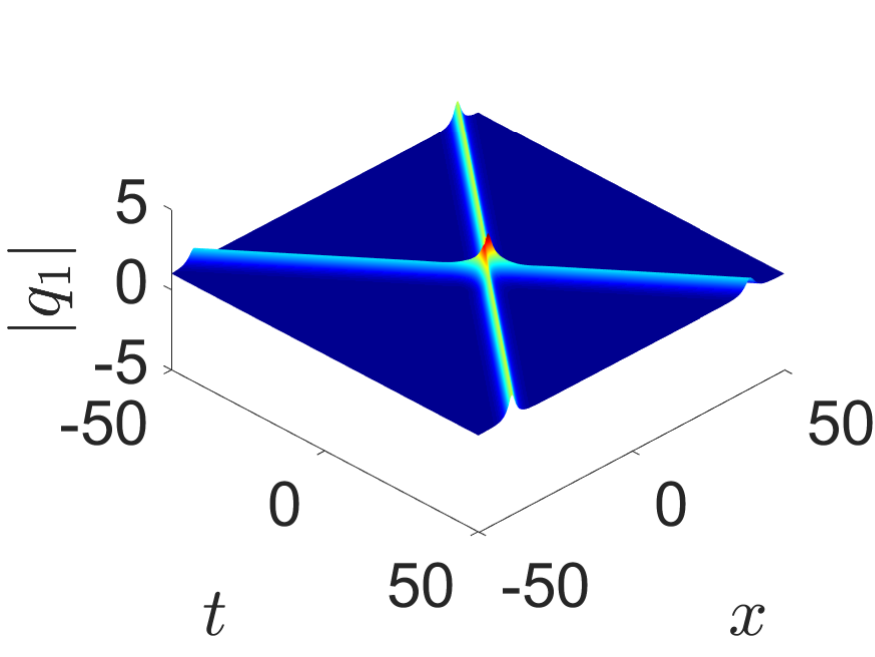} &
\includegraphics[width=0.22\textwidth]{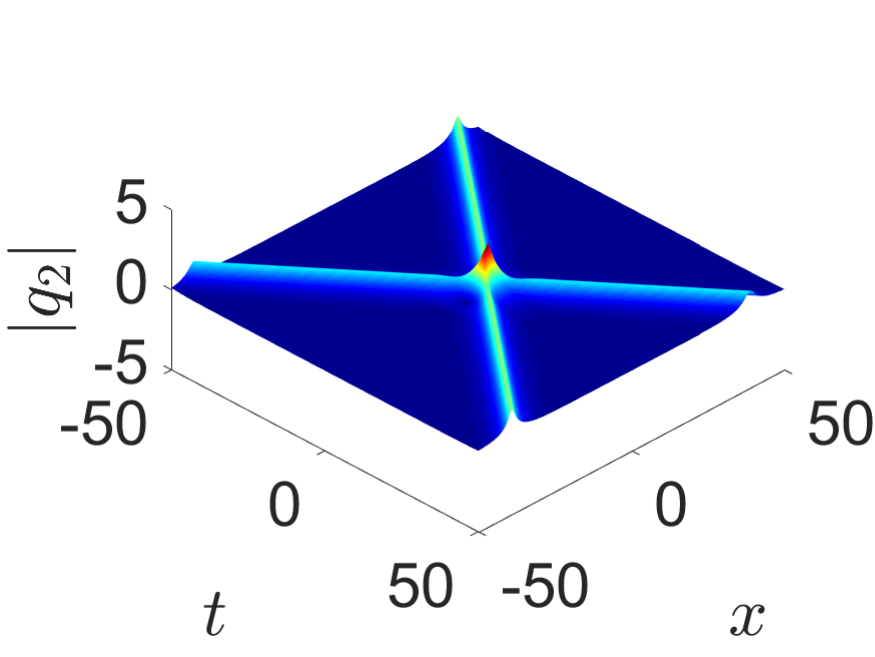} &
\includegraphics[width=0.22\textwidth]{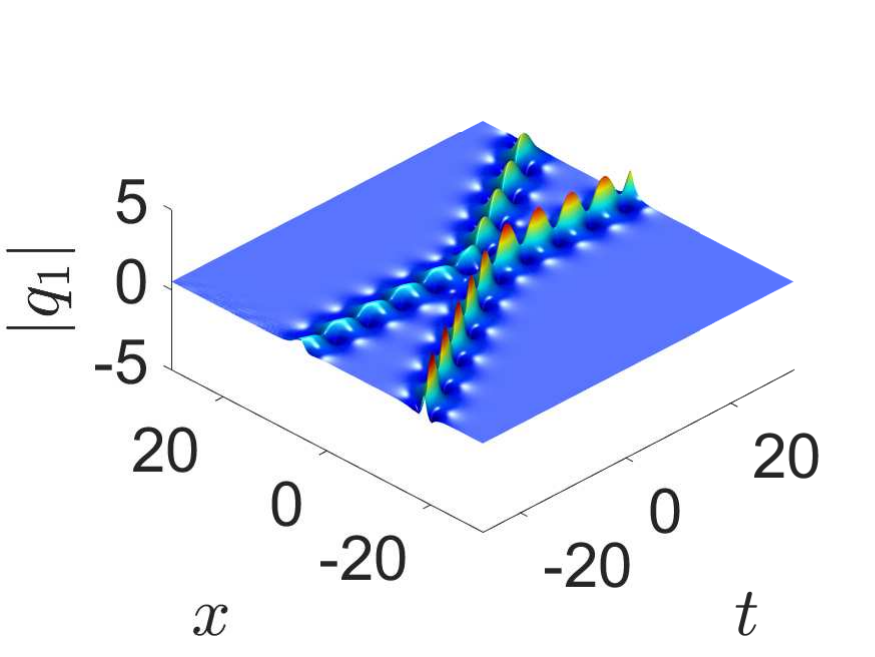} &
\includegraphics[width=0.22\textwidth]{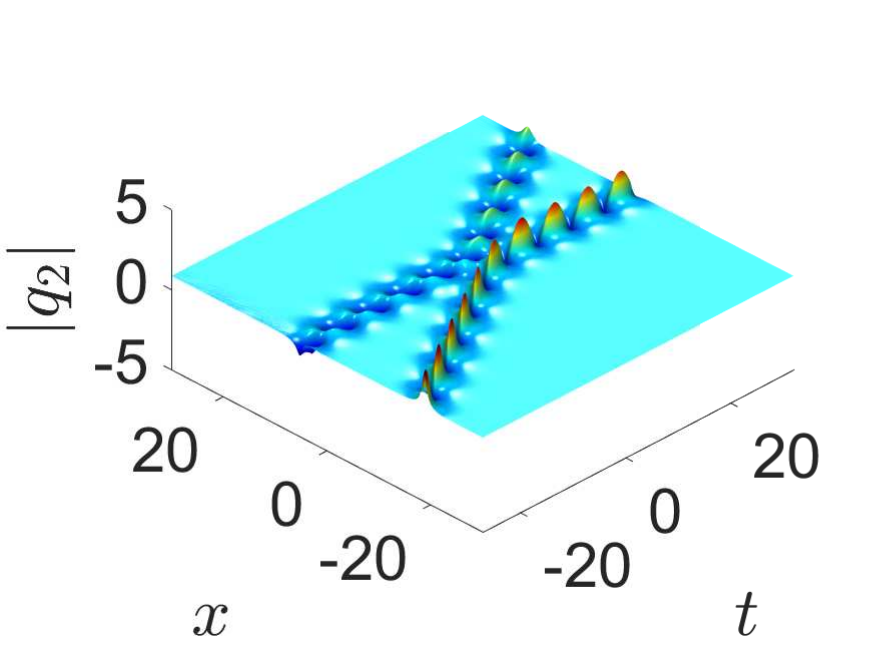} \\
		(c1) & (c2) & (d1) & (d2)
	\end{tabular}
\caption{\small (a1) and (a2): One bright-bright and one breather-breather solitons solution by taking $\mathbf{q}_{-}=(\frac{1}{2}\mathrm{e}^{-\frac{1}{10}\mathrm{i}\pi},
-\frac{\sqrt{3}}{2}\mathrm{e}^{-\frac{1}{10}\mathrm{i}\pi})^{T}$, $\sigma=10^{-3}$, $K_{6}=0.98$, $\kappa_{5}=\mathrm{e}^{\frac{1}{5}-\mathrm{i}}$, $\chi_{6}=1$, $\kappa_{6}=0.2$, $\alpha_{5}=\frac{1}{2}\pi$, $\alpha_{6}=\frac{4}{5}\pi$, $k=0$. (b1) and (b2): Two dark-bright solitons solution by taking $\mathbf{q}_{-}=(1,0)^{T}$, $\sigma=10^{-1}$, $K_{7}=K_{8}=0.5$, $\chi_{7}=1$, $\kappa_{7}=\kappa_{8}=\chi_{8}=-1$, $\alpha_{7}=\frac{1}{2}\pi$, $\alpha_{8}=\frac{3}{4}\pi$. (c1) and (c2): Two bright-bright solitons solution by taking $\mathbf{q}_{-}=(1,0)^{T}$, $\sigma=10^{-3}$, $K_{7}=K_{8}=0.5$, $\chi_{7}=1$, $\chi_{8}=0.8$, $\kappa_{7}=\mathrm{e}^{\frac{1}{5}-\frac{3}{2}\mathrm{i}}$, $\kappa_{8}=\mathrm{e}^{\frac{1}{2}+\frac{7}{10}\mathrm{i}}$, $\alpha_{7}=\frac{1}{5}\pi$, $\alpha_{8}=\frac{4}{5}\pi$. (d1) and (d2): Two breather-breather solitons solution by taking $\mathbf{q}_{-}=(\frac{1}{2}\mathrm{e}^{\frac{1}{5}\mathrm{i}\pi},
-\frac{\sqrt{3}}{2}\mathrm{e}^{\frac{1}{5}\mathrm{i}\pi})^{T}$, $\sigma=10^{-3}$, $K_{7}=K_{8}=0.98$, $\chi_{7}=0.5$, $\chi_{8}=1.5$, $\kappa_{7}=-1.5$, $\kappa_{8}=2$, $\alpha_{7}=\frac{1}{2}\pi$, $\alpha_{8}=\frac{3}{5}\pi$.
}\label{fig:5}
\end{figure}

\section{Double-pole solutions}
\label{s:Double-pole solutions}

The situation of the defocusing-defocusing coupled Hirota equations with NZBC~\eqref{1.3} when the analytical scattering coefficient has double zeros was obtained~\cite{B1}. We shall denote the pertinent solutions as the ``double-pole" solutions pertaining to the equations~\eqref{1.3}.

\subsection{Behavior of the eigenfunctions at a double pole}

\begin{proposition}\label{pro:9}
Suppose that $h_{11}(\theta_{g})=h_{11}^{\prime}(\theta_{g})=0$ and  $h_{11}^{\prime\prime}(\theta_{g})\neq 0$, with $\left|\theta_{g}\right|<q_{0}$, then exist constants $b_{g}$, $\hat{b}_{g}$, $\check{b}_{g}$, $\bar{b}_{g}$, $f_{g}$, $\hat{f}_{g}$, $\check{f}_{g}$, $\bar{f}_{g}$, $e_{g}$, $\hat{e}_{g}$, $\check{e}_{g}$ and $\bar{e}_{g}$ such that
\begin{subequations}\label{5.1}
\begin{align}
\psi_{+,1}^{\prime}(\theta_{g};x,t)&=\frac{f_{g}}{s_{33}(\theta_{g})}
\widetilde{\gamma}^{\prime}(\theta_{g})+b_{g}\widetilde{\gamma}(\theta_{g})
+e_{g}\psi_{-,3}(\theta_{g}), \\
\widetilde{\gamma}^{\prime}(\frac{q_{0}^{2}}{\theta_{g}^{*}};x,t)&=
\hat{f}_{g}\psi_{-,3}^{\prime}(\frac{q_{0}^{2}}{\theta_{g}^{*}})
+\hat{b}_{g}\psi_{-,3}(\frac{q_{0}^{2}}{\theta_{g}^{*}})
+\hat{e}_{g}\psi_{+,1}(\frac{q_{0}^{2}}{\theta_{g}^{*}}), \\
\gamma^{\prime}(\theta_{g}^{*};x,t)&=\bar{f}_{g}\psi_{-,1}^{\prime}(\theta_{g}^{*})
+\bar{b}_{g}\psi_{-,1}(\theta_{g}^{*})+\bar{e}_{g}\psi_{+,3}(\theta_{g}^{*}), \\
\psi_{+,3}^{\prime}(\frac{q_{0}^{2}}{\theta_{g}};x,t)&=
\check{f}_{g}\gamma^{\prime}(\frac{q_{0}^{2}}{\theta_{g}})
+\check{b}_{g}\gamma(\frac{q_{0}^{2}}{\theta_{g}})
+\check{e}_{g}\psi_{-,1}(\frac{q_{0}^{2}}{\theta_{g}}),
\end{align}
\end{subequations}
and corresponding modified eigenfunctions are
\begin{equation}\label{5.2}
\begin{split}
\nu_{+,1}^{\prime}(\theta_{g};x,t)&=-\mathrm{i}\delta_{1}^{\prime}(\theta_{g})
\nu_{+,1}(\theta_{g})+e_{g}\nu_{-,3}(\theta_{g})
\mathrm{e}^{-2\mathrm{i}\delta_{1}(\theta_{g})} \\
&+\left[ \left(\frac{\mathrm{i}f_{g}\delta_{2}^{\prime}(\theta_{g})}{s_{33}(\theta_{g})}+b_{g} \right) \widetilde{d}(\theta_{g})
+\frac{f_{g}}{s_{33}(\theta_{g})}\widetilde{d}^{\prime}(\theta_{g}) \right] \mathrm{e}^{\mathrm{i}[\delta_{2}(\theta_{g})-\delta_{1}(\theta_{g})]},
\end{split}
\end{equation}
\begin{equation}\label{5.3}
\begin{split}
\widetilde{d}^{\prime}(\frac{q_{0}^{2}}{\theta_{g}^{*}};x,t)&
=-\mathrm{i}\delta_{2}^{\prime}(\frac{q_{0}^{2}}{\theta_{g}^{*}})
\widetilde{d}(\frac{q_{0}^{2}}{\theta_{g}^{*}})
+\hat{e}_{g}\nu_{+,1}(\frac{q_{0}^{2}}{\theta_{g}^{*}})
\mathrm{e}^{-\mathrm{i}[\delta_{1}(\theta_{g}^{*})+\delta_{2}(\theta_{g}^{*})]} \\
&+\left[ \left( \hat{b}_{g}-\mathrm{i}\hat{f}_{g}\delta_{1}^{\prime}(\frac{q_{0}^{2}}{\theta_{g}^{*}}) \right) \nu_{-,3}(\frac{q_{0}^{2}}{\theta_{g}^{*}})
+\hat{f}_{g}\nu_{-,3}^{\prime}(\frac{q_{0}^{2}}{\theta_{g}^{*}}) \right] \mathrm{e}^{\mathrm{i}[\delta_{1}(\theta_{g}^{*})-\delta_{2}(\theta_{g}^{*})]},
\end{split}
\end{equation}
\begin{equation}\label{5.4}
\begin{split}
d^{\prime}(\theta_{g}^{*};x,t)&=-\mathrm{i}\delta_{2}^{\prime}(\theta_{g}^{*})
d(\theta_{g}^{*})+\bar{e}_{g}\nu_{+,3}(\theta_{g}^{*})
\mathrm{e}^{-\mathrm{i}[\delta_{1}(\theta_{g}^{*})+\delta_{2}(\theta_{g}^{*})]} \\
&+\left[ \left( \mathrm{i}\bar{f}_{g}\delta_{1}^{\prime}(\theta_{g}^{*})+\bar{b}_{g} \right) \nu_{-,1}(\theta_{g}^{*})
+\bar{f_{g}}\nu_{-,1}^{\prime}(\theta_{g}^{*}) \right] \mathrm{e}^{\mathrm{i}[\delta_{1}(\theta_{g}^{*})-\delta_{2}(\theta_{g}^{*})]},
\end{split}
\end{equation}
\begin{equation}\label{5.5}
\begin{split}
\nu_{+,3}^{\prime}(\frac{q_{0}^{2}}{\theta_{g}};x,t)&
=\mathrm{i}\delta_{1}^{\prime}(\frac{q_{0}^{2}}{\theta_{g}})\nu_{+,3}(\frac{q_{0}^{2}}{\theta_{g}})
+\check{e}_{g}\nu_{-,1}(\frac{q_{0}^{2}}{\theta_{g}})
\mathrm{e}^{-2\mathrm{i}\delta_{1}(\theta_{g})} \\
&+\left[ \left(
\check{b}_{g}+\mathrm{i}\check{f}_{g}\delta_{2}^{\prime}(\frac{q_{0}^{2}}{\theta_{g}}) \right) d(\frac{q_{0}^{2}}{\theta_{g}})
+\check{f}_{g}d^{\prime}(\frac{q_{0}^{2}}{\theta_{g}}) \right] \mathrm{e}^{\mathrm{i}[\delta_{2}(\theta_{g})-\delta_{1}(\theta_{g})]},
\end{split}
\end{equation}
where $f_{g}$, $\hat{f}_{g}$, $\check{f}_{g}$ and $\bar{f}_{g}$ are the same norming constants in the symmetry relationship~\eqref{3.7}, whereas $b_{g}$, $\hat{b}_{g}$, $\check{b}_{g}$, $\bar{b}_{g}$, $e_{g}$, $\hat{e}_{g}$, $\check{e}_{g}$ and $\bar{e}_{g}$ appear as a result of the double multiplicity.
\end{proposition}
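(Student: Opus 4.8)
The plan is to upgrade the simple-zero analysis behind Theorem~\ref{thm:4} to the double-zero case by differentiating the controlling determinant identity exactly once. The point of departure is~\eqref{3.2}, which gives $\det\mathbf{\Psi}^{+}(z)=\mathrm{i}\mathrm{e}^{\mathrm{i}\delta_{2}(z)}h_{11}(z)s_{33}(z)\rho(z)$. For $|\theta_{g}|<q_{0}$ one has $s_{33}(\theta_{g})\neq0$ (inherited from the Theorem~\ref{thm:4} analysis) and $\rho(\theta_{g})=1-q_{0}^{2}/\theta_{g}^{2}\neq0$, so the hypothesis $h_{11}(\theta_{g})=h_{11}'(\theta_{g})=0$, $h_{11}''(\theta_{g})\neq0$ forces $\det\mathbf{\Psi}^{+}$ to vanish to exactly second order at $\theta_{g}$: both $\det\mathbf{\Psi}^{+}(\theta_{g})=0$ and $\tfrac{d}{dz}\det\mathbf{\Psi}^{+}(\theta_{g})=0$. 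The first of these is the simple-zero condition and reproduces $\psi_{+,1}(\theta_{g})=\tfrac{f_{g}}{s_{33}(\theta_{g})}\widetilde{\gamma}(\theta_{g})$ from~\eqref{3.5}; the second is the new ingredient that will yield the derivative relations~\eqref{5.1}.

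To extract the first relation in~\eqref{5.1}, I would write $\mathbf{a}=\psi_{+,1}$, $\mathbf{b}=-\widetilde{\gamma}$, $\mathbf{c}=\psi_{-,3}$, so that $\mathbf{\Psi}^{+}=(\mathbf{a},\mathbf{b},\mathbf{c})$ and $\mathbf{a}(\theta_{g})=-\tfrac{f_{g}}{s_{33}(\theta_{g})}\mathbf{b}(\theta_{g})$. By Proposition~\ref{pro:6} $\widetilde{\gamma}(\theta_{g})\neq\mathbf{0}$, and since the leading dependence ties only the first two columns together, $\{\mathbf{b}(\theta_{g}),\mathbf{c}(\theta_{g})\}$ is linearly independent. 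Expanding $\tfrac{d}{dz}\det(\mathbf{a},\mathbf{b},\mathbf{c})$ and using $\mathbf{a}(\theta_{g})\parallel\mathbf{b}(\theta_{g})$ to annihilate the term $\det(\mathbf{a},\mathbf{b},\mathbf{c}')$, the condition $\tfrac{d}{dz}\det\mathbf{\Psi}^{+}(\theta_{g})=0$ collapses to $\det\bigl(\mathbf{a}'+\tfrac{f_{g}}{s_{33}}\mathbf{b}',\,\mathbf{b},\,\mathbf{c}\bigr)(\theta_{g})=0$. This says that $\mathbf{a}'(\theta_{g})+\tfrac{f_{g}}{s_{33}(\theta_{g})}\mathbf{b}'(\theta_{g})$ lies in $\operatorname{span}\{\mathbf{b}(\theta_{g}),\mathbf{c}(\theta_{g})\}$; writing that vector as $-b_{g}\mathbf{b}(\theta_{g})+e_{g}\mathbf{c}(\theta_{g})$ defines the constants $b_{g},e_{g}$ and reproduces the first relation in~\eqref{5.1}, with the coefficient of $\widetilde{\gamma}'$ forced to be the same $f_{g}/s_{33}(\theta_{g})$ occurring at zeroth order.

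Next I would propagate this identity to the three symmetric points $\theta_{g}^{*}$, $q_{0}^{2}/\theta_{g}$, $q_{0}^{2}/\theta_{g}^{*}$ by invoking both involutions. Differentiating the first-symmetry relations~\eqref{2.46}--\eqref{2.53} and the second-symmetry relations~\eqref{2.55}--\eqref{2.60} once, so as to transport $\widetilde{\gamma}'$ and the $\psi'_{\pm,j}$ across the maps $z\mapsto z^{*}$ and $z\mapsto q_{0}^{2}/z$, produces the remaining relations in~\eqref{5.1}. The leading coefficients $f_{g},\hat{f}_{g},\check{f}_{g},\bar{f}_{g}$ are inherited unchanged from the simple-zero symmetry relations~\eqref{3.7} and Corollary~\ref{cor:6}, whereas the new constants $\hat{b}_{g},\check{b}_{g},\bar{b}_{g},\hat{e}_{g},\check{e}_{g},\bar{e}_{g}$ collect both the span-coefficients at the image points and the extra terms created by differentiating the $z$-dependent conjugation factors $\mathrm{i}q_{0}/z$ in $\mathbf{J}_{3}(z)$.

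Finally, I would pass from the Jost relations~\eqref{5.1} to the modified-eigenfunction relations~\eqref{5.2}--\eqref{5.5} by substituting the definitions $\nu_{\pm}=\psi_{\pm}\mathbf{e}^{-\mathrm{i}\mathbf{\Delta}}$ from~\eqref{2.17} and $d=\gamma\mathbf{e}^{-\mathrm{i}\delta_{2}}$, $\widetilde{d}=\widetilde{\gamma}\mathbf{e}^{-\mathrm{i}\delta_{2}}$ from~\eqref{2.44}, then applying the product rule; the phase derivatives $\delta_{1}'$ and $\delta_{2}'$ emerge precisely as the $-\mathrm{i}\delta_{1}'$ and $-\mathrm{i}\delta_{2}'$ prefactors displayed there. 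The main obstacle is the second step: one must check that the double-zero condition is genuinely equivalent to that single scalar determinant equation, and that the scalar equation upgrades to the full vector identity in~\eqref{5.1} with the derivative coefficient rigidly pinned to $f_{g}/s_{33}(\theta_{g})$. Once that linear-algebraic crux is settled, the symmetry propagation of the third step, though lengthy, is routine bookkeeping from the already-established derivatives of~\eqref{2.46}--\eqref{2.60}.
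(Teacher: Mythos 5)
Your strategy is sound, and it is essentially the intended one: the paper states Proposition~\ref{pro:9} without any inline proof, deferring to the double-zero analysis of the defocusing Manakov system in~\cite{B1}, and that analysis is of exactly your type --- differentiate once an identity whose right-hand side vanishes to second order at $\theta_{g}$, use the zeroth-order proportionality of Theorem~\ref{thm:4}, and convert the resulting scalar equation into the span statement behind~\eqref{5.1}. Your multilinearity computation reducing $\tfrac{d}{dz}\det\mathbf{\Psi}^{+}(\theta_{g})=0$ to $\det\bigl(\mathbf{a}'+\tfrac{f_{g}}{s_{33}}\mathbf{b}',\mathbf{b},\mathbf{c}\bigr)(\theta_{g})=0$ is correct (and note only this forward implication is needed, not the ``equivalence'' you worry about), and the passage to~\eqref{5.2}--\eqref{5.5} by the product rule is indeed routine. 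Two steps, however, are genuinely incomplete as written.

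First, the linear independence of $\widetilde{\gamma}(\theta_{g})$ and $\psi_{-,3}(\theta_{g})$ is asserted with a non-argument: that the zeroth-order dependence ``ties only the first two columns together'' does not exclude $\psi_{-,3}(\theta_{g})$ from also being proportional to $\widetilde{\gamma}(\theta_{g})$, and without independence the vanishing of $\det\bigl(\mathbf{a}'+\tfrac{f_{g}}{s_{33}}\mathbf{b}',\mathbf{b},\mathbf{c}\bigr)$ does not place $\mathbf{a}'+\tfrac{f_{g}}{s_{33}}\mathbf{b}'$ in $\operatorname{span}\{\mathbf{b},\mathbf{c}\}$. The repair is available in the paper: the first relation of~\eqref{2.47} gives $\widetilde{\gamma}(\theta_{g})\times\psi_{-,3}(\theta_{g})=-\mathrm{i}\,s_{33}(\theta_{g})\,\mathrm{e}^{\mathrm{i}\delta_{2}(\theta_{g})}\mathbf{J}\,\psi_{-,1}^{*}(\theta_{g}^{*})$, which is nonzero because $s_{33}(\theta_{g})\neq0$ and the Jost column $\psi_{-,1}(\theta_{g}^{*};x,t)$, being a solution of the linear system~\eqref{2.1} with nonvanishing limit as $x\to-\infty$, never vanishes.

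Second, and more substantively, your argument is purely pointwise in $(x,t)$: it produces coefficients $b_{g}(x,t)$, $e_{g}(x,t)$, whereas the Proposition asserts they are \emph{constants}. Constancy is not linear algebra; it holds because $\mathbf{w}=\psi_{+,1}'(\theta_{g})-\tfrac{f_{g}}{s_{33}(\theta_{g})}\widetilde{\gamma}'(\theta_{g})$ is itself a solution of the Lax pair at $z=\theta_{g}$: differentiating $\partial_{x}\psi_{+,1}=\mathbf{X}\psi_{+,1}$ in $z$ gives $\partial_{x}\psi_{+,1}'=\mathbf{X}\psi_{+,1}'+(\partial_{z}\mathbf{X})\psi_{+,1}$, likewise for $\widetilde{\gamma}'$, and the inhomogeneous terms cancel precisely because of the zeroth-order relation $\psi_{+,1}(\theta_{g})=\tfrac{f_{g}}{s_{33}(\theta_{g})}\widetilde{\gamma}(\theta_{g})$. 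Once $\mathbf{w}$, $\widetilde{\gamma}(\theta_{g})$ and $\psi_{-,3}(\theta_{g})$ all solve the same first-order linear system and the latter two are independent, substituting $\mathbf{w}=b_{g}\widetilde{\gamma}+e_{g}\psi_{-,3}$ into the system forces $\partial_{x}b_{g}=\partial_{t}b_{g}=\partial_{x}e_{g}=\partial_{t}e_{g}=0$. The same two repairs are needed at the three image points; there, rather than differentiating the symmetry relations~\eqref{2.46}--\eqref{2.60} (workable but messy, since the conjugation factors in $\mathbf{J}_{3}(z)$ are $z$-dependent), it is cleaner to rerun your determinant argument verbatim on $\det\mathbf{\Psi}^{-}$ at $\theta_{g}^{*}$ and at $q_{0}^{2}/\theta_{g}$, and on $\det\mathbf{\Psi}^{+}$ at $q_{0}^{2}/\theta_{g}^{*}$, where $s_{11}$, $h_{33}$ and $s_{33}$ respectively carry the double zero by~\eqref{2.50} and~\eqref{2.59}.
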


\begin{proposition}\label{pro:10}
Assuming that $y(z)$ and $h(z)$ are analytic in $\mathbb{D}^{+}$ and $\theta_{g}$ is a double zero of $h(z)$, then expanding $y(z)$ and $h(z)$ as Taylor expansions at $z=\theta_{g}$
\begin{equation}\label{5.6}
\begin{split}
\frac{y(z)}{h(z)}=\left[ \frac{2y^{\prime}(\theta_{g})}{h^{\prime\prime}(\theta_{g})}- \frac{2y(\theta_{g})h^{\prime\prime\prime}(\theta_{g})}{3[h^{\prime\prime}(\theta_{g})]^{2}} \right]\frac{1}{(z-\theta_{g})}
+\frac{2y(\theta_{g})}{h^{\prime\prime}(\theta_{g})}\frac{1}{(z-\theta_{g})^{2}}+\cdots.
\end{split}
\end{equation}
Therefore, it can be seen that the coefficients of $(z-\theta_{g})^{-1}$ and $(z-\theta_{g})^{-2}$ in the series expansion of $y(z)/h(z)$ near $z=\theta_{g}$ are as follows
\begin{equation}\label{5.7}
\begin{split}
\mathop{\rm{Res}}_{z=\theta_{g}}\left[ \frac{y(z)}{h(z)} \right]=
\frac{2y^{\prime}(\theta_{g})}{h^{\prime\prime}(\theta_{g})}
-\frac{2y(\theta_{g})h^{\prime\prime\prime}(\theta_{g})}{3[h^{\prime\prime}(\theta_{g})]^{2}}, \quad
\mathop{\rm{Y_{-2}}}_{z=\theta_{g}}\left[ \frac{y(z)}{h(z)} \right]=
\frac{2y(\theta_{g})}{h^{\prime\prime}(\theta_{g})}.
\end{split}
\end{equation}
\end{proposition}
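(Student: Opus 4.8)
The plan is to obtain the claimed Laurent expansion by dividing the Taylor series of $y(z)$ by that of $h(z)$ at $z=\theta_{g}$. Writing $w=z-\theta_{g}$ and abbreviating all derivatives by their values at $\theta_{g}$, the double-zero hypothesis $h(\theta_{g})=h^{\prime}(\theta_{g})=0$ with $h^{\prime\prime}(\theta_{g})\neq 0$ lets me factor out $w^{2}$:
\begin{equation*}
h(z)=\frac{h^{\prime\prime}(\theta_{g})}{2}w^{2}+\frac{h^{\prime\prime\prime}(\theta_{g})}{6}w^{3}+O(w^{4})
=\frac{h^{\prime\prime}(\theta_{g})}{2}w^{2}\Bigl[1+\frac{h^{\prime\prime\prime}(\theta_{g})}{3h^{\prime\prime}(\theta_{g})}w+O(w^{2})\Bigr],
\end{equation*}
where the constant $h^{\prime\prime\prime}/(3h^{\prime\prime})$ is precisely the ratio of the cubic to quadratic Taylor coefficients. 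Analyticity of $y$ in $\mathbb{D}^{+}$ gives $y(z)=y(\theta_{g})+y^{\prime}(\theta_{g})w+O(w^{2})$, so the quotient has a pole of order at most two at $\theta_{g}$, consistent with the form of~\eqref{5.6}.

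Next I would invert the bracketed analytic factor via the geometric series $(1+\epsilon w+\cdots)^{-1}=1-\epsilon w+O(w^{2})$ with $\epsilon=h^{\prime\prime\prime}(\theta_{g})/(3h^{\prime\prime}(\theta_{g}))$, giving
\begin{equation*}
\frac{1}{h(z)}=\frac{2}{h^{\prime\prime}(\theta_{g})\,w^{2}}\Bigl[1-\frac{h^{\prime\prime\prime}(\theta_{g})}{3h^{\prime\prime}(\theta_{g})}w+O(w^{2})\Bigr].
\end{equation*}
Multiplying by the Taylor series of $y$ and collecting like powers of $w$ yields
\begin{equation*}
\frac{y(z)}{h(z)}=\frac{2y(\theta_{g})}{h^{\prime\prime}(\theta_{g})}\frac{1}{w^{2}}
+\Bigl[\frac{2y^{\prime}(\theta_{g})}{h^{\prime\prime}(\theta_{g})}-\frac{2y(\theta_{g})h^{\prime\prime\prime}(\theta_{g})}{3[h^{\prime\prime}(\theta_{g})]^{2}}\Bigr]\frac{1}{w}+O(1),
\end{equation*}
which is exactly~\eqref{5.6} once higher-order terms are absorbed into the $\cdots$. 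Reading off the coefficient of $(z-\theta_{g})^{-1}$ as the residue and the coefficient of $(z-\theta_{g})^{-2}$ as $Y_{-2}$ then produces~\eqref{5.7} immediately.

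The argument is elementary and presents no structural obstacle; the only point that demands care is the bookkeeping in the geometric inversion, namely ensuring that the cubic Taylor coefficient $h^{\prime\prime\prime}(\theta_{g})$ is propagated with the correct factor of $\tfrac{1}{3}$ so that the residue acquires the term $-2y(\theta_{g})h^{\prime\prime\prime}(\theta_{g})/\bigl(3[h^{\prime\prime}(\theta_{g})]^{2}\bigr)$ rather than a mistaken coefficient. Keeping every expansion to one order beyond the leading term is sufficient, since only the $w^{-2}$ and $w^{-1}$ coefficients are required.
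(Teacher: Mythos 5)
Your proposal is correct: the expansion of $h$ with the factored form $\tfrac{h^{\prime\prime}(\theta_{g})}{2}w^{2}\bigl[1+\tfrac{h^{\prime\prime\prime}(\theta_{g})}{3h^{\prime\prime}(\theta_{g})}w+O(w^{2})\bigr]$, the geometric-series inversion, and the collection of the $w^{-2}$ and $w^{-1}$ coefficients reproduce \eqref{5.6} and \eqref{5.7} exactly. This is the same route the paper takes — the proposition is justified there precisely by dividing the Taylor expansions of $y$ and $h$ at $z=\theta_{g}$, with no additional ingredients.
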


\begin{corollary}\label{cor:9}
The generalization of the negative second power coefficients and negative first power coefficients will be obtained.
\begin{subequations}\label{5.8}
\begin{align}
\mathop{\rm{Y_{-2}}}_{z=\theta_{g}}\left[ \frac{\nu_{+,1}(z)}{h_{11}(z)} \right]&=
W_{g}\widetilde{d}(\theta_{g}), \quad
\mathop{\rm{Y_{-2}}}_{z=\theta_{g}^{*}}\left[ \frac{d(z)}{s_{11}(z)} \right]=
\bar{W}_{g}\nu_{-,1}(\theta_{g}^{*}),  \\
\mathop{\rm{Y_{-2}}}_{z=q_{0}^{2}/\theta_{g}}\left[ \frac{\nu_{+,3}(z)}{h_{33}(z)} \right]&=
\check{W}_{g}d(\frac{q_{0}^{2}}{\theta_{g}}), \quad
\mathop{\rm{Y_{-2}}}_{z=q_{0}^{2}/\theta_{g}^{*}}\left[ -\frac{\widetilde{d}(z)}{s_{33}(z)} \right]=-\hat{W}_{g}\nu_{-,3}(\frac{q_{0}^{2}}{\theta_{g}^{*}}),
\end{align}
\end{subequations}
\begin{equation}\label{5.9}
\begin{split}
\mathop{\rm{Res}}_{z=\theta_{g}}\left[ \frac{\nu_{+,1}(z)}{h_{11}(z)} \right]&=
W_{g}\widetilde{d}^{\prime}(\theta_{g})+D_{g}\nu_{-,3}(\theta_{g})
+\left[ B_{g}-\mathrm{i}x-\mathrm{i}t \left( 2\theta_{g}+3\sigma(q_{0}^{2}+\theta_{g}^{2}) \right) \right]W_{g}\widetilde{d}(\theta_{g}),
\end{split}
\end{equation}
\begin{equation}\label{5.10}
\begin{split}
\mathop{\rm{Res}}_{z=\theta_{g}^{*}}\left[ \frac{d(z)}{s_{11}(z)} \right]&=
\bar{W}_{g}\nu_{-,1}^{\prime}(\theta_{g}^{*})+\bar{D}_{g}\nu_{+,3}(\theta_{g}^{*})
+\left[ \bar{B}_{g}+\mathrm{i}x+\mathrm{i}t \left( 2\theta_{g}^{*}
+3\sigma(q_{0}^{2}+(\theta_{g}^{*})^{2}) \right) \right]\bar{W}_{g}\nu_{-,1}(\theta_{g}^{*}),
\end{split}
\end{equation}
\begin{equation}\label{5.11}
\begin{split}
\mathop{\rm{Res}}_{z=q_{0}^{2}/\theta_{g}}\left[ \frac{\nu_{+,3}(z)}{h_{33}(z)} \right]&=
\check{W}_{g}d^{\prime}(\frac{q_{0}^{2}}{\theta_{g}})
+\check{D}_{g}\nu_{-,1}(\frac{q_{0}^{2}}{\theta_{g}})
+\left[ \check{B}_{g}+\frac{\mathrm{i}\theta_{g}^{2}}{q_{0}^{2}} \left(x+ t(2\theta_{g}+3\sigma(q_{0}^{2}+\theta_{g}^{2})) \right) \right]\check{W}_{g}d(\frac{q_{0}^{2}}{\theta_{g}}),
\end{split}
\end{equation}
\begin{equation}\label{5.12}
\begin{split}
\mathop{\rm{Res}}_{z=q_{0}^{2}/\theta_{g}^{*}}\left[ -\frac{\widetilde{d}(z)}{s_{33}(z)} \right]&=\left[\frac{\mathrm{i}(\theta_{g}^{*})^{2}}{q_{0}^{2}} \left(x+ t(2\theta_{g}^{*}+3\sigma(q_{0}^{2}+(\theta_{g}^{*})^{2})) \right) -\hat{B}_{g} \right]\hat{W}_{g}\nu_{-,3}(\frac{q_{0}^{2}}{\theta_{g}^{*}})
-\hat{W}_{g}\nu_{-,3}^{\prime}(\frac{q_{0}^{2}}{\theta_{g}^{*}}) \\
&-\hat{D}_{g}\nu_{+,1}(\frac{q_{0}^{2}}{\theta_{g}^{*}}),
\end{split}
\end{equation}
where
\begin{subequations}\label{5.13}
\begin{align}
W_{g}(x,t)&=\frac{2f_{g}\mathrm{e}^{\mathrm{i}[\delta_{2}(\theta_{g})
-\delta_{1}(\theta_{g})]}}{s_{33}(\theta_{g})h_{11}^{\prime\prime}(\theta_{g})}, \quad
B_{g}=\frac{b_{g}}{f_{g}}s_{33}(\theta_{g})
-\frac{h_{11}^{\prime\prime\prime}(\theta_{g})}{3h_{11}^{\prime\prime}(\theta_{g})},  \\
\bar{W}_{g}(x,t)&=\frac{2\bar{f}_{g}\mathrm{e}^{\mathrm{i}[\delta_{1}(\theta_{g}^{*})
-\delta_{2}(\theta_{g}^{*})]}}{s_{11}^{\prime\prime}(\theta_{g}^{*})}, \quad
\bar{D}_{g}(x,t)=\frac{2\bar{e}_{g}\mathrm{e}^{-\mathrm{i}[\delta_{1}(\theta_{g}^{*})
+\delta_{2}(\theta_{g}^{*})]}}{s_{11}^{\prime\prime}(\theta_{g}^{*})},  \\
\check{W}_{g}(x,t)&=\frac{2\check{f}_{g}\mathrm{e}^{\mathrm{i}[\delta_{2}(\theta_{g})
-\delta_{1}(\theta_{g})]}}{h_{33}^{\prime\prime}(q_{0}^{2}/\theta_{g})}, \quad
\check{D}_{g}(x,t)=\frac{2\check{e}_{g}\mathrm{e}^{-2\mathrm{i}
\delta_{1}(\theta_{g})}}{h_{33}^{\prime\prime}(q_{0}^{2}/\theta_{g})}, \quad
\bar{B}_{g}=\frac{\bar{b}_{g}}{\bar{f}_{g}}
-\frac{s_{11}^{\prime\prime\prime}(\theta_{g}^{*})}{3s_{11}^{\prime\prime}(\theta_{g}^{*})}, \\
\hat{W}_{g}(x,t)&=\frac{2\hat{f}_{g}\mathrm{e}^{\mathrm{i}[\delta_{1}(\theta_{g}^{*})
-\delta_{2}(\theta_{g}^{*})]}}{s_{33}^{\prime\prime}(q_{0}^{2}/\theta_{g}^{*})}, \quad
\hat{D}_{g}(x,t)=\frac{2\hat{e}_{g}\mathrm{e}^{-\mathrm{i}[\delta_{1}(\theta_{g}^{*})
+\delta_{2}(\theta_{g}^{*})]}}{s_{33}^{\prime\prime}(q_{0}^{2}/\theta_{g}^{*})}, \\
D_{g}(x,t)&=\frac{2e_{g}
\mathrm{e}^{-2\mathrm{i}\delta_{1}(\theta_{g})}}{h_{11}^{\prime\prime}(\theta_{g})}, \quad
\hat{B}_{g}=\frac{\hat{b}_{g}}{\hat{f}_{g}}-\frac{s_{33}^{\prime\prime\prime}(q_{0}^{2}
/\theta_{g}^{*})}{3s_{33}^{\prime\prime}(q_{0}^{2}/\theta_{g}^{*})}, \quad
\check{B}_{g}=\frac{\check{b}_{g}}{\check{f}_{g}}-\frac{h_{33}^{\prime\prime\prime}(q_{0}^{2}
/\theta_{g})}{3h_{33}^{\prime\prime}(q_{0}^{2}/\theta_{g})}.
\end{align}
\end{subequations}
\end{corollary}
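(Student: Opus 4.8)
The plan is to treat equations~\eqref{5.8} and~\eqref{5.9}--\eqref{5.12} as four parallel instances of the abstract Laurent expansion in Proposition~\ref{pro:10}, with the four choices $(y,h)=(\nu_{+,1},h_{11})$, $(d,s_{11})$, $(\nu_{+,3},h_{33})$ and $(-\widetilde{d},s_{33})$ evaluated at the respective double zeros $\theta_{g}$, $\theta_{g}^{*}$, $q_{0}^{2}/\theta_{g}$ and $q_{0}^{2}/\theta_{g}^{*}$. For each pair I would first read off the coefficient of $(z-\theta)^{-2}$ from the second identity in~\eqref{5.7}, namely $2y(\theta)/h''(\theta)$, and substitute the leading-order proportionalities that hold at a double zero (the non-derivative parts of Proposition~\ref{pro:9}, analogues of Theorem~\ref{thm:4}): e.g. $\psi_{+,1}(\theta_{g})=\frac{f_{g}}{s_{33}(\theta_{g})}\widetilde{\gamma}(\theta_{g})$, which after passing to the modified eigenfunctions via~\eqref{2.17} and~\eqref{2.44} becomes $\nu_{+,1}(\theta_{g})=\frac{f_{g}}{s_{33}(\theta_{g})}\widetilde{d}(\theta_{g})\mathrm{e}^{\mathrm{i}[\delta_{2}(\theta_{g})-\delta_{1}(\theta_{g})]}$. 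Inserting this into $2\nu_{+,1}(\theta_{g})/h_{11}''(\theta_{g})$ reproduces $W_{g}\widetilde{d}(\theta_{g})$ with $W_{g}$ as defined in~\eqref{5.13}, and the three remaining $\operatorname{Y}_{-2}$-coefficients in~\eqref{5.8} follow identically.

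For the residues~\eqref{5.9}--\eqref{5.12} I would use the first identity in~\eqref{5.7}, $\operatorname{Res}=\frac{2y'(\theta)}{h''(\theta)}-\frac{2y(\theta)h'''(\theta)}{3[h''(\theta)]^{2}}$, and substitute the derivative formulas~\eqref{5.2}--\eqref{5.5} for $y'(\theta)$. Grouping the outcome by eigenfunction, the $\widetilde{d}'$-term (resp. $\nu_{-,1}'$, $d'$, $\nu_{-,3}'$) produces the $W_{g}$ (resp. $\bar{W}_{g},\check{W}_{g},\hat{W}_{g}$) contribution, the transversal eigenfunction term produces $D_{g}$ (resp. $\bar{D}_{g},\check{D}_{g},\hat{D}_{g}$), and the remaining coefficient of $\widetilde{d}(\theta_{g})$ collects the $b_{g}$-piece together with the $-\frac{h_{11}'''}{3h_{11}''}$-piece coming from the second term to form $B_{g}$, plus the $\pm\mathrm{i}\delta_{1}'$ and $\pm\mathrm{i}\delta_{2}'$ contributions. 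The proof then reduces to matching these collected coefficients with~\eqref{5.13} and evaluating the $\delta$-derivatives explicitly.

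The computational heart, and the step I expect to require the most care, is reducing the $\delta$-derivative factors to the polynomial form displayed in~\eqref{5.9}--\eqref{5.12}. Differentiating~\eqref{2.16} and using~\eqref{2.9} together with $\lambda^{2}=k^{2}-q_{0}^{2}$, I would establish the clean identity $(\delta_{2}-\delta_{1})'(z)=-x-t[2z+3\sigma(z^{2}+q_{0}^{2})]$; the only nontrivial simplification is $2k^{3}+3\lambda k^{2}-\lambda^{3}=\frac{z}{2}(z^{2}+3q_{0}^{2})$, obtained by eliminating $\lambda^{2}$. Combined with the coefficient of $\widetilde{d}(\theta_{g})$ this yields $B_{g}+\mathrm{i}(\delta_{2}-\delta_{1})'(\theta_{g})=B_{g}-\mathrm{i}x-\mathrm{i}t[2\theta_{g}+3\sigma(q_{0}^{2}+\theta_{g}^{2})]$, matching~\eqref{5.9}.

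Finally, the three remaining residues follow the same scheme, with the sign flips and the prefactor $\theta_{g}^{2}/q_{0}^{2}$ dictated by the second symmetry. Since $k(q_{0}^{2}/z)=k(z)$ and $\lambda(q_{0}^{2}/z)=-\lambda(z)$, one gets $\delta_{1}(q_{0}^{2}/z)=-\delta_{1}(z)$ and $\delta_{2}(q_{0}^{2}/z)=\delta_{2}(z)$, hence $\delta_{1}'(q_{0}^{2}/\theta_{g})=\frac{\theta_{g}^{2}}{q_{0}^{2}}\delta_{1}'(\theta_{g})$ and $\delta_{2}'(q_{0}^{2}/\theta_{g})=-\frac{\theta_{g}^{2}}{q_{0}^{2}}\delta_{2}'(\theta_{g})$; these convert $(\delta_{2}-\delta_{1})'(\theta_{g})$ into the factor $\frac{\mathrm{i}\theta_{g}^{2}}{q_{0}^{2}}[x+t(2\theta_{g}+3\sigma(q_{0}^{2}+\theta_{g}^{2}))]$ appearing in~\eqref{5.11}, and analogously for~\eqref{5.10} and~\eqref{5.12}. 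The obstacle is essentially organizational: keeping the exponential factors $\mathrm{e}^{\pm\mathrm{i}\delta_{j}}$, the norming-constant symmetries~\eqref{3.7}, and the four evaluation points mutually consistent, since a single sign error propagates through all four residue conditions.
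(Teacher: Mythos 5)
Your proposal is correct and follows exactly the route the paper intends: Corollary~\ref{cor:9} is the direct combination of the Laurent-expansion formulas~\eqref{5.7} of Proposition~\ref{pro:10} with the double-zero relations~\eqref{5.1}--\eqref{5.5} of Proposition~\ref{pro:9}, converting $\psi$, $\gamma$, $\widetilde{\gamma}$ to the modified quantities via~\eqref{2.17} and~\eqref{2.44} and matching coefficients against~\eqref{5.13}. Your key computational checks — the identity $(\delta_{2}-\delta_{1})'(z)=-x-t[2z+3\sigma(z^{2}+q_{0}^{2})]$ via $2k^{3}+3\lambda k^{2}-\lambda^{3}=\tfrac{z}{2}(z^{2}+3q_{0}^{2})$, and the reflected-point relations $\delta_{1}'(q_{0}^{2}/z)=\tfrac{z^{2}}{q_{0}^{2}}\delta_{1}'(z)$, $\delta_{2}'(q_{0}^{2}/z)=-\tfrac{z^{2}}{q_{0}^{2}}\delta_{2}'(z)$ — are all accurate and reproduce the stated coefficients in~\eqref{5.9}--\eqref{5.12}.
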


\subsection{Symmetries with double poles}

\begin{proposition}\label{pro:11}
Suppose that $h_{11}(\theta_{g})=h_{11}^{\prime}(\theta_{g})=0$ and $h_{11}^{\prime \prime}(\theta_{g})\neq 0$, with $\left| \theta_{g} \right|<q_{0}$, then analytic scattering coefficients has the following symmetry relationship:
\begin{subequations}\label{5.14}
\begin{align}
h_{11}^{\prime\prime}(\theta_{g})
&=\left.\left[s_{11}^{\prime\prime}(z)\right]^{*}\right|_{z=\theta_{g}^{*}}, \quad s_{33}^{\prime\prime}(\theta_{g})
=\left.\left[h_{33}^{\prime\prime}(z)\right]^{*}\right|_{z=\theta_{g}^{*}}, \quad
s_{33}^{\prime\prime}(\frac{q_{0}^{2}}{\theta_{g}^{*}})
=\left.\frac{(\theta_{g}^{*})^{4}}{q_{0}^{4}}s_{11}^{\prime\prime}(z)\right|_{z=\theta_{g}^{*}}, \\
h_{11}^{\prime\prime\prime}(\theta_{g})
&=\left.\left[s_{11}^{\prime\prime\prime}(z)\right]^{*}\right|_{z=\theta_{g}^{*}}, \quad s_{33}^{\prime\prime\prime}(\theta_{g})
=\left.\left[h_{33}^{\prime\prime\prime}(z)\right]^{*}\right|_{z=\theta_{g}^{*}}, \quad
h_{11}^{\prime\prime}(\theta_{g})
=\left.\frac{q_{0}^{4}}{\theta_{g}^{4}}h_{33}^{\prime\prime}(z)\right|_{z=q_{0}^{2}/\theta_{g}}, \\
s_{33}^{\prime\prime\prime}(\frac{q_{0}^{2}}{\theta_{g}^{*}})
&=-\left.\frac{(\theta_{g}^{*})^{5}}{q_{0}^{6}}\left[6s_{11}^{\prime\prime}(z)+\theta_{g}^{*} s_{11}^{\prime\prime\prime}(z)\right]\right|_{z=\theta_{g}^{*}}, \quad
h_{11}^{\prime\prime\prime}(\theta_{g})
=-\left.\frac{q_{0}^{4}}{\theta_{g}^{5}}\left[6h_{33}^{\prime\prime}(z)+\frac{q_{0}^{2}}{\theta_{g}} h_{33}^{\prime\prime\prime}(z)\right]\right|_{z=q_{0}^{2}/\theta_{g}}.
\end{align}
\end{subequations}
The eigenfunctions has the following symmetry relationship:
\begin{subequations}\label{5.15}
\begin{align}
\psi_{-,1}^{\prime}(\theta_{g}^{*})&=-\frac{\mathrm{i}q_{0}}{(\theta_{g}^{*})^{2}}
\left[ \psi_{-,3}(\frac{q_{0}^{2}}{\theta_{g}^{*}})
+\frac{q_{0}^{2}}{\theta_{g}^{*}}\psi_{-,3}^{\prime}(\frac{q_{0}^{2}}{\theta_{g}^{*}}) \right], \quad
\psi_{+,1}^{\prime}(\theta_{g})=-\frac{\mathrm{i}q_{0}}{\theta_{g}^{2}}
\left[ \psi_{+,3}(\frac{q_{0}^{2}}{\theta_{g}})
+\frac{q_{0}^{2}}{\theta_{g}}\psi_{+,3}^{\prime}(\frac{q_{0}^{2}}{\theta_{g}}) \right],
 \\
\psi_{-,3}^{\prime}(\theta_{g})&=\frac{\mathrm{i}q_{0}}{\theta_{g}^{2}}
\left[ \psi_{-,1}(\frac{q_{0}^{2}}{\theta_{g}})
+\frac{q_{0}^{2}}{\theta_{g}}\psi_{-,1}^{\prime}(\frac{q_{0}^{2}}{\theta_{g}}) \right], \quad
\psi_{+,3}^{\prime}(\theta_{g}^{*})=\frac{\mathrm{i}q_{0}}{(\theta_{g}^{*})^{2}}
\left[ \psi_{+,1}(\frac{q_{0}^{2}}{\theta_{g}^{*}})
+\frac{q_{0}^{2}}{\theta_{g}^{*}}\psi_{+,1}^{\prime}(\frac{q_{0}^{2}}{\theta_{g}^{*}}) \right], \\
\widetilde{\gamma}^{\prime}(\theta_{g})
&=\frac{q_{0}^{2}}{\theta_{g}^{2}}\gamma^{\prime}(\frac{q_{0}^{2}}{\theta_{g}}), \quad
\gamma^{\prime}(\theta_{g}^{*})=\frac{q_{0}^{2}}{(\theta_{g}^{*})^{2}}
\widetilde{\gamma}^{\prime}(\frac{q_{0}^{2}}{\theta_{g}^{*}}).
\end{align}
\end{subequations}
\end{proposition}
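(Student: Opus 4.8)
The plan is to derive both the scattering-coefficient identities~\eqref{5.14} and the eigenfunction identities~\eqref{5.15} by differentiating the two symmetries already in force---the first symmetry (Schwarz reflection $z\mapsto z^{*}$,~\eqref{2.49}--\eqref{2.51}) and the second symmetry (the involution $z\mapsto q_{0}^{2}/z$,~\eqref{2.55}--\eqref{2.60})---and then evaluating at the symmetric images $\theta_{g}^{*}$, $q_{0}^{2}/\theta_{g}$ and $q_{0}^{2}/\theta_{g}^{*}$ of the double pole. Because the reflection preserves the order of a zero and the inversion $z\mapsto q_{0}^{2}/z$ is a biholomorphism near $\theta_{g}$, a double zero of $h_{11}$ at $\theta_{g}$ is transported, with multiplicity preserved, to a double zero of the partner coefficient at each image point; this observation drives every identity in the proposition.

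First I would propagate the double-zero structure to the image points. From $h_{11}(z)=s_{11}^{*}(z^{*})$ in~\eqref{2.50}, the hypotheses $h_{11}(\theta_{g})=h_{11}^{\prime}(\theta_{g})=0$ and $h_{11}^{\prime\prime}(\theta_{g})\neq 0$ give $s_{11}(\theta_{g}^{*})=s_{11}^{\prime}(\theta_{g}^{*})=0$ with $s_{11}^{\prime\prime}(\theta_{g}^{*})\neq 0$. From $h_{11}(z)=h_{33}(q_{0}^{2}/z)$ in~\eqref{2.59} one likewise obtains $h_{33}(q_{0}^{2}/\theta_{g})=h_{33}^{\prime}(q_{0}^{2}/\theta_{g})=0$, and composing with the reflection yields the same for $s_{33}$ at $q_{0}^{2}/\theta_{g}^{*}$. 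The vanishing of these \emph{first} derivatives at the image points is exactly the fact that will collapse the chain-rule expansions below.

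For the conjugate-type identities in~\eqref{5.14} I would use that $f(z)=\overline{g(\bar z)}$ implies $f^{(n)}(z)=\overline{g^{(n)}(\bar z)}$ for every $n$; applying this to $h_{11}(z)=s_{11}^{*}(z^{*})$ and to $s_{33}(z)=h_{33}^{*}(z^{*})$ and then setting $z=\theta_{g}$ gives the second- and third-derivative conjugate relations at once. For the inversion-type identities I would apply the chain rule to $s_{33}(z)=s_{11}(q_{0}^{2}/z)$ from~\eqref{2.58} and to $h_{11}(z)=h_{33}(q_{0}^{2}/z)$ from~\eqref{2.59}. With $\phi(z)=q_{0}^{2}/z$, so that $\phi^{\prime}=-q_{0}^{2}/z^{2}$, $\phi^{\prime\prime}=2q_{0}^{2}/z^{3}$ and $\phi^{\prime\prime\prime}=-6q_{0}^{2}/z^{4}$, the second derivative of each composition contains two terms and the third contains four; but every term carrying a factor $s_{11}^{\prime}$ (respectively $h_{33}^{\prime}$) vanishes by the previous step. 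The surviving monomials in $\phi^{\prime},\phi^{\prime\prime},\phi^{\prime\prime\prime}$, evaluated at the relevant image point, reproduce the prefactors $(\theta_{g}^{*})^{4}/q_{0}^{4}$ and $q_{0}^{4}/\theta_{g}^{4}$ of the second-derivative relations and, after factoring, the prefactors $-(\theta_{g}^{*})^{5}/q_{0}^{6}$ and $-q_{0}^{4}/\theta_{g}^{5}$ of the third-derivative relations recorded in~\eqref{5.14}.

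The eigenfunction identities~\eqref{5.15} then follow from a single differentiation of the second-symmetry relations. Differentiating $\psi_{\pm,1}(z)=\tfrac{\mathrm{i}q_{0}}{z}\psi_{\pm,3}(q_{0}^{2}/z)$ and $\psi_{\pm,3}(z)=-\tfrac{\mathrm{i}q_{0}}{z}\psi_{\pm,1}(q_{0}^{2}/z)$ from~\eqref{2.56} by the product and chain rules yields the bracketed sums $\psi_{\pm,3}(q_{0}^{2}/z)+\tfrac{q_{0}^{2}}{z}\psi_{\pm,3}^{\prime}(q_{0}^{2}/z)$ and its counterpart, and evaluating at $z=\theta_{g}^{*}$ and $z=\theta_{g}$ produces the four relations for $\psi_{\pm,1}^{\prime}$ and $\psi_{\pm,3}^{\prime}$. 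The two $\gamma$-relations follow identically from $\widetilde{\gamma}(z)=-\gamma(q_{0}^{2}/z)$ in~\eqref{2.60} together with its inverse $\gamma(z)=-\widetilde{\gamma}(q_{0}^{2}/z)$ on $\mathbb{D}^{-}$, the single factor $-\phi^{\prime}=q_{0}^{2}/z^{2}$ supplying the prefactors $q_{0}^{2}/\theta_{g}^{2}$ and $q_{0}^{2}/(\theta_{g}^{*})^{2}$. I expect the only real difficulty to be bookkeeping: the third-order Fa\`a di Bruno expansion is easy to mishandle, and the genuine content is the recognition in the second step that the double-zero hypothesis forces $s_{11}^{\prime}$, $h_{33}^{\prime}$ and $s_{33}^{\prime}$ to vanish at the image points, which is precisely what truncates the expansions to the closed forms stated in~\eqref{5.14}.
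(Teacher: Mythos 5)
Your proposal is correct and follows exactly the route the paper implicitly relies on (the paper states Proposition~\ref{pro:11} without proof): differentiate the first symmetry $h_{11}(z)=s_{11}^{*}(z^{*})$, $s_{33}(z)=h_{33}^{*}(z^{*})$ and the second-symmetry relations \eqref{2.56}, \eqref{2.58}--\eqref{2.60}, propagate the double zero to the image points $\theta_{g}^{*}$, $q_{0}^{2}/\theta_{g}$, $q_{0}^{2}/\theta_{g}^{*}$ so that all first-derivative terms in the chain-rule expansions vanish, and read off the prefactors from $\phi'=-q_{0}^{2}/z^{2}$ and $\phi''=2q_{0}^{2}/z^{3}$. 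Your computed coefficients (e.g.\ $[\phi'(q_{0}^{2}/\theta_{g}^{*})]^{2}=(\theta_{g}^{*})^{4}/q_{0}^{4}$ and $3\phi'\phi''$ giving the factor $6$ in \eqref{5.14}) all check against the stated identities, so the argument is complete.
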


\begin{corollary}\label{cor:10}
The norming constants follows the following symmetry relationship:
\begin{subequations}\label{5.16}
\begin{align}
\bar{f}_{g}&=-\frac{f_{g}^{*}}{\rho(\theta_{g}^{*})}, \quad \hat{f}_{g}=\frac{\mathrm{i}q_{0}}{\theta_{g}^{*}\rho(\theta_{g}^{*})}f_{g}^{*}, \quad
\check{f}_{g}=\frac{\mathrm{i}\theta_{g}}{q_{0}s_{33}(\theta_{g})}f_{g}, \quad e_{g}=\check{e}_{g}=\bar{e}_{g}=\hat{e}_{g}=0, \\
b_{g}&=\frac{\mathrm{i}q_{0}}{\theta_{g}}\left[ \frac{\check{b}_{g}}{\theta_{g}}
+\frac{q_{0}^{2}}{\theta_{g}^{2}}\check{b}_{g} \right],  \quad
b_{g}=-\frac{\bar{b}_{g}^{*}\rho(\theta_{g})}{s_{33}(\theta_{g})}
-\left.\bar{f}_{g}^{*}\left[\frac{\rho(z)}{s_{33}(z)}\right]^{\prime}\right|_{z=\theta_{g}}, \\
\bar{b}_{g}&=\frac{\mathrm{i}\theta_{g}^{*}}{q_{0}}\left[ \frac{\hat{b}_{g}}{\theta_{g}^{*}}
-\frac{q_{0}^{2}}{(\theta_{g}^{*})^{2}}\hat{b}_{g} \right], \quad
\bar{b}_{g}^{*}=-\frac{b_{g}s_{33}(\theta_{g})}{\rho(\theta_{g})}
+\left.\bar{f}_{g}^{*} \ln \left[\frac{s_{33}(z)}{\rho(z)}\right]^{\prime}\right|_{z=\theta_{g}},
\end{align}
\end{subequations}
and
\begin{subequations}\label{5.17}
\begin{align}
\check{W}_{g}(x,t)&=\frac{\mathrm{i}q_{0}^{3}}{\theta_{g}^{3}}W_{g}(x,t), \quad
\check{B}_{g}=\frac{\theta_{g}}{q_{0}^{2}}-\frac{\theta_{g}^{2}}{q_{0}^{2}}\bar{B}_{g}^{*}
-\left.\frac{\theta_{g}^{2}s_{33}(\theta_{g})}{q_{0}^{2}\rho(\theta_{g})}
\left[\frac{\rho(z)}{s_{33}(z)}\right]^{\prime}\right|_{z=\theta_{g}}, \quad D_{g}(x,t)=0, \\
\bar{W}_{g}(x,t)&=-\frac{\left[s_{33}(\theta_{g})\right]^{*}}{\rho(\theta_{g}^{*})}W_{g}^{*}(x,t), \quad  B_{g}=\bar{B}_{g}^{*}+\left.\frac{s_{33}(\theta_{g})}{\rho(\theta_{g})}
\left[\frac{\rho(z)}{s_{33}(z)}\right]^{\prime}\right|_{z=\theta_{g}}, \quad \check{D}_{g}(x,t)=0, \\
\hat{W}_{g}(x,t)&=\frac{\mathrm{i}q_{0}^{5}\left[s_{33}(\theta_{g})\right]^{*}}{(\theta_{g}^{*})^{5}
\rho(\theta_{g}^{*})}W_{g}^{*}(x,t), \quad \hat{B}_{g}=\frac{3\theta_{g}^{*}}{q_{0}^{2}}
-\frac{(\theta_{g}^{*})^{2}}{q_{0}^{2}}\bar{W}_{g}, \quad \bar{D}_{g}(x,t)=\hat{D}_{g}(x,t)=0.
\end{align}
\end{subequations}
\end{corollary}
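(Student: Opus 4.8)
The plan is to derive every identity in \eqref{5.16} and \eqref{5.17} as a consequence of the two involutions $z\mapsto z^{*}$ and $z\mapsto q_{0}^{2}/z$ acting on the four eigenfunction-derivative expansions of Proposition~\ref{pro:9}. The base data are the relations \eqref{5.1} at the four orbit points $\theta_{g},\theta_{g}^{*},q_{0}^{2}/\theta_{g},q_{0}^{2}/\theta_{g}^{*}$, and the tools are the eigenfunction and scattering-coefficient symmetries already recorded in \eqref{5.14} and \eqref{5.15}, together with the first-order relations \eqref{2.56} and \eqref{2.60}. The constants $\bar{f}_{g},\hat{f}_{g},\check{f}_{g}$ are inherited unchanged from the simple-pole residue analysis of Corollary~\ref{cor:6}, i.e. they are exactly \eqref{3.7}, since the leading proportionality of the Jost columns at the eigenvalue is unaffected by the double multiplicity; thus the genuinely new work concerns the constants $b_{g}$, $e_{g}$ and the amplitudes $W_{g}$, $B_{g}$, $D_{g}$.

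First I would exploit the second symmetry. Substituting the first relation of \eqref{5.1} for $\psi_{+,1}'(\theta_{g})$ into the corresponding line of \eqref{5.15}, replacing $\psi_{+,3}(q_{0}^{2}/\theta_{g})$ by its leading relation $\check{f}_{g}\gamma(q_{0}^{2}/\theta_{g})$ and $\psi_{+,3}'(q_{0}^{2}/\theta_{g})$ by the fourth relation of \eqref{5.1}, and then rewriting $\widetilde{\gamma}(\theta_{g}),\widetilde{\gamma}'(\theta_{g})$ through \eqref{2.60} and \eqref{5.15} and $\psi_{-,1}(q_{0}^{2}/\theta_{g})$ through \eqref{2.56}, collapses everything into a single vector identity in the span of $\gamma'(q_{0}^{2}/\theta_{g})$, $\gamma(q_{0}^{2}/\theta_{g})$ and $\psi_{-,3}(\theta_{g})$. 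Equating the three coefficients reproduces $\check{f}_{g}=\mathrm{i}\theta_{g}f_{g}/(q_{0}s_{33}(\theta_{g}))$, yields the relation expressing $b_{g}$ through $\check{f}_{g}$ and $\check{b}_{g}$, and gives the proportionality $e_{g}=(q_{0}^{2}/\theta_{g}^{2})\check{e}_{g}$. Running the same computation in the barred and hatted sectors produces the analogous links between $\bar{b}_{g},\hat{b}_{g}$ and between $\bar{e}_{g},\hat{e}_{g}$.

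Next I would bring in the first symmetry. Passing the conjugation involution through the same four relations expresses each barred constant as a multiple of the conjugate of its unbarred partner, e.g. $\bar{e}_{g}$ as a multiple of $e_{g}^{*}$. Chaining the two involutions around the full orbit then determines $\hat{e}_{g}$ in two independent ways as a multiple of $e_{g}^{*}$ — once via the reciprocal-then-conjugation route and once via the conjugation-then-reciprocal route — and for $\left|\theta_{g}\right|\neq q_{0}$ the two multiplicative factors disagree. This over-determinacy forces $e_{g}=\check{e}_{g}=\bar{e}_{g}=\hat{e}_{g}=0$, whence $D_{g}=\check{D}_{g}=\bar{D}_{g}=\hat{D}_{g}=0$ immediately from their definitions in \eqref{5.13}. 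Finally, inserting the now-established $\bar{f}_{g},\check{f}_{g},\hat{f}_{g}$ and $b_{g}$ relations together with the second- and third-derivative symmetries \eqref{5.14} into the definitions \eqref{5.13} of $W_{g}$ and $B_{g}$ reduces \eqref{5.17} to routine algebra; for example $\check{W}_{g}/W_{g}=(\check{f}_{g}/f_{g})\,s_{33}(\theta_{g})\,h_{11}''(\theta_{g})/h_{33}''(q_{0}^{2}/\theta_{g})=\mathrm{i}q_{0}^{3}/\theta_{g}^{3}$, and $\bar{W}_{g}/W_{g}^{*}=(\bar{f}_{g}/f_{g}^{*})\,[s_{33}(\theta_{g})]^{*}=-[s_{33}(\theta_{g})]^{*}/\rho(\theta_{g}^{*})$ once one uses $[h_{11}''(\theta_{g})]^{*}=s_{11}''(\theta_{g}^{*})$ and $\delta_{j}(z)^{*}=\delta_{j}(z^{*})$.

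The main obstacle is twofold. The first difficulty is bookkeeping: the derivative expansions \eqref{5.2}--\eqref{5.5} carry the factors $\delta_{1}'$, $\delta_{2}'$ and the Taylor coefficients $h''$, $h'''$, and these must be transported consistently through both involutions so that the $(z-\theta_{g})^{-1}$ and $(z-\theta_{g})^{-2}$ coefficients of Corollary~\ref{cor:9} emerge correctly — a stray sign or a dropped $\delta_{j}'$ term would spoil the $B_{g}$ identities. The second and more conceptual difficulty lies in justifying the coefficient-matching and the closing step: one must check that the eigenfunctions appearing in each vector identity are linearly independent at the non-real, off-circle double eigenvalue, so that coefficients may legitimately be equated, and that the composite-involution factor controlling $\hat{e}_{g}$ is genuinely $\neq 1$ for $\left|\theta_{g}\right|\neq q_{0}$, which is precisely what turns the chained relations into a homogeneous constraint forcing the $e$-type constants to vanish rather than merely interrelating them.
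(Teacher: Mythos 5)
Your overall skeleton---push the two involutions through the four expansions of Proposition~\ref{pro:9}, match coefficients against a linearly independent triple, and then feed the resulting relations into the definitions~\eqref{5.13}---is the right framework, and your second-symmetry step is sound: substituting \eqref{5.1} into \eqref{5.15} and using \eqref{2.56}, \eqref{2.60} does give $\check f_g=\mathrm{i}\theta_g f_g/(q_0 s_{33}(\theta_g))$, a relation for $b_g$ in terms of $\check f_g$ and $\check b_g$, and the proportionalities $e_g=(q_0^2/\theta_g^2)\check e_g$, $\hat e_g=-(\mathrm{i}\theta_g^*/q_0)\bar e_g$. The gap is in your first-symmetry step, which is exactly where the heart of the corollary ($e_g=\check e_g=\bar e_g=\hat e_g=0$, hence $D_g=\check D_g=\bar D_g=\hat D_g=0$) must come from. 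You assert that conjugation ``expresses each barred constant as a multiple of the conjugate of its unbarred partner, e.g.\ $\bar e_g$ as a multiple of $e_g^*$,'' and you then force vanishing by a claimed disagreement of composite factors around the orbit---a disagreement you concede you have not verified. That is not how the first symmetry acts here. It enters through the cross-product representations \eqref{2.47} and \eqref{2.52}: differentiating, say, $\gamma(z)=-\mathrm{i}\,e^{\mathrm{i}\delta_2(z)}\rho(z)^{-1}\mathbf{J}\,[\psi_{-,3}^*(z^*)\times\psi_{+,1}^*(z^*)]$ at $z=\theta_g^*$ and inserting the conjugate of the first relation of \eqref{5.1}, the $e_g^*$ term appears only inside $\psi_{-,3}^*(\theta_g)\times\psi_{-,3}^*(\theta_g)=\mathbf{0}$ and is annihilated, while the surviving cross products collapse, via \eqref{2.47}, into the span of $\psi_{-,1}(\theta_g^*)$ and $\psi_{-,1}'(\theta_g^*)$ alone. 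So the first symmetry does not relate $\bar e_g$ to $e_g^*$ at all: comparison with the third relation of \eqref{5.1}, under the linear-independence hypothesis you correctly flag, yields $\bar e_g=0$ outright, together with $\bar f_g=-f_g^*/\rho(\theta_g^*)$ and the $b_g$--$\bar b_g^*$ identity of \eqref{5.16}. The companion computation (conjugating the relation for $\psi_{+,1}$ built from $\gamma\times\psi_{+,3}$, where $\bar e_g^*\,\psi_{+,3}^*\times\psi_{+,3}^*=\mathbf{0}$ and no $\psi_{-,3}(\theta_g)$ component survives) kills $e_g$ directly; $\check e_g=\hat e_g=0$ then follow from your second-symmetry proportionalities, not from any inconsistency.

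Consequently the proposal, as written, does not establish the central claim: its mechanism rests on a structural premise about the conjugation symmetry that is false (the putative factors linking $\bar e_g,\hat e_g$ to $e_g^*,\check e_g^*$ are not nonzero numbers that can ``disagree''---those terms never survive the cross product), so the condition you planned to check (composite factor $\neq 1$) is vacuous, and nothing in your argument excludes the consistent alternative in which the chained factors agree and the $e$-type constants survive. Two secondary remarks: inheriting $\bar f_g,\hat f_g,\check f_g$ from Corollary~\ref{cor:6} is legitimate, since the leading proportionalities of Theorem~\ref{thm:4} persist at a double zero; and your second-symmetry bookkeeping is in fact evidence that the first identities in \eqref{5.16} for $b_g$ and $\bar b_g$ as printed (with $\check b_g$, respectively $\hat b_g$, in both terms) should contain $\check f_g$, respectively $\hat f_g$, in one of the two terms---there your version is the internally consistent one.
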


Therefore, the residue conditions are obtained through the Corollary~\ref{cor:9}:
\begin{subequations}\label{5.18}
\begin{align}
\mathbf{R}_{-1,\theta_{g}}^{+}(x,t)&=\left[\mathop{\rm{Res}}_{z=\theta_{g}}\left[ \frac{\nu_{+,1}(z)}{h_{11}(z)} \right], \mathbf{0}, \mathbf{0}\right], \quad
\mathbf{R}_{-1,q_{0}^{2}/\theta_{g}^{*}}^{+}(x,t)=\left[ \mathbf{0}, \mathop{\rm{Res}}_{z=q_{0}^{2}/
\theta_{g}^{*}} \left[ -\frac{\widetilde{d}(z)}{s_{33}(z)} \right], \mathbf{0}\right], \\
\mathbf{R}_{-2,\theta_{g}}^{+}(x,t)&=\left[\mathop{\rm{Y_{-2}}}_{z=\theta_{g}}\left[ \frac{\nu_{+,1}(z)}{h_{11}(z)} \right], \mathbf{0}, \mathbf{0}\right], \quad
\mathbf{R}_{-2,q_{0}^{2}/\theta_{g}^{*}}^{+}(x,t)=\left[ \mathbf{0}, \mathop{\rm{Y_{-2}}}_{z=q_{0}^{2}/
\theta_{g}^{*}} \left[ -\frac{\widetilde{d}(z)}{s_{33}(z)} \right], \mathbf{0}\right], \\
\mathbf{R}_{-1,\theta_{g}^{*}}^{-}(x,t)&=\left[\mathbf{0}, \mathop{\rm{Res}}_{z=\theta_{g}^{*}}\left[ \frac{d(z)}{s_{11}(z)} \right], \mathbf{0}\right], \quad
\mathbf{R}_{-1,q_{0}^{2}/\theta_{g}}^{-}(x,t)=\left[ \mathbf{0}, \mathbf{0}, \mathop{\rm{Res}}_{z=q_{0}^{2}/\theta_{g}}\left[\frac{\nu_{+,3}(z)}{h_{33}(z)}\right]\right], \\
\mathbf{R}_{-2,\theta_{g}^{*}}^{-}(x,t)&=\left[\mathbf{0},\mathop{\rm{Y_{-2}}}_{z=\theta_{g}^{*}}\left[
\frac{d(z)}{s_{11}(z)} \right], \mathbf{0}\right], \quad
\mathbf{R}_{-2,q_{0}^{2}/\theta_{g}}^{-}(x,t)=\left[ \mathbf{0}, \mathbf{0}, \mathop{\rm{Y_{-2}}}_{z=q_{0}^{2}/\theta_{g}}\left[\frac{\nu_{+,3}(z)}{h_{33}(z)}\right]\right].
\end{align}
\end{subequations}

\subsection{Reflectionless solutions with double poles}

\begin{theorem}\label{thm:8}
Suppose that $h_{11}(\theta_{g})=h_{11}^{\prime}(\theta_{g})=0$ and $h_{11}^{\prime \prime}(\theta_{g})\neq 0$, with $\left| \theta_{g} \right|<q_{0}$, the double-poles solutions of the RH problem are shown below:
\begin{equation}\label{5.19}
\begin{split}
\mathbf{R}(z;x,t)&=\mathbf{Y}_{-}(z)-\frac{1}{2\pi\mathrm{i}}
\int_{\mathbb{R}}\frac{\mathbf{R}^{-}(\xi)\widetilde{\mathbf{L}}(\xi)}{\xi-z}\,\mathrm{d}\xi
+\sum_{j=1}^{G}\left[ \frac{\mathbf{R}_{-1,\theta_{j}}^{+}}{z-\theta_{j}} +\frac{\mathbf{R}_{-1,\theta_{j}^{*}}^{-}}{z-\theta_{j}^{*}}
+\frac{\mathbf{R}_{-2,\theta_{j}}^{+}}{[z-\theta_{j}]^{2}}
+\frac{\mathbf{R}_{-2,\theta_{j}^{*}}^{-}}{[z-\theta_{j}^{*}]^{2}} \right] \\
&+\sum_{j=1}^{G}\left[ \frac{\mathbf{R}_{-1,q_{0}^{2}/\theta_{j}^{*}}^{+}}{z-(q_{0}^{2}/\theta_{j}^{*})} +\frac{\mathbf{R}_{-1,q_{0}^{2}/\theta_{j}}^{-}}{z-(q_{0}^{2}/\theta_{j})}
+\frac{\mathbf{R}_{-2,q_{0}^{2}/\theta_{j}^{*}}^{+}}{[z-(q_{0}^{2}/\theta_{j}^{*})]^{2}} +\frac{\mathbf{R}_{-2,q_{0}^{2}/\theta_{j}}^{-}}{[z-(q_{0}^{2}/\theta_{j})]^{2}} \right],
\end{split}
\end{equation}
where $\widetilde{\mathbf{L}}(z)=\mathbf{e}^{\mathrm{i}\mathbf{\Delta}(z;x,t)}
\mathbf{L}(z)\mathbf{e}^{-\mathrm{i}\mathbf{\Delta}(z;x,t)}$ and $\mathbf{R}(x,t,z)=\mathbf{R}^{\pm}(z;x,t)$ for $\mathfrak{Im}z\gtrless0$. Moreover, the eigenfunctions are given by
\begin{equation}\label{5.20}
\begin{split}
\nu_{-,1}(\theta_{g}^{*};x,t)&=\begin{pmatrix}
     \mathrm{i}   \\
     \mathrm{i}\mathbf{q}_{-}/\theta_{g}^{*}
  \end{pmatrix}
-\frac{1}{2\pi\mathrm{i}} \int_{\mathbb{R}}\frac{\left[\mathbf{R}^{-}(\xi)\widetilde{\mathbf{L}}(\xi) \right]_{1}}{\xi-\theta_{g}^{*}}\,\mathrm{d}\xi  \\
&+\sum_{j=1}^{G}\left[ \frac{\left[ B_{j}-\mathrm{i}x-\mathrm{i}t \left( 2\theta_{j}+3\sigma(q_{0}^{2}+\theta_{j}^{2}) \right) \right]W_{j}\widetilde{d}(\theta_{j})}{\theta_{g}^{*}-\theta_{j}}
+\frac{W_{j}\widetilde{d}^{\prime}(\theta_{j})}{\theta_{g}^{*}-\theta_{j}}
+\frac{W_{j}\widetilde{d}(\theta_{j})}{[\theta_{g}^{*}-\theta_{j}]^{2}}
 \right],
\end{split}
\end{equation}
\begin{equation}\label{5.21}
\begin{split}
\nu_{-,1}^{\prime}(\theta_{g}^{*};x,t)&=\begin{pmatrix}
     \mathrm{i}   \\
     -\mathrm{i}\mathbf{q}_{-}/(\theta_{g}^{*})^{2}
  \end{pmatrix}
-\frac{1}{2\pi\mathrm{i}} \int_{\mathbb{R}}\frac{\left[\mathbf{R}^{-}(\xi)\widetilde{\mathbf{L}}(\xi) \right]_{1}}{[\xi-\theta_{g}^{*}]^{2}}\,\mathrm{d}\xi  \\
&-\sum_{j=1}^{G}\left[ \frac{\left[ B_{j}-\mathrm{i}x-\mathrm{i}t \left( 2\theta_{j}+3\sigma(q_{0}^{2}+\theta_{j}^{2}) \right) \right]W_{j}\widetilde{d}(\theta_{j})}{[\theta_{g}^{*}-\theta_{j}]^{2}}
+\frac{W_{j}\widetilde{d}^{\prime}(\theta_{j})}{[\theta_{g}^{*}-\theta_{j}]^{2}}
+\frac{2W_{j}\widetilde{d}(\theta_{j})}{[\theta_{g}^{*}-\theta_{j}]^{3}}
 \right],
\end{split}
\end{equation}
\begin{equation}\label{5.22}
\begin{split}
-\frac{\widetilde{d}(\theta_{g};x,t)}{s_{33}(\theta_{g})}&=\begin{pmatrix}
     0  \\
     \mathbf{q}_{-}^{\perp}/q_{0}
  \end{pmatrix}
-\frac{1}{2\pi\mathrm{i}} \int_{\mathbb{R}}\frac{\left[\mathbf{R}^{-}(\xi)\widetilde{\mathbf{L}}(\xi) \right]_{2}}{\xi-\theta_{g}}\,\mathrm{d}\xi  \\
&+\sum_{j=1}^{G} \left\{ \frac{\bar{W}_{j}}{(\theta_{g}-\theta_{j}^{*})^{2}} \left[1+ (\theta_{g}-\theta_{j}^{*})\left[ \bar{B}_{j}+\mathrm{i}x+\mathrm{i}t \left( 2\theta_{j}^{*}+3\sigma(q_{0}^{2}+(\theta_{j}^{*})^{2}) \right) \right] \right] \right. \\
&+\frac{\mathrm{i}\theta_{j}^{*}\hat{W}_{j}}{q_{0}[\theta_{g}-(q_{0}^{2}/\theta_{j}^{*})]^{2}}
\left[ 1-(\theta_{g}-\frac{q_{0}^{2}}{\theta_{j}^{*}})\left[
\frac{\mathrm{i}(\theta_{j}^{*})^{2}}{q_{0}^{2}}t \left( 2\theta_{j}^{*}+3\sigma(q_{0}^{2}+(\theta_{j}^{*})^{2}) \right) -\hat{B}_{j}
 \right. \right.   \\
& \left. \left. \left. +\frac{\mathrm{i}(\theta_{j}^{*})^{2}}{q_{0}^{2}}x +\frac{\theta_{j}^{*}}{q_{0}^{2}} \right] \right] \right\} \nu_{-,1}(\theta_{j}^{*})
+\sum_{j=1}^{G}\left[ \frac{\bar{W}_{j}}{\theta_{g}-\theta_{j}^{*}}
-\frac{\mathrm{i}(\theta_{j}^{*})^{3}\hat{W}_{j}}{q_{0}^{3}[\theta_{g}-(q_{0}^{2}/\theta_{j}^{*})]}
 \right]\nu_{-,1}^{\prime}(\theta_{j}^{*}),
\end{split}
\end{equation}
\begin{equation}\label{5.23}
\begin{split}
-\frac{\widetilde{d}^{\prime}(\theta_{g};x,t)}{s_{33}(\theta_{g})}&=
-\frac{\widetilde{d}(\theta_{g})s_{33}^{\prime}(\theta_{g})}{[s_{33}(\theta_{g})]^{2}}
-\frac{1}{2\pi\mathrm{i}} \int_{\mathbb{R}}\frac{\left[\mathbf{R}^{-}(\xi)\widetilde{\mathbf{L}}(\xi) \right]_{2}}{(\xi-\theta_{g})^{2}}\,\mathrm{d}\xi  \\
&-\sum_{j=1}^{G} \left\{ \frac{\bar{W}_{j}}{(\theta_{g}-\theta_{j}^{*})^{3}} \left[2+ (\theta_{g}-\theta_{j}^{*})\left[ \bar{B}_{j}+\mathrm{i}x+\mathrm{i}t \left( 2\theta_{j}^{*}+3\sigma(q_{0}^{2}+(\theta_{j}^{*})^{2}) \right) \right] \right] \right. \\
&+\frac{\mathrm{i}\theta_{j}^{*}\hat{W}_{j}}{q_{0}[\theta_{g}-(q_{0}^{2}/\theta_{j}^{*})]^{3}}
\left[ 2-(\theta_{g}-\frac{q_{0}^{2}}{\theta_{j}^{*}})\left[
\frac{\mathrm{i}(\theta_{j}^{*})^{2}}{q_{0}^{2}}t \left( 2\theta_{j}^{*}+3\sigma(q_{0}^{2}+(\theta_{j}^{*})^{2}) \right) -\hat{B}_{j}
 \right. \right.   \\
& \left. \left. \left. +\frac{\mathrm{i}(\theta_{j}^{*})^{2}}{q_{0}^{2}}x +\frac{\theta_{j}^{*}}{q_{0}^{2}} \right] \right] \right\} \nu_{-,1}(\theta_{j}^{*})
+\sum_{j=1}^{G}\left[ \frac{\mathrm{i}(\theta_{j}^{*})^{3}\hat{W}_{j}}{q_{0}^{3}[\theta_{g}-(q_{0}^{2}/\theta_{j}^{*})]^{2}}
-\frac{\bar{W}_{j}}{(\theta_{g}-\theta_{j}^{*})^{2}} \right]\nu_{-,1}^{\prime}(\theta_{j}^{*}).
\end{split}
\end{equation}
\end{theorem}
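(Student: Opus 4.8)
The plan is to adapt the regularization-and-Plemelj argument of Theorem~\ref{thm:5} to the second-order poles carried by the meromorphic matrices $\mathbf{R}^{\pm}(z;x,t)$ at the discrete eigenvalues. First I would subtract from $\mathbf{R}(z;x,t)$ its asymptotics $\mathbf{Y}_{-}(z)=\mathbf{R}_{\infty}+\frac{1}{z}\mathbf{R}_{0}$ recorded in~\eqref{4.4}--\eqref{4.5}, together with, at each point of every eigenvalue orbit $\{\theta_{j},\theta_{j}^{*},q_{0}^{2}/\theta_{j},q_{0}^{2}/\theta_{j}^{*}\}$, \emph{both} principal-part contributions $\mathbf{R}_{-1}/(z-\cdot)$ and $\mathbf{R}_{-2}/(z-\cdot)^{2}$ collected in~\eqref{5.18}. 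Since each subtracted term is the same on the two sides of $\mathbb{R}$, the regularized matrix is sectionally analytic with jump $\mathbf{R}^{+}-\mathbf{R}^{-}=-\mathbf{R}^{-}\widetilde{\mathbf{L}}$ forced by~\eqref{4.2} and decays as $z\to\infty$; the Plemelj formula applied to this difference delivers the Cauchy-integral representation~\eqref{5.19} directly.

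The essential new input is Proposition~\ref{pro:10}. Because $h_{11}$ has a genuine double zero at $\theta_{g}$, the quotient $\nu_{+,1}(z)/h_{11}(z)$ has a pole of order two, and its $(z-\theta_{g})^{-1}$ and $(z-\theta_{g})^{-2}$ coefficients are expressed through $h_{11}''(\theta_{g})$ and $h_{11}'''(\theta_{g})$ by~\eqref{5.6}--\eqref{5.7}. Substituting the expansion~\eqref{5.2} of $\nu_{+,1}'(\theta_{g})$ (and its counterparts~\eqref{5.3}--\eqref{5.5} at the other three orbit points) into these formulae is exactly what produces the residue and $\mathrm{Y}_{-2}$ data of Corollary~\ref{cor:9} --- in particular the linear-in-$(x,t)$ factors $B_{g}-\mathrm{i}x-\mathrm{i}t\,(2\theta_{g}+3\sigma(q_{0}^{2}+\theta_{g}^{2}))$ appearing in~\eqref{5.9} and hence in~\eqref{5.20}.

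With~\eqref{5.19} in hand I would read off the eigenfunctions by projecting onto single columns and evaluating at the relevant spectral points. The first column of $\mathbf{R}^{-}$ is $\nu_{-,1}$, so setting $z=\theta_{g}^{*}$ in that column and inserting the coefficients~\eqref{5.8}--\eqref{5.9} yields~\eqref{5.20}, while differentiating the same column in $z$ before evaluation gives~\eqref{5.21}. Likewise the second column of $\mathbf{R}^{+}$ equals $-\widetilde{d}/s_{33}$; its value and $z$-derivative at $z=\theta_{g}$, fed by the residue data~\eqref{5.10} and~\eqref{5.12}, produce~\eqref{5.22} and~\eqref{5.23}. At this stage I would invoke the double-pole symmetries of Proposition~\ref{pro:11} and Corollary~\ref{cor:10}: the vanishing $e_{g}=\bar{e}_{g}=\check{e}_{g}=\hat{e}_{g}=0$ kills the cross terms, the relations tying $\bar{W}_{g},\hat{W}_{g},\check{W}_{g}$ to $W_{g}$ and $W_{g}^{*}$ collapse the norming data, and the eigenfunction symmetries~\eqref{5.15} rewrite every appearance of $\nu_{-,3}(q_{0}^{2}/\theta_{j}^{*})$ and its derivative in terms of $\nu_{-,1}(\theta_{j}^{*})$ and $\nu_{-,1}'(\theta_{j}^{*})$, which is why only the latter basis survives in~\eqref{5.22}--\eqref{5.23}.

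The main obstacle I anticipate is the bookkeeping for the enlarged linear system. Unlike the simple-pole case, the unknowns now include the first $z$-derivatives $\nu_{-,1}'(\theta_{g}^{*})$ and $\widetilde{d}'(\theta_{g})$, so~\eqref{5.20}--\eqref{5.23} constitute a coupled system of roughly twice the simple-pole size; the delicate part is confirming that it closes consistently and that each symmetry reduction from Corollary~\ref{cor:10} is applied uniformly across all four points of every eigenvalue orbit, without double-counting the contributions that the pairing $\theta_{g}\leftrightarrow q_{0}^{2}/\theta_{g}^{*}$ identifies. Once closure is established, the solution $\mathbf{q}(x,t)$ follows as before by extracting the $O(1/z)$ coefficient of $\nu_{-,1}$ as $z\to\infty$ via~\eqref{4.14}.
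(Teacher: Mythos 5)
Your proposal is correct and follows essentially the same route the paper takes (implicitly, as it states Theorem~\ref{thm:8} without a written-out proof): regularize the RH problem by subtracting $\mathbf{Y}_{-}(z)$ together with both first- and second-order principal parts from~\eqref{5.18}, apply the Plemelj formula to the jump~\eqref{4.2} to obtain~\eqref{5.19}, then evaluate the appropriate columns at $z=\theta_{g}^{*}$ and $z=\theta_{g}$ (and their $z$-derivatives), feeding in the coefficient data of Corollary~\ref{cor:9} and collapsing onto the basis $\{\nu_{-,1}(\theta_{j}^{*}),\nu_{-,1}^{\prime}(\theta_{j}^{*})\}$ via the symmetries of Proposition~\ref{pro:11} and Corollary~\ref{cor:10}. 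Your identification of which residue data enters which equation, and of the role of $e_{g}=\bar{e}_{g}=\check{e}_{g}=\hat{e}_{g}=0$ in eliminating cross terms, matches the paper's construction.
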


\begin{theorem}[Reconstruction formula]\label{thm:9}
Double-poles solutions $\mathbf{q}(x,t)$ of the defocusing-defocusing coupled Hirota equations with NZBC~\eqref{1.3} are reconstructed as
\begin{equation}\label{5.24}
\begin{split}
q_{k}(x,t)&=q_{k-}-\frac{1}{2\pi} \int_{\mathbb{R}}\left[
\mathbf{R}^{-}(\xi;x,t)\widetilde{\mathbf{L}}(\xi;x,t) \right]_{(k+1)1}\,\mathrm{d}\xi
-\sum_{j=1}^{G}\mathrm{i}W_{j}\widetilde{d}^{\prime}_{(k+1)}(\theta_{j}) \\
&-\sum_{j=1}^{G}\mathrm{i}W_{j}\left[ B_{j}-\mathrm{i}x-\mathrm{i}t \left( 2\theta_{j}+3\sigma(q_{0}^{2}+\theta_{j}^{2}) \right) \right]\widetilde{d}_{(k+1)}(\theta_{j}), \quad k=1,2.
\end{split}
\end{equation}
\end{theorem}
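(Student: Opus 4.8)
The plan is to read off the reconstructed potential from the large-$z$ behaviour of the double-pole Riemann--Hilbert solution, exactly as in the simple-pole case of Theorem~\ref{thm:6} but retaining the additional principal-part data. The starting point is the asymptotic reconstruction identity~\eqref{4.14}, which gives $q_k(x,t)=-\mathrm{i}\lim_{z\to\infty}[z\,\nu_{-,(k+1)1}(z;x,t)]$, so that only the coefficient of $1/z$ in the $(k+1)$-th entry of the first column of $\mathbf{R}(z;x,t)$ is needed. This first column is $\nu_{-,1}(z)=\mathbf{r}_1^-(z)$, whose closed form is the first column of~\eqref{5.19} in Theorem~\ref{thm:8}.

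First I would identify which pieces of~\eqref{5.19} feed into the first column at order $1/z$. By the residue assignments~\eqref{5.18}, only the terms seated at $z=\theta_j$, namely $\mathbf{R}_{-1,\theta_j}^+$ and $\mathbf{R}_{-2,\theta_j}^+$, occupy the first column; every residue attached to columns two and three is irrelevant here. Expanding the Cauchy kernels as $z\to\infty$ through $(z-\theta_j)^{-1}=z^{-1}+O(z^{-2})$ and $(z-\theta_j)^{-2}=O(z^{-2})$, the double-pole coefficients $\mathbf{R}_{-2,\theta_j}^+$ enter only at order $1/z^2$ and therefore drop out of the limit. Hence the $1/z$ coefficient of $\nu_{-,(k+1)1}$ receives contributions solely from $[\mathbf{Y}_-(z)]_{(k+1)1}=\mathrm{i}q_{k-}/z$, from the regularizing Cauchy integral (whose expansion yields $(2\pi\mathrm{i})^{-1}\int_{\mathbb{R}}[\mathbf{R}^-(\xi)\widetilde{\mathbf{L}}(\xi)]_{(k+1)1}\,\mathrm{d}\xi$), and from the simple-pole residues $\sum_{j}[\mathbf{R}_{-1,\theta_j}^+]_{(k+1)1}$.

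Finally I would insert the explicit simple-pole residue~\eqref{5.9}, invoking the symmetry $D_g=0$ from Corollary~\ref{cor:10} to delete the $\nu_{-,3}(\theta_g)$ term, which leaves $[\mathbf{R}_{-1,\theta_g}^+]_{(k+1)1}=W_g\,\widetilde{d}'_{(k+1)}(\theta_g)+[B_g-\mathrm{i}x-\mathrm{i}t(2\theta_g+3\sigma(q_0^2+\theta_g^2))]\,W_g\,\widetilde{d}_{(k+1)}(\theta_g)$. Multiplying the assembled $1/z$ coefficient by $-\mathrm{i}$ then reproduces~\eqref{5.24} term by term: $-\mathrm{i}\cdot\mathrm{i}q_{k-}=q_{k-}$, the integral acquires the prefactor $-1/2\pi$, and each residue summand picks up the factor $-\mathrm{i}$. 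The main obstacle is organizational rather than analytical: one must recognize that all genuinely new double-pole data enter the final formula only \emph{through} the simple-pole residue~\eqref{5.9} — via the derivative $\widetilde{d}'$ and the constant $B_g$ carrying the $h_{11}'''/h_{11}''$ correction defined in~\eqref{5.13} — while the $1/(z-\theta_j)^2$ terms and every off-first-column residue contribute nothing at this order, and one must confirm via Corollary~\ref{cor:10} that the new constants $e_g,\hat e_g,\check e_g,\bar e_g$ (and hence $D_g$) all vanish, so that no $\nu_{-,3}$ term pollutes the first column.
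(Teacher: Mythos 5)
Your proposal is correct and follows the same route the paper (implicitly) takes: just as for Theorem~\ref{thm:6}, the formula~\eqref{5.24} is obtained by comparing the first column of the double-pole RH solution~\eqref{5.19} with the large-$z$ asymptotics~\eqref{3.8} via~\eqref{4.14}, noting from~\eqref{5.18} that only the residues seated at $z=\theta_j$ occupy the first column, that the $(z-\theta_j)^{-2}$ terms decay as $z^{-2}$, and that $D_g=0$ by Corollary~\ref{cor:10} so the explicit residue~\eqref{5.9} yields exactly the two sums in~\eqref{5.24}. Your accounting of the $-\mathrm{i}$ prefactors and of which double-pole data survive the limit is accurate, so the argument is complete.
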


\subsection{Trace formulae and the double-poles solutions}
The construction method of the trace formula for the double-poles is different from that for the single-poles, therefore it is assumed that
\begin{equation}\label{5.25}
\begin{split}
\chi_{a}^{+}(z)&=h_{11}(z)\prod_{g=1}^{G}\frac{(z-\theta_{g}^{*})^{2}}{(z-\theta_{g})^{2}}, \quad z\in \mathbb{D}^{+},  \quad
\chi_{a}^{-}(z)=s_{11}(z)\prod_{g=1}^{G}\frac{(z-\theta_{g})^{2}}{(z-\theta_{g}^{*})^{2}}, \quad z\in \mathbb{D}^{-}.
\end{split}
\end{equation}
Using~\eqref{2.61} and the scattering coefficients, then we have
\begin{equation}\label{5.26}
\begin{split}
\ln \chi_{a}^{+}(z)+\ln \chi_{a}^{-}(z)=-\ln \left[1-\left| \beta_{1}(z) \right|^{2}
-\frac{\left| \beta_{2}(z) \right|^{2}}{\rho(z)} \right], \quad z\in \mathbb{R}.
\end{split}
\end{equation}

By combining~\eqref{5.25} with~\eqref{5.26} and using the Plemelj formula, it can be concluded that
\begin{subequations}\label{5.27}
\begin{align}
\chi_{a}^{+}(z)&=\exp\left[ -\frac{1}{2\pi\mathrm{i}} \int_{\mathbb{R}} \ln \left[1-\left| \beta_{1}(\xi) \right|^{2}-\frac{\left| \beta_{2}(\xi) \right|^{2}}{\rho(\xi)} \right] \,\frac{\mathrm{d}\xi}{\xi-z} \right],  \quad z\in \mathbb{D}^{+}, \\
\chi_{a}^{-}(z)&=\exp\left[ \frac{1}{2\pi\mathrm{i}} \int_{\mathbb{R}} \ln \left[1-\left| \beta_{1}(\xi) \right|^{2}-\frac{\left| \beta_{2}(\xi) \right|^{2}}{\rho(\xi)} \right] \,\frac{\mathrm{d}\xi}{\xi-z} \right],  \quad \,\,\,\, z\in \mathbb{D}^{-}.
\end{align}
\end{subequations}
By substituting expression~\eqref{5.27} into definition~\eqref{5.25}, the scattering coefficient display expression can be solved
\begin{subequations}\label{5.28}
\begin{align}
h_{11}(z)&=\prod_{g=1}^{G}\frac{(z-\theta_{g})^{2}}{(z-\theta_{g}^{*})^{2}}
\exp\left[ -\frac{1}{2\pi\mathrm{i}} \int_{\mathbb{R}} \ln \left[1-\left| \beta_{1}(\xi) \right|^{2}-\frac{\left| \beta_{2}(\xi) \right|^{2}}{\rho(\xi)} \right] \,\frac{\mathrm{d}\xi}{\xi-z} \right],  \quad z\in \mathbb{D}^{+}, \\
s_{11}(z)&=\prod_{g=1}^{G}\frac{(z-\theta_{g}^{*})^{2}}{(z-\theta_{g})^{2}}
\exp\left[ \frac{1}{2\pi\mathrm{i}} \int_{\mathbb{R}} \ln \left[1-\left| \beta_{1}(\xi) \right|^{2}-\frac{\left| \beta_{2}(\xi) \right|^{2}}{\rho(\xi)} \right] \,\frac{\mathrm{d}\xi}{\xi-z} \right],  \quad \,\,\,\, z\in \mathbb{D}^{-}.
\end{align}
\end{subequations}
Subsequently, the asymptotic phase difference $\Delta\delta=\delta_{+}-\delta_{-}$ between the respective double-poles is considered.
\begin{equation}\label{5.29}
\begin{split}
\Delta\delta=4\sum_{g=1}^{G}\arg(\theta_{g})+\frac{1}{2\pi} \int_{\mathbb{R}} \ln \left[1-\left| \beta_{1}(\xi) \right|^{2}-\frac{\left| \beta_{2}(\xi) \right|^{2}}{\rho(\xi)} \right] \,\frac{\mathrm{d}\xi}{\xi}.
\end{split}
\end{equation}

\begin{theorem}\label{thm:10}
In the reflectionless case, the double-poles solutions~\eqref{5.24} of the defocusing-defocusing coupled Hirota equation with NZBC~\eqref{1.3} may be written
\begin{equation}\label{5.30}
\begin{split}
\widehat{\mathbf{q}}(x,t)=\frac{1}{\det\widehat{\mathbf{K}}} \begin{pmatrix}
\det \widehat{\mathbf{K}}_{1}^{\mathrm{aug}} \\
\det \widehat{\mathbf{K}}_{2}^{\mathrm{aug}}
\end{pmatrix},  \quad
\widehat{\mathbf{K}}_{n}^{\mathrm{aug}}=\left(\begin{array}{ll}
q_{n-} & \widehat{\mathbf{E}} \\
\widehat{\mathbf{A}}_{n} & \widehat{\mathbf{K}}
\end{array}\right), \quad n=1,2,
\end{split}
\end{equation}
the vectors $\widehat{\mathbf{A}}_{n}$, $\widehat{\mathbf{E}}$ and matrix $\widehat{\mathbf{K}}$ are
\begin{equation}\label{5.31}
\begin{split}
\widehat{\mathbf{A}}_{n}&=\left(\widehat{A}_{n1}, \ldots, \widehat{A}_{n(2G)}\right)^{T}, \quad
\widehat{\mathbf{E}}=\left(\widehat{E}_{1}, \ldots, \widehat{E}_{2G}\right), \quad \widehat{\mathbf{K}}=\mathbf{I}+\widehat{\mathbf{P}},
\end{split}
\end{equation}
where
\begin{align}\label{5.32}
\begin{split}
\widehat{E}_{g}=\left\{\begin{array}{ll}
\mathrm{i}W_{g}(x,t)B_{g}^{(1)}(x,t),  &g=1, \cdots, G, \\
\mathrm{i}W_{g-G}(x,t), & g=G+1, \cdots, 2G,
\end{array}\right.
\end{split}
\end{align}
For $j=1, \ldots, G$ and $k=1, \ldots, G$, the entries of matrix $\widehat{\mathbf{P}}=(\widehat{P}_{jk}(x,t))$ are
\begin{equation}\label{5.33}
\begin{split}
P_{jk}&=s_{33}(\theta_{j})\sum_{a=1}^{G}\left[W_{a}^{(3)}(\theta_{j})W_{k}^{(1)}(\theta_{a}^{*})
-W_{a}^{(6)}(\theta_{j})W_{k}^{(2)}(\theta_{a}^{*}) \right].
\end{split}
\end{equation}
For $j=1, \ldots, G$ and $k=G+1, \ldots, 2G$, the entries of matrix $\widehat{\mathbf{P}}=(\widehat{P}_{jk}(x,t))$ are
\begin{equation}\label{5.34}
\begin{split}
P_{jk}&=s_{33}(\theta_{j})\sum_{a=1}^{G}\left[W_{a}^{(3)}(\theta_{j})b_{k-G}^{(1)}(\theta_{a}^{*}) -\frac{W_{a}^{(6)}(\theta_{j})b_{k-G}^{(1)}(\theta_{a}^{*})}{\theta_{a}^{*}-\theta_{k-G}} \right].
\end{split}
\end{equation}
For $j=G+1, \ldots, 2G$, and $k=1, \ldots, G$, the entries of matrix $\widehat{\mathbf{P}}=(\widehat{P}_{jk}(x,t))$ are
\begin{equation}\label{5.35}
\begin{split}
P_{jk}&=-\sum_{a=1}^{G}\left[W_{a}^{(9)}(\theta_{j-G})W_{k}^{(1)}(\theta_{a}^{*})
+W_{a}^{(10)}(\theta_{j-G})W_{k}^{(2)}(\theta_{a}^{*}) \right].
\end{split}
\end{equation}
For $j=G+1, \ldots, 2G$, and $k=G+1, \ldots, 2G$, the entries of matrix $\widehat{\mathbf{P}}=(\widehat{P}_{jk}(x,t))$ are
\begin{equation}\label{5.36}
\begin{split}
P_{jk}&=-\sum_{a=1}^{G}\left[W_{a}^{(9)}(\theta_{j-G})b_{k-G}^{(1)}(\theta_{a}^{*}) +\frac{W_{a}^{(10)}(\theta_{j-G})b_{k-G}^{(1)}(\theta_{a}^{*})}{\theta_{a}^{*}-\theta_{k-G}} \right],
\end{split}
\end{equation}
where
\begin{equation}\label{5.37}
\everymath{\displaystyle}
\begin{split}
\widehat{A}_{ni^{\prime}}=\left\{\begin{array}{ll}
(-1)^{n}s_{33}(\theta_{i^{\prime}})\frac{q_{\bar{n}-}^{*}}{q_{0}}
+\sum_{a=1}^{G}\mathrm{i}q_{n-}W_{a}^{(7)}(\theta_{i^{\prime}}),  & i^{\prime}=1, \ldots, G, \\
(-1)^{n}s_{33}^{\prime}(\theta_{i^{\prime}-G})\frac{q_{\bar{n}-}^{*}}{q_{0}}
+\sum_{a=1}^{G}\mathrm{i}q_{n-}W_{a}^{(8)}(\theta_{i^{\prime}-G}), & i^{\prime}=G+1, \ldots, 2G,
\end{array}\right.
\end{split}
\end{equation}
and $\bar{n}=n+(-1)^{n+1}$.
\end{theorem}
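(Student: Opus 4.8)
The plan is to mirror the argument used for the simple-pole case in Theorem~\ref{thm:7}, but now carrying along both the eigenfunctions and their $z$-derivatives at each discrete eigenvalue, which doubles the size of the resulting linear system. First I would specialize the representations~\eqref{5.20}--\eqref{5.23} of Theorem~\ref{thm:8} to the reflectionless case by discarding the jump integrals along $\mathbb{R}$, leaving only the contributions of the double poles. As in~\eqref{4.28}, I would introduce a family of auxiliary rational functions — here the quantities $W_k^{(1)},\ldots,W_k^{(10)}$ and $b_k^{(1)}$ appearing in~\eqref{5.33}--\eqref{5.37} — each built from a norming constant $W_g$, a shift $B_g$, and a simple or double pole factor $1/(z-\theta_{\cdot})$ or $1/(z-\theta_{\cdot})^{2}$. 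These encode the residue and $Y_{-2}$ coefficients recorded in Corollary~\ref{cor:9} via the Laurent formula of Proposition~\ref{pro:10}.

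Next I would evaluate the reflectionless versions of~\eqref{5.20}--\eqref{5.23} at the four families of poles $\{\theta_g,\theta_g^{*},q_0^2/\theta_g,q_0^2/\theta_g^{*}\}$, and also differentiate in $z$ and evaluate, generating relations among $\nu_{-,1}(\theta_g^{*})$, $\nu_{-,1}^{\prime}(\theta_g^{*})$, $\widetilde{d}(\theta_g)/s_{33}(\theta_g)$, $\widetilde{d}^{\prime}(\theta_g)/s_{33}(\theta_g)$ and their counterparts at the other poles. The crucial reduction step is to invoke the double-pole symmetries of Proposition~\ref{pro:11}, equations~\eqref{5.15}, together with the norming-constant symmetries of Corollary~\ref{cor:10}, equations~\eqref{5.16}--\eqref{5.17}: these collapse the data at $q_0^2/\theta_g$, $q_0^2/\theta_g^{*}$ and the barred, checked and hatted constants onto the single index $g$, so that every unknown is expressible through $\nu_{-,1}(\theta_g^{*})$ and $\nu_{-,1}^{\prime}(\theta_g^{*})$ for $g=1,\ldots,G$. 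This yields a closed linear system of $2G$ equations in $2G$ unknowns, which I would write in the block form $\widehat{\mathbf{K}}\widehat{\mathbf{X}}_n=\widehat{\mathbf{A}}_n$ with $\widehat{\mathbf{K}}=\mathbf{I}+\widehat{\mathbf{P}}$ and the block structure of $\widehat{\mathbf{P}}$ given by~\eqref{5.33}--\eqref{5.36}, the first $G$ components of $\widehat{\mathbf{X}}_n$ holding the relevant components of the eigenfunctions at $\theta_g$ and the last $G$ holding those of their derivatives.

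Finally, solving this system by Cramer's rule and substituting the resulting components into the double-pole reconstruction formula~\eqref{5.24} of Theorem~\ref{thm:9} produces the augmented-determinant representation~\eqref{5.30}, exactly as the substitution of~\eqref{4.34} into~\eqref{4.15} produced~\eqref{4.22} in the simple-pole case. The main obstacle I anticipate is not any single step but the bookkeeping of the reduction in the middle paragraph: because each double pole contributes both a $(z-\theta_g)^{-1}$ and a $(z-\theta_g)^{-2}$ term, the symmetry relations~\eqref{5.15} mix a function value at one pole with both the value and the derivative at the reciprocal pole, and one must track the extra parameters $B_g$ and the factors $s_{33}(\theta_g)$ carefully so that the off-diagonal blocks of $\widehat{\mathbf{P}}$ assemble into the precise combinations of $W_a^{(\cdot)}$, $b_{k-G}^{(1)}$ and the divided differences $1/(\theta_a^{*}-\theta_{k-G})$ displayed in~\eqref{5.33}--\eqref{5.36}. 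Verifying that these combinations are internally consistent — and that $\widehat{\mathbf{A}}_n$ reduces to~\eqref{5.37} — is where the computation is most delicate.
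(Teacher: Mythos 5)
Your proposal is correct and follows essentially the same route as the paper: the paper's proof specializes~\eqref{5.20}--\eqref{5.23} to the reflectionless case, substitutes the expressions~\eqref{5.40}--\eqref{5.41} for $\nu_{-,21}(\theta_j^{*})$, $\nu_{-,21}^{\prime}(\theta_j^{*})$ into~\eqref{5.42}--\eqref{5.43} to obtain the closed $2G\times2G$ system $\widehat{\mathbf{K}}\widehat{\mathbf{X}}_{n}=\widehat{\mathbf{A}}_{n}$, and concludes by Cramer's rule together with the reconstruction formula~\eqref{5.24}, just as you outline. The one detail to sharpen in your bookkeeping is the choice of unknowns: they must be the auxiliary-eigenfunction data $\widetilde{d}_{(n+1)}(\theta_{i^{\prime}})$ and $\widetilde{d}_{(n+1)}^{\prime}(\theta_{i^{\prime}-G})$ (i.e.\ one eliminates $\nu_{-,1}(\theta_g^{*})$, $\nu_{-,1}^{\prime}(\theta_g^{*})$, not the reverse, as your middle paragraph suggests), since those are exactly the quantities entering~\eqref{5.24}, and only this choice yields the stated blocks~\eqref{5.33}--\eqref{5.37} directly.
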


\begin{proof}
For $j=1, \ldots, G$, define
\begin{subequations}\label{5.38}
\begin{align}
b_{j}^{(1)}(z)&=\frac{W_{j}}{z-\theta_{j}}, \quad
b_{j}^{(2)}(z)=\frac{\bar{W}_{j}}{z-\theta_{j}^{*}}, \quad
b_{j}^{(3)}(z)=\frac{\hat{W}_{j}}{z-(q_{0}^{2}/\theta_{j}^{*})}, \\
B_{j}^{(1)}(x,t)&=B_{j}-\mathrm{i}x-\mathrm{i}t \left[ 2\theta_{j}+3\sigma(q_{0}^{2}+\theta_{j}^{2}) \right], \quad
B_{j}^{(2)}(x,t)=\bar{B}_{j}+\mathrm{i}x+\mathrm{i}t \left[ 2\theta_{j}^{*}
+3\sigma(q_{0}^{2}+(\theta_{j}^{*})^{2}) \right], \\
B_{j}^{(3)}(x,t)&=\frac{\theta_{j}^{*}}{q_{0}^{2}}-\hat{B}_{j}
+\frac{\mathrm{i}(\theta_{j}^{*})^{2}}{q_{0}^{2}} \left[x+ t(2\theta_{j}^{*}+3\sigma(q_{0}^{2}+(\theta_{j}^{*})^{2})) \right],
\end{align}
\end{subequations}
and
\begin{subequations}\label{5.39}
\begin{align}
W_{j}^{(1)}(z)&=b_{j}^{(1)}(z) \left[ B_{j}^{(1)}+ \frac{1}{z-\theta_{j}} \right], \quad
W_{j}^{(2)}(z)=\frac{b_{j}^{(1)}(z)}{z-\theta_{j}} \left[ B_{j}^{(1)}+ \frac{2}{z-\theta_{j}} \right], \\
W_{j}^{(3)}(z)&=\frac{b_{j}^{(2)}(z)}{z-\theta_{j}^{*}}\left[ 1+(z-\theta_{j}^{*})B_{j}^{(2)} \right]+\frac{\mathrm{i}\theta_{j}^{*}b_{j}^{(3)}(z)}{q_{0}[z-(q_{0}^{2}/\theta_{j}^{*})]}\left[ 1-(z-\frac{q_{0}^{2}}{\theta_{j}^{*}})B_{j}^{(3)} \right], \\
W_{j}^{(4)}(z)&=\frac{b_{j}^{(2)}(z)}{(z-\theta_{j}^{*})^{2}}\left[ 2+(z-\theta_{j}^{*})B_{j}^{(2)} \right]+\frac{\mathrm{i}\theta_{j}^{*}b_{j}^{(3)}(z)}{q_{0}[z-(q_{0}^{2}/\theta_{j}^{*})]^{2}}\left[ 2-(z-\frac{q_{0}^{2}}{\theta_{j}^{*}})B_{j}^{(3)} \right], \\
W_{j}^{(5)}(z)&=
\frac{\mathrm{i}(\theta_{j}^{*})^{3}b_{j}^{(3)}(z)}{q_{0}^{3}[z-(q_{0}^{2}/\theta_{j}^{*})]}
-\frac{b_{j}^{(2)}(z)}{z-\theta_{j}^{*}},  \quad
W_{j}^{(6)}(z)=b_{j}^{(2)}(z)-\frac{\mathrm{i}(\theta_{j}^{*})^{3}b_{j}^{(3)}(z)}{q_{0}^{3}}, \\
W_{j}^{(7)}(z)&=s_{33}(z) \left[ W_{j}^{(6)}(z)/(\theta_{j}^{*})^{2}
-W_{j}^{(3)}(z)/\theta_{j}^{*} \right], \\
W_{j}^{(8)}(z)&=s_{33}(z) \left[ \frac{W_{j}^{(5)}(z)}{(\theta_{j}^{*})^{2}}
+\frac{W_{j}^{(4)}(z)}{\theta_{j}^{*}} \right]+s_{33}^{\prime}(z) \left[ \frac{W_{j}^{(6)}(z)}{(\theta_{j}^{*})^{2}}-\frac{W_{j}^{(3)}(z)}{\theta_{j}^{*}} \right], \\
W_{j}^{(9)}(z)&=s_{33}(z)W_{j}^{(4)}(z)-s_{33}^{\prime}(z)W_{j}^{(3)}(z), \quad
W_{j}^{(10)}(z)=s_{33}(z)W_{j}^{(5)}(z)+s_{33}^{\prime}(z)W_{j}^{(6)}(z).
\end{align}
\end{subequations}
Under the condition of the reflectionless, there are equations that hold true

\begin{equation}\label{5.40}
\begin{split}
\nu_{-,21}(\theta_{j}^{*};x,t)&=\frac{\mathrm{i}q_{1-}}{\theta_{j}^{*}}
+\sum_{j^{\prime}=1}^{G} \left[ \left[B_{j^{\prime}}^{(1)}+\frac{1}{\theta_{j}^{*}-\theta_{j^{\prime}}} \right]b_{j^{\prime}}^{(1)}(\theta_{j}^{*})\widetilde{d}_{2}(\theta_{j^{\prime}})
+b_{j^{\prime}}^{(1)}(\theta_{j}^{*})\widetilde{d}_{2}^{\prime}(\theta_{j^{\prime}}) \right],
\end{split}
\end{equation}
\begin{equation}\label{5.41}
\begin{split}
\nu_{-,21}^{\prime}(\theta_{j}^{*};x,t)&=-\frac{\mathrm{i}q_{1-}}{(\theta_{j}^{*})^{2}}
-\sum_{j^{\prime}=1}^{G} \left[ \left[B_{j^{\prime}}^{(1)}+\frac{2}{\theta_{j}^{*}-\theta_{j^{\prime}}} \right] \frac{b_{j^{\prime}}^{(1)}(\theta_{j}^{*})\widetilde{d}_{2}(\theta_{j^{\prime}})}
{\theta_{j}^{*}-\theta_{j^{\prime}}}
+\frac{b_{j^{\prime}}^{(1)}(\theta_{j}^{*})\widetilde{d}_{2}^{\prime}(\theta_{j^{\prime}})}
{\theta_{j}^{*}-\theta_{j^{\prime}}} \right],
\end{split}
\end{equation}
\begin{equation}\label{5.42}
\begin{split}
\widetilde{d}_{2}(\theta_{i^{\prime}};x,t)&=-s_{33}(\theta_{i^{\prime}}) \left[ \frac{q_{2-}^{*}}{q_{0}}
+\sum_{j=1}^{G} W_{j}^{(6)}(\theta_{i^{\prime}})\nu_{-,21}^{\prime}(\theta_{j}^{*})
+W_{j}^{(3)}(\theta_{i^{\prime}})\nu_{-,21}(\theta_{j}^{*})  \right],
\end{split}
\end{equation}
\begin{equation}\label{5.43}
\begin{split}
\widetilde{d}_{2}^{\prime}(\theta_{i^{\prime}};x,t)&=
-s_{33}^{\prime}(\theta_{i^{\prime}})\frac{q_{2-}^{*}}{q_{0}}
+\sum_{j=1}^{G} \left[ s_{33}(\theta_{i^{\prime}})W_{j}^{(4)}(\theta_{i^{\prime}})  -s_{33}^{\prime}(\theta_{i^{\prime}})W_{j}^{(3)}(\theta_{i^{\prime}}) \right]\nu_{-,21}(\theta_{j}^{*}) \\
&-\sum_{j=1}^{G} \left[ s_{33}(\theta_{i^{\prime}})W_{j}^{(5)}(\theta_{i^{\prime}})  +s_{33}^{\prime}(\theta_{i^{\prime}})W_{j}^{(6)}(\theta_{i^{\prime}}) \right]\nu_{-,21}^{\prime}(\theta_{j}^{*}),
\end{split}
\end{equation}
where $i^{\prime}=1, \ldots, G$.
Substituting~\eqref{5.40} and~\eqref{5.41} into equations~\eqref{5.42} and~\eqref{5.43} respectively yields
\begin{equation}\label{5.44}
\begin{split}
\widetilde{d}_{2}(\theta_{i^{\prime}};x,t)&=-s_{33}(\theta_{i^{\prime}})\frac{q_{2-}^{*}}{q_{0}}
+\sum_{j=1}^{G}\sum_{j^{\prime}=1}^{G}s_{33}(\theta_{i^{\prime}}) \left[
W_{j}^{(6)}(\theta_{i^{\prime}})W_{j^{\prime}}^{(2)}(\theta_{j}^{*})
-W_{j}^{(3)}(\theta_{i^{\prime}})W_{j^{\prime}}^{(1)}(\theta_{j}^{*})
\right]\widetilde{d}_{2}(\theta_{j^{\prime}}) \\
&+\sum_{j=1}^{G}\mathrm{i}q_{1-}W_{j}^{(7)}(\theta_{i^{\prime}})
+\sum_{j=1}^{G}\sum_{j^{\prime}=1}^{G}s_{33}(\theta_{i^{\prime}})b_{j^{\prime}}^{(1)}(\theta_{j}^{*}) \left[ \frac{W_{j}^{(6)}(\theta_{i^{\prime}})}{\theta_{j}^{*}-\theta_{j^{\prime}}}
-W_{j}^{(3)}(\theta_{i^{\prime}}) \right]\widetilde{d}_{2}^{\prime}(\theta_{j^{\prime}}),
\end{split}
\end{equation}
and
\begin{equation}\label{5.45}
\begin{split}
\widetilde{d}_{2}^{\prime}(\theta_{i^{\prime}};x,t)&
=-s_{33}^{\prime}(\theta_{i^{\prime}})\frac{q_{2-}^{*}}{q_{0}}
+\sum_{j=1}^{G}\sum_{j^{\prime}=1}^{G} \left[
W_{j}^{(9)}(\theta_{i^{\prime}})W_{j^{\prime}}^{(1)}(\theta_{j}^{*})
+W_{j}^{(10)}(\theta_{i^{\prime}})W_{j^{\prime}}^{(2)}(\theta_{j}^{*})
\right]\widetilde{d}_{2}(\theta_{j^{\prime}}) \\
&+\sum_{j=1}^{G}\mathrm{i}q_{1-}W_{j}^{(8)}(\theta_{i^{\prime}})
+\sum_{j=1}^{G}\sum_{j^{\prime}=1}^{G} b_{j^{\prime}}^{(1)}(\theta_{j}^{*}) \left[
\frac{W_{j}^{(10)}(\theta_{i^{\prime}})}{\theta_{j}^{*}
-\theta_{j^{\prime}}} +W_{j}^{(9)}(\theta_{i^{\prime}}) \right]\widetilde{d}_{2}^{\prime}(\theta_{j^{\prime}}).
\end{split}
\end{equation}
The equations for $\widetilde{d}_{2}(\theta_{i^{\prime}};x,t)$ and $\widetilde{d}_{2}^{\prime}(\theta_{i^{\prime}};x,t)$ form a closed system of $2G$ equations with $2G$ unknowns. In the same way, a closed system containing $\widetilde{d}_{3}(\theta_{i^{\prime}};x,t)$ and $\widetilde{d}_{3}^{\prime}(\theta_{i^{\prime}};x,t)$ can be found. These two systems can be written as $\widehat{\mathbf{K}}\widehat{\mathbf{X}}_{n}=\widehat{\mathbf{A}}_{n}$ for $n=1,2$, while $\widehat{\mathbf{X}}_{n}=\left(\widehat{X}_{n1}, \ldots, \widehat{X}_{n(2G)}\right)^{T}$ and
\begin{equation}\label{5.46}
\begin{split}
\widehat{X}_{n i^{\prime}}=\left\{\begin{array}{ll}
\widetilde{d}_{(n+1)}(\theta_{i^{\prime}};x,t), & i^{\prime}=1, \ldots, G, \\
\widetilde{d}_{(n+1)}^{\prime}(\theta_{i^{\prime}-G};x,t), & i^{\prime}=G+1, \ldots, 2G.
\end{array}\right.
\end{split}
\end{equation}
Using Cramer's rule, we have
\begin{equation}\label{5.47}
\begin{split}
\widehat{X}_{n i}=\frac{\det\widehat{\mathbf{K}}_{ni}^{\mathrm{aug}}}{\det\widehat{\mathbf{K}}}, \quad i=1, \ldots, 2G, \quad n=1,2,
\end{split}
\end{equation}
where $\widehat{\mathbf{K}}_{ni}^{\mathrm{aug}}=\left( \widehat{\mathbf{K}}_{1}, \ldots, \widehat{\mathbf{K}}_{i-1}, \widehat{\mathbf{A}}_{n}, \widehat{\mathbf{K}}_{i+1}, \ldots, \widehat{\mathbf{K}}_{2G} \right)$. By inserting the determinant representation of the solutions from equation~\eqref{5.47} into equation~\eqref{5.24}, one obtains~\eqref{5.30}.
\end{proof}

\begin{figure}[htb]
\centering
\begin{tabular}{cccc}
\includegraphics[width=0.22\textwidth]{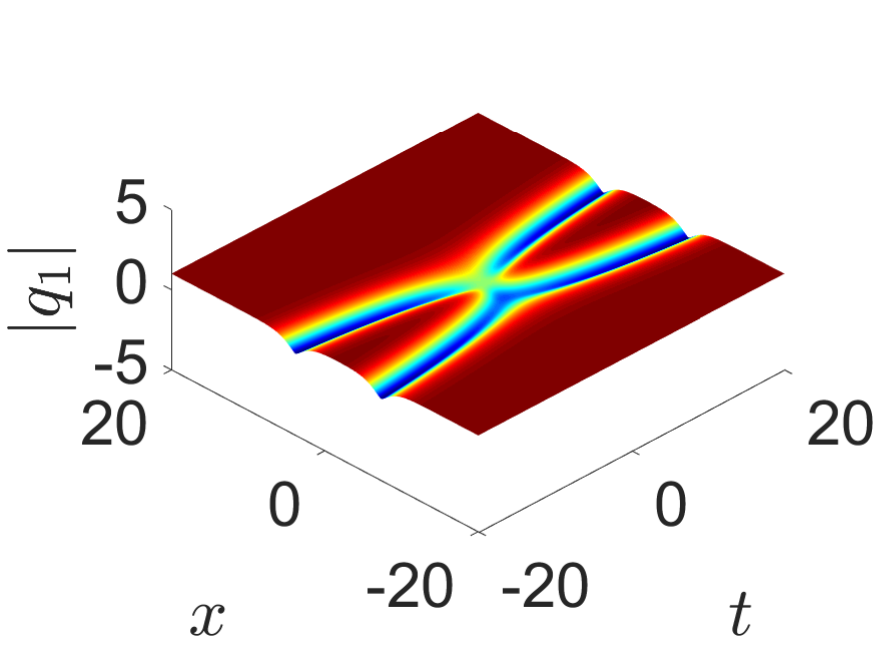} &
\includegraphics[width=0.22\textwidth]{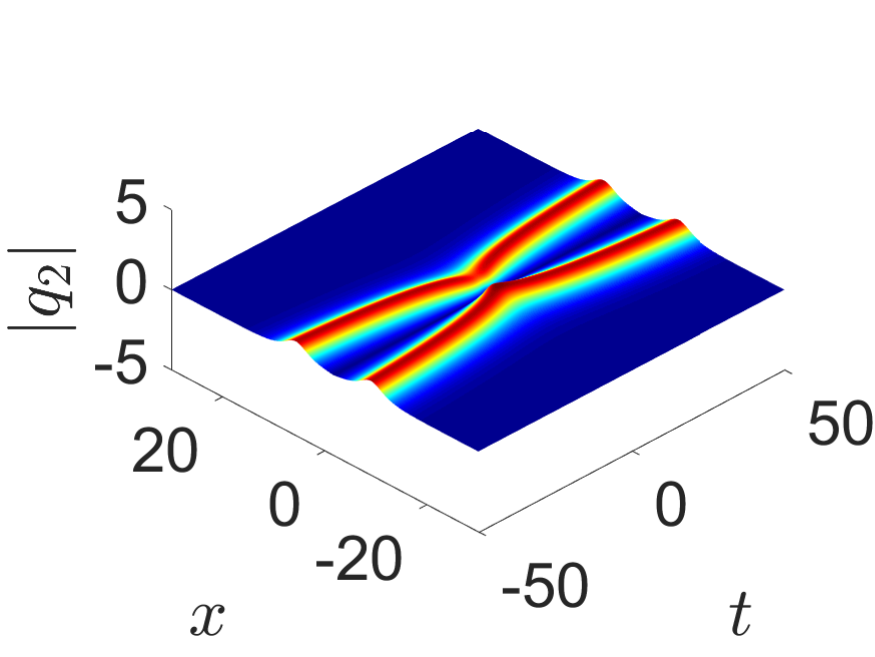} &
\includegraphics[width=0.22\textwidth]{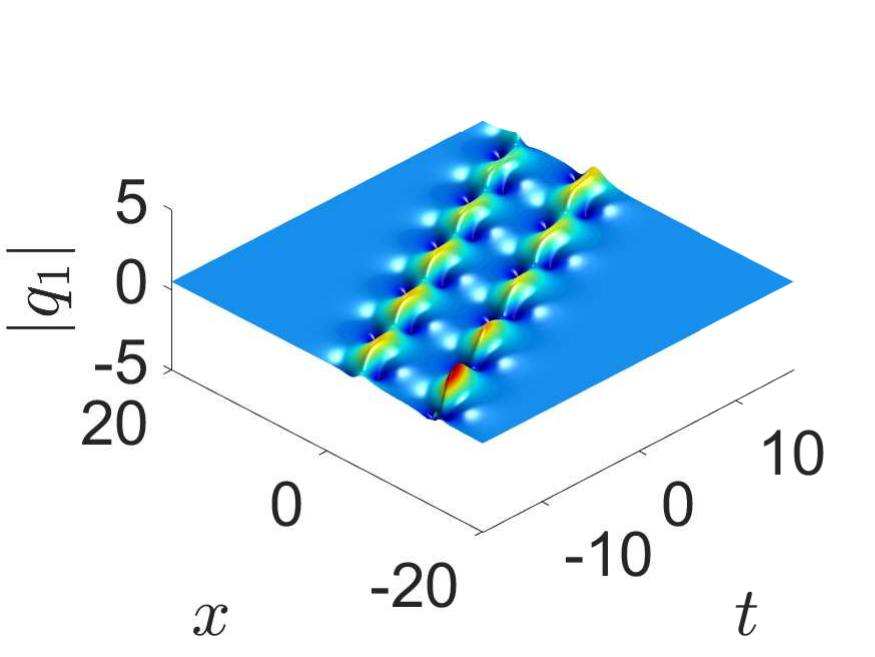} &
\includegraphics[width=0.22\textwidth]{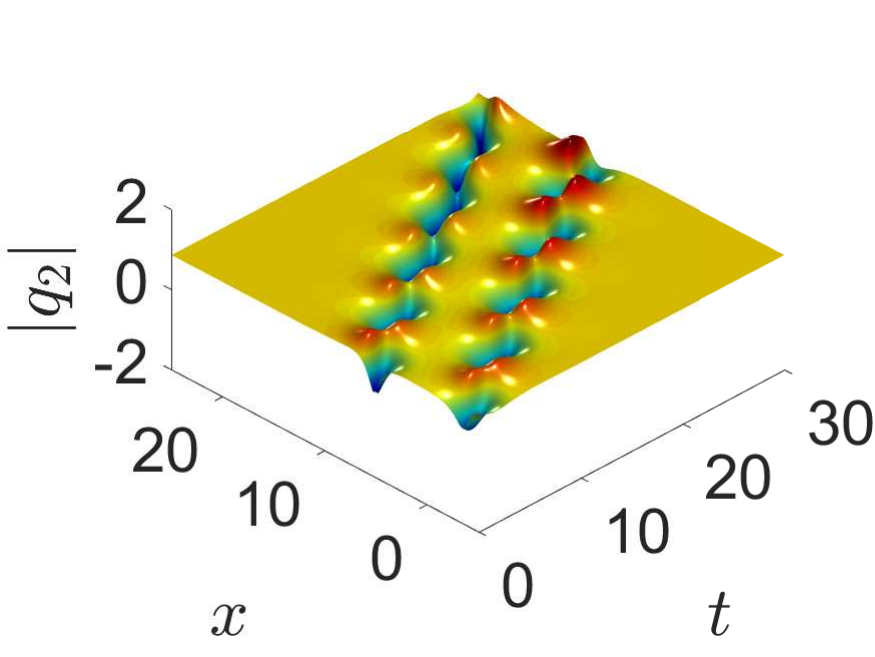} \\
		(a1) & (a2) & (b1) & (b2)
	\end{tabular}
\caption{\small (a1) and (a2): One dark-bright double-pole solution by taking $\mathbf{q}_{-}=(1,
0)^{T}$, $\sigma=10^{-3}$, $\widehat{K}_{1}=0.5$, $\widehat{K}_{2}=\widehat{K}_{3}=2$, $\widehat{\kappa}_{1}=\widehat{\chi}_{2}=0.5$, $\widehat{\chi}_{1}=-2$, $\widehat{\kappa}_{2}=-0.5$, $\widehat{\alpha}_{1}=\frac{1}{2}\pi$. (b1) and (b2): One bright-breather-dark-breather double-pole solution by taking $\mathbf{q}_{-}=(\frac{1}{2}\mathrm{e}^{-\frac{1}{5}\mathrm{i}\pi},
-\frac{\sqrt{3}}{2}\mathrm{e}^{-\frac{1}{5}\mathrm{i}\pi})^{T}$, $\sigma=10^{-2}$, $\widehat{K}_{1}=0.98$, $\widehat{\kappa}_{2}=\widehat{K}_{2}=\widehat{K}_{3}=2$, $\widehat{\kappa}_{1}=-1.5$, $\widehat{\chi}_{2}=-0.2$, $\widehat{\chi}_{1}=1$, $\widehat{\alpha}_{1}=\frac{3}{5}\pi$.
}\label{fig:6}
\end{figure}

We have now derived the double-poles solutions of the defocusing-defocusing coupled Hirota equations with NZBC~\eqref{1.3}. Considering the double-poles solutions and parametrizing the discrete eigenvalues and normalization constants as
\begin{equation}\label{5.48}
\begin{split}
\theta_{1}=\widehat{K}_{1}\mathrm{e}^{\mathrm{i}\widehat{\alpha}_{1}}, \quad
f_{1}=\widehat{K}_{2}\mathrm{e}^{\widehat{\kappa}_{1}+\mathrm{i}\widehat{\chi}_{1}}, \quad
\bar{b}_{1}=\widehat{K}_{3}\mathrm{e}^{\widehat{\kappa}_{2}+\mathrm{i}\widehat{\chi}_{2}}, \quad 0<\widehat{K}_{1}<q_{0}.
\end{split}
\end{equation}
Through the expression~\eqref{5.48} and the reflectionless potentials, it can be inferred that the different structures of the double-poles solutions are obtained. For $q_{1-}\times q_{2-}=0$, one dark-bright double-pole solution is given by (a1) and (a2) of Fig.~\ref{fig:6}. Moreover, setting $q_{1-}\times q_{2-}\neq0$ generates one bright-breather-dark-breather double-pole solution in (b1) and (b2) of Fig.~\ref{fig:6}.

\section{Conclusion}
\label{s:Conclusion}

We apply the IST tool to the defocusing-defocusing coupled Hirota equations with NZBC~\eqref{1.3} and derive some interesting results by constructing the matrix RH problem. We delve into the analytic properties of Jost eigenfunctions and scattering coefficients, while examining particular potential conditions that guarantee such analyticity. Innovative analytical eigenfunctions for the defocusing-defocusing coupled Hirota equations with NZBC~\eqref{1.3} adhere to two symmetry conditions. These relations are subsequently utilized to provide a rigorous characterization of the discrete spectrum. The discrete spectrum yields discrete eigenvalues in two distinct scenarios, each linked to a diverse array of soliton solution types. The characteristics of their soliton interactions are depicted through graphical illustrations. By discussing the different eigenvalues on the circle and off the circle, corresponding combinations of dark solitons, bright solitons, and breather solitons are obtained. Derived the double-pole solutions of Eq.~\eqref{1.3} and proved the rationality of its algebraic closed system. For the defocusing-defocusing coupled Hirota equations, a novel bright-breather-dark-breather double-pole solution is found for the first time, which may help explain and predict certain characteristics in optical solitons and fluid dynamics phenomena.

In this paper, the pure soliton solutions and the double-pole solutions are obtained under the condition of reflectionless potentials. Although this method is simple, it inevitably has some limitations. In the face of the existence of the reflection potential, even if we can derive the corresponding solutions, they often contain implicit integral terms, which brings challenges to the analysis and application. In the future research, we plan to explore how to use the IST technique to eliminate these integral terms, so as to construct explicit soliton solutions. For the numerical inverse scattering~\cite{11}, the computational efficiency and practicability of the solution are improved while maintaining the accuracy of the analytical solutions. By combining numerical methods, we hope to understand and solve the soliton problem in the presence of reflection potential more comprehensively, and further expand the application range of IST technology in nonlinear physical phenomena.

It is worth noting that the IST technique also plays an important role in exploring the soliton solutions of nonlinear integrable flows involving coordinate reflection points~\cite{15,C8,C9}. This technique provides a new perspective for the analysis of soliton solutions with its unique advantages~\cite{16}. Recently, with the in-depth study of the $4\times4$ matrix spectral problem, the IST has not only been successfully applied to the generation of coupled and combined integrable models~\cite{17}, but also has been analyzed in detail from the perspective of double Hamiltonian systems~\cite{18}. These models not only have significant integrable properties in mathematics, but also show great potential and value in the application of physics, mechanical engineering and materials science~\cite{19}.

At present, the IST technique has been successfully applied to deal with the parallel boundary conditions of the defocusing-defocusing coupled Hirota equations with NZBC~\eqref{1.3} at infinity. Similarly, this technique can also be extended to the study of non-parallel boundary conditions at infinity~\cite{20}, further enriching its application in the field of mathematical physics. As the field of mathematics and physics continues to advance, we look forward to seeing more accurate algorithms and more in-depth theoretical analysis. These advances will help to solve more complex physical problems, such as boundary problems in quantum field theory, nonlinear dynamics and fluid mechanics. At the same time, the application of the robust IST method~\cite{21} in engineering and applied science will also be further developed to provide more effective solutions for practical problems.

\section*{Acknowledgement}
Zhang's work was partially supported by the National Natural Science Foundation of China (Grant Nos. 11371326 and 12271488). Ma's work was partially supported by the Ministry of Science and Technology of China (G2021016032L and G2023016011L).

\section*{Conflict of interests}
The authors declare that there is no conflict of interests regarding the research effort and the publication of this paper.

\section*{Data availability statements}
All data generated or analyzed during this study are included in this published article.

\section*{ORCID}
{\setlength{\parindent}{0cm}
Peng-Fei Han https://orcid.org/0000-0003-1164-5819 \\
Wen-Xiu Ma https://orcid.org/0000-0001-5309-1493 \\
Ru-Suo Ye https://orcid.org/0000-0001-7209-2336 \\
Yi Zhang https://orcid.org/0000-0002-8483-4349}


\begin{thebibliography}{99}

\bibitem{1}
Sergiu Klainerman, Sigmund Selberg.
\newblock Bilinear estimates and applications to nonlinear wave equations.
\href{https://doi.org/10.1142/S0219199702000634}
{\newblock {\em Communications in Contemporary Mathematics}, 4(02):223--295, 2002.}

\bibitem{2}
Zheng-Ping Yang, Wei-Ping Zhong, Milivoj Beli\'c.
\newblock 2D optical rogue waves in self-focusing Kerr-type media with spatially modulated coefficients.
\href{https://dx.doi.org/10.1088/1054-660X/25/8/085402}
{\newblock {\em Laser Physics}, 25(8):085402, 2015.}

\bibitem{3}
Sergei V Manakov.
\newblock On the theory of two-dimensional stationary self-focusing of electromagnetic waves.
\href{http://www.jetp.ras.ru/cgi-bin/dn/e_038_02_0248.pdf}
{\newblock {\em Soviet Physics-JETP}, 38(2):248--253, 1974.}

\bibitem{4}
Jun'ichi Ieda, Masaru Uchiyama, Miki Wadati.
\newblock Inverse scattering method for square matrix nonlinear schr\"odinger equation under nonvanishing boundary conditions.
\href{https://doi.org/10.1063/1.2423222}
{\newblock {\em Journal of Mathematical Physics}, 48:013507, 2007.}

\bibitem{5}
Clifford S Gardner, John M Greene, Martin D Kruskal, Robert M Miura.
\newblock Method for solving the Korteweg-deVries equation.
\href{https://doi.org/10.1103/PhysRevLett.19.1095}
{\newblock {\em Physical Review Letters}, 19(19):1095, 1967.}

\bibitem{6}
Peter D Lax.
\newblock Integrals of nonlinear equations of evolution and solitary waves.
\href{https://doi.org/10.1002/cpa.3160210503}
{\newblock {\em Communications on Pure and Applied Mathematics}, 21(5):467--490, 1968.}

\bibitem{C1}
Wen-Xiu Ma.
\newblock Reduced AKNS spectral problems and associated complex matrix integrable models.
\href{https://doi.org/10.1007/s10440-023-00610-5}
{\newblock {\em Acta Applicandae Mathematicae}, 187(1):17, 2023.}

\bibitem{C2}
Wen-Xiu Ma.
\newblock Sasa-Satsuma type matrix integrable hierarchies and their Riemann-Hilbert problems and soliton solutions.
\href{https://doi.org/10.1016/j.physd.2023.133672}
{\newblock {\em Physica D: Nonlinear Phenomena}, 446:133672, 2023.}

\bibitem{C3}
Wen-Xiu Ma.
\newblock Soliton solutions to constrained nonlocal integrable nonlinear Schr\"odinger hierarchies of type ($-\lambda$, $\lambda$).
\href{https://doi.org/10.1142/S0219887823500986}
{\newblock {\em International Journal of Geometric Methods in Modern Physics}, 20(6):2350098, 2023.}

\bibitem{C4}
Wen-Xiu Ma.
\newblock Integrable nonlocal nonlinear Schr\"odinger hierarchies of type ($-\lambda^{*}$, $\lambda$) and soliton solutions.
\href{https://doi.org/10.1016/S0034-4877(23)00052-6}
{\newblock {\em Reports on Mathematical Physics}, 92(1):19--36, 2023.}

\bibitem{C5}
Wen-Xiu Ma.
\newblock Binary Darboux transformation of vector nonlocal reverse-time integrable NLS equations.
\href{https://doi.org/10.1016/j.chaos.2024.114539}
{\newblock {\em Chaos, Solitons $\&$ Fractals}, 180:114539, 2024.}

\bibitem{C6}
Wen-Xiu Ma.
\newblock Application of the Riemann-Hilbert approach to the multicomponent AKNS integrable hierarchies.
\href{https://doi.org/10.1016/j.nonrwa.2018.09.017}
{\newblock {\em Nonlinear Analysis: Real World Applications}, 47:1--17, 2019.}

\bibitem{C7}
Wen-Xiu Ma.
\newblock A combined Liouville integrable hierarchy associated with a fourth-order matrix spectral problem.
\href{https://dx.doi.org/10.1088/1572-9494/ad3dd9}
{\newblock {\em Communications in Theoretical Physics}, 76(7):075001, 2024.}

\bibitem{A1}
Han-Peng Chai, Bo Tian, Zhong Du.
\newblock Localized waves for the mixed coupled Hirota equations in an optical fiber.
\href{https://doi.org/10.1016/j.cnsns.2018.10.003}
{\newblock {\em Communications in Nonlinear Science and Numerical Simulation}, 70:181--192, 2019.}

\bibitem{A2}
R S Tasgal, M J Potasek.
\newblock Soliton solutions to coupled higher-order nonlinear Schr\"odinger equations.
\href{https://doi.org/10.1063/1.529732}
{\newblock {\em Journal of Mathematical Physics }, 33(3):1208--1215, 1992.}

\bibitem{A3}
Deng-Shan Wang, Shu-Juan Yin, Ye Tian, Yi-Fang Liu.
\newblock Integrability and bright soliton solutions to the coupled nonlinear Schr{\"o}dinger equation with higher-order effects.
\href{https://doi.org/10.1016/j.amc.2013.12.057}
{\newblock {\em Applied Mathematics and Computation}, 229:296--309, 2014.}

\bibitem{A4}
S G Bindu, A Mahalingam, K Porsezian.
\newblock Dark soliton solutions of the coupled Hirota equation in nonlinear fiber.
\href{https://doi.org/10.1016/S0375-9601(01)00371-1}
{\newblock {\em Physics Letters A}, 286(5):321--331, 2001.}

\bibitem{A5}
Guo-Qiang Zhang, Zhen-Ya Yan, Li Wang.
\newblock The general coupled Hirota equations: modulational instability and higher-order vector rogue wave and multi-dark soliton structures.
\href{https://doi.org/10.1098/rspa.2018.0625}
{\newblock {\em Proceedings of the Royal Society A}, 475(2222):20180625, 2019.}

\bibitem{A6}
Shi-Hua Chen, Lian-Yan Song.
\newblock Rogue waves in coupled Hirota systems.
\href{https://doi.org/10.1103/PhysRevE.87.032910}
{\newblock {\em Physical Review E}, 87(3):032910, 2013.}

\bibitem{A7}
H N Chan, K W Chow.
\newblock Rogue waves for an alternative system of coupled Hirota equations: Structural robustness and modulation instabilities.
\href{https://doi.org/10.1111/sapm.12170}
{\newblock {\em Studies in Applied Mathematics}, 139(1):78--103, 2017.}

\bibitem{A8}
Shi-Hua Chen.
\newblock Dark and composite rogue waves in the coupled Hirota equations.
\href{https://doi.org/10.1016/j.physleta.2014.08.004}
{\newblock {\em Physics Letters A}, 378(38-39):2851-2856, 2014.}

\bibitem{A9}
Xi-Yang Xie, Xiao-Bing Liu.
\newblock Elastic and inelastic collisions of the semirational solutions for the coupled Hirota equations in a birefringent fiber.
\href{https://doi.org/10.1016/j.aml.2020.106291}
{\newblock {\em Applied Mathematics Letters}, 105:106291, 2020.}

\bibitem{A10}
Pan Wang, Tian-Ping Ma, Feng-Hua Qi.
\newblock Analytical solutions for the coupled Hirota equations in the firebringent fiber.
\href{https://doi.org/10.1016/j.amc.2021.126495}
{\newblock {\em Applied Mathematics and Computation}, 411:126495, 2021.}

\bibitem{A11}
Xin Wang, Yong Chen.
\newblock Rogue-wave pair and dark-bright-rogue wave solutions of the coupled Hirota equations.
\href{https://dx.doi.org/10.1088/1674-1056/23/7/070203}
{\newblock {\em Chinese Physics B}, 23(7):070203, 2014.}

\bibitem{A12}
Xin Wang, Yu-Qi Li, Yong Chen.
\newblock Generalized Darboux transformation and localized waves in coupled Hirota equations.
\href{https://doi.org/10.1016/j.wavemoti.2014.07.001}
{\newblock {\em Wave Motion}, 51(7):1149--1160, 2014.}

\bibitem{A13}
Nan Liu, Bo-Ling Guo.
\newblock Long-time asymptotics for the initial-boundary value problem of coupled Hirota equation on the half-line.
\href{https://doi.org/10.1007/s11425-018-9567-1}
{\newblock {\em Science China Mathematics}, 64:81--110, 2021.}

\bibitem{7}
Guo-Qiang Zhang, Shu-Yan Chen, Zhen-Ya Yan.
\newblock Focusing and defocusing Hirota equations with non-zero boundary conditions: Inverse scattering transforms and soliton solutions.
\href{https://doi.org/10.1016/j.cnsns.2019.104927}
{\newblock {\em Communications in Nonlinear Science and Numerical Simulation}, 80:104927, 2020.}

\bibitem{8}
Xiao-Fan Zhang, Shou-Fu Tian, Jin-Jie Yang.
\newblock The Riemann-Hilbert approach for the focusing Hirota equation with single and double poles.
\href{https://doi.org/10.1007/s13324-021-00522-3}
{\newblock {\em Analysis and Mathematical Physics}, 11:1--18, 2021.}

\bibitem{9}
Shu-Yan Chen, Zhen-Ya Yan.
\newblock The Hirota equation: Darboux transform of the Riemann-Hilbert problem and higher-order rogue waves.
\href{https://doi.org/10.1016/j.aml.2019.03.020}
{\newblock {\em Applied Mathematics Letters}, 95:65--71, 2019.}

\bibitem{10}
Xiao-En Zhang, Li-Ming Ling.
\newblock Asymptotic analysis of high-order solitons for the Hirota equation.
\href{https://doi.org/10.1016/j.physd.2021.132982}
{\newblock {\em Physica D: Nonlinear Phenomena}, 426:132982, 2021.}

\bibitem{11}
Thomas Trogdon, Sheehan Olver.
\newblock Numerical inverse scattering for the focusing and defocusing nonlinear Schr\"odinger equations.
\href{https://doi.org/10.1098/rspa.2012.0330}
{\newblock {\em Proceedings of the Royal Society A: Mathematical, Physical and Engineering Sciences}, 469(2149):20120330, 2013.}

\bibitem{12}
Francesco Demontis, Barbara Prinari, C Van Der Mee, Federica Vitale.
\newblock The inverse scattering transform for the defocusing nonlinear Schr\"odinger equations with nonzero boundary conditions.
\href{https://doi.org/10.1111/j.1467-9590.2012.00572.x}
{\newblock {\em Studies in Applied Mathematics}, 131(1):1--40, 2013.}

\bibitem{13}
Daniel Kraus, Gino Biondini, Gregor Kova{\v{c}}i{\v{c}}.
\newblock The focusing Manakov system with nonzero boundary conditions.
\href{https://dx.doi.org/10.1088/0951-7715/28/9/3101}
{\newblock {\em Nonlinearity}, 28(9):3101, 2015.}

\bibitem{B1}
Gino Biondini, Daniel Kraus.
\newblock Inverse scattering transform for the defocusing Manakov system with nonzero boundary conditions.
\href{https://doi.org/10.1137/130943479}
{\newblock {\em SIAM Journal on Mathematical Analysis}, 47(1):706--757, 2015.}

\bibitem{14}
Xinxin Ma, Junyi Zhu.
\newblock Inverse scattering transform for the two-component Gerdjikov-Ivanov equation with nonzero boundary conditions: Dark-dark solitons.
\href{https://doi.org/10.1111/sapm.12607}
{\newblock {\em Studies in Applied Mathematics}, 151(2):676--715, 2023.}

\bibitem{B2}
Yu Xiao, En-Gui Fan, Pan Liu.
\newblock Inverse scattering transform for the coupled modified Korteweg-de Vries equation with nonzero boundary conditions.
\href{https://doi.org/10.1016/j.jmaa.2021.125567}
{\newblock {\em Journal of Mathematical Analysis and Applications}, 504(2):125567, 2021.}

\bibitem{15}
Wen-Xiu Ma.
\newblock Type ($\lambda^{*}$, $\lambda$) reduced nonlocal integrable AKNS equations and their soliton solutions.
\href{https://doi.org/10.1016/j.apnum.2022.12.007}
{\newblock {\em Applied Numerical Mathematics}, 199:105--113, 2024.}

\bibitem{C8}
Wen-Xiu Ma.
\newblock General solution to a nonlocal linear differential equation of first-order.
\href{https://doi.org/10.1007/s12346-024-01036-6}
{\newblock {\em Qualitative Theory of Dynamical Systems}, 23(4):177, 2024.}

\bibitem{C9}
Wen-Xiu Ma.
\newblock Solving a non-local linear differential equation model of the Newtonian-type.
\href{https://doi.org/10.1007/s12043-024-02765-8}
{\newblock {\em Pramana}, 98(2):68, 2024.}

\bibitem{16}
Wen-Xiu Ma, Ye-Hui Huang, Fu-Dong Wang.
\newblock Inverse scattering transforms and soliton solutions of nonlocal reverse-space nonlinear Schr\"odinger hierarchies.
\href{https://doi.org/10.1111/sapm.12329}
{\newblock {\em Studies in Applied Mathematics}, 145(3):563--585, 2020.}

\bibitem{17}
Wen-Xiu Ma.
\newblock A generalized hierarchy of combined integrable bi-Hamiltonian equations from a specific fourth-order matrix spectral problem.
\href{https://doi.org/10.3390/math12060927}
{\newblock {\em Mathematics}, 12(6):927, 2024.}

\bibitem{18}
Wen-Xiu Ma.
\newblock A four-component hierarchy of combined integrable equations with bi-Hamiltonian formulations.
\href{https://doi.org/10.1016/j.aml.2024.109025}
{\newblock {\em Applied Mathematics Letters}, 153:109025, 2024.}

\bibitem{19}
Ru-Suo Ye, Yi Zhang.
\newblock General soliton solutions to a reverse-time nonlocal nonlinear Schr\"odinger equation.
\href{https://doi.org/10.1111/sapm.12317}
{\newblock {\em Studies in Applied Mathematics}, 145(2):197--216, 2020.}

\bibitem{20}
Asela Abeya, Gino Biondini, Barbara Prinari.
\newblock Inverse scattering transform for the defocusing Manakov system with non-parallel boundary conditions at infinity.
\href{https://www.global-sci.org/intro/article_detail/eajam/20882.html}
{\newblock {\em East Asian Journal on Applied Mathematics}, 12(4):715--760, 2022.}

\bibitem{21}
Deniz Bilman, Peter D Miller.
\newblock A robust inverse scattering transform for the focusing nonlinear Schr\"odinger equation.
\href{https://doi.org/10.1002/cpa.21819}
{\newblock {\em Communications on Pure and Applied Mathematics}, 72(8):1722--1805, 2019.}

\end{thebibliography}
\end{document}